\tikzset{cross/.style={cross out, draw=black, minimum size=2*(#1-\pgflinewidth), inner sep=0pt, outer sep=0pt},
	cross/.default={0.4 cm},
	dynkinnode/.style={circle, draw, fill=white, inner sep = 0pt, minimum width=0.35cm}
}
\newcommand{\gl}{\mathfrak{gl}}
\newcommand{\fsl}{\mathfrak{sl}}
\newcommand{\agl}{\widehat{\gl}}
\newcommand{\asl}{\widehat{\fsl}}
\newcommand{\dcrys}{\mathcal{B}_{\bze,\overline{m-r}}} 
\newcommand{\tab}{\mathcal{T}}	
\newcommand{\up}[1]{{#1_\uparrow}}
\newcommand{\down}[1]{{#1_\downarrow}}
\newcommand{\inlinetab}[1]{\begin{array}{|c|} \hline #1 \\\hline \end{array}} 
\newcommand{\fundrep}{\mathbf{V}} 
\DeclareMathOperator{\id}{id} 
\DeclareMathOperator{\col}{col} 
\DeclareMathOperator{\sg}{sg} 
\DeclareMathOperator{\rsg}{rsg} 
\newcommand{\Ieven}{I_{\text{even}}}
\newcommand{\Iodd}{I_{\text{odd}}}
\newcommand{\ZZ}{\mathbb{Z}}
\newcommand{\bze}{\overline{0}}
\newcommand{\bon}{\overline{1}}
\newcommand{\btw}{\overline{2}}
\newcommand{\bth}{\overline{3}}
\newcommand{\bfo}{\overline{4}}
\newcommand{\bk}{\overline{k}}
\newcommand{\bm}{\overline{m}}
\newcommand{\bfze}{\mathbf{0}}
\newcommand{\bfon}{\mathbf{1}}
\newcommand{\bftw}{\mathbf{2}}
\newcommand{\bfth}{\mathbf{3}}
\newcommand{\bfX}{\mathbf{X}}
\definecolor{darkred}{rgb}{0.7,0,0} 
\newcommand{\defn}[1]{{\color{darkred}\emph{#1}}} 
\theoremstyle{plain}
\newtheorem{thm}{Theorem}[section]
\newtheorem{lemma}[thm]{Lemma}
\newtheorem{conj}[thm]{Conjecture}
\newtheorem{prop}[thm]{Proposition}
\theoremstyle{definition}
\newtheorem{dfn}[thm]{Definition}
\newtheorem{ex}[thm]{Example}
\newtheorem{remark}[thm]{Remark}
\newtheorem{obs}[thm]{Observation}
\numberwithin{equation}{section}
\begin{document}
	\title[SCA for $\agl(m|n)$]{Soliton cellular automata for the affine general linear Lie superalgebra}
	
	\author[M.~Ryan]{Mitchell Ryan}
	\author[B.~Solomon]{Benjamin Solomon}
	\address[M.~Ryan and B.~Solomon]{School of Mathematics and Physics, University of Queensland, St.\ Lucia, QLD 4072, Australia}
	\email{research@mitchell-ryan.com}
	\email{hart.loire@gmail.com}
	
	\keywords{soliton, crystal, cellular automaton}
	\subjclass[2010]{17B37, 05E10, 82B23, 37B15}

	\begin{abstract}
		The box-ball system (BBS) is a cellular automaton that is an ultradiscrete
		analogue of the Korteweg--de Vries equation, a non-linear PDE used to
		model water waves. 
		In 2001, Hikami and Inoue generalised the BBS to the general linear Lie superalgebra $\mathfrak{gl}(m|n)$.
		We further generalise the Hikami--Inoue BBS to column tableaux
		using the Kirillov--Reshetikhin crystals for $\widehat{\mathfrak{gl}}{(m|n)}$ devised by Kwon and Okado in 2021,
		where we find similar solitonic behaviour under certain conditions.
	\end{abstract}
	
	\maketitle
	
	\setcounter{tocdepth}{1}
	\tableofcontents

	\section{Introduction}
	The box-ball system (BBS) is an integrable nonlinear dynamical system that has connections to both classical and quantum integrable systems. The Korteweg--De Vries (KdV) equation is one such classical system which describes shallow water waves in a one-dimensional channel~\cite{KdV1895}. It was shown by Kruskal and Zabusky that solutions to the KdV equation separate into solitonic waves that move with speed proportional to their amplitude and maintain their shape under collision with other solitons~\cite{KZ64}. It was later discovered that the ultradiscretisation of these soliton solutions produces the BBS~\cite{TTMS96}. In addition to its connection to classical systems, the BBS also emerges from quantum integrable systems such as the six-vertex lattice model from statistical mechanics~\cite{B89}. The symmetries of this particular system are governed by the quantum group $U_q'(\fsl_2)$. Under crystallisation (the $q \rightarrow 0$ limit) the system is frozen to the ground state and produces the BBS~\cite{HI99}. The position of BBS within the realm of classical and quantum integrable systems opens its analysis to a variety of methods. Moreover, the discrete nature of the system provides an important connection to combinatorics.
	
	An important development in the analysis of the BBS comes from the construction of crystal bases by Kashiwara~\cite{K91,K90}. This allowed for the reformulation of crystallisation in terms of the crystal theory of quantum affine algebras~\cite{HKT00}, leading to a crystal theoretic formulation of the BBS. This formulation utilises the `classical'
	crystal $B^{1,s}$, which is the crystal base of an $s$-fold symmetric tensor representation of $U_q(\fsl_n)$ promoted to the Kirillov--Reshetikhin (KR) crystal of $U_q'(\asl_n)$~\cite{KKMMNN92} by adding additional crystal operators.
	States of the system are then defined as elements of $ (B^{1,1})^{\otimes \infty}$.
	The time evolution of these states utilises the existence of the combinatorial $R$-matrix, a unique isomorphism between the tensor product of KR crystals,
	$R\colon B \otimes B' \to B' \otimes B$~\cite{KKMMNN92}.
	The time evolution is given by repeated applications of this $R$-matrix together with a carrier.
	Below is an example of a BBS constructed from the KR crystal $B^{1,1}$ in $A_1^{(1)}$.
	
	\begin{ex} \label{example:coloured_bbs}
		For $B^{1,1}$ in $A_1^{(1)}$ with carrier $B^{1,3}$, the time evolution of the state
		\[
		\inlinetab{3}\otimes\inlinetab{3}\otimes\inlinetab{2}\otimes \inlinetab{1} \otimes \inlinetab{1} \otimes \inlinetab{1} \otimes \inlinetab{1} \otimes \cdots
		\]
		is the state
		\[
		\inlinetab{1} \otimes \inlinetab{1} \otimes \inlinetab{1} \otimes\inlinetab{3}\otimes\inlinetab{3}\otimes\inlinetab{2}\otimes  \inlinetab{1} \otimes \cdots .
		\]
		Each $ \inlinetab{1} $ represents a vacuum element (i.e.\ an empty box).
		The computation of this time evolution can be represented in the following diagram
		\[
		\begin{tikzpicture}
			\node (carr0) at (-1,0) {$111
				$};
			
			\node (carr1) at (1,0) {$ 113
				$};
			
			\node (carr2) at (3,0) {$133
				$};
			
			\node (carr3) at (5,0) {$ 233
				$};
			
			\node (carr4) at (7,0) {$ 123
				$};
			
			\node (carr5) at (9,0) {$112
				$};
			
			\node (carr6) at (11,0) {$ 111 $};
			
			\node (carr7) at (13,0) {$ \cdots $};
			
			\foreach \i[evaluate=\i as \si using int(\i+1)] in {0,...,6} {
				\draw [->] (carr\i) -- (carr\si);
			}
			
			\node (b0) at (0,1) {$ 3 $};
			
			\node (b1) at (2,1) {$ 3 $};
			
			\node (b2) at (4,1) {$ 2 $};
			
			\foreach \i in {3,...,5} {
				\node (b\i) at ({2*\i},1) {$1$};
			}
			
			\foreach \i in {0,2,1} {
				\node (bt\i) at ({2*\i},-1) {$ 1
					$};
			}
			
			\node (bt3) at (6,-1) {$3$};
			\node (bt4) at (8,-1) {$3$};
			\node (bt5) at (10,-1) {$2$};
			\foreach \i in {0,...,5} {
				\draw [->] (b\i) -- (bt\i);
			}
			
			\node (b6) at (12,1) {$ 1 $};
			\node (bt6) at (12,-1) {$ 1 $};
			\draw [->] (b6) -- (bt6);
			
		\end{tikzpicture}
		,
		\]
		where each crossing represents the application of the R-matrix, $R\colon B^{1,3} \otimes B^{1,1} \rightarrow B^{1,1} \otimes B^{1,3}$. The top row represents the initial state, the bottom row represents the state after one time evolution, and the middle represents how the carrier changes during the evolution. 
	\end{ex}
	
	Within the BBS there exist states exhibiting solitonic behaviour; that is, states containing elements of $(B^{1,1})^{\otimes d}$ within $(B^{1,1})^{\otimes \infty}$ that move with speed corresponding to their length and are stable under collisions (this stability is called scattering). Such elements are called solitons. These solitons are the ultradiscrete analogue of the KdV solitons. In Example~\ref{example:coloured_bbs}, $ \inlinetab{3}\otimes\inlinetab{3}\otimes\inlinetab{2} $ is a soliton. For more detail we refer the reader to~\cite{FOY00,HHIKTT01}.
	
	In 2001, Hikami and Inoue generalised the BBS using crystals for the general linear Lie superalgebra $\gl(m|n)$ and showed that similar solitonic behaviour existed in this supersymmetric system~\cite{HI00}. Work by Yamada~\cite{Yamada04} generalised the system in a different manner by considering the crystal $B^{r,s}$ of $U'_q(\asl_n)$, producing a system with $ r $ rows. This paper uses the KR crystals for $\agl(m|n)$ devised by Kwon and Okado~\cite{KO18} to generalise the Hikami--Inoue BBS analogously to Yamada's generalisation of the $ U_q'(\asl_n) $ BBS. Each $ \agl(m|n) $ KR crystal is parameterised by a Young diagram $Y$, and the crystal is identified with the set of semistandard Young tableaux (SSYT) of shape $ Y $. We are primarily interested in $ B^{r,s} $; the crystal of rectangular SSYT of height $ r $ and width $ s $. In our generalised BBS, we define states as elements of $\bigl( B^{r,1} \bigr)^{\otimes \infty} $.  We similarly have an $R$-matrix giving a bijection of tensor products of crystals $ B^{r_1,s_1}\otimes B^{r_2,s_2} \to B^{r_2,s_2}\otimes B^{r_1,s_1} $. The $R$-matrix can be explicitly calculated with the RSK algorithm, using the modified Schensted's bumping algorithm outlined in Section~\ref{sec:comb_R}.  This allows us to define the time evolution of the system analogously to the classical case. Taking $r =1$ reduces our system to the Hikami--Inoue BBS~\cite{HI00}. The following example gives a two-soliton state within our generalised system.
	\begin{ex}\label{ex:glmnscattering}
		Consider the $ U_q(\agl(3|1)) $ crystal $ B^{2,1} $.
		In the following diagram is a state (in $ (B^{2,1})^{\otimes \infty} $) evolved over four time steps starting at time $ t=0 $.
		The maximal weight element (
		$
		\begin{smallmatrix}
			\bth\\
			\btw
		\end{smallmatrix}
		$
		) is represented as a dot.
		\[
		\begin{array}{cc}
			t = 0 &
			\begin{tikzpicture}[scale = 0.9, baseline = -2]
				\def\s{0.5};
				\foreach \x in {-7,...,-4,-1,0,2,3,...,11} {
					\draw[fill=black] (\x*\s,0) circle (0.025);
				}
				\foreach \count/\valone/\valtwo/\valthree in {0/\bon/\bon/\bon,1/\bth/\bth/\btw}{
					\node (a) at (-3*\s,0 + \count*\s) {$\valone$};
					\node (b) at (-2*\s,0 + \count*\s) {$\valtwo$};
					\node (c) at (1*\s,0 + \count*\s) {$\valthree$};
				}
			\end{tikzpicture}
			\\[7pt]
			
			t = 1 &
			
			\begin{tikzpicture}[scale = 0.9, baseline = -2]
				\def\s{0.5};
				\foreach \x in {-7,...,-2,1,3,4,...,11} {
					\draw[fill=black] (\x*\s,0) circle (0.025);
				}
				\foreach \count/\valone/\valtwo/\valthree in {0/\bon/\bon/\bon,1/\bth/\bth/\btw}{
					\node (a) at (-1*\s,0 + \count*\s) {$\valone$};
					\node (b) at (0*\s,0 + \count*\s) {$\valtwo$};
					\node (c) at (2*\s,0 + \count*\s) {$\valthree$};
				}
			\end{tikzpicture}
			\\[7pt]
			
			t = 2 &
			
			\begin{tikzpicture}[scale = 0.9, baseline = -2]
				\def\s{0.5};
				\foreach \x in {-7,...,0,4,5,...,11} {
					\draw[fill=black] (\x*\s,0) circle (0.025);
				}
				\foreach \count/\valone/\valtwo/\valthree in {0/\bon/\bon/\bon,1/\bth/\bth/\btw}{
					\node (a) at (1*\s,0 + \count*\s) {$\valone$};
					\node (b) at (2*\s,0 + \count*\s) {$\valtwo$};
					\node (c) at (3*\s,0 + \count*\s) {$\valthree$};
				}
			\end{tikzpicture}
			\\[7pt]
			
			t = 3 &
			\begin{tikzpicture}[scale = 0.9, baseline = -2]
				\def\s{0.5};
				\foreach \x in {-7,...,2,6,7,...,11} {
					\draw[fill=black] (\x*\s,0) circle (0.025);
				}
				\foreach \count/\valone/\valtwo/\valthree in {0/\bon/\bon/\bon,1/\bth/\btw/\bth}{
					\node (a) at (3*\s,0 + \count*\s) {$\valone$};
					\node (b) at (4*\s,0 + \count*\s) {$\valtwo$};
					\node (c) at (5*\s,0 + \count*\s) {$\valthree$};
				}
			\end{tikzpicture}
			\\[7pt]
			
			t = 4 &
			\begin{tikzpicture}[scale = 0.9, baseline = -2]
				\def\s{0.5};
				\foreach \x in {-7,...,3,5,8,9,...,11} {
					\draw[fill=black] (\x*\s,0) circle (0.025);
				}
				\foreach \count/\valone/\valtwo/\valthree in {0/\bon/\bon/\bon,1/\bth/\btw/\bth}{
					\node (a) at (4*\s,0 + \count*\s) {$\valone$};
					\node (b) at (6*\s,0 + \count*\s) {$\valtwo$};
					\node (c) at (7*\s,0 + \count*\s) {$\valthree$};
				}
			\end{tikzpicture}
		\end{array}
		\]
		At $ t=1 $, we observe both 
		$ 
		\begin{smallmatrix}
			\bth & \bth\\
			\bon & \bon
		\end{smallmatrix}
		$ and $ 
		\begin{smallmatrix}
			\btw\\
			\bon
		\end{smallmatrix}
		$
		have moved with speed proportional to their length.
		They collide at times $ t=2 $ and $ t=3 $,
		before separating back into two solitons at $t = 4$
		(stability under collisions).
		This demonstrates solitonic behaviour in our generalised system.
		
		Note that, at $ t=4 $, $ 
		\begin{smallmatrix}
			\btw & \bth\\
			\bon & \bon
		\end{smallmatrix}
		$ is one step ahead (to the right) of where $ 
		\begin{smallmatrix}
			\bth & \bth\\
			\bon & \bon
		\end{smallmatrix}
		$ would be if there had been no collision.
		Similarly, $ 
		\begin{smallmatrix}
			\bth\\
			\bon
		\end{smallmatrix}
		$ is one step behind where $ 
		\begin{smallmatrix}
			\btw\\
			\bon
		\end{smallmatrix}
		$ would be.
		This phenomenon is called the phase shift and is a shadow of the non-linearity. The phase shift is governed by the integer valued energy function.
	\end{ex}
	
	Example~\ref{ex:glmnscattering} shows that there are objects within our system exhibiting solitonic behaviour. 
	Additionally, we can also find coupled solitons, which move with constant speed but contain many overlapping uncoupled solitons. Conjecture~\ref{conj:uncoupling} proposes that these coupled solitons can be uncoupled upon collision. Conjecture~\ref{conj:separate} says that, given sufficient time, every state will separate into (potentially coupled) solitons.
	
	We prove two main theorems.
	Theorem~\ref{thm:speed} provides sufficient conditions for a soliton in a BBS state to move with speed corresponding to its length. Conjecture~\ref{conj:allspeedlength} claims that these conditions are also necessary. 
	In addition, Conjecture~\ref{conj:dividingline} provides a relationship between the conditions of Theorem~\ref{thm:speed} and the number of uncoupled solitons within a soliton.
	Theorem~\ref{thm:scattering} provides a class of solitons which satisfy the conditions of Theorem~\ref{thm:speed} and also maintain their shape under collision (see Example~\ref{ex:glmnscattering}). In addition, we characterise the phase shift of these solitons in terms of the energy function, and show that the solitons after collision can be computed using the combinatorial $ R $-matrix without needing to compute the entire sequence of time evolutions. In most cases, the proof of Theorem~\ref{thm:scattering} reduces the behaviour of the system to modified version of the height $1$ system in~\cite{HI00}. An interesting case occurs when, in a $ U_q(\gl(m|n)) $ system, the height of the solitons is equal to $ m $;
	in this case the highest weight states have a form that is not analogous to any of the highest weight states in the non-super symmetric system of~\cite{Yamada04}.
	
	The paper is organised as follows. In Section~\ref{sec:background} we quickly review the crystal base theory required for our purposes, including the computation of crystal operators, the combinatorial $ R $-matrix and the energy function. In Section~\ref{sec:SBBS} we present the explicit structure of our generalised system and outline the process of time evolution. In Section~\ref{sec:solitons}, we present our two main theorems and our conjectures. The proofs of the theorems can be found in Appendices~\ref{appendix:single_soliton_proof} and~\ref{sec:mainthmproof}, respectively.
	Appendices~\ref{appendix:classRmatprf} and~\ref{appendix:m=rclassRmatprf} contain the proofs of some technical lemmas used in Appendix~\ref{sec:mainthmproof}.
	
	\section{Background}\label{sec:background}
	\subsection{The affine general linear Lie superalgebra}
	The original BBS can be derived from the affine Lie algebra $ \asl_n $.
	Our supersymmetric BBS
	is instead derived from the affine general linear Lie superalgebra $\agl(m|n)$ 
	and its corresponding quantum group $U_q(\agl(m|n))$ (in the sense of~\cite{KO18}).
	Let $I = \Ieven \sqcup \Iodd $ be the indexing set of simple roots,
	where $ \Ieven = \{\overline{m-1}, \ldots ,\overline{1}, 1,\ldots , n-1 \}$ 
	and $ \Iodd = \{0,\bze\}$.
	It is useful to set $ I_- = \{\overline{m-1},\ldots,\bon\} $ and $ I_+ = \{1,\ldots,n-1\} $,
	so that $ \Ieven = I_-\sqcup I_+ $.
	
	The Dynkin diagram for $ \agl(m|n) $ is:
	\begin{center}
		\begin{tikzpicture}[scale=1.75]
			\node[dynkinnode, label={below:$\overline{m-1}$}] (bm-1) at (-3,0) {};
			\node[dynkinnode, label={below:$\overline{m-2}$}] (bm-2) at (-2,0) {};
			\node (dotsleft) at (-1.5,0) {$ \cdots $};
			\node[dynkinnode, label={below:$\bon$}] (bon) at (-1,0) {};
			\node[dynkinnode, label={below:$0$}] (0) at (0,0) {};
			\node[cross=0.14cm] (0cross) at (0) {};
			\node[dynkinnode, label={below:$1$}] (1) at (1,0) {};
			\node (dotsright) at (1.5,0) {$ \cdots $};
			\node[dynkinnode, label={below:$n-2$}] (n-2) at (2,0) {};
			\node[dynkinnode, label={below:$n-1$}] (n-1) at (3,0) {};
			
			\node[dynkinnode, label={below:$\bze$}] (bze) at (0,0.75) {};
			\node[cross=0.14cm] (bzecross) at (bze) {};
			
			\draw (bm-1) 
			-- (bm-2)
			-- (dotsleft)
			-- (bon)
			-- (0)
			-- (1)
			-- (dotsright)
			-- (n-2)
			-- (n-1);
			\draw (n-1) to[bend  right=15] (bze);
			\draw (bze) to[bend  right=15] (bm-1);
		\end{tikzpicture}
	\end{center}
	where 
	\begin{tikzpicture}
		\node[dynkinnode] (iso) at (0,0){};
		\node[cross=0.14cm] (isocross) at (iso){};
	\end{tikzpicture}
	denotes an isotropic simple root. 
	The Dynkin diagram for the finite dimensional $ \gl(m|n) $
	can be obtained from the above Dynkin diagram by removing the $ \bze $ node.
	
	The fundamental representation of $U_q(\agl(m|n))$
	is an $(m+n)$-dimensional super vector space $ \fundrep = \fundrep_+ \oplus \fundrep_- $.
	The fundamental representation admits a \defn{crystal base} $ \{v_b \mid b\in B\} $
	with $ B=B_-\sqcup B_+ $  where $ B_- = \{\bm,\overline{m-1},\ldots, \bon \} $ and $ B_+ = \{1,\ldots,n-1,n\} $,
	which gives rise to the following \defn{crystal graph}:
	\begin{center}
		\begin{tikzpicture}
			\node (bm) at (-7,0) {$ \inlinetab{\bm} $};
			\node (bm-1) at (-4.5,0) {$ \inlinetab{\overline{m-1}} $};
			\node (dotsleft) at (-2.5,0) {$ \cdots $};
			\node (bon) at (-1,0) {$ \inlinetab{\bon} $};
			\node (1) at (1,0) {$ \inlinetab{1} $};
			\node (dotsright) at (2.5,0) {$ \cdots $};
			\node (n-1) at (4.5,0) {$ \inlinetab{n-1} $};
			\node (n) at (7,0) {$ \inlinetab{n} $};
			
			\draw [->] (bm) --  node[below]{\small $ \overline{m-1} $} (bm-1);
			\draw [->] (bm-1) -- node[below]{\small $ \overline{m-2} $} (dotsleft);
			\draw [->] (dotsleft) -- node[below]{\small $ \bon $} (bon);
			\draw [->] (bon) -- node[below]{\small $ 0 $} (1);
			\draw [->] (1) -- node[below]{\small $ 1 $} (dotsright);
			\draw [->] (dotsright) -- node[below]{\small $ n-2 $} (n-1);
			\draw [->] (n-1) -- node[below]{\small $ n-1 $} (n);
			\draw [->] (n) to[bend right=15] node[above]{\small $ \bze $} (bm);
		\end{tikzpicture}
	\end{center}
	where $ \inlinetab{b'}\xrightarrow{i}\inlinetab{b} $
	if and only if $ f_iv_{b'}=v_b $ 
	(equivalently, $ e_i v_b = v_{b'} $). 
	If we instead consider the fundamental representation of the finite type $ U_q(\gl(m|n)) $,
	the crystal graph the same as above but without the $ \bze $ arrow. 
	We can interpret the finite type crystal graph (with arrow labels removed) as a total ordering;
	explicitly, $ \bm < \cdots < \bon < 1 < \cdots < n $.
	For a more detailed explanation of crystals for $ U_q(\agl(m|n)) $,
	see~\cite{KO18}. 
	
	\subsection{Finite type crystals and tableaux}\label{sec:operators}
	Now we restrict our attention to the finite type crystal.
	Let $\fundrep^{\otimes N}$ be the $N$-th tensor power of the fundamental representation
	of $ U_q(\gl(m|n)) $.
	It can be shown that all tensor powers with $N\geq 1$ are completely reducible.
	Moreover, the irreducible subrepresentations (up to isomorphism) are in bijection 
	with Young diagrams of $ (m|n) $-hook shape~\cite{BR87, BKK00}.
	This bijection is derived using a map from crystal base elements to semistandard Young tableaux.
	In this context, a tableau is called \defn{semistandard} if 
	the rows are weakly (resp. strictly) increasing for letters
	in $ B_- $ (resp. $ B_+ $)
	and the columns are weakly (resp. strictly) increasing for letters in $ B_+ $ (resp. $ B_- $).
	
	We map crystal base elements 
	to Young diagrams using a modified version of Schensted's bumping algorithm.
	For inserting $ i\in B $ into a tableau $ \tab $, which we will denote
	$ i\rightarrow \tab $,
	the bumping algorithm is as follows:
	\begin{enumerate}
		\item\label{enum:bump1} For $ i\in B_{+} $, (resp. $ i\in B_{-} $):
		if none of the boxes in the first column of $ \tab $ are strictly larger than $ i $
		(resp. larger than or equal to $ i $)
		then add a box containing $i$ at the bottom of the column.
		\item\label{enum:bump2} Otherwise,
		for the topmost $ \inlinetab{j} $ with $ j>i $ (resp. $ j\ge i $) in the first column,
		replace $ \inlinetab{j} $  with $ \inlinetab{i} $.
		Then, insert $ j $ into the second column 
		following analogous steps~\ref{enum:bump1} and~\ref{enum:bump2}.
		\item Repeat until the bumped number can be put in a new box.
	\end{enumerate}
	
	\begin{ex}
		The following is an example computation of the bumping algorithm:
		\begin{align*}
			&\btw \rightarrow
			\begin{array}{|c|c|c|c|}
				\hline
				\raisebox{-1pt}{$ \bth $} & \raisebox{-1pt}{$ \bth $} & \raisebox{-1pt}{$ 1 $} & \raisebox{-1pt}{$ 3 $} \\
				\hline
				\raisebox{-1pt}{$ \btw $} & \raisebox{-1pt}{$ 1 $} & \raisebox{-1pt}{$ 2 $} & \raisebox{-1pt}{$ 5 $}\\
				\hline
			\end{array}\\
			&=
			\begin{array}{|c|c|c|c|}
				\multicolumn{1}{c}{\btw}\\
				\multicolumn{1}{c}{\downarrow}\\
				\hline
				\raisebox{-1pt}{$ \bth $} & \raisebox{-1pt}{$ \bth $} & \raisebox{-1pt}{$ 1 $} & \raisebox{-1pt}{$ 3 $} \\
				\hline
				\raisebox{-1pt}{$ {\bf\btw} $} & \raisebox{-1pt}{$ 1 $} & \raisebox{-1pt}{$ 2 $} & \raisebox{-1pt}{$ 5 $}\\
				\hline
			\end{array}
			=
			\begin{array}{|c|c|c|c|}
				\multicolumn{1}{c}{}&\multicolumn{1}{c}{\btw}\\
				\multicolumn{1}{c}{}&\multicolumn{1}{c}{\downarrow}\\
				\hline
				\raisebox{-1pt}{$ \bth $} & \raisebox{-1pt}{$ \bth $} & \raisebox{-1pt}{$ 1 $} & \raisebox{-1pt}{$ 3 $} \\
				\hline
				\raisebox{-1pt}{$ \btw $} & \raisebox{-1pt}{$ {\bf1} $} & \raisebox{-1pt}{$ 2 $} & \raisebox{-1pt}{$ 5 $}\\
				\hline
			\end{array}
			=
			\begin{array}{|c|c|c|c|}
				\multicolumn{1}{c}{}&\multicolumn{1}{c}{}&\multicolumn{1}{c}{1}\\
				\multicolumn{1}{c}{}&\multicolumn{1}{c}{}&\multicolumn{1}{c}{\downarrow}\\
				\hline
				\raisebox{-1pt}{$ \bth $} & \raisebox{-1pt}{$ \bth $} & \raisebox{-1pt}{$ 1 $} & \raisebox{-1pt}{$ 3 $} \\
				\hline
				\raisebox{-1pt}{$ \btw $} & \raisebox{-1pt}{$ \btw $} & \raisebox{-1pt}{$ {\bf2} $} & \raisebox{-1pt}{$ 5 $}\\
				\hline
			\end{array}
			=
			\begin{array}{|c|c|c|c|}
				\multicolumn{1}{c}{}&\multicolumn{1}{c}{}&\multicolumn{1}{c}{}&\multicolumn{1}{c}{2}\\
				\multicolumn{1}{c}{}&\multicolumn{1}{c}{}&\multicolumn{1}{c}{}&\multicolumn{1}{c}{\downarrow}\\
				\hline
				\raisebox{-1pt}{$ \bth $} & \raisebox{-1pt}{$ \bth $} & \raisebox{-1pt}{$ 1 $} & \raisebox{-1pt}{$ {\bf3} $} \\
				\hline
				\raisebox{-1pt}{$ \btw $} & \raisebox{-1pt}{$ \btw $} & \raisebox{-1pt}{$ 1 $} & \raisebox{-1pt}{$ 5 $}\\
				\hline
			\end{array}
			=
			\begin{array}{|c|c|c|c|c}
				\multicolumn{1}{c}{}&\multicolumn{1}{c}{}&\multicolumn{1}{c}{}&\multicolumn{1}{c}{}&\multicolumn{1}{c}{3}\\
				\multicolumn{1}{c}{}&\multicolumn{1}{c}{}&\multicolumn{1}{c}{}&\multicolumn{1}{c}{}&\multicolumn{1}{c}{\downarrow}\\
				\cline{1-4}
				\raisebox{-1pt}{$ \bth $} & \raisebox{-1pt}{$ \bth $} & \raisebox{-1pt}{$ 1 $} & \raisebox{-1pt}{$ {2} $} \\
				\cline{1-4}
				\raisebox{-1pt}{$ \btw $} & \raisebox{-1pt}{$ \btw $} & \raisebox{-1pt}{$ 1 $} & \raisebox{-1pt}{$ 5 $}\\
				\cline{1-4}
			\end{array}\\
			&=
			\begin{array}{|c|c|c|c|c}
				\hline
				\raisebox{-1pt}{$ \bth $} & \raisebox{-1pt}{$ \bth $} & \raisebox{-1pt}{$ 1 $} & \raisebox{-1pt}{$ 2 $} & \multicolumn{1}{|c|}{\raisebox{-1pt}{3}} \\
				\hline
				\raisebox{-1pt}{$ \btw $} & \raisebox{-1pt}{$ \btw $} & \raisebox{-1pt}{$ 1 $} & \raisebox{-1pt}{$ 5 $}\\
				\cline{1-4}
			\end{array}\,.
		\end{align*}
	\end{ex}
	
	Let $ v_{b_1}\otimes v_{b_2} \otimes \cdots \otimes v_{b_N}\in\fundrep^{\otimes N} $
	be a crystal base element with $ b_1,\ldots, b_N\in B $.
	The SSYT associated with this crystal base element is the insertion
	\[
	b_N\rightarrow(\cdots \rightarrow(b_3\rightarrow(b_2\rightarrow\inlinetab{b_1}))\cdots)
	\]
	which, for brevity, we will denote $ b_2 \cdots b_{N}\rightarrow \inlinetab{b_1} $.
	
	\begin{ex}\label{ex:crystalSSYT}
		For $ \fundrep $ the fundamental representation of $ U_q(\gl(3|5)) $,
		the crystal base element 
		$ v_3\otimes v_5 \otimes v_1 \otimes v_{\bth} \otimes v_2 \otimes v_{\bth} \otimes v_1 \otimes v_{\btw} \otimes v_{\btw} \in \fundrep^{\otimes 9}$
		is mapped to the tableau
		\begin{align*}
			5 1\bth 2\bth 1 \btw \btw \rightarrow 
			\begin{array}{|c|}
				\hline
				3\\
				\hline
			\end{array}
			=
			1\bth 2\bth 1 \btw \btw \rightarrow 
			\begin{array}{|c|}
				\hline
				3\\
				\hline
				5\\
				\hline
			\end{array}
			=
			\bth 2\bth 1 \btw \btw \rightarrow 
			\begin{array}{|c|c|}
				\hline
				1 & 3\\
				\hline
				5\\
				\cline{1-1}
			\end{array}
			=
			2\bth 1 \btw \btw \rightarrow 
			\begin{array}{|c|c|c|}
				\hline
				\bth & 1 & 3\\
				\hline
				5\\
				\cline{1-1}
			\end{array}
			=
			\bth 1 \btw \btw \rightarrow 
			\begin{array}{|c|c|c|}
				\hline
				\bth & 1 & 3\\
				\hline
				2 & 5\\
				\cline{1-2}
			\end{array}\\
			=
			1 \btw \btw \rightarrow 
			\begin{array}{|c|c|c|c|}
				\hline
				\bth & \bth & 1 & 3\\
				\hline
				2 & 5\\
				\cline{1-2}
			\end{array}
			=
			\btw \btw \rightarrow 
			\begin{array}{|c|c|c|c|}
				\hline
				\bth & \bth & 1 & 3\\
				\hline
				1 & 2 & 5\\
				\cline{1-3}
			\end{array}
			=
			\btw \rightarrow 
			\begin{array}{|c|c|c|c|}
				\hline
				\bth & \bth & 1 & 3\\
				\hline
				\btw & 1 & 2 & 5\\
				\hline
			\end{array}
			=
			\begin{array}{|c|c|c|c|c}
				\hline
				\raisebox{-1pt}{$ \bth $} & \raisebox{-1pt}{$ \bth $} & \raisebox{-1pt}{$ 1 $} & \raisebox{-1pt}{$ 2 $} & \multicolumn{1}{|c|}{\raisebox{-1pt}{3}} \\
				\hline
				\raisebox{-1pt}{$ \btw $} & \raisebox{-1pt}{$ \btw $} & \raisebox{-1pt}{$ 1 $} & \raisebox{-1pt}{$ 5 $}\\
				\cline{1-4}
			\end{array}.	
		\end{align*}
	\end{ex}
	
	Note that the map from crystal base elements to SSYT is not injective
	(for example, 
	$ v_3\otimes v_2\otimes v_5 \otimes v_1\otimes v_1 \otimes v_{\bth}\otimes v_{\btw} \otimes v_{\bth} \otimes v_{\btw} $ is mapped to the same tableau in Example~\ref{ex:crystalSSYT}).
	However, this map sends crystal base elements of isomorphic irreducible subrepresentations 
	to the same set of SSYT
	(in particular, this map gives a bijection between irreducible subrepresentations (up to isomorphism) and Young diagrams).
	
	Note also that it is possible to construct a bijection between crystal base elements and ordered pairs of tableaux using the RSK algorithm (which makes use of Schensted's bumping algorithm)~\cite{BR87}.

	Using SSYT allows us to give combinatorial descriptions of $ U_q(\gl(m|n)) $ crystals.
	We will now restrict our attention to rectangular tableaux,
	but much of the discussion in this section applies more generally.
	
	Let $ B^{r,s} $ be the set of rectangular SSYT with height $ r $ and width $ s $.
	Take, an arbitrary tableau,
	\[
	\tab=
	\begin{array}{|c|c|c|c|}
		\hline
		t_{11} & t_{12} & \cdots  & t_{1s}\\
		\hline
		t_{21} & t_{22} & \cdots  & t_{2s}\\
		\hline
		\vdots  & \vdots  & \ddots  & \vdots \\
		\hline
		t_{r1} & t_{r2} & \cdots  & t_{rs}\\
		\hline
	\end{array}
	\in B^{r,s}.
	\]
	
	We define a function, $ \col $ by reading the tableau from top-to-bottom, right-to-left;
	explicitly,
	\[
	\col(\tab) = \underbrace{t_{1s}\ldots t_{rs}}_{t_{*s}} \cdots \underbrace{t_{12}\ldots t_{r2}}_{t_{*2}}
	\underbrace{t_{11}\ldots t_{r1}}_{t_{*1}}.
	\]
	
	Moreover, for $ \tab_1,\tab_2\in B^{r,s} $,
	we define $ \col(\tab_1\otimes \tab_2)=\col(\tab_1)\col(\tab_2) $.
	
	For $ i\in \Ieven $,
	the action of the crystal operators $ e_i $ and $ f_i $ can be computed 
	by a \defn{signature rule} similar to that for $ U_q'(\asl_n) $-crystals~\cite{Yamada04}.
	
	\begin{dfn}
		For some positive integer $ d $,
		let $ \tab\in (B^{r,s})^{\otimes d} $ and let $ i\in \Ieven $.
		If $ i=\bk\in I_- $, we denote $ i+1 = \overline{k+1} $.
		We define the \defn{$ i $-signature}, denoted $ \sg_i(\tab) $,
		to be the sequence of $ + $ and $ - $
		obtained by deleting all letters in $ \col(\tab) $ that are not $ i $ or $ i+1 $,
		and then replacing all $ i $ with a $ - $ symbol
		and replacing all $ i+1 $ with a $ + $ symbol.
		
		We define the \defn{reduced $ i $-signature}, denoted $ \rsg_i(\tab) $,
		to be equal to the $ i $-signature,
		except with $ +- $ pairs (in that order) successively deleted,
		so that $ \rsg_i(\tab) $ is of the form
		\[
		\underbrace{-\cdots -}_{a}\underbrace{+\cdots +}_{b}
		\]
		(where $ a $ or $ b $ can be zero).
	\end{dfn}
	
	For a tableau $ \tab\in B^{r,s} $ and for $ i\in \Ieven $ where $ i\in I_+ $
	(resp. $ i\in I_- $):
	\begin{itemize}
		\item To evaluate $ f_i(\tab) $ (resp. $ e_i(\tab) $),
		find the rightmost $ - $ symbol in $ \rsg_i(\tab) $ and change the corresponding
		$ \begin{array}{|c|}
			\hline \raisebox{-1pt}{$ i $}\\\hline
		\end{array} $
		in $ \tab $ to 
		$ \begin{array}{|c|}
			\hline \raisebox{-1pt}{$ i+1 $} \\\hline
		\end{array} $.
		If there are no $ - $ symbols, then $ f_i(\tab)=0 $ (resp. $ e_i(\tab)=0 $).
		\item To evaluate $ e_i(\tab) $ (resp. $ f_i(\tab) $),
		find the leftmost $ + $ symbol in $ \rsg_i(\tab) $ and change the corresponding
		$ \begin{array}{|c|}
			\hline \raisebox{-1pt}{$ i+1 $} \\\hline
		\end{array} $
		in $ \tab $ to 
		$ \begin{array}{|c|}
			\hline \raisebox{-1pt}{$ i $} \\\hline
		\end{array} $.
		If there are no $ + $ symbols, then $ e_i(\tab)=0 $ (resp. $ f_i(\tab)=0 $).
	\end{itemize}
	
	The $ f_0 $ and $ e_0 $ operators have a different algorithm:
	\begin{itemize}
		\item If the first occurrence of $ \bon $ in $ \col(\tab) $ is before the first occurrence of $ 1 $,
		then $ e_0(\tab)=0 $ and $ f_0(\tab) $ replaces the corresponding 
		$\begin{array}{|c|}
			\hline \raisebox{-1pt}{$ \bon $} \\\hline
		\end{array}$
		in $ \tab $ with 
		$\begin{array}{|c|}
			\hline \raisebox{-1pt}{$ 1 $} \\\hline
		\end{array}$.
		
		\item If the first occurrence of $ 1 $ in $ \col(\tab) $ is before the first occurrence of $ \bon $,
		then $ f_0(\tab)=0 $ and $ e_0(\tab) $ replaces the corresponding
		$\begin{array}{|c|}
			\hline \raisebox{-1pt}{$ 1 $} \\\hline
		\end{array}$
		in $ \tab $ with
		$\begin{array}{|c|}
			\hline \raisebox{-1pt}{$ \bon $} \\\hline
		\end{array}$.
	\end{itemize}
	
	
	\begin{ex}\label{ex:operator}
		We will compute $ e_{\bth}(\tab) $ for
		\[
		\tab=
		\begin{array}{|c|c|c|}
			\hline
			\raisebox{-1pt}{$ \bfo $} & \raisebox{-1pt}{$ \bth $} & \raisebox{-1pt}{$ \bth $}\\
			\hline
			\raisebox{-1pt}{$ \bth $} & \raisebox{-1pt}{$ 1 $} & \raisebox{-1pt}{$ 3 $} \\
			\hline
			\raisebox{-1pt}{$ 1 $} & \raisebox{-1pt}{$ 2 $} & \raisebox{-1pt}{$ 3 $}\\
			\hline
		\end{array}
		.
		\]
		We have that
		\[
		\begin{array}{rcccccccccc}
			\col(\tab)\phantom{)} & = & \bth & 3 & 3 & \bth & 1 & 2 & \bfo & \bth & 1\\
			\sg_{\bth}(\tab) & = & - &&& - &&& + & - &\\
			\rsg_{\bth}(\tab) & = & - &&& - &&&&&
		\end{array}
		.
		\]
		
		The rightmost $ - $ corresponds to the bolded number below, 
		\[
		\begin{array}{rcccccccccc}
			\col(\tab)\phantom{)} & = & \bth & 3 & 3 & {\bf\bth} & 1 & 2 & \bfo & \bth & 1\\
			\rsg_{\bth}(\tab) & = & - &&& - &&&&&
		\end{array}
		\quad \rightsquigarrow  \quad
		\begin{array}{|c|c|c|}
			\hline
			\raisebox{-1pt}{$ \bfo $} & \raisebox{-1pt}{$ {\bf\bth} $} & \raisebox{-1pt}{$ \bth $} \\
			\hline
			\raisebox{-1pt}{$ \bth $} & \raisebox{-1pt}{$ 1 $} & \raisebox{-1pt}{$ 3 $} \\
			\hline
			\raisebox{-1pt}{$ 1 $} & \raisebox{-1pt}{$ 2 $} & \raisebox{-1pt}{$ 3 $}\\
			\hline
		\end{array}
		\]
		so we replace this 
		$ \begin{array}{|c|}
			\hline \raisebox{-1pt}{$ \bth $} \\\hline
		\end{array} $
		with 
		$ \begin{array}{|c|}
			\hline \raisebox{-1pt}{$ \bfo $} \\\hline
		\end{array} $
		to get
		\[
		e_{\bth}(\tab) = 
		\begin{array}{|c|c|c|}
			\hline
			\raisebox{-1pt}{$ \bfo $} & \raisebox{-1pt}{$ \bfo $} & \raisebox{-1pt}{$ \bth $}\\
			\hline
			\raisebox{-1pt}{$ \bth $} & \raisebox{-1pt}{$ 1 $} & \raisebox{-1pt}{$ 3 $} \\
			\hline
			\raisebox{-1pt}{$ 1 $} & \raisebox{-1pt}{$ 2 $} & \raisebox{-1pt}{$ 3 $} \\
			\hline
		\end{array}
		.
		\]
	\end{ex}
	
	We can also use SSYT to describe the weights (in the representation theoretic sense)
	of the crystal elements.
	Weights are linear combinations in the set 
	$ \bigoplus_{b\in B}\ZZ\varepsilon_b $ 
	(for our purposes, $ \varepsilon_b $ can be treated as formal symbols).
	In the weight of a SSYT $ \tab $,
	the coefficient corresponding to $ \varepsilon_b $ 
	is equal to the number of appearances of $ b $ in $ \tab $~\cite{BKK00}.
	
	\begin{ex}
		Let $ \tab $ be as in Example~\ref{ex:operator}.
		Then, the weight of $ \tab $ is 
		$ \varepsilon_{\bfo}+3\varepsilon_{\bth}+2\varepsilon_1+\varepsilon_2+2\varepsilon_3 $.
	\end{ex}
	We define arbitrary weights $\mu$ and $\nu$ as follows:
	\begin{align*}
		\mu &= \mu_1\varepsilon_{\bm}+\cdots+\mu_m\varepsilon_{\bon}+\mu_{m+1}\varepsilon_1+\cdots+\mu_{m+n}\varepsilon_n\,, \\
		\nu &= \nu_1\varepsilon_{\bm}+\cdots+\nu_m\varepsilon_{\bon}+\nu_{m+1}\varepsilon_1+\cdots+\nu_{m+n}\varepsilon_n \,.
	\end{align*}
	We can define a partial ordering on the set of weights
	by saying 
	$ \mu\ge\nu $ if the following hold:
	\begin{align*}
		\mu_1+\cdots+\mu_{m+n}&=\nu_1+\cdots+\nu_{m+n} \\
		\mu_1+\cdots+\mu_j&\ge\nu_1+\cdots+\nu_j && \text{for all}~j=1,\ldots,m+n.
	\end{align*}
	Note that the operators $ e_i $ ($ i\in I\setminus\{\bze\} $)
	raise the weight and the operators $ f_i $ ($ i\in I\setminus\{\bze\} $) lower the weight. We say that $ \tab $ is a \defn{highest weight element} if $ e_i \tab = 0 $ for all $ i\in I\setminus\{\bze\} $.
	
	\begin{dfn}
		A crystal element $ \tab $ with weight $ \lambda $ is a \defn{genuine highest weight element}
		if
		\begin{enumerate}[label=(\roman*)]
			\item given any other crystal element with some weight $ \mu $,
			the expression $ \lambda-\mu $ has only positive coefficients; and
			\item no other crystal element has weight $ \lambda $.
		\end{enumerate}
	\end{dfn}
	Every genuine highest weight element is a highest weight element,
	but not every highest weight element is a genuine highest weight element~\cite{BKK00}.
	
	For crystals whose elements are the SSYT of the same shape,
	the genuine highest weight element exists and is unique~\cite{BKK00}.
	Each connected component of $ B^{r_1,s_1}\otimes B^{r_2,s_2} $
	is isomorphic to such a crystal
	(this isomorphism is given by $ (\tab_1\otimes \tab_2) \mapsto (\col(\tab_2)\rightarrow \tab_1) $
	for $ \tab_1\otimes \tab_2 $ in the connected component of interest).
	Thus, each connected component of $ B^{r_1,s_1}\otimes B^{r_1,s_1} $ 
	has a unique genuine highest weight element.
	This property is of great utility in the proofs of the main theorems.
	
	\subsection{Combinatorial \texorpdfstring{$ R $}{R}-matrix} \label{sec:comb_R}
	Consider two $U_q(\agl(m|n))$-crystals $B^{r_1,s_1}$ and $B^{r_2,s_2}$. Then there exists~\cite{KO18} a unique isomorphism that commutes with $ e_i $ and $ f_i $ (for all $ i\in I $) called the \defn{combinatorial R-matrix}:
	\[
	R\colon B^{r_1,s_1} \otimes B^{r_2,s_2} \to B^{r_2,s_2} \otimes B^{r_1,s_1}.
	\]
	
	We can describe the action of the combinatorial $ R $-matrix
	using Schensted's bumping algorithm.
	\begin{thm}[{\cite[Theorem~7.9]{KO18}}]
		The combinatorial $ R $-matrix maps $ \tab_1\otimes \tab_2 $ to $ \widetilde{\tab}_2\otimes \widetilde{\tab}_1 $ 
		if and only if
		$
		\col(\tab_2)\rightarrow \tab_1 = \col(\widetilde{\tab}_1)\rightarrow\widetilde{\tab}_2
		$
	\end{thm}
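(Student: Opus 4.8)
\emph{Proof proposal.}
The plan is to show that the assignment $\Psi$ determined by the insertion condition is a well-defined isomorphism of $U_q(\gl(m|n))$-crystals, and then to conclude $\Psi=R$ from the uniqueness of the combinatorial $R$-matrix. Throughout I use that the tensor product of two rectangular $U_q(\gl(m|n))$-crystals is \emph{multiplicity-free}: the multiplicity of $B(\lambda)$ (the crystal of SSYT of shape $\lambda$) in $B^{r_1,s_1}\otimes B^{r_2,s_2}$ equals the Littlewood--Richardson coefficient $c^{\lambda}_{(s_1^{r_1}),(s_2^{r_2})}$, by the super Robinson--Schensted--Knuth correspondence~\cite{BR87, BKK00}, and it is classical that this coefficient is $0$ or $1$ whenever both partitions are rectangular. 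In particular each $B(\lambda)$ occurs at most once, in either tensor order.

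Introduce the two insertion maps
\[
P_{12}\colon B^{r_1,s_1}\otimes B^{r_2,s_2}\to\bigsqcup_{\lambda}B(\lambda),\qquad \tab_1\otimes\tab_2\longmapsto\col(\tab_2)\to\tab_1,
\]
and $P_{21}\colon B^{r_2,s_2}\otimes B^{r_1,s_1}\to\bigsqcup_{\lambda}B(\lambda)$, $\tab_2\otimes\tab_1\longmapsto\col(\tab_1)\to\tab_2$, the codomain being the crystal of all $(m|n)$-hook SSYT. By the discussion at the end of Section~\ref{sec:operators}, the restriction of $P_{12}$ to any connected component $C$ is an isomorphism of $U_q(\gl(m|n))$-crystals onto $B(\lambda)$, where $\lambda$ is the weight of the genuine highest weight element of $C$; hence $P_{12}$ is a $U_q(\gl(m|n))$-crystal morphism. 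By multiplicity-freeness distinct components carry distinct shapes, so their images are disjoint, so $P_{12}$ is injective, with image $\bigsqcup\{B(\lambda):c^{\lambda}_{(s_1^{r_1}),(s_2^{r_2})}=1\}$; likewise $P_{21}$ is injective with image $\bigsqcup\{B(\lambda):c^{\lambda}_{(s_2^{r_2}),(s_1^{r_1})}=1\}$, and the two images coincide since $c^{\lambda}_{\mu\nu}=c^{\lambda}_{\nu\mu}$. Therefore $\Psi:=P_{21}^{-1}\circ P_{12}$ is a well-defined bijection $B^{r_1,s_1}\otimes B^{r_2,s_2}\to B^{r_2,s_2}\otimes B^{r_1,s_1}$, and, being a composite of isomorphisms of $U_q(\gl(m|n))$-crystals onto this common image, it is an isomorphism of $U_q(\gl(m|n))$-crystals.

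To finish, it remains to show $\Psi=R$. Since $R$ commutes with $e_i$ and $f_i$ for every $i\in I$, it is in particular an isomorphism of $U_q(\gl(m|n))$-crystals $B^{r_1,s_1}\otimes B^{r_2,s_2}\to B^{r_2,s_2}\otimes B^{r_1,s_1}$, so it suffices to prove that such an isomorphism is unique. Any such isomorphism preserves connected components and their isomorphism type, and by multiplicity-freeness there is exactly one component isomorphic to $B(\lambda)$ on each side; hence it is determined by an isomorphism between the two copies of $B(\lambda)$, for each relevant $\lambda$. But $B(\lambda)$ is connected~\cite{BKK00}, so a crystal automorphism of it is the identity as soon as it fixes one element; and the genuine highest weight element, being the unique element of its weight, is fixed by every automorphism. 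Thus the $U_q(\gl(m|n))$-crystal isomorphism is unique, so $\Psi=R$. Unwinding the definition of $\Psi$, $R(\tab_1\otimes\tab_2)=\widetilde{\tab}_2\otimes\widetilde{\tab}_1$ if and only if $P_{12}(\tab_1\otimes\tab_2)=P_{21}(\widetilde{\tab}_2\otimes\widetilde{\tab}_1)$, that is, if and only if $\col(\tab_2)\to\tab_1=\col(\widetilde{\tab}_1)\to\widetilde{\tab}_2$, which is the asserted characterisation.

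The step I expect to carry the weight is the uniqueness of the $U_q(\gl(m|n))$-crystal isomorphism, which rests on two inputs that are not purely formal in the super setting: the multiplicity-freeness of $B^{r_1,s_1}\otimes B^{r_2,s_2}$, which I would extract from the $(m|n)$-hook Littlewood--Richardson rule together with the classical rectangular case; and the connectedness (hence rigidity) of $B(\lambda)$, which follows from its realisation as the crystal of an irreducible $U_q(\gl(m|n))$-module~\cite{BKK00}. An alternative that avoids these is to verify directly that $\Psi$ also intertwines $e_0$ and $f_0$ (a compatibility between the ``first occurrence of $\bon$ versus $1$'' rule and Schensted insertion, which one would hope to deduce from the fact that the column reading words of $\tab_1\otimes\tab_2$ and of $\col(\tab_2)\to\tab_1$ are plactic equivalent), and then to invoke the uniqueness of the full $U_q(\agl(m|n))$-crystal isomorphism $R$ recorded above; this replaces the representation-theoretic inputs with a more delicate plactic computation.
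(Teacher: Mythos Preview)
The paper does not give its own proof of this statement: it is quoted verbatim as \cite[Theorem~7.9]{KO18} and used as a black box, so there is nothing in the paper to compare your argument against.

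Your proposal is a standard and essentially correct strategy for results of this type: build the candidate isomorphism $\Psi$ from insertion, check it is a $U_q(\gl(m|n))$-crystal isomorphism, and then identify $\Psi$ with $R$ by a uniqueness argument. The reduction to uniqueness at the finite-type level is sound, since $R$ is in particular a $U_q(\gl(m|n))$-crystal isomorphism. The two nontrivial inputs you flag are the right ones. Compatibility of super Schensted insertion with the $U_q(\gl(m|n))$-crystal operators is exactly what the paper asserts (without proof) at the end of Section~\ref{sec:operators}; for a self-contained argument you would need to cite or establish this in the super setting. Multiplicity-freeness of $B^{r_1,s_1}\otimes B^{r_2,s_2}$ does follow from the fact that the hook Schur functions multiply with the same Littlewood--Richardson coefficients as ordinary Schur functions~\cite{BR87}, together with the classical fact that $c^{\lambda}_{(s_1^{r_1}),(s_2^{r_2})}\in\{0,1\}$; you should make the first step explicit rather than leave it implicit. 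The rigidity of $B(\lambda)$ via its genuine highest weight element is fine. Your closing alternative (checking $e_{\bze},f_{\bze}$ directly) is unnecessary once the finite-type uniqueness goes through, and would indeed be more delicate.
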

	
	\begin{ex}\label{ex:Rmatrix}
		Set
		\[
		\tab_1 = 
		\begin{array}{|c|c|c|}
			\hline
			\raisebox{-1pt}{$ \bfo $} & \raisebox{-1pt}{$ \bfo $} & \raisebox{-1pt}{$ \bth $}\\
			\hline
			\raisebox{-1pt}{$ \bth $} & \raisebox{-1pt}{$ 1 $} & \raisebox{-1pt}{$ 3 $} \\
			\hline
			\raisebox{-1pt}{$ 1 $} & \raisebox{-1pt}{$ 2 $} & \raisebox{-1pt}{$ 3 $} \\
			\hline
		\end{array}
		\,, \quad \tab_2=
		\begin{array}{|c|}
			\hline
			\raisebox{-1pt}{$ \bth $} \\
			\hline
			\raisebox{-1pt}{$ 1 $} \\
			\hline
			\raisebox{-1pt}{$ 2 $} \\
			\hline
		\end{array}
		\,, \quad \widetilde{\tab}_2=
		\begin{array}{|c|}
			\hline
			\raisebox{-1pt}{$ \bth $} \\
			\hline
			\raisebox{-1pt}{$ 1 $} \\
			\hline
			\raisebox{-1pt}{$ 3 $} \\
			\hline
		\end{array}
		\,,\quad \widetilde{\tab}_1=
		\begin{array}{|c|c|c|}
			\hline
			\raisebox{-1pt}{$ \bfo $} & \raisebox{-1pt}{$ \bfo $} & \raisebox{-1pt}{$ 1 $}\\
			\hline
			\raisebox{-1pt}{$ \bth $} & \raisebox{-1pt}{$ \bth $} & \raisebox{-1pt}{$ 2 $} \\
			\hline
			\raisebox{-1pt}{$ 1 $} & \raisebox{-1pt}{$ 2 $} & \raisebox{-1pt}{$ 3 $} \\
			\hline
		\end{array}
		\,.
		\]
		Then, $ R(\tab_1\otimes \tab_2)=\widetilde{\tab}_2\otimes \widetilde{\tab}_1 $.
		Indeed, let us first compute 
		\[ 
		\col(\tab_2)\rightarrow \tab_1
		=
		\bth12\rightarrow
		\begin{array}{ccc}
			\hline
			\multicolumn{1}{|c}{\raisebox{-1pt}{$ \bfo $}} 
			& \multicolumn{1}{|c}{\raisebox{-1pt}{$ \bfo $}} 
			& \multicolumn{1}{|c|}{\raisebox{-1pt}{$ \bth $}}\\
			\hline
			\multicolumn{1}{|c}{\raisebox{-1pt}{$ \bth $}} 
			& \multicolumn{1}{|c}{\raisebox{-1pt}{$ 1 $}} 
			& \multicolumn{1}{|c|}{\raisebox{-1pt}{$ 3 $}} \\
			\hline
			\multicolumn{1}{|c}{\raisebox{-1pt}{$ 1 $}}
			& \multicolumn{1}{|c}{\raisebox{-1pt}{$ 2 $}}
			& \multicolumn{1}{|c|}{\raisebox{-1pt}{$ 3 $}} \\
			\hline
		\end{array}
		=
		12\rightarrow
		\begin{array}{cccc}
			\hline
			\multicolumn{1}{|c}{\raisebox{-1pt}{$ \bfo $}} 
			& \multicolumn{1}{|c}{\raisebox{-1pt}{$ \bfo $}} 
			& \multicolumn{1}{|c}{\raisebox{-1pt}{$ \bth $}}
			& \multicolumn{1}{|c|}{\raisebox{-1pt}{$ 3 $}}\\
			\hline
			\multicolumn{1}{|c}{\raisebox{-1pt}{$ \bth $}} 
			& \multicolumn{1}{|c}{\raisebox{-1pt}{$ \bth $}} 
			& \multicolumn{1}{|c|}{\raisebox{-1pt}{$ 1 $}} \\
			\cline{1-3}
			\multicolumn{1}{|c}{\raisebox{-1pt}{$ 1 $}}
			& \multicolumn{1}{|c}{\raisebox{-1pt}{$ 2 $}}
			& \multicolumn{1}{|c|}{\raisebox{-1pt}{$ 3 $}} \\
			\cline{1-3}
		\end{array}
		=
		2\rightarrow
		\begin{array}{cccc}
			\hline
			\multicolumn{1}{|c}{\raisebox{-1pt}{$ \bfo $}} 
			& \multicolumn{1}{|c}{\raisebox{-1pt}{$ \bfo $}} 
			& \multicolumn{1}{|c}{\raisebox{-1pt}{$ \bth $}}
			& \multicolumn{1}{|c|}{\raisebox{-1pt}{$ 3 $}}\\
			\hline
			\multicolumn{1}{|c}{\raisebox{-1pt}{$ \bth $}} 
			& \multicolumn{1}{|c}{\raisebox{-1pt}{$ \bth $}} 
			& \multicolumn{1}{|c|}{\raisebox{-1pt}{$ 1 $}} \\
			\cline{1-3}
			\multicolumn{1}{|c}{\raisebox{-1pt}{$ 1 $}}
			& \multicolumn{1}{|c}{\raisebox{-1pt}{$ 2 $}}
			& \multicolumn{1}{|c|}{\raisebox{-1pt}{$ 3 $}} \\
			\cline{1-3}
			\multicolumn{1}{|c|}{\raisebox{-1pt}{$ 1 $}}\\
			\cline{1-1}
		\end{array}
		=
		\begin{array}{cccc}
			\hline
			\multicolumn{1}{|c}{\raisebox{-1pt}{$ \bfo $}} 
			& \multicolumn{1}{|c}{\raisebox{-1pt}{$ \bfo $}} 
			& \multicolumn{1}{|c}{\raisebox{-1pt}{$ \bth $}}
			& \multicolumn{1}{|c|}{\raisebox{-1pt}{$ 3 $}}\\
			\hline
			\multicolumn{1}{|c}{\raisebox{-1pt}{$ \bth $}} 
			& \multicolumn{1}{|c}{\raisebox{-1pt}{$ \bth $}} 
			& \multicolumn{1}{|c|}{\raisebox{-1pt}{$ 1 $}} \\
			\cline{1-3}
			\multicolumn{1}{|c}{\raisebox{-1pt}{$ 1 $}}
			& \multicolumn{1}{|c}{\raisebox{-1pt}{$ 2 $}}
			& \multicolumn{1}{|c|}{\raisebox{-1pt}{$ 3 $}} \\
			\cline{1-3}
			\multicolumn{1}{|c|}{\raisebox{-1pt}{$ 1 $}}\\
			\cline{1-1}
			\multicolumn{1}{|c|}{\raisebox{-1pt}{$ 2 $}}\\
			\cline{1-1}
		\end{array}
		.
		\]
		We similarly find that
		\[
		\col(\widetilde{\tab}_1)\rightarrow \widetilde{\tab}_2
		=
		\begin{array}{cccc}
			\hline
			\multicolumn{1}{|c}{\raisebox{-1pt}{$ \bfo $}} 
			& \multicolumn{1}{|c}{\raisebox{-1pt}{$ \bfo $}} 
			& \multicolumn{1}{|c}{\raisebox{-1pt}{$ \bth $}}
			& \multicolumn{1}{|c|}{\raisebox{-1pt}{$ 3 $}}\\
			\hline
			\multicolumn{1}{|c}{\raisebox{-1pt}{$ \bth $}} 
			& \multicolumn{1}{|c}{\raisebox{-1pt}{$ \bth $}} 
			& \multicolumn{1}{|c|}{\raisebox{-1pt}{$ 1 $}} \\
			\cline{1-3}
			\multicolumn{1}{|c}{\raisebox{-1pt}{$ 1 $}}
			& \multicolumn{1}{|c}{\raisebox{-1pt}{$ 2 $}}
			& \multicolumn{1}{|c|}{\raisebox{-1pt}{$ 3 $}} \\
			\cline{1-3}
			\multicolumn{1}{|c|}{\raisebox{-1pt}{$ 1 $}}\\
			\cline{1-1}
			\multicolumn{1}{|c|}{\raisebox{-1pt}{$ 2 $}}\\
			\cline{1-1}
		\end{array}
		.
		\]
	\end{ex}
	
	There is a more explicit method of determining the $R$-matrix of two tableaux. This involves an inversion of the modified bumping algorithm outlined previously.
	Let $\tab_1 \in B^{r_1,s_1}$ and $\tab_2 \in B^{r_2,s_2}$.
	Then $R(\tab_1 \otimes \tab_2)$ is determined by the following process. 
	Begin with $ P = \col(\tab_2) \rightarrow \tab_1$.
	Let $\widehat{Q}$ be a rectangular reverse semi-standard tableau of height $ r_1 $ and width $ s_1 $. We construct $\widehat{Q}$ using the weight vector given by $\mu = (\widetilde{d}_i - d_i, \cdots, \widetilde{d}_s - d_s)$ where $\widetilde{d_i}$ and $d_i$ are the heights of the $i$th column of P and $\tab_2$ respectively, and $s$ is the width of $P$. Then $\widehat{Q}$ is given by the unique reverse conjugate semi-standard tableau of shape $\tab_1$ and weight $\mu$.
	
	We then perform a reverse insertion on P by reading $\widehat{Q}$ bottom-to-top, left-to-right. Each element in the reading of $\widehat{Q}$ gives the next column on which we perform the bumping algorithm in reverse.
	The elements removed from $P$ are placed bottom-to-top and left-to-right into a rectangular tableau of height $ r_1 $ and width $ s_1 $.
	This tableau is $\widetilde{\tab}_1$.
	Continuing until $\widehat{Q}$ is empty, we obtain the resultant tableaux as $P = \widetilde{\tab}_2$ and $\widetilde{\tab}_1$.
	\begin{ex}\label{ex:1explicit_R}
		This example will demonstrate the explicit $R$-matrix computation for $\tab_1$ and $\tab_2$ defined as follows
		\[
		\tab_1 = 
		\begin{array}{|c|c|c|}
			\hline
			\raisebox{-1pt}{\( \bfo \)} & \raisebox{-1pt}{\( \bfo \)} & \raisebox{-1pt}{\( \bth \)}\\
			\hline
			\raisebox{-1pt}{\( \bth \)} & \raisebox{-1pt}{\( 1 \)} & \raisebox{-1pt}{\( 3 \)} \\
			\hline
			\raisebox{-1pt}{\( 1 \)} & \raisebox{-1pt}{\( 2 \)} & \raisebox{-1pt}{\( 3 \)} \\
			\hline
		\end{array}
		, \quad \tab_2=
		\begin{array}{|c|}
			\hline
			\raisebox{-1pt}{\( \bth \)} \\
			\hline
			\raisebox{-1pt}{\( 1 \)} \\
			\hline
			\raisebox{-1pt}{\( 2 \)} \\
			\hline
		\end{array}\,.
		\]
		From Example~\ref{ex:Rmatrix} we find that
		\[
		P=	\col(\tab_2) \rightarrow \tab_1
		=
		\begin{array}{cccc}
			\hline
			\multicolumn{1}{|c}{\raisebox{-1pt}{\( \bfo \)}} 
			& \multicolumn{1}{|c}{\raisebox{-1pt}{\( \bfo \)}} 
			& \multicolumn{1}{|c}{\raisebox{-1pt}{\( \bth \)}}
			& \multicolumn{1}{|c|}{\raisebox{-1pt}{\( 3 \)}}\\
			\hline
			\multicolumn{1}{|c}{\raisebox{-1pt}{\( \bth \)}} 
			& \multicolumn{1}{|c}{\raisebox{-1pt}{\( \bth \)}} 
			& \multicolumn{1}{|c|}{\raisebox{-1pt}{\( 1 \)}} \\
			\cline{1-3}
			\multicolumn{1}{|c}{\raisebox{-1pt}{\( 1 \)}}
			& \multicolumn{1}{|c}{\raisebox{-1pt}{\( 2 \)}}
			& \multicolumn{1}{|c|}{\raisebox{-1pt}{\( 3 \)}} \\
			\cline{1-3}
			\multicolumn{1}{|c|}{\raisebox{-1pt}{\( 1 \)}}\\
			\cline{1-1}
			\multicolumn{1}{|c|}{\raisebox{-1pt}{\( 2 \)}}\\
			\cline{1-1}
		\end{array} \quad .
		\]
		The weight vector associated with $\widehat{Q}$ is then given by $(2,3,3,1)$. There exists a unique reverse conjugate semi-standard tableau of shape $\tab_1$ and weight $(2,3,3,1)$ given as follows,
		\[
		\widehat{Q} = 
		\begin{array}{ccc}
			\hline
			\multicolumn{1}{|c|}{\raisebox{-1pt}{\( 4 \)}} &\multicolumn{1}{|c|}{\raisebox{-1pt}{\( 3 \)}}&\multicolumn{1}{|c|}{\raisebox{-1pt}{\( 2 \)}}  \\
			\hline
			\multicolumn{1}{|c|}{\raisebox{-1pt}{\( 3 \)}}&\multicolumn{1}{|c|}{\raisebox{-1pt}{\( 2 \)}}&\multicolumn{1}{|c|}{\raisebox{-1pt}{\( 1 \)}} \\
			\hline
			\multicolumn{1}{|c|}{\raisebox{-1pt}{\( 3 \)}}&\multicolumn{1}{|c|}{\raisebox{-1pt}{\( 2 \)}}&\multicolumn{1}{|c|}{\raisebox{-1pt}{\( 1 \)}}\\
			\hline
		\end{array}
		\,.
		\]
		Now reading $\widehat{Q}$ from bottom-to-top, left-to-right.
		We first read the element $3$, beginning in column $3$ of $P$ we pop the final row element $3$, then performing reverse insertion from column $2$, this $3$ switches with the $2$ which then subsequently switches with $1$ in column $1$.
		The $1$ left over then begins filling a tableau $\widetilde{\tab}_1$ bottom-to-top, left-to-right.
		We are then left with the following $P$ and $\widehat{Q}$, and $\widetilde{\tab}_1$:
		\[
		P = 
		\begin{array}{cccc}
			\hline
			\multicolumn{1}{|c}{\raisebox{-1pt}{\( \bfo \)}} 
			& \multicolumn{1}{|c}{\raisebox{-1pt}{\( \bfo \)}} 
			& \multicolumn{1}{|c}{\raisebox{-1pt}{\( \bth \)}}
			& \multicolumn{1}{|c|}{\raisebox{-1pt}{\( 3 \)}}\\
			\hline
			\multicolumn{1}{|c}{\raisebox{-1pt}{\( \bth \)}} 
			& \multicolumn{1}{|c}{\raisebox{-1pt}{\( \bth \)}} 
			& \multicolumn{1}{|c|}{\raisebox{-1pt}{\( 1 \)}} \\
			\cline{1-3}
			\multicolumn{1}{|c}{\raisebox{-1pt}{\( 1 \)}}
			& \multicolumn{1}{|c}{\raisebox{-1pt}{\( 3 \)}}
			& \multicolumn{1}{|c}{\raisebox{-1pt}{\(  \)}} \\
			\cline{1-2}
			\multicolumn{1}{|c|}{\raisebox{-1pt}{\( 2 \)}}\\
			\cline{1-1}
			\multicolumn{1}{|c|}{\raisebox{-1pt}{\( 2 \)}}\\
			\cline{1-1}
		\end{array}
		, \quad \widehat{Q} =
		\begin{array}{ccc}
			\hline
			\multicolumn{1}{|c|}{\raisebox{-1pt}{\( 4 \)}} &\multicolumn{1}{|c|}{\raisebox{-1pt}{\( 3 \)}}&\multicolumn{1}{|c|}{\raisebox{-1pt}{\( 2 \)}}  \\
			\hline
			\multicolumn{1}{|c|}{\raisebox{-1pt}{\( 3 \)}}&\multicolumn{1}{|c|}{\raisebox{-1pt}{\( 2 \)}}&\multicolumn{1}{|c|}{\raisebox{-1pt}{\( 1 \)}} \\
			\hline
			&\multicolumn{1}{|c|}{\raisebox{-1pt}{\( 2 \)}}&\multicolumn{1}{|c|}{\raisebox{-1pt}{\( 1 \)}}\\
			\cline{2-3}
		\end{array}
		, \quad \widetilde{\tab}_1 = \begin{array}{ccc}
			\hline
			\multicolumn{1}{|c|}{\raisebox{-1pt}{\(  \)}}& \multicolumn{1}{|c|}{\raisebox{-1pt}{\(  \)}}& \multicolumn{1}{c|}{\raisebox{-1pt}{\(  \)}}  \\
			\hline
			\multicolumn{1}{|c|}{\raisebox{-1pt}{\(  \)}}& \multicolumn{1}{|c|}{\raisebox{-1pt}{\(  \)}}& \multicolumn{1}{|c|}{\raisebox{-1pt}{\(  \)}}  \\
			\cline{1-3}
			\multicolumn{1}{|c|}{\raisebox{-1pt}{\( 1 \)}}& \multicolumn{1}{|c|}{\raisebox{-1pt}{\( \hphantom{3} \)}}& \multicolumn{1}{|c|}{\raisebox{-1pt}{\( \hphantom{3} \)}}  \\
			\hline
		\end{array}\,.
		\]
		Continuing this process until $\widehat{Q}$ is empty we obtain the following tableaux
		\[
		\widetilde{\tab}_2=P = 
		\begin{array}{|c|}
			\hline
			\raisebox{-1pt}{$ \bth $} \\
			\hline
			\raisebox{-1pt}{$ 1 $} \\
			\hline
			\raisebox{-1pt}{$ 3 $} \\
			\hline
		\end{array}
		\,,\quad \widetilde{\tab}_1=
		\begin{array}{|c|c|c|}
			\hline
			\raisebox{-1pt}{$ \bfo $} & \raisebox{-1pt}{$ \bfo $} & \raisebox{-1pt}{$ 1 $}\\
			\hline
			\raisebox{-1pt}{$ \bth $} & \raisebox{-1pt}{$ \bth $} & \raisebox{-1pt}{$ 2 $} \\
			\hline
			\raisebox{-1pt}{$ 1 $} & \raisebox{-1pt}{$ 2 $} & \raisebox{-1pt}{$ 3 $} \\
			\hline
		\end{array}
		\,.
		\]
		These are indeed the resultant tableaux satisfying the $R$-matrix as shown in Example~\ref{ex:Rmatrix}.
	\end{ex}
	
	\subsection{Energy function}
	
	\begin{dfn}
		We call a function $ H\colon B^{r_1,s_1}\otimes B^{r_2,s_2} \to \ZZ $ 
		an \defn{energy function} if, for all $ b=x\otimes y \in B^{r_1,s_1}\otimes B^{r_2,s_2} $,
		we have $ H(f_i b)=H(b) $ and $ H(e_i b)=H(b) $ 
		for $ i\in I\setminus\{\bze\} $, and if  $ \widetilde{y}\otimes\widetilde{x} = R(b) $ then
		\[
		H(e_{\bze}b) = H(b) + 
		\begin{cases}
			1 & \text{if}~e_{\bze}b = (e_{\bze}x)\otimes y~
			\text{and}~e_{\bze}R(b)=(e_{\bze}\widetilde{y})\otimes\widetilde{x},\\
			-1 & \text{if}~e_{\bze}b = x\otimes (e_{\bze}y)~
			\text{and}~e_{\bze}R(b)=\widetilde{y}\otimes(e_{\bze}\widetilde{x}),\\
			0 & \text{otherwise.}
		\end{cases}
		\]
	\end{dfn}
	
	The energy function exists and is unique up to additive constant~\cite{KO18}.
	Moreover, we can compute the energy function using the bumping algorithm.
	\begin{prop}[{\cite[Theorem~7.9]{KO18}}]\label{prop:energyboxes}
		Up to additive constant,
		$ H(x\otimes y) $ 
		is given by the number of boxes in $ \col(y)\rightarrow x $
		that are strictly to the right of the $ \max(s_1,s_2) $-th column.
	\end{prop}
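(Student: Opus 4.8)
The plan is to exploit the stated uniqueness of the energy function (up to an additive constant) and to prove that the box-counting function
\[
\widehat{H}(x\otimes y) \;:=\; \#\bigl\{\text{boxes of } \col(y)\rightarrow x \text{ strictly right of the } \max(s_1,s_2)\text{-th column}\bigr\}
\]
itself satisfies the two defining properties of an energy function; the proposition then follows at once, since any two energy functions differ by a constant.

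The value $\widehat{H}(x\otimes y)$ depends only on the \emph{shape} of $P:=\col(y)\rightarrow x$. By the isomorphism $(\tab_1\otimes\tab_2)\mapsto(\col(\tab_2)\rightarrow\tab_1)$ recalled in Section~\ref{sec:operators}, each $\gl(m|n)$-connected component of $B^{r_1,s_1}\otimes B^{r_2,s_2}$ is carried bijectively onto the crystal of SSYT of one fixed shape, so the shape of $P$ is constant on every such component. Since $I\setminus\{\bze\}=\Ieven\cup\{0\}$ is precisely the index set of the finite Dynkin diagram of $\gl(m|n)$, the operators $e_i,f_i$ with $i\neq\bze$ fix these components and hence fix $\widehat{H}$, which is the first axiom. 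The same remark shows $\widehat{H}$ is $R$-invariant, because $R$ is defined by the requirement $\col(\tab_2)\rightarrow\tab_1=\col(\widetilde{\tab}_1)\rightarrow\widetilde{\tab}_2$, which leaves $P$ untouched.

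It remains to check the $\bze$-rule, namely that $\widehat{H}(e_{\bze}b)-\widehat{H}(b)$ equals $+1$, $-1$ or $0$ in exactly the three situations of the definition (whether $e_{\bze}$ hits the left tensor factor of $b$ and of $R(b)$, the right factor of both, or opposite factors). Because $\widehat{H}$ is constant on $\gl(m|n)$-components, it suffices to verify this for one conveniently chosen $b=x\otimes y$ with $e_{\bze}b\neq 0$ representing each $\bze$-transition between components. Here I would use that $P=\col(y)\rightarrow x$ is the insertion tableau of the word $\col(x\otimes y)=\col(x)\col(y)$, and that $e_{\bze}$ raises the weight by $\varepsilon_n-\varepsilon_{\bm}$, so it converts a single minimal entry $\bm$ into a maximal entry $n$ (the precise position being dictated by the rule of~\cite{KO18}). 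One then traces how the shape of the insertion tableau moves under that conversion, showing that the two shapes differ by exactly one corner box or not at all, and that the column of the moved box relative to $\max(s_1,s_2)$ is governed by which tensor factor $e_{\bze}$ acted on, as forced by the tensor-product rule.

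The main obstacle is precisely this last bumping-path analysis in the $(m|n)$-hook setting. Two features make it delicate: $\bze$ is an isotropic (odd) simple root, so $e_{\bze}$ is \emph{not} given by a naive $\pm$-signature rule, and its action may also re-sort entries inside a single tensor factor, so the argument must follow the actual column-reading word rather than merely the multiset of entries; and the super insertion mixes weak comparisons (barred letters in rows, unbarred in columns) with strict ones (unbarred in rows, barred in columns), so a bumping chain can behave differently according to the colours it passes through. I expect to tame this by first using the classical invariance to move $x\otimes y$, within its component, to a normal form in which the affected $\bm$ and the $n$ it becomes sit in corner positions and the bumping chains are short, reducing each of the three cases to a short explicit computation. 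A possible alternative is to establish the formula first for the single-row building blocks $B^{1,s}$ directly and then propagate it to all $B^{r,s}$ using the embeddings $B^{r,s}\hookrightarrow(B^{1,s})^{\otimes r}$ and $B^{r,s}\hookrightarrow(B^{r,1})^{\otimes s}$ together with the behaviour of the energy under the combinatorial $R$-matrix, but the approach above stays closer to the tools already set up in the excerpt.
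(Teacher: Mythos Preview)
The paper does not give its own proof of this proposition: it is quoted verbatim as \cite[Theorem~7.9]{KO18} and used as a black box. There is therefore nothing in the present paper to compare your argument against.

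On its own merits, your outline is sound in the easy parts and honest about the hard one. The invariance of $\widehat{H}$ under $e_i,f_i$ for $i\in I\setminus\{\bze\}$ is exactly right: the isomorphism $(\tab_1\otimes\tab_2)\mapsto(\col(\tab_2)\rightarrow\tab_1)$ identifies each $\gl(m|n)$-component with a crystal of SSYT of fixed shape, so the shape of $P$ (and hence $\widehat{H}$) is constant on components. Likewise the $R$-invariance of $\widehat{H}$ is immediate from the defining property $\col(\tab_2)\rightarrow\tab_1=\col(\widetilde{\tab}_1)\rightarrow\widetilde{\tab}_2$.

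The gap is that the $\bze$-rule is not actually verified; you have only described a strategy. This is the entire content of the proposition, since uniqueness up to constant is already stated in the paper and the classical invariance is essentially formal. Your plan to move to a normal form within the $\gl(m|n)$-component before applying $e_{\bze}$ is reasonable, but you would still need to pin down, in the super setting, exactly which corner box of $P$ moves and to match the three cases (left/left, right/right, mixed) with whether that box lies beyond column $\max(s_1,s_2)$. The odd nature of $\bze$ and the mixed weak/strict comparisons in super insertion that you flag are real issues, and neither the normal-form reduction nor the alternative $B^{1,s}$-embedding route you mention is carried out. As written, this is a correct proof plan with the decisive step missing; if you want a complete argument you should consult \cite{KO18} directly, where the result is proved in the broader framework of KR crystals for $\agl(m|n)$.
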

	
	By convention, we will choose the additive constant so that the maximum value of $ H $ is zero.
	Explicitly, if $ \widetilde{H}(x\otimes y) $ is given by the number of boxes 
	as in Proposition~\ref{prop:energyboxes}, with additive constant equal to $ 0 $,
	then we define
	$
	H(x\otimes y) = \widetilde{H}(x\otimes y) - \min(r_1,r_2)\min(s_1,s_2).
	$
	
	\begin{ex}
		Set $ x $ and $ y $ as in Example~\ref{ex:Rmatrix}.
		We know that
		\[
		\col(y)\rightarrow x
		=
		\begin{array}{cccc}
			\hline
			\multicolumn{1}{|c}{\raisebox{-1pt}{$ \bfo $}} 
			& \multicolumn{1}{|c}{\raisebox{-1pt}{$ \bfo $}} 
			& \multicolumn{1}{|c}{\raisebox{-1pt}{$ \bth $}}
			& \multicolumn{1}{|c|}{\raisebox{-1pt}{$ 3 $}}\\
			\hline
			\multicolumn{1}{|c}{\raisebox{-1pt}{$ \bth $}} 
			& \multicolumn{1}{|c}{\raisebox{-1pt}{$ \bth $}} 
			& \multicolumn{1}{|c|}{\raisebox{-1pt}{$ 1 $}} \\
			\cline{1-3}
			\multicolumn{1}{|c}{\raisebox{-1pt}{$ 1 $}}
			& \multicolumn{1}{|c}{\raisebox{-1pt}{$ 2 $}}
			& \multicolumn{1}{|c|}{\raisebox{-1pt}{$ 3 $}} \\
			\cline{1-3}
			\multicolumn{1}{|c|}{\raisebox{-1pt}{$ 1 $}}\\
			\cline{1-1}
			\multicolumn{1}{|c|}{\raisebox{-1pt}{$ 2 $}}\\
			\cline{1-1}
		\end{array}.
		\]
		We have that $ \max(s_1,s_2)=\max(3,1)=3 $,
		and the number of boxes to the right of the third column is $ 1 $.
		So, $ H(x\otimes y) = 1 - \min(r_1,r_2)\min(s_1,s_2) = -2 $.
	\end{ex}
	
	\section{Super Box-Ball System} \label{sec:SBBS}
	\subsection{Box-ball system definition} \label{sec:SBBSdef}
	
	A box-ball system possesses a \defn{vacuum element} representing the absence of a ball.
	We require that the combinatorial $ R $-matrix acts as an identity on
	the vacuum element; that is,
	if $ u $ is the vacuum element then $R(u \otimes u) = u\otimes u$.
	
	We define the vacuum element to be the genuine highest weight element of $ B^{r,1} $
	as a finite-type $ U_q(\gl(m|n)) $-crystal where $ r\leq m $ (see~\cite{BKK00}).
	Such a vacuum element will have the desired property.
	More generally, the genuine highest weight element for $ B^{r,s} $ has the form
	\[
	u_s = 
	\underbrace{
		\begin{array}{|c|c|c|}
			\hline
			\raisebox{-1pt}{$ \bm $} & \cdots  & \raisebox{-1pt}{$ \bm $}  \\
			\hline
			\raisebox{-1pt}{$ \overline{m-1} $} & \cdots  & \raisebox{-1pt}{$ \overline{m-1} $} \\
			\hline
			\vdots  & \ddots  & \vdots \\
			\hline
			\raisebox{-1pt}{$ \overline{m-r+1} $} & \cdots  & \raisebox{-1pt}{$ \overline{m-r+1} $} \\
			\hline
		\end{array}
	}_{s} \;.
	\]
	The vacuum element is then denoted by $ u_1 $.
	
	We can think of the elements of $ B^{r,1} \setminus \{u_1\} $ as representing 
	different balls in the system.
	Within the super box-ball system, a \defn{state}
	consists of $B^{r,1}$ elements in a one dimensional lattice with only finitely many non-vacuum elements.
	More precisely, a state
	is of the form
	\[
	b_0 \otimes  b_1 \otimes \cdots \otimes b_K \otimes  (u_1)^{\otimes \infty }\in (B^{r,1})^{\otimes \infty }
	\]
	where $ b_j \in B^{r,1} $ can be any element (including $ u_1 $).
	
	The state evolves in time via a \defn{carrier} element
	which `picks up' and `puts down' $ B^{r,1} $ elements.
	The carrier is an element of $ B^{r,\ell} $
	which changes based on its location in the state.
	It is initialised as the genuine highest weight element $ u_{\ell} $.
	The action of moving the carrier through the state is performed by the combinatorial $ R $-matrix.
	In particular, this is performed by functions $ R_a $ where
	\[
	R_a  = \underbrace{\id \otimes  \cdots  \otimes \id}_a\otimes R\otimes \id\otimes \id\otimes \cdots .
	\]
	
	We can then define the \defn{time evolution operator}, $ T_{\ell} $, by
	\[
	T_{\ell}(b)\otimes u_{\ell}=\cdots R_3R_2R_1R_0(u_{\ell}\otimes b)
	\]
	for any state $ b $.
	This is well-defined because there are finitely many non-vacuum elements in the state,
	so we eventually have $ R(u_{\ell}\otimes u_1)=u_1\otimes u_{\ell} $.
	The time evolution operator computes the state for the next time step.
	For convenience, we will write $ T_{\infty}=\lim_{\ell\to\infty}T_{\ell} $.
	
	\begin{prop}\label{prop:Tinfwelldefined}
		$ T_{\infty} $ is well-defined
	\end{prop}
	\begin{proof}[Proof sketch]
		A simple insertion argument shows that
		the action of the $ R $-matrices in the definition of $ T_{K} $ and $ T_{\ell} $ are the same
		(under the inclusion $ B^{r,K} \hookrightarrow B^{r,\ell},\, 
		\inlinetab{v}
		\mapsto 
		\begin{array}{|c|c|}
			\hline
			u_{\ell - K} & v \\
			\hline
		\end{array}
		$).
		This shows that $ T_{\ell}=T_{\ell'} $ for $ \ell,\ell'\ge K $
		and hence $ T_{\infty} $ is well-defined.
	\end{proof}
	
	Pictorially,
	we can represent the computation of the time evolution 
	$ T_\ell (b_1\otimes\cdots\otimes b_K\otimes (u_1)^{\otimes\infty}) 
	= \bigotimes_{j=1}^\infty \widetilde{b}_j $
	as follows:
	\[
	\begin{tikzpicture}
		\node (b1) at (1,1) {$b_1$};
		\node (b2) at (3,1) {$b_2$};
		\node (bK) at (7,1) {$b_K$};
		\node (bK1) at (9,1) {$u_1$};
		\node (bK2) at (11,1) {$u_1$};
		\node (bK3) at (13,1) {$u_1$};
		
		\node (bt1) at (1,-1.1) {$\widetilde{b}_1$};
		\node (bt2) at (3,-1.1) {$\widetilde{b}_2$};
		\node (btK) at (7,-1.1) {$\widetilde{b}_K$};
		\node (btK1) at (9,-1.1) {$\widetilde{b}_{K+1}$};
		\node (btK2) at (11,-1.1) {$\widetilde{b}_{K+2}$};
		\node (btK3) at (13,-1.1) {$\widetilde{b}_{K+3}$};
		
		\node (carr0) at (0,0) {$u_\ell$};
		\node (carr1) at (2,0) {$u_\ell^{(1)}$};
		\node (carr2) at (4,0) {$u_\ell^{(2)}$};
		\node (carrK-1) at (6,0) {$ u_{\ell}^{(K-1)}$};
		\node (carrK0) at (8,0) {$u_\ell^{(K)}$};
		\node (carrK1) at (10,0) {$u_\ell^{(K+1)}$};
		\node (carrK2) at (12,0) {$u_\ell^{(K+2)}$};
		\node (carrK3) at (14.6,0) {$\cdots$};
		
		\foreach \i in {1,2,K,{K1},{K2},{K3}}
		{
			\draw[->] (b\i) -- (bt\i);
		}
		\foreach \i[evaluate=\i as \j using int(\i+1)] in {0,1}
		{
			\draw[->] (carr\i) -- (carr\j);
		}
		\node (carrdots) at (4.9,0) {$\cdots$};
		\foreach \i[evaluate=\i as \j using int(\i+1)] in {-1,...,2}
		{
			\draw[->] (carrK\i) -- (carrK\j);
		}
	\end{tikzpicture}
	\]
	where $ R(u_\ell^{(j)}\otimes b_{j+1}) = \widetilde{b}_{j+1}\otimes u_\ell^{(j+1)} $.
	
	\begin{ex}
		For $ U_q(\agl(3|3)) $ crystals,
		\[
		\begin{tikzpicture}
			\node (carr0) at (-1,0) {$ 
				\begin{matrix}
					\bth & \bth\\
					\btw & \btw
				\end{matrix}
				$};
			
			\node (carr1) at (1,0) {$ 
				\begin{matrix}
					\bth & \btw\\
					\btw & 3
				\end{matrix}
				$};
			
			\node (carr2) at (3,0) {$ 
				\begin{matrix}
					\bth & \btw\\
					\bon & 3
				\end{matrix}
				$};
			
			\node (carr3) at (5,0) {$ 
				\begin{matrix}
					\bth & \bth \\
					\btw & \bon
				\end{matrix}
				$};
			
			\node (carr4) at (7,0) {$ 
				\begin{matrix}
					\bth & \bth\\
					\btw & \btw
				\end{matrix}
				$};
			
			\node (carr5) at (9,0) {$ 
				\begin{matrix}
					\bth & \bth\\
					\btw & \btw
				\end{matrix}
				$};
			
			\node (carr6) at (11,0) {$ \cdots $};
			
			\foreach \i[evaluate=\i as \si using int(\i+1)] in {0,...,5} {
				\draw [->] (carr\i) -- (carr\si);
			}
			
			\node (b0) at (0,1.25) {$ 
				\begin{matrix}
					\btw\\
					3
				\end{matrix}
				$};
			
			\node (b1) at (2,1.25) {$ 
				\begin{matrix}
					\bth\\
					\bon
				\end{matrix}
				$};
			\foreach \i in {2,...,4} {
				\node (b\i) at ({2*\i},1.25) {$
					\begin{matrix}
						\bth\\
						\btw
					\end{matrix}
					$};
			}
			
			\foreach \i in {0,1,4} {
				\node (bt\i) at ({2*\i},-1.25) {$ 
					\begin{matrix}
						\bth\\
						\btw
					\end{matrix}
					$};
			}
			\node (bt2) at (4,-1.25) {$ 
				\begin{matrix}
					\btw\\
					3
				\end{matrix}
				$};
			\node (bt3) at (6,-1.25) {$ 
				\begin{matrix}
					\bth\\
					\bon
				\end{matrix}
				$};
			\foreach \i in {0,...,4} {
				\draw [->] (b\i) -- (bt\i);
			}
			
		\end{tikzpicture}
		.
		\]
		That is,
		\[
		p = 
		\begin{array}{|c|}
			\hline
			\raisebox{-1pt}{$ \btw $}\\
			\hline
			\raisebox{-1pt}{$ 3 $}\\
			\hline
		\end{array}
		\otimes
		\begin{array}{|c|}
			\hline
			\raisebox{-1pt}{$ \bth $}\\
			\hline
			\raisebox{-1pt}{$ \bon $}\\
			\hline
		\end{array}
		\otimes u_1 \otimes u_1\otimes u_1 \otimes \cdots
		\qquad\implies\qquad
		T_2(p) = u_1\otimes u_1\otimes
		\begin{array}{|c|}
			\hline
			\raisebox{-1pt}{$ \btw $}\\
			\hline
			\raisebox{-1pt}{$ 3 $}\\
			\hline
		\end{array}
		\otimes
		\begin{array}{|c|}
			\hline
			\raisebox{-1pt}{$ \bth $}\\
			\hline
			\raisebox{-1pt}{$ \bon $}\\
			\hline
		\end{array}
		\otimes
		u_1\otimes\cdots
		.
		\]
	\end{ex}
	
	\begin{remark}\label{rmk:r>m}
		We only consider BBSs with $ r\le m $.
		This is because we encounter difficulties if we consider $ r>m $.
		For instance, the empty carrier $ u_\ell $ will contain fermionic boxes
		which will increase in value horizontally.
		So, we can no longer think of $ u_\ell $ as containing $ \ell $ vacuum elements.
		More concerning, the time evolution operator may no longer be well-defined.
		Consider a BBS defined from a $ U_q(\agl(1|3)) $-crystal with $ r=2 $ and 
		\[
		p= 
		\begin{array}{|c|}
			\hline
			\raisebox{-1pt}{$\bon$}\\
			\hline
			\raisebox{-1pt}{$3$}\\
			\hline
		\end{array}
		\otimes u_1 \otimes u_1 \otimes\cdots
		\qquad \text{where} \qquad u_1=
		\begin{array}{|c|}
			\hline
			\raisebox{-1pt}{$\bon$}\\
			\hline
			\raisebox{-1pt}{$1$}\\
			\hline
		\end{array}.
		\]
		Then,
		\[
		\cdots R_3R_2R_1R_0\left(
		\begin{array}{|c|c|}
			\hline
			\raisebox{-1pt}{$\bon$} & \raisebox{-1pt}{$\bon$}\\
			\hline
			\raisebox{-1pt}{$1$} & \raisebox{-1pt}{$2$} \\
			\hline
		\end{array}
		\otimes p\right) = 
		\begin{array}{|c|}
			\hline
			\bon\\
			\hline
			2\\
			\hline
		\end{array}
		\otimes u_1\otimes u_1 \otimes\cdots
		\otimes 
		\begin{array}{|c|c|}
			\hline
			\bon & \bon\\
			\hline
			1 & 3\\
			\hline
		\end{array}
		\]
		which is not of the form
		\[
		\widetilde{p}\otimes u_2
		= \widetilde{p}\otimes 
		\begin{array}{|c|c|}
			\hline
			\raisebox{-1pt}{$\bon$} & \raisebox{-1pt}{$\bon$}\\
			\hline
			\raisebox{-1pt}{$1$} & \raisebox{-1pt}{$2$} \\
			\hline
		\end{array}
		\]
		that is required for our above definition of $ T_2(p)$.
		Additionally, we do not want to define $ T_2(p) $ to be
		\[
		\begin{array}{|c|}
			\hline
			\bon\\
			\hline
			2\\
			\hline
		\end{array}
		\otimes u_1\otimes u_1\otimes \cdots
		\]
		because the boxes inside $ T_2(p) $ are different from $ p $;
		intuitively, the `mass' is no longer a conserved quantity.
		This indicates that other conserved quantities may not be present,
		which impacts the integrability of the system.
		Nevertheless, such a system with $ r>m $ may exhibit interesting behaviour or applications,
		but is beyond the scope of this article.
	\end{remark}
	
	\begin{remark}\label{rmk:timereversible}
		Many authors will define a state so that the tensor product extends infinitely in both directions
		(with only finitely many non-vacuum states).
		If we define a state in this way
		then the system is time reversible,
		since the uniqueness of the combinatorial $ R $-matrix implies $ R^{-1}_a = R_a $.
	\end{remark}
	\subsection{Properties of the time evolution operator}
	\begin{prop}\label{prop:Tlcomm}
		Time evolution operators commute: $ T_{\ell}T_{\ell'}(p) = T_{\ell'}T_{\ell}(p) $.
	\end{prop}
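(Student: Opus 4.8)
The plan is to run two carriers through the state simultaneously and finish by invoking the uniqueness of crystal isomorphisms. Fix a state $p = b_1\otimes\cdots\otimes b_K\otimes(u_1)^{\otimes\infty}$. Since $p$ has only finitely many non-vacuum entries, the same is true of $T_\ell(p)$, $T_{\ell'}(p)$, $T_\ell T_{\ell'}(p)$ and $T_{\ell'}T_\ell(p)$, so we may fix $N$ large enough that the non-vacuum parts of all of these lie among their first $N$ tensor factors, and it then suffices to prove that the first $N$ factors of $T_\ell T_{\ell'}(p)$ and $T_{\ell'}T_\ell(p)$ coincide. For this we may truncate the tail $(u_1)^{\otimes\infty}$ to a long finite string and work inside $\mathcal{C} = B^{r,\ell}\otimes B^{r,\ell'}\otimes(B^{r,1})^{\otimes N}$, starting from $\xi = u_\ell\otimes u_{\ell'}\otimes b_1\otimes\cdots\otimes b_K\otimes(u_1)^{\otimes(N-K)}$, whose first two factors will play the role of the two carriers.

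A preliminary observation is that $R(u_\ell\otimes u_{\ell'}) = u_{\ell'}\otimes u_\ell$: by the bumping description of the combinatorial $R$-matrix (\cite[Theorem~7.9]{KO18}) this amounts to $\col(u_{\ell'})\rightarrow u_\ell = \col(u_\ell)\rightarrow u_{\ell'}$, and both insertions merely append the frozen columns $\bm, \overline{m-1},\ldots,\overline{m-r+1}$ without any bumping, each yielding $u_{\ell+\ell'}$. (Alternatively, $u_\ell\otimes u_{\ell'}$ and $u_{\ell'}\otimes u_\ell$ are the genuine highest weight elements of their connected components, of the same weight, and such an element is unique of its weight.) Thus two adjacent carriers may always be interchanged without changing anything else.

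Now I would build two composites of combinatorial $R$-matrices on $\mathcal{C}$. Let $\Phi_1$ first slide the $B^{r,\ell'}$-factor rightwards past all $N$ site factors, and then slide the $B^{r,\ell}$-factor rightwards past all $N$ site factors and past the $B^{r,\ell'}$-factor. Unwinding the definition of $T_\ell$ — and using that $N$ is large enough for each carrier to return to its vacuum value before leaving the non-vacuum region, so it then crosses the remaining $u_1$'s trivially — one gets $\Phi_1(\xi) = [T_\ell T_{\ell'}(p)]_{\le N}\otimes u_{\ell'}\otimes u_\ell$. Post-composing $\Phi_1$ with the $R$-matrix on its last two factors yields a crystal morphism $\Phi_1'\colon\mathcal{C}\to (B^{r,1})^{\otimes N}\otimes B^{r,\ell}\otimes B^{r,\ell'}$ with $\Phi_1'(\xi) = [T_\ell T_{\ell'}(p)]_{\le N}\otimes u_\ell\otimes u_{\ell'}$. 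Symmetrically, let $\Phi_2$ first interchange the two carriers (legitimate by the previous paragraph), then slide the $B^{r,\ell}$-factor past all site factors, and then slide the $B^{r,\ell'}$-factor past all site factors and past the $B^{r,\ell}$-factor; this gives $\Phi_2\colon\mathcal{C}\to(B^{r,1})^{\otimes N}\otimes B^{r,\ell}\otimes B^{r,\ell'}$ with $\Phi_2(\xi) = [T_{\ell'}T_\ell(p)]_{\le N}\otimes u_\ell\otimes u_{\ell'}$.

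It remains to show $\Phi_1' = \Phi_2$. Both are compositions of combinatorial $R$-matrices, hence isomorphisms of finite-type $U_q(\gl(m|n))$-crystals commuting with all $e_i, f_i$ ($i\in I\setminus\{\bze\}$) and preserving weight (each $R$-matrix preserves the total multiset of box entries). By iterating the fact recalled in Section~\ref{sec:operators}, every connected component of $\mathcal{C}$ is isomorphic to a crystal of SSYT of a fixed shape, hence has a unique genuine highest weight element $x_0$; since $\Phi_1'$ and $\Phi_2$ commute with all $e_i, f_i$, to see they agree on the component of $x_0$ it is enough to see that $\Phi_1'(x_0) = \Phi_2(x_0)$, and indeed both of these are genuine highest weight elements of the codomain of the same weight as $x_0$, which is unique of its weight. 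Hence $\Phi_1' = \Phi_2$, and evaluating at $\xi$ and comparing the first $N$ factors gives $[T_\ell T_{\ell'}(p)]_{\le N} = [T_{\ell'}T_\ell(p)]_{\le N}$, which proves the proposition. The main obstacle is the bookkeeping in the third paragraph — verifying that the carriers exit in the claimed order so that $\Phi_1'$ and $\Phi_2$ genuinely share a codomain — together with the multi-factor form of the uniqueness statement; the familiar one-line justification "by the Yang--Baxter equation for $R$" is precisely this uniqueness principle in disguise.
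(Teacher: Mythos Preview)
Your two-carrier setup and the reduction to showing $\Phi_1'=\Phi_2$ is exactly the standard argument the paper has in mind; the paper simply cites the Yang--Baxter equation $(R\otimes 1)(1\otimes R)(R\otimes 1)=(1\otimes R)(R\otimes 1)(1\otimes R)$ from \cite[Theorem~7.11]{KO18} and the computation in \cite[Theorem~3.1]{FOY00}, which amounts to the same comparison of two ways of passing two carriers through the state.

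There is, however, a genuine gap in your justification of $\Phi_1'=\Phi_2$. You argue via the finite-type crystal structure: each connected component has a unique genuine highest weight element, and $\Phi_1'(x_0)$, $\Phi_2(x_0)$ are genuine highest weight elements of the same weight, hence equal. The trouble is that ``genuine highest weight'' in Section~\ref{sec:operators} is a property of an element relative to its own connected component; the codomain $(B^{r,1})^{\otimes N}\otimes B^{r,\ell}\otimes B^{r,\ell'}$ can (and for $N\ge 3$ typically does) contain several isomorphic finite-type components, whose genuine highest weight elements share the same weight. So knowing only that $\Phi_1'(x_0)$ and $\Phi_2(x_0)$ are genuine highest weight elements of equal weight does not force them to lie in the same component, and a weight-preserving finite-type crystal automorphism can nontrivially permute isomorphic components.

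The fix is to use the \emph{affine} structure that you discarded: the $R$-matrices commute with $e_{\bze},f_{\bze}$ as well, and tensor products of KR crystals for $\agl(m|n)$ are connected as affine crystals (this connectedness is what underlies the uniqueness of $R$ in \cite{KO18} and the proof of YBE there). Connectedness forces $\Phi_2\circ(\Phi_1')^{-1}$ to be the identity once it fixes a single element. Equivalently, one invokes YBE directly: since $\Phi_1'$ and $\Phi_2$ realise the same permutation of the $N+2$ tensor factors, and any two reduced expressions for that permutation differ by braid moves, YBE (together with $R^2=\id$) gives $\Phi_1'=\Phi_2$. Your closing remark that YBE is ``this uniqueness principle in disguise'' is right in spirit, but the uniqueness principle that actually works is the affine one, not the finite-type one you wrote down.
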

	The proof of this fact is identical to \cite[Theorem~3.1]{FOY00},
	and relies on the Yang--Baxter equation:
	$
	(R\otimes 1)(1\otimes R)(R\otimes 1) = (1\otimes R)(R\otimes 1)(1\otimes R)
	$.
	The Yang--Baxter equation is proved for $ U_q(\agl(m|n)) $-crystals in~\cite[Theorem~7.11]{KO18}.

	The time evolution operator also respects the crystal structure,
	i.e., $ T_{\ell} $ commutes with some of the crystal operators,
	as outlined in the following Lemma:
	\begin{lemma}\label{lem:efcom}
		For all $ i\in I\setminus \{\overline{0},\overline{m-r}\} $, and for a state $ p $,
		we have that $ T_{\ell}(e_i(p))=e_i(T_{\ell}(p)) $ and  $ T_{\ell}(f_i(p))=f_i(T_{\ell}(p)) $.
	\end{lemma}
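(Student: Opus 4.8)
The plan is to derive the lemma from two facts: the combinatorial $R$-matrix is a morphism of $U_q(\agl(m|n))$-crystals, so it commutes with every $e_i$ and $f_i$, $i\in I$ (see~\cite{KO18}); and the vacuum carrier $u_\ell$ is \emph{transparent} to $e_i$ and $f_i$ for the indices appearing in the statement, meaning those operators slide past any tensor factor equal to $u_\ell$.

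First I would fix a state $p$. Since $p$ has only finitely many non-vacuum entries, the carrier returns to $u_\ell$ after finitely many steps and $R(u_\ell\otimes u_1)=u_1\otimes u_\ell$ thereafter, so $T_\ell(p)\otimes u_\ell=\mathcal{R}(u_\ell\otimes p)$, where $\mathcal{R}=\lim_a R_a\cdots R_1R_0$ is, on this input, a \emph{finite} composite of combinatorial $R$-matrices; being such a composite, $\mathcal{R}$ commutes with $e_i$ and $f_i$ for all $i\in I$. The heart of the proof is then the transparency claim: for $i\in I\setminus\{\bze,\overline{m-r}\}$ and any state $q$ (including $q=T_\ell(p)$), $e_i(u_\ell\otimes q)=u_\ell\otimes e_i(q)$ and $f_i(u_\ell\otimes q)=u_\ell\otimes f_i(q)$, together with the analogous identities with $u_\ell$ on the right. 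Granting this, applying $e_i$ (respectively $f_i$) to $T_\ell(p)\otimes u_\ell=\mathcal{R}(u_\ell\otimes p)$ and sliding it through both copies of $u_\ell$ gives
\[
e_i\bigl(T_\ell(p)\bigr)\otimes u_\ell = e_i\bigl(T_\ell(p)\otimes u_\ell\bigr)
= \mathcal{R}\bigl(e_i(u_\ell\otimes p)\bigr) = \mathcal{R}\bigl(u_\ell\otimes e_i(p)\bigr) = T_\ell\bigl(e_i(p)\bigr)\otimes u_\ell,
\]
the last equality being the definition of $T_\ell$ applied to $e_i(p)$, with the convention that both outer sides vanish when $e_i(p)=0$; cancelling $u_\ell$ yields $e_i(T_\ell(p))=T_\ell(e_i(p))$, and symmetrically for $f_i$. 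This is also where the exclusions enter: $u_\ell$ is a highest weight element of the finite-type crystal $B^{r,\ell}$, but when $r<m$ its reduced $\overline{m-r}$-signature is a nonempty string of $+$'s (its bottom row is filled with $\overline{m-r+1}$'s), and $\bze$ is the affine node whose failure to commute with $T_\ell$ is exactly what the energy function records, so transparency genuinely breaks for those two.

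For the transparency claim I would unwind the signature rule case by case. If $i\in I_+$, then $u_\ell$ contains none of the letters $i,i+1$, so $\sg_i(u_\ell)$ is empty, $\rsg_i(u_\ell\otimes q)=\rsg_i(q)=\rsg_i(q\otimes u_\ell)$, and the box altered by $e_i$ or $f_i$ lies in $q$; transparency is immediate. If $i=\bk\in I_-$ with $k\neq m-r$, then because $r\le m$ either $u_\ell$ contains neither $\bk$ nor $\overline{k+1}$ (when $k<m-r$), handled as above, or it contains both, each exactly $\ell$ times, with every $\overline{k+1}$ sitting directly above a $\bk$ in a common column; reading such a column top to bottom contributes a cancelling pair $+-$, so $\sg_i(u_\ell)=(+-)^\ell$, and one checks that all the cancellations this triggers inside $\col(u_\ell\otimes q)$ or $\col(q\otimes u_\ell)$ pair boxes of $u_\ell$ with one another, again leaving $\rsg_i(q)$ and the altered box inside $q$.

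The remaining case $i=0$ is the delicate one, and I expect it to be the main obstacle. Here $e_0$ and $f_0$ are governed not by a signature but by the relative order of the first $\bon$ and the first $1$ in the column reading. When $r<m$ the carrier $u_\ell$ contains neither $\bon$ nor $1$, so this order — and hence the action of $e_0,f_0$ — is dictated entirely by $q$, and transparency follows just as before. When $r=m$ the bottom row of $u_\ell$ consists of $\bon$'s, so one must verify that this copy of $u_\ell$ does not capture the $0$-arrow: for $u_\ell$ on the right this is fine, since every state already contains a $\bon$ in its vacuum tail (as $u_1$ then has bottom entry $\bon$), so the first $\bon$ of the reading lies in $q$; for $u_\ell$ on the \emph{left} the interaction is genuinely subtle and needs a careful analysis of where the first $1$ of $\col(u_\ell\otimes p)$ falls and of whether both sides of the desired identity vanish together. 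Once $i=0$ is settled, combining it with the finite-type cases completes the proof.
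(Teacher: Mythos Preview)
Your approach is correct and is exactly the standard argument the paper defers to by citing \cite[Lemma~2.8]{Yamada04}: combine the fact that the combinatorial $R$-matrix intertwines all $e_i,f_i$ with the transparency of the vacuum carrier $u_\ell$ for the relevant indices, then sandwich. Your case analysis for $i\in I_+$ and $i\in I_-$ (either $u_\ell$ contains neither letter, or it contributes a self-cancelling block $(+-)^\ell$ to the signature) is precisely what is needed, and your treatment of $i=0$ for $r<m$ is fine.

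One remark on the case you flag as unresolved. When $r=m$ the carrier $u_\ell$ has $\bon$ in its bottom row, and transparency on the \emph{left} genuinely fails for $i=0$: in $\col(u_\ell\otimes p)$ the first $\bon$ always sits in $u_\ell$, so $e_0(u_\ell\otimes p)=0$ regardless of $p$, whereas $e_0(p)$ need not vanish. In fact the commutation $T_\ell\circ e_0=e_0\circ T_\ell$ can fail in this regime (take $m=r=2$, $n=1$, $p=\begin{smallmatrix}\btw\\1\end{smallmatrix}\otimes u_1^{\otimes\infty}$: then $e_0(p)=u_1^{\otimes\infty}$ but $e_0(T_2(p))=0$). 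The point is that when $r=m$ one has $\overline{m-r}=\bze$, so the literal excluded set collapses to $\{\bze\}$; but the \emph{intended} exclusion is the index separating the $\gl(r)$ and $\gl(m-r|n)$ blocks, which for $r=m$ is $0$ rather than $\overline{m-r}$. This is consistent with the paper's description of $\dcrys$ as a $U_q(\gl(r))\otimes U_q(\gl(m-r|n))$-crystal and with the separate treatment of $r=m$ in Appendix~B.7. Read this way, the case you left open is simply outside the scope of the lemma, and your argument is complete.
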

	The proof is similar to \cite[Lemma~2.8]{Yamada04}.
	Let $ \dcrys $ be the $ U_q(\agl(m|n)) $-crystal of BBS states
	where the operators $ f_{\bze},\, e_{\bze},\,f_{\overline{m-r}} $ and $ e_{\overline{m-r}} $ 
	have been removed.
	Note that $ \dcrys $
	is isomorphic to a $ U_q(\gl(r))\otimes U_q(\gl(m-r|n)) $-crystal.
	Lemma~\ref{lem:efcom} allows us to prove results by only considering 
	a single element from each connected component of $ \dcrys $.
	In practice, this means that it is sufficient to consider genuine highest weight elements of $ \dcrys $.
	
	\section{Solitons} \label{sec:solitons}
	\subsection{Properties of solitons and coupled solitons}\label{sec:coupled}
	In this paper, a soliton is an element of $ (B^{r,1} \setminus \{u_1\})^{\otimes d} $ that moves with constant speed (not necessarily equal to its length).
	
	\begin{dfn}\label{dfn:solitons}
		We call an element $ v\in (B^{r,1} \setminus \{u_1\})^{\otimes d} $ a \defn{soliton} if 
		$ T_{\infty}(v\otimes u_1^{\otimes\infty}) = u_1^{\otimes c} \otimes v \otimes u_1^{\otimes\infty} $ for some positive integer $ c $.
		We call $ c $ the \defn{speed} of the soliton.
	\end{dfn}
	
	This definition is very broad, and these solitons do not satisfy many of the properties of we want.
	However, this broad definition is convenient for our purposes.
	Many of the important properties will be satisfied by a specific type of soliton (uncoupled solitons) which is defined by conserved quantities $ N_{\ell} $.
	
	Let $ p = p_1\otimes p_2 \otimes p_3 \otimes\cdots $ be a state.
	Let $ u_{\ell}^{(j)} $ be the carrier after applying the $ R $-matrix $ j $ times; that is
	\[
	R_{j-1}\cdots R_1R_0(u_{\ell}\otimes p) = \widetilde{p}_1\otimes\cdots\otimes \widetilde{p}_j \otimes u_{\ell}^{(j)}\otimes p_{j+1}\otimes\cdots.
	\]
	Define a function $ E_{\ell} $ by
	\[
	E_{\ell}(p) = -\sum_{j=1}^{\infty}H(u_{\ell}^{(j-1)}\otimes p_j)
	\]
	where $ H $ is the energy function
	(note that the above sum is finite because we chose $ H $ such that $ H(u_{\ell}\otimes u_1) = 0 $).
	
	\begin{prop}
		For each $ \ell $, the number $ E_{\ell}(p) $ is a \defn{conserved quantity}: $ E_{\ell}(T_{\ell'}(p)) = E_{\ell}(p) $ for every positive integer $ \ell' $.
	\end{prop}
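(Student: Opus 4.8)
The plan is to prove the equivalent statement \(E_{\ell}(T_{\ell'}(p)) = E_{\ell}(p)\) for each positive integer \(\ell'\) (the statement for \(T_{\infty}\) then follows from the stabilization \(T_{\ell'}=T_{\infty}\) for \(\ell'\ge K\)). The starting point is to read \(-E_{\ell}(p)\) as the total energy \(\sum_{j\ge 1} H(u_{\ell}^{(j-1)}\otimes p_j)\) collected along the path of a carrier in \(B^{r,\ell}\) as it is pushed through the state \(p\); this sum is finite because we chose \(H(u_{\ell}\otimes u_1)=0\). Both \(E_{\ell}(p)\) and \(E_{\ell}(T_{\ell'}(p))\) can then be extracted from a single two-dimensional lattice diagram built from the state \(p\) (drawn as a semi-infinite row of vertical \(B^{r,1}\)-strands) together with two horizontal carrier strands, one of width \(\ell'\) and one of width \(\ell\): if the width-\(\ell'\) strand lies above the width-\(\ell\) strand, the latter processes the already-evolved state \(T_{\ell'}(p)\) and the energy along it is \(-E_{\ell}(T_{\ell'}(p))\); if instead the width-\(\ell\) strand is on top, it processes \(p\) directly and the energy along it is \(-E_{\ell}(p)\).

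Next I would connect the two configurations by sliding the unique crossing of the two horizontal strands across the diagram. At the far-left boundary both carriers are initialised as the genuine highest weight elements \(u_{\ell}\) and \(u_{\ell'}\), and since the combinatorial \(R\)-matrix carries the genuine highest weight element of \(B^{r,\ell}\otimes B^{r,\ell'}\) to that of \(B^{r,\ell'}\otimes B^{r,\ell}\) (uniqueness of genuine highest weight elements, Section~\ref{sec:operators}), one has \(R(u_{\ell}\otimes u_{\ell'}) = u_{\ell'}\otimes u_{\ell}\); at the far-right boundary every strand carries a vacuum element. Hence both boundary crossings are ``transparent''. Using the Yang--Baxter equation for \(U_q(\agl(m|n))\)-crystals \cite[Theorem~7.11]{KO18}, I would slide the horizontal--horizontal crossing from the right boundary to the left boundary one vertical strand at a time; this is the diagrammatic argument used for \(U_q'(\asl_n)\) in \cite{FOY00}.

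The crux is then a local claim: each elementary slide, which is a single Yang--Baxter move on \(B^{r,1}\otimes B^{r,\ell'}\otimes B^{r,\ell}\), leaves unchanged the energy recorded at the crossing of the affected vertical strand with the width-\(\ell\) strand, so that the total energy along the width-\(\ell\) strand is the same before and after the whole slide. To see this I would use the bumping descriptions of both the \(R\)-matrix and the energy function (Proposition~\ref{prop:energyboxes} and \cite[Theorem~7.9]{KO18}): the tableau produced by the iterated column insertions appearing in such a Yang--Baxter triangle has a shape that does not depend on the order of the insertions, and \(H\) at the width-\(\ell\) crossing is read off as a box count in a fixed column (column \(\ell+1\) and beyond) of that shape. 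Summing this local invariance over all slides, with vanishing contribution from the two boundary crossings, yields \(E_{\ell}(T_{\ell'}(p)) = E_{\ell}(p)\).

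I expect the local invariance in the previous step to be the main obstacle: the entries feeding the width-\(\ell\) crossing genuinely change under a Yang--Baxter move, so establishing that \(\sum_j H(u_{\ell}^{(j-1)}\otimes p_j)\) is unaffected requires either the shape-stability argument above or a careful accounting in which the energy changes at the three crossings of each Yang--Baxter move are shown to telescope once summed along the slide (using that all but finitely many vertical strands, and the carriers past a certain point, carry vacuum elements). By contrast, verifying \(R(u_{\ell}\otimes u_{\ell'}) = u_{\ell'}\otimes u_{\ell}\), the finiteness of the defining sum, and the transparency of the two boundary crossings is routine.
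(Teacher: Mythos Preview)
Your overall strategy---the two--carrier lattice with a Yang--Baxter slide of the horizontal--horizontal crossing from one boundary to the other---is exactly the argument the paper invokes: the paper gives no details at all and simply refers to \cite[Theorem~3.1]{FOY00}, and what you have sketched is the FOY00 proof. Your verification of the boundary conditions (\(R(u_\ell\otimes u_{\ell'})=u_{\ell'}\otimes u_\ell\), finiteness of the sum, vacuum tails) is correct and routine, as you say.

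The one place to tighten is the local step. The shape-stability argument does not do what you want: the shape that is order-independent is that of the \emph{full} triple insertion (this is the Knuth-relation content underlying the Yang--Baxter equation), whereas \(H\) at the width-\(\ell\) crossing is the box count past column \(\ell\) in an \emph{intermediate} pairwise insertion, and that intermediate shape does change when the slide moves the crossing past a vertical strand. So the single local \(H\) is not literally invariant. Your fallback is the right mechanism: at each Yang--Baxter triangle there is an additive identity among the three local energies (this is the energy identity used in \cite{FOY00}, a consequence of the \(R\)-invariance of \(H\) together with the Yang--Baxter equation), and it implies that the change in the width-\(\ell\) energy from one slide equals the difference of the \(H\)-values at the horizontal--horizontal crossing before and after the slide. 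Summed over all slides these differences telescope, and the two endpoints contribute zero because both boundary crossings are vacuum--vacuum. That gives \(E_\ell(T_{\ell'}(p))=E_\ell(p)\).
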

	
	The proof of this proposition is the same as \cite[Theorem~3.1]{FOY00}.
	Define $ N_{\ell} $ by
	\[
	N_{\ell} = - E_{\ell - 1} + 2E_{\ell} - E_{\ell + 1}
	\]
	with $ E_0 = 0 $.
	We can now use $ N_{\ell} $ to define uncoupled solitons.
	
	\begin{dfn}
		Let $ v $ be a soliton.
		If there exists a positive integer $ s $ for which $ N_{s}(v\otimes u_1^{\otimes \infty}) = 1 $ and $ N_{j}(v\otimes u_1^{\otimes \infty}) = 0 $ for all $ j\neq s $,
		then we call $ v $ an \defn{uncoupled soliton}. Otherwise, we call $ v $ a \defn{coupled soliton}.
	\end{dfn}
	
	Note that $ s \leq d $.
	Indeed, the proof of Proposition~\ref{prop:Tinfwelldefined} shows that $ T_{\ell} = T_{\ell'} $ for $ \ell,\ell' \geq d $. In fact, the same insertion argument shows that $ E_{\ell} = E_{\ell'} $ for $ \ell,\ell'\geq d $ and hence $ N_{j} = 0 $ for $ j > d $.
	Moreover,
	we believe that $ s=d $ for every uncoupled soliton (see the discussion after Conjecture~\ref{conj:dividingline}) so we can interpret $ s $ as the length of the uncoupled soliton.
	
	Intuitively, a coupled soliton contains overlapping uncoupled solitons that are not interacting with one another.
	With this intuition in mind, we can interpret $ N_{\ell} $ as the total number of uncoupled solitons of length $ \ell $ in a state. 
	With this interpretation, we should expect that a coupled soliton has $ N_s\neq 0 $ for exactly one positive integer $ s $.
	Indeed, were this not the case, then a coupled soliton would contain overlapping uncoupled solitons of different speeds, and we would expect these uncoupled solitons to separate given enough time (contradicting the fact that the coupled soliton is a soliton --- Definition~\ref{dfn:solitons}).

	We formalise this intuitive interpretation of $ N_{\ell} $ in the following conjecture, which claims that a coupled soliton $ v $ can be split into uncoupled solitons after collision.
	
	\begin{conj}\label{conj:uncoupling}
		Let $ v $ be a coupled soliton of speed $ s $. 
		There exist positive integers $ \widetilde{t}, A, c_1,\ldots, c_A $ and uncoupled solitons $ w_1,\ldots, w_A $ of speeds $ d_1\le\ldots\le d_A $ (respectively) greater than $ s $
		such that if $ t>\widetilde{t} $ then
		\begin{align*}
			&(T_{\infty})^t(w_1\otimes u_1^{\otimes c_1}\otimes\cdots\otimes w_A\otimes u_1^{\otimes c_A} \otimes v \otimes u_1^{\otimes \infty})\\
			&= u_1^{\otimes \widetilde{c}_1+st} \otimes \widetilde{v}_1 \otimes u_1^{\otimes \widetilde{c}_2} \otimes \widetilde{v}_2 \otimes \cdots \otimes  u_1^{\otimes \widetilde{c}_B} \otimes \widetilde{v}_B\\
			&\qquad
			\otimes u_1^{\otimes \widetilde{c}_{B+1}+(d_1-s)t} \otimes \widetilde{w}_1 \otimes u_1^{\otimes \widetilde{c}_{B+2}+(d_2-d_1)t} \otimes \widetilde{w}_2 \otimes \cdots \otimes  u_1^{\otimes \widetilde{c}_{B+A}+(d_A-d_{A-1})t} \otimes \widetilde{w}_A \otimes u_1^{\otimes\infty}
		\end{align*}
		for some uncoupled solitons $ \widetilde{v}_1,\widetilde{v}_2,\ldots,\widetilde{v}_B,\widetilde{w}_1,\ldots,\widetilde{w}_A $,
		where $ B = N_{s}(v \otimes u_1^{\otimes \infty}) $,
		and for some integers $ \widetilde{c}_1,\widetilde{c}_2,\ldots,\widetilde{c}_{B+A} $.
		Note that each $ \widetilde{v}_j $ has speed $ s $ and that each $ \widetilde{w}_j $ has speed $ d_j $.
	\end{conj}
	
	\begin{ex} \label{ex:uncoupling}
		Consider the $ U_q(\agl(3|1)) $ coupled soliton $
		v=
		\begin{smallmatrix}
			\bon\\
			1
		\end{smallmatrix}
		$.
		Note that $ N_1(v \otimes u_1^{\otimes \infty})=2 $ and $ N_j(v \otimes u_1^{\otimes \infty})=0 $ for $ j\neq1 $.
		So we expect that $ v $ contains two overlapping uncoupled solitons of length one.
		In the following diagram, we pass the uncoupled soliton
		$
		\begin{smallmatrix}
			\bth & \bth & \bth\\
			\bon & \bon & \bon
		\end{smallmatrix}
		$
		through a state containing $ v $.
		The maximal weight element
		$
		\begin{smallmatrix}
			\bth\\
			\btw
		\end{smallmatrix}
		$
		is represented as a dot.
		We find that $ v $ splits into two copies of the uncoupled soliton
		$
		\begin{smallmatrix}
			\bth\\
			\bon 
		\end{smallmatrix}
		$
		after collision.
		\[
		\begin{array}{cc}
			t = 0 &
			\begin{tikzpicture}[scale = 0.9, baseline = -2,every node/.style={scale=1}]
				\def\s{0.5};
				\foreach \x in {-5,-1,0,2,3,...,12} {
					\draw[fill=black] (\x*\s,0) circle (0.025);
				}
				\foreach \count/\valone/\valtwo/\valthree/\valfour in {0/\bon/\bon/\bon/1,1/\bth/\bth/\bth/\bon}{
					\node (d) at (-4*\s,0 + \count*\s) {$\valone$};
					\node (a) at (-3*\s,0 + \count*\s) {$\valtwo$};
					\node (b) at (-2*\s,0 + \count*\s) {$\valthree$};
					\node (c) at (1*\s,0 + \count*\s) {$\valfour$};
				}
			\end{tikzpicture}
			\\[7pt]
			t = 1 &
			
			\begin{tikzpicture}[scale = 0.9, baseline = -2,every node/.style={scale=1}]
				\def\s{0.5};
				\foreach \x in {-5,...,-2,1,4,5,6,7,8,9,10,11,12} {
					\draw[fill=black] (\x*\s,0) circle (0.025);
				}
				\foreach \count/\valone/\valtwo/\valthree/\valfour in {0/\bon/\bon/1/\bon,1/\bth/\bth/\bon/\bth}{
					\node (a) at (-1*\s,0 + \count*\s) {$\valone$};
					\node (b) at (0*\s,0 + \count*\s) {$\valtwo$};
					\node (c) at (2*\s,0 + \count*\s) {$\valthree$};
					\node (d) at (3*\s,0 + \count*\s) {$\valfour$};
				}
			\end{tikzpicture}
			
			\\[7pt]
			t = 2 &
			
			\begin{tikzpicture}[scale = 0.9, baseline = -2,every node/.style={scale=1}]
				\def\s{0.5};
				\foreach \x in {-5,...,0,2,7,8,9,10,11,12} {
					\draw[fill=black] (\x*\s,0) circle (0.025);
				}
				\foreach \count/\valone/\valtwo/\valthree/\valfour/\valfive in {0/\bon/\bon/1/\bon/\bon,1/\bth/\bth/\btw/\bth/\bth}{
					\node (a) at (1*\s,0 + \count*\s) {$\valone$};
					\node (b) at (3*\s,0 + \count*\s) {$\valtwo$};
					\node (c) at (4*\s,0 + \count*\s) {$\valthree$};
					\node (d) at (5*\s,0 + \count*\s) {$\valfour$};
					\node (e) at (6*\s,0 + \count*\s) {$\valfive$};
				}
			\end{tikzpicture}
			\\[7pt]
			t = 3 &
			
			\begin{tikzpicture}[scale = 0.9, baseline = -2,every node/.style={scale=1}]
				\def\s{0.5};
				\foreach \x in {-5,...,1,3,5,6,10,11,12} {
					\draw[fill=black] (\x*\s,0) circle (0.025);
				}
				\foreach \count/\valone/\valtwo/\valthree/\valfour/\valfive in {0/\bon/\bon/1/\bon/\bon,1/\bth/\bth/\btw/\bth/\bth}{
					\node (a) at (2*\s,0 + \count*\s) {$\valone$};
					\node (d) at (4*\s,0 + \count*\s) {$\valtwo$};
					\node (b) at (7*\s,0 + \count*\s) {$\valthree$};
					\node (c) at (8*\s,0 + \count*\s) {$\valfour$};
					\node (e) at (9*\s,0 + \count*\s) {$\valfive$};
				}
			\end{tikzpicture}
			\\[7pt]
			t = 4 &
			
			\begin{tikzpicture}[scale = 0.9, baseline = -2,every node/.style={scale=1}]
				\def\s{0.5};
				\foreach \x in {-5,...,1,2,4,6,7,8,9} {
					\draw[fill=black] (\x*\s,0) circle (0.025);
				}
				\foreach \count/\valone/\valtwo/\valthree/\valfour/\valfive in {0/\bon/\bon/1/\bon/\bon,1/\bth/\bth/\btw/\bth/\bth}{
					\node (a) at (3*\s,0 + \count*\s) {$\valone$};
					\node (d) at (5*\s,0 + \count*\s) {$\valtwo$};
					\node (b) at (10*\s,0 + \count*\s) {$\valthree$};
					\node (c) at (11*\s,0 + \count*\s) {$\valfour$};
					\node (e) at (12*\s,0 + \count*\s) {$\valfive$};
				}
			\end{tikzpicture}
			\\[7pt]
		\end{array}\]
	\end{ex}
	
	\begin{ex} \label{ex:uncoupling2}
		Consider the $ U_q(\agl(4|1)) $ coupled soliton $
		v=
		\begin{smallmatrix}
			\bon\\
			1\\
			1
		\end{smallmatrix}
		$.
		Note that $ N_1(v \otimes u_1^{\otimes \infty})=3 $ and $ N_j(v \otimes u_1^{\otimes \infty})=0 $ for $ j\neq1 $.
		So we expect that $ v $ contains three overlapping uncoupled solitons of length one.
		In the following diagram, we pass two copies of the uncoupled soliton
		$
		\begin{smallmatrix}
			\bfo & \bfo\\
			\bth & \bth\\
			\bon & \bon
		\end{smallmatrix}
		$
		through a state containing $ v $.
		The maximal weight element
		$
		\begin{smallmatrix}
			\bfo\\
			\bth\\
			\btw
		\end{smallmatrix}
		$
		is represented as a dot.
		We find that $ v $ splits into three copies of the uncoupled soliton
		$
		\begin{smallmatrix}
			\bfo\\
			\bth\\
			\bon 
		\end{smallmatrix}
		$
		after collision.
		\begin{align*}
			t = 0 & \quad
			\begin{tikzpicture}[scale = 0.9, baseline = -2,every node/.style={scale=1}]
				\def\s{0.5};
				\foreach \x in {3,4,5,8,9,11,12,...,25} {
					\draw[fill=black] (\x*\s,0) circle (0.025);
				}
				\foreach \count/\valone/\valtwo/\valthree/\valfive/\valsix in 
				{0/\bon/\bon/\bon/\bon/1,
					1/\bth/\bth/\bth/\bth/1,
					2/\bfo/\bfo/\bfo/\bfo/\bon}{
					\node (a) at (1*\s,0 + \count*\s) {$\valone$};
					\node (d) at (2*\s,0 + \count*\s) {$\valtwo$};
					\node (b) at (6*\s,0 + \count*\s) {$\valthree$};
					\node (e) at (7*\s,0 + \count*\s) {$\valfive$};
					\node (e) at (10*\s,0 + \count*\s) {$\valsix$};
				}
			\end{tikzpicture}
			\allowdisplaybreaks
			\\[7pt]
			t = 1 & \quad
			\begin{tikzpicture}[scale = 0.9, baseline = -2,every node/.style={scale=1}]
				\def\s{0.5};
				\foreach \x in {1,2,5,6,7,10,12,13,...,25} {
					\draw[fill=black] (\x*\s,0) circle (0.025);
				}
				\foreach \count/\valone/\valtwo/\valthree/\valfive/\valsix in 
				{0/\bon/\bon/\bon/\bon/1,
					1/\bth/\bth/\bth/\bth/1,
					2/\bfo/\bfo/\bfo/\bfo/\bon}{
					\node (a) at (3*\s,0 + \count*\s) {$\valone$};
					\node (d) at (4*\s,0 + \count*\s) {$\valtwo$};
					\node (b) at (8*\s,0 + \count*\s) {$\valthree$};
					\node (e) at (9*\s,0 + \count*\s) {$\valfive$};
					\node (e) at (11*\s,0 + \count*\s) {$\valsix$};
				}
			\end{tikzpicture}
			\allowdisplaybreaks
			\\[7pt]
			t = 2 & \quad
			\begin{tikzpicture}[scale = 0.9, baseline = -2,every node/.style={scale=1}]
				\def\s{0.5};
				\foreach \x in {1,...,4,7,8,9,11,14,15,...,25} {
					\draw[fill=black] (\x*\s,0) circle (0.025);
				}
				\foreach \count/\valone/\valtwo/\valthree/\valfive/\valsix in 
				{0/\bon/\bon/\bon/1/\bon,
					1/\bth/\bth/\bth/1/\bth,
					2/\bfo/\bfo/\bfo/\bon/\bfo}{
					\node (a) at (5*\s,0 + \count*\s) {$\valone$};
					\node (d) at (6*\s,0 + \count*\s) {$\valtwo$};
					\node (b) at (10*\s,0 + \count*\s) {$\valthree$};
					\node (e) at (12*\s,0 + \count*\s) {$\valfive$};
					\node (e) at (13*\s,0 + \count*\s) {$\valsix$};
				}
			\end{tikzpicture}
			\allowdisplaybreaks
			\\[7pt]
			t = 3 & \quad
			\begin{tikzpicture}[scale = 0.9, baseline = -2,every node/.style={scale=1}]
				\def\s{0.5};
				\foreach \x in {1,...,6,9,10,12,16,17,...,25} {
					\draw[fill=black] (\x*\s,0) circle (0.025);
				}
				\foreach \count/\valone/\valtwo/\valthree/\valfive/\valsix/\valseven in 
				{0/\bon/\bon/\bon/1/1/\bon,
					1/\bth/\bth/\bth/\bon/\btw/\bth,
					2/\bfo/\bfo/\bfo/\bfo/\bth/\bfo}{
					\node (a) at (7*\s,0 + \count*\s) {$\valone$};
					\node (d) at (8*\s,0 + \count*\s) {$\valtwo$};
					\node (b) at (11*\s,0 + \count*\s) {$\valthree$};
					\node (e) at (13*\s,0 + \count*\s) {$\valfive$};
					\node (e) at (14*\s,0 + \count*\s) {$\valsix$};
					\node (e) at (15*\s,0 + \count*\s) {$\valseven$};
				}
			\end{tikzpicture}
			\allowdisplaybreaks
			\\[7pt]
			t = 4 & \quad
			\begin{tikzpicture}[scale = 0.9, baseline = -2,every node/.style={scale=1}]
				\def\s{0.5};
				\foreach \x in {1,...,8,11,13,15,18,19,...,25} {
					\draw[fill=black] (\x*\s,0) circle (0.025);
				}
				\foreach \count/\valone/\valtwo/\valthree/\valfive/\valsix/\valseven in 
				{0/\bon/\bon/\bon/1/1/\bon,
					1/\bth/\bth/\bth/\bon/\btw/\bth,
					2/\bfo/\bfo/\bfo/\bfo/\bth/\bfo}{
					\node (a) at (9*\s,0 + \count*\s) {$\valone$};
					\node (d) at (10*\s,0 + \count*\s) {$\valtwo$};
					\node (b) at (12*\s,0 + \count*\s) {$\valthree$};
					\node (e) at (14*\s,0 + \count*\s) {$\valfive$};
					\node (e) at (16*\s,0 + \count*\s) {$\valsix$};
					\node (e) at (17*\s,0 + \count*\s) {$\valseven$};
				}
			\end{tikzpicture}
			\allowdisplaybreaks
			\\[7pt]
			t = 5 & \quad
			\begin{tikzpicture}[scale = 0.9, baseline = -2,every node/.style={scale=1}]
				\def\s{0.5};
				\foreach \x in {1,...,10,12,14,17,20,21,...,25} {
					\draw[fill=black] (\x*\s,0) circle (0.025);
				}
				\foreach \count/\valone/\valtwo/\valthree/\valfive/\valsix/\valseven in 
				{0/\bon/\bon/1/\bon/1/\bon,
					1/\bth/\bth/\bon/\bth/\btw/\bth,
					2/\bfo/\bfo/\bfo/\bfo/\bth/\bfo}{
					\node (a) at (11*\s,0 + \count*\s) {$\valone$};
					\node (d) at (13*\s,0 + \count*\s) {$\valtwo$};
					\node (b) at (15*\s,0 + \count*\s) {$\valthree$};
					\node (e) at (16*\s,0 + \count*\s) {$\valfive$};
					\node (e) at (18*\s,0 + \count*\s) {$\valsix$};
					\node (e) at (19*\s,0 + \count*\s) {$\valseven$};
				}
			\end{tikzpicture}
			\allowdisplaybreaks
			\\[7pt]
			t = 6 & \quad
			\begin{tikzpicture}[scale = 0.9, baseline = -2,every node/.style={scale=1}]
				\def\s{0.5};
				\foreach \x in {1,...,11,13,15,19,22,23,...,25} {
					\draw[fill=black] (\x*\s,0) circle (0.025);
				}
				\foreach \count/\valone/\valtwo/\valthree/\valfour/\valfive/\valsix/\valseven in 
				{0/\bon/\bon/\bon/1/\bon/1/\bon,
					1/\bth/\bth/\bth/\btw/\bth/\btw/\bth,
					2/\bfo/\bfo/\bfo/\bfo/\bfo/\bth/\bfo}{
					\node (a) at (12*\s,0 + \count*\s) {$\valone$};
					\node (d) at (14*\s,0 + \count*\s) {$\valtwo$};
					\node (b) at (16*\s,0 + \count*\s) {$\valthree$};
					\node (c) at (17*\s,0 + \count*\s) {$\valfour$};
					\node (e) at (18*\s,0 + \count*\s) {$\valfive$};
					\node (e) at (20*\s,0 + \count*\s) {$\valsix$};
					\node (e) at (21*\s,0 + \count*\s) {$\valseven$};
				}
			\end{tikzpicture}
			\\[7pt]
			t = 7 & \quad
			\begin{tikzpicture}[scale = 0.9, baseline = -2,every node/.style={scale=1}]
				\def\s{0.5};
				\foreach \x in {1,...,12,14,16,18,21,24,25} {
					\draw[fill=black] (\x*\s,0) circle (0.025);
				}
				\foreach \count/\valone/\valtwo/\valthree/\valfour/\valfive/\valsix/\valseven in 
				{0/\bon/\bon/\bon/1/\bon/1/\bon,
					1/\bth/\bth/\bth/\btw/\bth/\btw/\bth,
					2/\bfo/\bfo/\bfo/\bfo/\bfo/\bth/\bfo}{
					\node (a) at (13*\s,0 + \count*\s) {$\valone$};
					\node (d) at (15*\s,0 + \count*\s) {$\valtwo$};
					\node (b) at (17*\s,0 + \count*\s) {$\valthree$};
					\node (c) at (19*\s,0 + \count*\s) {$\valfour$};
					\node (e) at (20*\s,0 + \count*\s) {$\valfive$};
					\node (e) at (22*\s,0 + \count*\s) {$\valsix$};
					\node (e) at (23*\s,0 + \count*\s) {$\valseven$};
				}
			\end{tikzpicture}
			\\[7pt]
			t = 8 & \quad
			\begin{tikzpicture}[scale = 0.9, baseline = -2,every node/.style={scale=1}]
				\def\s{0.5};
				\foreach \x in {1,...,13,15,17,19,20,23} {
					\draw[fill=black] (\x*\s,0) circle (0.025);
				}
				\foreach \count/\valone/\valtwo/\valthree/\valfour/\valfive/\valsix/\valseven in 
				{0/\bon/\bon/\bon/1/\bon/1/\bon,
					1/\bth/\bth/\bth/\btw/\bth/\btw/\bth,
					2/\bfo/\bfo/\bfo/\bfo/\bfo/\bth/\bfo}{
					\node (a) at (14*\s,0 + \count*\s) {$\valone$};
					\node (d) at (16*\s,0 + \count*\s) {$\valtwo$};
					\node (b) at (18*\s,0 + \count*\s) {$\valthree$};
					\node (c) at (21*\s,0 + \count*\s) {$\valfour$};
					\node (e) at (22*\s,0 + \count*\s) {$\valfive$};
					\node (e) at (24*\s,0 + \count*\s) {$\valsix$};
					\node (e) at (25*\s,0 + \count*\s) {$\valseven$};
				}
			\end{tikzpicture}
		\end{align*}
	\end{ex}
	
	We know that $ N_{\ell} $ is a conserved quantity (since $ E_{\ell} $ is). We can interpret this fact as a form of stability under collision, which is one of the important properties of solitons.
	
	In the height $ 1 $ BBS, every state separates into solitons given enough time.
	We conjecture that this is also true in our system (though the solitons might be coupled).
	
	\begin{conj}\label{conj:separate}
		Let $ p $ be any state.
		There exists some positive integer $ \widetilde{t} $,
		some solitons $ v_1,v_2,\ldots,v_D $ of speeds $ d_1,d_2,\ldots,d_D $ (respectively) and some positive integers $ c_1,c_2,\ldots,c_D $,
		such that for any $ t>\widetilde{t} $,
		\[
		(T_{\infty})^t(p) = u_1^{\otimes c_1+d_1(t-\widetilde{t})} \otimes v_1 \otimes u_1^{\otimes c_2+(d_2-d_1)(t-\widetilde{t})} \otimes v_2 \otimes \cdots \otimes u_1^{\otimes c_D+(d_D-d_{D-1})(t-\widetilde{t})} \otimes v_D \otimes u_1^{\otimes \infty}.
		\]
	\end{conj}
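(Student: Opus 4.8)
The plan is to adapt the argument used for the height-$1$ box--ball system (cf.~\cite{FOY00}): the fastest soliton eventually outruns everything to its left, peels off, and one then inducts on the smaller state that remains. The structural inputs I would use are the conserved quantities $E_\ell$ and $N_\ell$, the commutativity $T_\ell T_{\ell'}=T_{\ell'}T_\ell$ of Proposition~\ref{prop:Tlcomm}, the reduction to genuine highest weight elements furnished by Lemma~\ref{lem:efcom}, and the scattering result Theorem~\ref{thm:scattering} together with the speed criterion Theorem~\ref{thm:speed}.

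\emph{Reduction.} First I would reduce to the case where $p$ is a genuine highest weight element of $\dcrys$. The data on the right-hand side of the asserted identity --- the integer $D$, the shapes of $v_1,\dots,v_D$, the speeds $d_1,\dots,d_D$, and the gap lengths $c_j+(d_j-d_{j-1})(t-\widetilde t)$ for $t$ large --- should be constant on each connected component of $\dcrys$: this follows from Lemma~\ref{lem:efcom} once one checks that $u_1$ is fixed by $e_i,f_i$ for $i\in I\setminus\{\bze,\overline{m-r}\}$ (with a short separate treatment of $e_0,f_0$ when $r=m$, where $\overline{m-r}=\bze$). Since each connected component of $\dcrys$ contains a unique genuine highest weight element (Section~\ref{sec:operators}), it suffices to handle those.

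\emph{Spreading and escape.} Writing $q_t=(T_\infty)^t(p)$, the core step is to show that the support of $q_t$ separates into clusters that drift apart linearly in $t$. For this I would first prove the natural analogue of the rank-$1$ fact that the leftmost non-vacuum box of $q_t$ moves strictly to the right at every step, while the rightmost one advances by at most a fixed amount depending only on $p$; combined with the conservation of the $E_\ell$ this pins down the number of clusters and yields an explicit gap-growth rate. The rightmost cluster, once isolated from the rest by a run of $u_1$'s longer than the spatial extent of any translation-invariant subword that can occur along the orbit of $p$, is then a soliton whose speed $d_D$ is read off from the $E_\ell$; beyond a finite time $\widetilde t_D$ it decouples from the remainder (a sufficiently long block of $u_1$'s resets the carrier), so it moves rigidly for all $t\ge\widetilde t_D$. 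Deleting it leaves a state with strictly fewer solitons $\sum_\ell N_\ell$, and the inductive hypothesis, reassembled, gives the full decomposition with $\widetilde t=\max_j\widetilde t_j$.

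\emph{Main obstacle.} The delicate point is the phrase ``longer than the extent of any translation-invariant subword'': in the supersymmetric setting a soliton need not move at speed equal to its length --- only those satisfying the hypotheses of Theorem~\ref{thm:speed} are known to --- and a coupled soliton can be spatially longer than anything the conserved numbers $N_\ell$ see directly. One therefore needs an a priori bound, in terms of $E_1(p),E_2(p),\dots$, on the size of the largest (possibly coupled) soliton that can appear along the orbit of $p$; Proposition~\ref{prop:energyboxes} is the natural tool, but extracting from it a uniform bound --- and hence a computable $\widetilde t$ after which the clusters provably stop interacting --- is where the real difficulty lies. Conjecture~\ref{conj:uncoupling}, if known, would reduce coupled solitons to uncoupled ones of controlled length and largely resolve this; absent it, one would want to establish scattering for a class of solitons broad enough to cover everything produced by the evolution, generalising Theorem~\ref{thm:scattering}.
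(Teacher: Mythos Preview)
The statement you are trying to prove is labelled \emph{Conjecture} in the paper, and the paper does not offer a proof: immediately after stating it, the authors only report computer verification for all $U_q(\agl(2|2))$ and $U_q(\agl(3|3))$ states with at most five non-vacuum elements. So there is no ``paper's own proof'' to compare your proposal against.

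Your proposal is an honest sketch of the classical height-$1$ strategy, and you have correctly identified where it breaks down in the super setting. The gap you flag under \emph{Main obstacle} is exactly why the statement is a conjecture rather than a theorem: without a bound on the spatial extent of a coupled soliton in terms of the conserved quantities $E_\ell$, and without scattering for arbitrary (possibly coupled) solitons, the ``peel off the fastest cluster and induct'' argument does not close. Your appeal to Conjecture~\ref{conj:uncoupling} to patch this is circular from the paper's standpoint, since that too is left open; and your fallback of ``establish scattering for a class of solitons broad enough to cover everything produced by the evolution'' is essentially a restatement of the difficulty. In short, your outline is a reasonable plan of attack, but it is not a proof, and the paper does not claim to have one either.
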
	
	We have verified this conjecture for all $ U_q(\agl(2|2)) $ and $ U_q(\agl(3|3)) $ states with only the first five factors being non-vacuum elements.
	
	There are two important properties that the solitons of the KdV equation and of height 1 BBSs satisfy:
	they move with speed corresponding to their length and are stable under collision.
	In general, the solitons of Definition~\ref{dfn:solitons} do not satisfy these two properties (even though $ N_{\ell} $ is conserved).
	However, we will show that \emph{uncoupled} solitons do satisfy these two properties.
	
	\subsection{Solitons with speed equal to their length}
	One of the properties of the height 1 BBS is that the speed of the solitons is equal to their length.
	In general, this is not true of the solitons in our system.
	In this section, we provide a large class of solitons which move with speed corresponding to their length.
	We conjecture that this is the largest such class and that it contains all of the uncoupled solitons.
	
	\begin{thm}\label{thm:speed}
		Let $ B^{r,1} $ be the $ U_q(\gl(m|n)) $-crystal (with $ r\le m $) 
		of rectangular semistandard Young tableaux (SSYT) with height $ r $ and width $ 1 $.
		Let 
		\[
		x = 
		\begin{array}{|c|}
			\hline
			\raisebox{-1pt}{$ x_{11} $} \\
			\hline
			\raisebox{-1pt}{$ x_{21} $} \\
			\hline
			\vdots \\
			\hline
			\raisebox{-1pt}{$ x_{r1} $} \\
			\hline
		\end{array}
		\otimes 
		\begin{array}{|c|}
			\hline
			\raisebox{-1pt}{$ x_{12} $} \\
			\hline
			\raisebox{-1pt}{$ x_{22} $} \\
			\hline
			\vdots \\
			\hline
			\raisebox{-1pt}{$ x_{r2} $} \\
			\hline
		\end{array}
		\otimes \cdots \otimes 
		\begin{array}{|c|}
			\hline
			\raisebox{-1pt}{$ x_{1s} $} \\
			\hline
			\raisebox{-1pt}{$ x_{2s} $} \\
			\hline
			\vdots \\
			\hline
			\raisebox{-1pt}{$ x_{rs} $} \\
			\hline
		\end{array}
		\in (B^{r,1})^{\otimes s}.
		\]
		Suppose the factors of the tensor product in reverse order:
		\[
		\begin{array}{|c|c|c|c|}
			\hline
			\raisebox{-1pt}{$ x_{1s} $} 
			& \raisebox{-1pt}{$ \cdots  $} 
			& \raisebox{-1pt}{$ x_{12} $} 
			& \raisebox{-1pt}{$ x_{11} $}\\
			\hline
			\raisebox{-1pt}{$ x_{2s} $} 
			& \raisebox{-1pt}{$ \cdots  $} 
			& \raisebox{-1pt}{$ x_{22} $} 
			& \raisebox{-1pt}{$ x_{21} $}\\
			\hline
			\raisebox{-1pt}{$ \vdots  $} 
			& \raisebox{-1pt}{$ \ddots  $} 
			& \raisebox{-1pt}{$ \vdots  $}
			& \raisebox{-1pt}{$ \vdots  $} \\
			\hline
			\raisebox{-1pt}{$ x_{rs} $} 
			& \raisebox{-1pt}{$ \cdots  $} 
			& \raisebox{-1pt}{$ x_{r2} $}
			& \raisebox{-1pt}{$ x_{r1} $} \\
			\hline
		\end{array}	
		\]
		form a SSYT and that there is a row number $ k $ 
		($ 1\le k\le r $) such that
		\begin{align*}
			x_{ij}&<\overline{m-r} \qquad \text{for all } j \text{ and for } i<k\\
			x_{ij}&\ge \overline{m-r} \qquad \text{for all } j \text{ and for } i\ge k.
		\end{align*}
		Then,
		$
		(T_{\ell})^t(u_1^{\otimes c}\otimes x\otimes u_1^{\otimes \infty }) = u_1^{\otimes (c+t\min\{s,\ell\})}\otimes x\otimes u_1^{\otimes \infty }
		$ for all positive integers $ t $.
	\end{thm}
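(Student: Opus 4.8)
The plan is to prove the single-step identity
\[
T_\ell\bigl(u_1^{\otimes c}\otimes x\otimes u_1^{\otimes\infty}\bigr)=u_1^{\otimes(c+\sigma)}\otimes x\otimes u_1^{\otimes\infty},\qquad \sigma:=\min\{s,\ell\},
\]
for every $c$; since the right-hand side again has the form of a BBS state with the same non-vacuum block $x$, the general claim follows by induction on $t$.

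First I would cut the hypotheses down to a highest weight representative. Deleting the nodes $\bze$ and $\overline{m-r}$ disconnects the Dynkin diagram of $\dcrys$ into a part with simple roots $\{\overline{m-r+1},\dots,\overline{m-1}\}$, whose operators move only entries from the \emph{upper alphabet} $\{\bm,\dots,\overline{m-r+1}\}$ (the letters $<\overline{m-r}$), and a part with simple roots $\{\overline{m-r-1},\dots,\bon\}\cup\{0\}\cup\{1,\dots,n-1\}$, whose operators move only entries from the \emph{lower alphabet} $\{\overline{m-r},\dots,\bon,1,\dots,n\}$ (the letters $\ge\overline{m-r}$). The second hypothesis on $x$ says exactly that rows $<k$ are drawn from the upper alphabet and rows $\ge k$ from the lower one, so it is preserved by every $\dcrys$-operator. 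For the first hypothesis, write $T$ for the SSYT formed by the reversed factors; then $\col(x)=\col(T)$, so each $\dcrys$-operator changes the same single entry of $x$ and of $T$, and since $\dcrys$-operators are among the $U_q(\gl(m|n))$-crystal operators they send the rectangular SSYT $T$ to another rectangular SSYT, whence the hypothesis persists. Thus the hypothesis class is a union of connected components of $\dcrys$, and by Lemma~\ref{lem:efcom} I may assume $x=x_0$ is the genuine highest weight element of its $\dcrys$-component (for $r<m$ the flanking $u_1$'s contribute an inert signature for every $\dcrys$-root and are undisturbed; for $r=m$, where $\overline{m-r}=\bze$, minor conventional adjustments are needed). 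Explicitly, $x_0$ has rows $1,\dots,k-1$ constant with row $i$ equal to $\overline{m-i+1}$, while rows $k,\dots,r$ form the genuine highest weight element of the $U_q(\gl(m-r|n))$-crystal of $(r-k+1)\times s$ rectangles; in the bosonic range $m\ge 2r-k+1$ this lower block is a single repeated column, so $x_0=z^{\otimes s}$ for an explicit column $z$, but for $m<2r-k+1$ its lower rows involve the fermionic letters $1,2,\dots$.

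The heart of the argument is to follow the carrier through the state. Passing $u_\ell$ across the leading block $u_1^{\otimes c}$ emits $u_1^{\otimes c}$ and returns the carrier $u_\ell$, since $R(u_\ell\otimes u_1)=u_1\otimes u_\ell$. In the bosonic case I would set $w_0=u_\ell$ and let $w_j$ ($0\le j\le\ell$) be $u_\ell$ with the bottom part (rows $k,\dots,r$) of its $j$ rightmost columns overwritten by that of $z$, so that $w_\ell=z^{\otimes\ell}$, and then, using the bumping description of $R$ from \cite{KO18} together with the explicit reverse-insertion/recording-tableau algorithm of Section~\ref{sec:comb_R}, verify the identities
\begin{align*}
R(w_j\otimes z)&=u_1\otimes w_{j+1}\quad(0\le j<\ell), & R(w_\ell\otimes z)&=z\otimes w_\ell, \\
R(w_j\otimes u_1)&=z\otimes w_{j-1}\quad(1\le j\le\ell), & R(w_0\otimes u_1)&=u_1\otimes w_0.
\end{align*}
Each is forced by the shape of $\col(z)\rightarrow w_j$: the top part of $\col(z)$ bumps straight through every column, creating a transient $(\ell+1)$st column equal to the top of $z$, while the bottom part of $\col(z)$ is appended beneath the first column, and the recording tableau then dictates the reverse insertion and hence the emitted tableau and the new carrier. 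Assembling: if $s\le\ell$ the carrier runs $w_0\to w_1\to\cdots\to w_s$ across the soliton (emitting $u_1^{\otimes s}$) and then $w_s\to w_{s-1}\to\cdots\to w_0$ against the trailing vacua (emitting $z^{\otimes s}$), after which $R(w_0\otimes u_1)=u_1\otimes u_\ell$ repeats --- the soliton has moved $s$ steps right. If $s>\ell$ the carrier fills to $w_\ell$ (emitting $u_1^{\otimes\ell}$), the remaining $s-\ell$ copies of $z$ pass through unchanged (emitting $z^{\otimes(s-\ell)}$), and then $w_\ell\to\cdots\to w_0$ discharges against the vacua (emitting $z^{\otimes\ell}$) --- the soliton has moved $\ell$ steps. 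In both cases the displacement is $\min\{s,\ell\}$, which gives the $t=1$ identity.

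The hardest step, I expect, will be verifying those $R$-matrix identities in the fermionic range $m<2r-k+1$: there $x_0$ is not a repeated column, the intermediate carriers $w_j$ acquire unbarred letters and are \emph{not} described by the naive ``overwrite the bottom rows'' recipe (the resulting rectangle would fail strict increase along a fermionic row), and the reverse-bumping bookkeeping genuinely branches --- this is the behaviour with no counterpart in Yamada's $U_q'(\asl_n)$ system that the introduction alludes to. A secondary point needing care is the degenerate case $r=m$ (so $\overline{m-r}=\bze$), where both the splitting above and the inequalities defining $k$ must be interpreted with the appropriate conventions.
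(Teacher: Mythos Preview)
Your approach is essentially the paper's: reduce to the genuine highest weight element of $\dcrys$ via Lemma~\ref{lem:efcom}, then track the carrier explicitly through the state using the bumping description of $R$. The paper, however, does not split into bosonic and fermionic ranges. It parametrises the highest weight columns uniformly as $\mathbf{S}_j=[\mathbf{X};\mathbf{Z};\mathbf{j}]$, where $\mathbf{X}$ (rows $1,\dots,k-1$) and $\mathbf{Z}$ (the next $h=\min\{r-k+1,m-r\}$ rows) are independent of $j$, while the bottom $r-h-k+1$ rows are the constant fermionic column filled with the single letter $j\in B_+$; the carrier after absorbing $\mathbf{S}_s,\dots,\mathbf{S}_{j+1}$ is then the rectangular tableau $[\,u_{\ell-(s-j)}\mid\mathbf{S}_{j+1}\mid\cdots\mid\mathbf{S}_s\,]$. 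With this bookkeeping the three $R$-matrix identities (loading the carrier, unloading the carrier, full carrier passing through) are each verified by a single bumping computation valid for all $m,r,k$: when inserting $\col(\mathbf{S}_j)$, the $\mathbf{X}$-part bumps straight across to a new rightmost column, and the entries of $\mathbf{Z}$ and $\mathbf{j}$, being all larger than $\overline{m-r+1}$ and in ascending order, are simply appended beneath the first column. No branching occurs in the fermionic range because each $\mathbf{S}_j$ contains only one fermionic value, and those values are strictly ordered across the columns, so the resulting carrier is automatically semistandard. Thus the step you flag as hardest dissolves once you adopt this carrier parametrisation rather than the ``overwrite the bottom rows of $u_\ell$'' picture. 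The case $r=m$ is absorbed by taking $h=0$ (so $\mathbf{Z}$ is empty) and needs no separate treatment.
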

	We first use the $ R $-matrix insertion algorithm to prove the theorem for a genuine highest weight state,
	and then generalise using Lemma~\ref{lem:efcom}.
	The proof is given in Appendix~\ref{appendix:single_soliton_proof}.
	
	\begin{conj}\label{conj:allspeedlength}
		The subset of $ \bigcup_{d=1}^{\infty}(B^{r,1})^{\otimes d} $ defined by Theorem~\ref{thm:speed} 
		is the largest such subset of solitons which move with speed equal to their length.
	\end{conj}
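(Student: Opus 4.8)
The conjecture's inclusion ``$\subseteq$'' — that every element of the family defined by Theorem~\ref{thm:speed} is a soliton whose speed equals its length — is Theorem~\ref{thm:speed} itself with $\ell\to\infty$ (so that $\min\{s,\ell\}=s$). Thus the content is the reverse inclusion: if $x\in(B^{r,1})^{\otimes d}$ satisfies $T_{\infty}(x\otimes u_1^{\otimes\infty})=u_1^{\otimes d}\otimes x\otimes u_1^{\otimes\infty}$, then, read in reverse tensor order, $x$ is an SSYT admitting a dividing row $k$ as in Theorem~\ref{thm:speed}.

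As in the proof of Theorem~\ref{thm:speed}, the plan is to reduce to the case where $x\otimes u_1^{\otimes\infty}$ is the genuine highest weight element of its connected component in $\dcrys$ (one checks as before that the genuine highest weight element of such a component again has all but the first $d$ factors equal to $u_1$). Both properties in play are invariant along a component of $\dcrys$: being a speed-$d$ soliton, because $T_{\infty}$ commutes with every $e_i,f_i$ for $i\in I\setminus\{\bze,\overline{m-r}\}$ by Lemma~\ref{lem:efcom} and fixes the surrounding vacua; and lying in the Theorem~\ref{thm:speed} family, because those operators never move a tableau entry across the value $\overline{m-r}$ and send SSYT to SSYT, hence preserve both ``the reversed tensor is an SSYT'' and the existence of a dividing row. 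So it suffices to prove the converse for genuine highest weight states.

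Next I would make the genuine highest weight states of $\dcrys$ explicit. Since $\dcrys$ is a $U_q(\gl(r))\otimes U_q(\gl(m-r|n))$-crystal, each length-$r$ column of $x$ splits at $\overline{m-r}$ into a top block in the alphabet $\{\bm,\dots,\overline{m-r+1}\}$ of some height $a_j$ and a bottom block in the $\gl(m-r|n)$-alphabet, the heights $(a_1,\dots,a_d)$ are a component invariant, and $x$ is the column-by-column superposition of the genuine highest weight element of the relevant tensor product of $\gl(r)$-column crystals with that of the relevant tensor product of $\gl(m-r|n)$-column crystals. I would then run the carrier computation of Appendix~\ref{appendix:single_soliton_proof} on such an $x$: after the leading vacua pass the carrier unchanged, the carrier absorbs the columns of $x$ one at a time. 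The goal is the dichotomy that the carrier emits a vacuum while absorbing a column — the only mechanism by which the leading vacuum block can eventually grow by the full amount $d$ — exactly when that column stacks cleanly onto the block the carrier already holds; that this forces the carrier to hold the absorbed columns of $x$ as an uninterrupted, unpermuted block at its right end; and that constancy of the $a_j$ together with the SSYT property of the reversed tensor are precisely what make these hold and make ``vacua followed by the columns of $x$'' a legitimate element of $B^{r,\ell}$. When the $a_j$ are not all equal, or the reversed tensor fails to be an SSYT, the carrier must at some stage absorb a genuine ball without emitting a vacuum (so the leading block grows by less than $d$) or shear its stored columns (so the word re-emitted is not $x$); in either case the motion has speed $<d$ or $x$ is not a soliton.

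The main obstacle is exactly this converse: ruling out \emph{every} configuration in which a non-qualifying genuine highest weight $x$ could still advance by $d$ in one step of $T_{\infty}$ — in particular, showing the carrier cannot store the columns of $x$ in a merged or permuted form that happens to disentangle after precisely $d$ vacuum steps. I expect this to need either a bumping invariant, that the insertion $\col(x)\to u_{\ell}$ confines all newly inserted entries to columns beyond the $\ell$-th only when the hypotheses hold, or an energy-function argument identifying the hypotheses with $N_d(x\otimes u_1^{\otimes\infty})=1$ and $N_{\ell}=0$ for $\ell\ne d$, together with the expected fact that only uncoupled length-$d$ solitons can travel at speed $d$. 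The degenerate ``height equals $n$'' case noted in the introduction, where the $\gl(m-r|n)$-part of the repeated column overflows into positive letters and no longer tiles a rectangle, will require separate bookkeeping.
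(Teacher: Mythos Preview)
The statement you are attempting to prove is labelled a \emph{conjecture} in the paper, and the paper does not prove it: immediately after stating it, the authors write only that they have verified it experimentally for a handful of small shapes in $\agl(3|3)$ and $\agl(4|4)$. There is no proof in the paper to compare your proposal against.

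Your proposal is in any case a strategy outline rather than a proof. You correctly isolate the forward inclusion as Theorem~\ref{thm:speed}, and your reduction to genuine highest weight elements of $\dcrys$ via Lemma~\ref{lem:efcom} is sound (the observation that $\col(x)$ equals $\col$ of the reversed tableau, so the crystal operators preserve the SSYT property of the reversed array, is the right one). But you yourself flag the converse as ``the main obstacle'' and then stop: you describe what a carrier computation or a bumping invariant \emph{ought} to show without carrying it out. A proof would have to exhibit, for every genuine highest weight $x$ with non-constant height profile $(a_1,\dots,a_d)$ or non-SSYT reversal, an explicit step at which either a non-vacuum is emitted early or the carrier's stored block is sheared, and this case analysis is absent.

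More seriously, one of your two proposed routes is based on a false premise. You suggest ``an energy-function argument identifying the hypotheses with $N_d=1$ and $N_\ell=0$ for $\ell\ne d$, together with the expected fact that only uncoupled length-$d$ solitons can travel at speed $d$.'' But Proposition~\ref{prop:Nlisgood} computes $N_s=r+1-k$ for a Theorem~\ref{thm:speed} soliton, so whenever $k<r$ the soliton is \emph{coupled} yet, by Theorem~\ref{thm:speed} itself, still moves with speed equal to its length. Hence the hypotheses of Theorem~\ref{thm:speed} do not coincide with being uncoupled, and the ``expected fact'' is false as stated. Any energy-function approach would have to account for the full family $1\le k\le r$, not just $k=r$.
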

	We verified this conjecture experimentally for solitons 
	with $ r=2,\, s=2 $, with $ r=2,\, s=3 $ and with $ r=3,\,s=2 $
	for $ m=n=3 $ and $ m=n=4 $.
	
	The value of $ k $ relates the structure of these solitons to the value of $ N_{s} $.
	
	\begin{prop}\label{prop:Nlisgood}
		Let $ x,\, k $ be as in Theorem~\ref{thm:speed}.
		Then $ N_{s}(u_1^{\otimes c}\otimes x \otimes u_1^{\otimes\infty}) = r+1-k $
		and $ N_{\ell} = 0 $ for $ \ell \neq s $.
	\end{prop}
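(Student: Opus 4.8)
The plan is to reduce to a genuine highest weight element, compute the conserved quantities $E_\ell$ of the state explicitly, and then read off $N_\ell$ as a discrete second difference. Since $E_\ell(p) = -\sum_j H(u_\ell^{(j-1)}\otimes p_j)$ is assembled from the energy function $H$ (invariant under $e_i,f_i$ for $i\ne\bze$) and the carrier states produced by the combinatorial $R$-matrix (which is $e_i$- and $f_i$-equivariant), Lemma~\ref{lem:efcom} upgrades to the statement that $E_\ell$, and hence $N_\ell$, is constant on each connected component of $\dcrys$. As observed after Lemma~\ref{lem:efcom}, it is therefore enough to treat the genuine highest weight element of the component of $x$ in $\dcrys\cong U_q(\gl(r))\otimes U_q(\gl(m-r|n))$-crystal. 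The two row blocks cut out by the hypothesis on $k$ in Theorem~\ref{thm:speed} lie in the two tensor factors, so this highest weight element is completely explicit: rows $1,\dots,k-1$ coincide with the corresponding rows of the vacuum $u_s$ (row $i$ constant equal to $\overline{m-i+1}$), while rows $k,\dots,r$ form the genuine highest weight element of the $U_q(\gl(m-r|n))$-crystal $B^{r-k+1,s}$. (There are two sub-cases according to whether $r-k+1\le m-r$ or not; the latter is the hook-shaped "interesting case" from the introduction, but the computation below is uniform.)

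Next I would compute $E_\ell\bigl(u_1^{\otimes c}\otimes x\otimes u_1^{\otimes\infty}\bigr)$ for this highest weight $x$ by tracking the carrier as it traverses the state, just as in the proof of Theorem~\ref{thm:speed}. The carrier enters the soliton as $u_\ell$ (the $c$ leading vacuum columns contribute $0$ since $H(u_\ell\otimes u_1)=0$), and each term $-H(u_\ell^{(j-1)}\otimes p_j)$ is evaluated through Proposition~\ref{prop:energyboxes} as $r$ minus the number of boxes of $\col(p_j)\rightarrow u_\ell^{(j-1)}$ lying strictly to the right of column $\ell$. The key claim is that the top $k-1$ rows of $x$ are inert: they match the vacuum pattern already sitting in the carrier, so under the bumping algorithm they merely replace equal entries and are reinserted, producing no energy; meanwhile the bottom $r-k+1$ rows behave like a solid $(r-k+1)\times s$ rectangular soliton, and the carrier absorbs exactly $\min\{s,\ell\}$ columns' worth of them before saturating. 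Carrying this out yields
\[
E_\ell\bigl(u_1^{\otimes c}\otimes x\otimes u_1^{\otimes\infty}\bigr) = (r+1-k)\min\{s,\ell\}\qquad\text{for all }\ell\ge 0,
\]
consistent with the convention $E_0=0$.

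Substituting into $N_\ell = -E_{\ell-1}+2E_\ell-E_{\ell+1}$ gives $N_\ell = (r+1-k)\bigl(2\min\{s,\ell\}-\min\{s,\ell-1\}-\min\{s,\ell+1\}\bigr)$, and the bracketed expression equals $1$ when $\ell=s$ and $0$ for every other $\ell$. Hence $N_s = r+1-k$ and $N_\ell = 0$ for $\ell\ne s$, which is exactly the assertion.

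The main obstacle is the middle step: making rigorous that the vacuum-matching rows pass through the carrier without creating energy while the remaining block contributes $(r+1-k)\min\{s,\ell\}$. This is the same delicate bookkeeping of Schensted insertions through the carrier that underlies Theorem~\ref{thm:speed}, and the two-case split (especially the hook-shaped bottom block) must be handled: one needs to verify that the overflow of each inserted column past column $\ell$ of the carrier is exactly $k-1$ while there is still room and jumps to $r$ once the carrier is full — equivalently, that the carrier fills with the $r-k+1$ active rows at unit rate for $\min\{s,\ell\}$ steps and then stays full. Once this is in place, the rest of the argument is the routine second-difference computation above.
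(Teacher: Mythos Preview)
Your proposal is correct and follows essentially the same route as the paper: compute $E_\ell = (r+1-k)\min\{s,\ell\}$ by tracking the carrier through the soliton via the bumping algorithm, then take the discrete second difference to extract $N_\ell$. Your explicit reduction to a genuine highest weight element via Lemma~\ref{lem:efcom} is an elaboration the paper's terse proof leaves implicit, but the computational core --- that each non-vacuum column contributes $-H = r+1-k$ until the carrier is full and $0$ thereafter --- is identical.
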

	\begin{proof}
		Let $ p_1\otimes p_2 \otimes \cdots =  u_1^{\otimes c}\otimes x \otimes u_1^{\otimes\infty} $.
		We can compute $ H(u_{\ell}^{(j-1)}\otimes u_1) = 0 $.
		Applying the bumping algorithm while moving the carrier through the soliton (c.f.\ Appendix~\ref{appendix:single_soliton_proof}),
		we find that $ H(u_{\ell}^{(j-1)}\otimes p_j) = r+1-k $
		unless the carrier $ u_{\ell}^{(j-1)} $ is full,
		in which case $ H(u_{\ell}^{(j-1)}\otimes p_j) = 0 $.
		Hence, $ E_{\ell}(p) = (r+1-k)\min(\ell,s) $.
		It is then easily verified that $ N_{\ell}(p) = r+1-k $ if $ \ell=s $ and $ N_{\ell}(p) = 0 $ otherwise.
	\end{proof}
	
	In particular, note that if $ k=r $ then $ x $ is an uncoupled soliton.
	We propose the following generalisation of the above proposition.
	
	\begin{conj}\label{conj:dividingline}
		Let $ v = v_1\otimes v_2\otimes \cdots \otimes v_{d} \in (B^{r,1})^{\otimes d} $ be a soliton of speed s.
		Let $ k_j $ be the value of $ k $ (defined in Theorem~\ref{thm:speed}) for each $ v_j $.
		Then
		\begin{equation}\label{eq:conjdividingline}
			\sum_{j=1}^{d} (r+1- k_j) =  s N_{s}(v\otimes u^{\otimes \infty}).
		\end{equation}
	\end{conj}
	Intuitively, we interpret $ r+1-k_j $ as the number of overlapping solitons at the $ j $-th position of the soliton.
	
	\begin{prop}
		If we assume Conjecture~\ref{conj:dividingline}, then the solitons of Theorem~\ref{thm:speed} with $ k=r $ are the only uncoupled solitons. 
	\end{prop}
	\begin{proof}
		Since $ r+1-k_j \geq 1 $,
		the left-hand side of \eqref{eq:conjdividingline}
		is at least $ d $,
		and for an uncoupled soliton,
		the right-hand side of \eqref{eq:conjdividingline} is $ s $.
		But we already know $ s \leq d $.
		We deduce that $ s=d $ and hence $ k_j = r $ for all $ j $.
	\end{proof}
	
	\subsection{Scattering of two solitons}\label{sec:scattering}
	One of the main properties of the height 1 BBS is that solitons are stable under collision.
	This behaviour is also called scattering.
	
	\begin{dfn}\label{dfn:scattering}
		Let $ v,w $ be uncoupled solitons of lengths $ \mathfrak{d}_1,\,\mathfrak{d}_2 $ respectively, with $ \mathfrak{d}_1>\mathfrak{d}_2 $.
		We say that $ v $ and $ w $ \defn{scatter} 
		if there exist non-negative integers $ \mathfrak{c}_2,\widetilde{t} $ such that
		for any $ t>\widetilde{t} $ and $ \mathfrak{c}_1\in\ZZ_{\ge 0} $,
		\[
		(T_{\infty})^t(u_1^{\otimes \mathfrak{c}_1}\otimes v \otimes u_1^{\otimes\mathfrak{c}_2}\otimes w \otimes u_1^{\otimes\infty})
		= u_1^{\otimes \mathfrak{c}_3} \otimes \widetilde{w} \otimes u_1^{\otimes\mathfrak{c}_4}\otimes \widetilde{v} \otimes u_1^{\otimes\infty}
		\]
		for some non-negative integers $ \mathfrak{c}_3,\mathfrak{c}_4 $
		(dependent on $ t $)
		and some uncoupled solitons $ \widetilde{w},\widetilde{v} $
		of lengths $ \mathfrak{d}_2,\mathfrak{d}_1 $ respectively.
	\end{dfn}
	
	We can interpret this definition as saying that the longer soliton, $ v $,
	eventually collides and interacts with the shorter soliton, $ w $,
	after which the states separate into two solitons again.
	However, it is important to note that $ \widetilde{w} $ and $ \widetilde{v} $
	are generally different from $ w $ and $ v $, respectively.
	We have already seen an example of scattering for $ \agl(m|n) $ in Example~\ref{ex:glmnscattering}.
	
	Using the same notation in Definition~\ref{dfn:scattering},
	let $ j_v=\mathfrak{c}_1+1,\, j_w=\mathfrak{c}_1+\mathfrak{d}_1+\mathfrak{c}_2+1,\,
	j_{\widetilde{w}}=\mathfrak{c}_3+1,\,j_{\widetilde{v}}=\mathfrak{c}_3+\mathfrak{d}_2+\mathfrak{c}_4+1$ be the positions of $ v,w,\widetilde{w},\widetilde{v} $ respectively.
	If there exists an integer $ \delta $ such that
	$ j_{\widetilde{v}}=j_v+t\mathfrak{d}_1+\delta $
	and $ j_{\widetilde{w}}=j_w+t\mathfrak{d}_2-\delta $
	then we call $ \delta $ the \defn{phase shift}.
	
	Let $ V $ be a SSYT.
	Let $ \down{V} $ denote the bottom row of $ V $,
	and $ \up{V} $ denote the other rows of $ V $.
	We will just consider the case where
	$ \down{V} $ only has entries greater than or equal to $ \overline{m-r} $,
	and $ \up{V} $ only has entries strictly less than $ \overline{m-r} $
	(where $ r $ is the height of $ V $).
	In the notation from Theorem~\ref{thm:speed},
	we are only considering the case where $ k=r $.
	
	\begin{thm}\label{thm:scattering}
		Consider uncoupled solitons composed of elements of $ U_q(\agl(m|n)) $-crystals with height $ r\le m $.
		Then any two uncoupled solitons of the form given in Theorem~\ref{thm:speed} (i.e.\ with $ k=r $) scatter.
		
		Moreover, let $ v,w $ be uncoupled solitons
		and let $ \widetilde{w},\widetilde{v} $ be obtained from $ v,w $ as in Definition~\ref{dfn:scattering}.
		The elements $ v $ and $ w $ are related to $ \widetilde{w} $ and $ \widetilde{v} $
		via their semistandard Young tableaux (SSYT).
		Let $ V,W,\widetilde{W},\widetilde{V} $ be the SSYT corresponding to $ v,w,\widetilde{w},\widetilde{v} $,
		respectively.
		Then,
		\[
		\up{\widetilde{W}}\otimes \up{\widetilde{V}} = R(\up{V}\otimes \up{W})
		\qquad \text{and} \qquad
		\down{\widetilde{W}}\otimes \down{\widetilde{V}} = R(\down{V}\otimes \down{W}).
		\]
		The phase shift is given by
		$
		\delta = 2\mathfrak{d}_2+H(\down{V}\otimes \down{W})+H(\up{V}\otimes \up{W}).
		$
	\end{thm}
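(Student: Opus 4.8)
The plan is to split each soliton into its bottom row and its top $r-1$ rows, identify the induced evolution of each piece with an already–understood box-ball system, and then transfer the conclusion from genuine highest weight elements to arbitrary ones via the crystal structure. \emph{Step 1 (reduction to highest weight).} By Lemma~\ref{lem:efcom}, $T_{\ell}$ commutes with $e_i$ and $f_i$ for every $i\in I\setminus\{\bze,\overline{m-r}\}$; the combinatorial $R$-matrix commutes with all $e_i$ and $f_i$; and $H$ is constant on $e_i$-$f_i$-strings for $i\neq\bze$. An operator $e_i$ or $f_i$ with $i\in I\setminus\{\bze,\overline{m-r}\}$ changes neither the length nor the lattice position of any non-vacuum column of a state, and it preserves the shape described in Theorem~\ref{thm:speed} with $k=r$: the nodes in $\{\overline{m-1},\dots,\overline{m-r+1}\}$ only touch entries $<\overline{m-r}$ (hence act on $\up{(\cdot)}$) and the remaining nodes only touch entries $\geq\overline{m-r}$ (hence act on $\down{(\cdot)}$). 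It therefore suffices to prove the theorem for a genuine highest weight element of each connected component of $\dcrys$: the scattered configuration, the identities $\up{\widetilde{W}}\otimes\up{\widetilde{V}}=R(\up{V}\otimes\up{W})$ and $\down{\widetilde{W}}\otimes\down{\widetilde{V}}=R(\down{V}\otimes\down{W})$, the energies $H(\up{V}\otimes\up{W})$ and $H(\down{V}\otimes\down{W})$, and the phase shift $\delta$ are all transported by these operators.

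\emph{Step 2 (decoupling into a $\gl(r)$ and a $\gl(m-r+1|n)$ system).} The combinatorial heart is the observation that, under Schensted insertion into a tableau all of whose columns have the shape of Theorem~\ref{thm:speed} with $k=r$, a letter $<\overline{m-r}$ can only displace another letter $<\overline{m-r}$, whereas a letter $\geq\overline{m-r}$ is too large to displace anything in the top $r-1$ rows of such a column and so lands strictly below them --- the two value ranges never interact, since all of the first lies below all of the second. From this one shows that the property ``every non-vacuum column has the shape of Theorem~\ref{thm:speed} with $k=r$'' is preserved by $T_{\ell}$, and that $T_{\ell}$ factors into a system on the top $r-1$ rows and a system on the bottom row. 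The top system is a non-super height-$(r-1)$ box-ball system for $\gl(r)$ with vacuum $\up{u_1}$, of the type treated by Yamada~\cite{Yamada04}. The bottom system is a height-$1$ box-ball system whose balls are letters of the $\gl(m-r|n)$ crystal and whose vacuum is the extra letter $\overline{m-r+1}$ lying below all of them; equivalently it is the Hikami--Inoue height-$1$ system~\cite{HI00} for $\gl(m-r+1|n)$, whose vacuum is exactly that smallest letter. The same factorisation applies to the combinatorial $R$-matrix restricted to tableaux of this shape, and, through Proposition~\ref{prop:energyboxes}, to the energy function, yielding the additivity $H(V\otimes W)=H(\up{V}\otimes\up{W})+H(\down{V}\otimes\down{W})$ (the boxes of $\col(W)\to V$ strictly to the right of the $\max(\mathfrak{d}_1,\mathfrak{d}_2)$-th column split according to which value range they occupy, and the normalisation constant distributes correctly).

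\emph{Step 3 (scattering and the phase shift).} By Theorem~\ref{thm:speed}, before the collision the longer soliton $v$ overtakes the shorter soliton $w$, and afterwards both again travel at speeds $\mathfrak{d}_1$ and $\mathfrak{d}_2$, so a collision occurs and the outgoing state is a pair of solitons. Scattering in each decoupled system is known --- Yamada's theorem~\cite{Yamada04} for the non-super top system, and the Hikami--Inoue analysis~\cite{HI00} for the bottom system --- and in each case the outgoing pair is computed by the $R$-matrix of that system; recombining via Step~2 gives exactly the asserted identities $\up{\widetilde{W}}\otimes\up{\widetilde{V}}=R(\up{V}\otimes\up{W})$ and $\down{\widetilde{W}}\otimes\down{\widetilde{V}}=R(\down{V}\otimes\down{W})$. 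For the phase shift, running the displacement count of~\cite{FOY00,Yamada04} through the decoupled system and invoking the additivity of Step~2 gives $\delta=2\mathfrak{d}_2+H(\down{V}\otimes\down{W})+H(\up{V}\otimes\up{W})$; transporting back along Step~1 finishes the proof. (As a check, in Example~\ref{ex:glmnscattering} one has $\mathfrak{d}_2=1$, $H(\down{V}\otimes\down{W})=0$ and $H(\up{V}\otimes\up{W})=-1$, so $\delta=1$, matching the one-step shift observed there.)

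\emph{Expected main obstacle.} Step~2 is where the work lies. The carrier $u_{\ell}$ is itself \emph{not} of the shape of Theorem~\ref{thm:speed} --- its bottom row consists of the letter $\overline{m-r+1}$, which lies below $\overline{m-r}$ --- so one must verify with care that pushing $u_{\ell}$ through a train of special-shape columns keeps that shape, that the two value ranges stay unmixed throughout the reverse insertion defining the $R$-matrix, and that the induced bottom-row dynamics really is the stated (modified) Hikami--Inoue system rather than something with a different carrier. A second, isolated difficulty is the case in which the soliton height equals $n$: there the genuine highest weight configurations of the bottom system have a form with no analogue among the highest weight states of the non-super system of~\cite{Yamada04}, so that case must be handled by a direct analysis of the bumping algorithm rather than by citation.
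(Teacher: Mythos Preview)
Your Step~1 reduction to genuine highest weight is correct and matches the paper. The gap is in Step~2: the claimed factorisation of $T_\ell$ into independent top and bottom systems is false. While the \emph{state} columns do retain the shape of Theorem~\ref{thm:speed}, the carrier does not --- during evolution it acquires entries $\bfon=\overline{m-r}$ in its second-to-last row and $\bftw=\overline{m-r+1}$ in its bottom row, and the $R$-matrix genuinely mixes these across the row boundary. Concretely, when the carrier meets a column with $\bftw$ above $\bfze$ (case~\ref{enum:k} of Lemma~\ref{lem:classRmat}), the bottom-row $\bfon$-count in the carrier drops by~$2$ while its $\bftw$-count rises by~$1$; no height-$1$ $R$-matrix does this. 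The induced bottom-row dynamics is therefore not the Hikami--Inoue system, and in the top projection the highest-weight first soliton is entirely vacuum, so your ``top system'' sees only one soliton and cannot reproduce the correct positions or phase shift. You correctly flag the carrier issue as the main obstacle, but it is not a matter of verifying with care --- the decoupling simply does not hold.

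The paper's repair is to introduce an auxiliary \emph{$L$-sped} height-$1$ time evolution $T_\ell^L$, in which the leading soliton is artificially accelerated by $L$, together with an explicit swapping map $\mathcal{F}$ that exchanges certain $\bftw$/$\bfon$ pairs between the bottom two rows before projecting. Proposition~\ref{prop:ht1isomorphism} then proves $T_\ell = \mathcal{F}^{-1}T_\ell^L\mathcal{F}$ on the relevant states, and Proposition~\ref{prop:dspedsolitons} establishes scattering and the phase shift $2\mathfrak{d}_2-L-M$ for the $L$-sped system directly; this already rests on the lengthy case analysis of Lemma~\ref{lem:classRmat}. The boundary cases $m=r+1$ and especially $m=r$ need further separate treatment: for $m=r$ the genuine highest weight states have a qualitatively different form and the paper abandons any reduction in favour of direct computation (Lemmas~\ref{lem:m=roverlap} and~\ref{lem:m=rnooverlap}).
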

	
	The proof of the above theorem is given in Appendix~\ref{sec:mainthmproof}
	
	Note that the assumption of uncoupled is not a necessary condition,
	and some other coupled solitons also scatter as in Defintion~\ref{dfn:scattering}.
	
	\begin{ex}\label{ex:non_thm_soliton}
		Consider the following time evolution of a BBS composed of elements from the $U_q(\widehat{\mathfrak{gl}}(4|1))$-crystal with $ r=2 $.
		\[
		\begin{array}{cc}
			t = 0 &
			\begin{tikzpicture}[scale = 0.9, baseline = -2,every node/.style={scale=1}]
				\def\s{0.5};
				\foreach \x in {-7,...,-4,-1,0,2,3,...,10} {
					\draw[fill=black] (\x*\s,0) circle (0.025);
				}
				\foreach \count/\valone/\valtwo/\valthree in {0/\bon/\bon/\bon,1/\btw/\btw/\btw}{
					\node (a) at (-3*\s,0 + \count*\s) {$\valone$};
					\node (b) at (-2*\s,0 + \count*\s) {$\valtwo$};
					\node (c) at (1*\s,0 + \count*\s) {$\valthree$};
				}
			\end{tikzpicture}
			
			\\[7pt]
			t = 1 &
			
			\begin{tikzpicture}[scale = 0.9, baseline = -2,every node/.style={scale=1}]
				\def\s{0.5};
				\foreach \x in {-7,...,-2,1,3,4,...,10} {
					\draw[fill=black] (\x*\s,0) circle (0.025);
				}
				\foreach \count/\valone/\valtwo/\valthree in {0/\bon/\bon/\bon,1/\btw/\btw/\btw}{
					\node (a) at (-1*\s,0 + \count*\s) {$\valone$};
					\node (b) at (0*\s,0 + \count*\s) {$\valtwo$};
					\node (c) at (2*\s,0 + \count*\s) {$\valthree$};
				}
			\end{tikzpicture}
			
			\\[7pt]
			
			t = 2 &
			
			\begin{tikzpicture}[scale = 0.9, baseline = -2,every node/.style={scale=1}]
				\def\s{0.5};
				\foreach \x in {-7,...,0,2,5,6,...,10} {
					\draw[fill=black] (\x*\s,0) circle (0.025);
				}
				\foreach \count/\valone/\valtwo/\valthree in {0/\bon/\bon/\bon,1/\btw/\btw/\btw}{
					\node (a) at (1*\s,0 + \count*\s) {$\valone$};
					\node (b) at (3*\s,0 + \count*\s) {$\valtwo$};
					\node (c) at (4*\s,0 + \count*\s) {$\valthree$};
				}
			\end{tikzpicture}
			
			\\[7pt]
			
			t = 3 &
			
			\begin{tikzpicture}[scale = 0.9, baseline = -2,every node/.style={scale=1}]
				\def\s{0.5};
				\foreach \x in {-7,...,1,3,4,7,8,...,10} {
					\draw[fill=black] (\x*\s,0) circle (0.025);
				}
				\foreach \count/\valone/\valtwo/\valthree in {0/\bon/\bon/\bon,1/\btw/\btw/\btw}{
					\node (a) at (2*\s,0 + \count*\s) {$\valone$};
					\node (b) at (5*\s,0 + \count*\s) {$\valtwo$};
					\node (c) at (6*\s,0 + \count*\s) {$\valthree$};
				}
			\end{tikzpicture}
			\\[7pt]
		\end{array}
		\]
		We observe that the two objects $ 
		\begin{smallmatrix}
			\btw & \btw\\
			\bon & \bon
		\end{smallmatrix}
		$ and $ 
		\begin{smallmatrix}
			\btw\\
			\bon
		\end{smallmatrix}
		$ satisfy Theorem~\ref{thm:speed} with $k = 1$ and are stable upon collision.
		However, $ 
		\begin{smallmatrix}
			\btw & \btw\\
			\bon & \bon
		\end{smallmatrix}
		$ and $ 
		\begin{smallmatrix}
			\btw\\
			\bon
		\end{smallmatrix}
		$ are not uncoupled.
		Calculating $ N_{\ell} $, we find that $ N_1 = 2,\, N_2 = 2 $ and $ N_{\ell} = 0 $ for $ \ell>2 $.
	\end{ex}
	
	However, in general, the collisions of coupled solitons are more complicated.
	\begin{ex}
		\label{ex:non_solitonic}
		Consider the following time evolution of a BBS 
		composed of elements from the $U_q(\widehat{\mathfrak{gl}}(3|3))$-crystal with $ r=2 $. 
		\[
		\begin{array}{cc}
			t = 0 &
			\begin{tikzpicture}[scale = 0.9, baseline = -2,every node/.style={scale=1}]
				\def\s{0.5};
				\foreach \x in {-5,-4,-1,0,2,3,...,12} {
					\draw[fill=black] (\x*\s,0) circle (0.025);
				}
				\foreach \count/\valone/\valtwo/\valthree in {0/2/1/1,1/\bon/\bon/\bon}{
					\node (a) at (-3*\s,0 + \count*\s) {$\valone$};
					\node (b) at (-2*\s,0 + \count*\s) {$\valtwo$};
					\node (c) at (1*\s,0 + \count*\s) {$\valthree$};
				}
			\end{tikzpicture}
			
			\\[7pt]
			t = 1 &
			
			\begin{tikzpicture}[scale = 0.9, baseline = -2,every node/.style={scale=1}]
				\def\s{0.5};
				\foreach \x in {-5,...,-2,1,3,4,...,12} {
					\draw[fill=black] (\x*\s,0) circle (0.025);
				}
				\foreach \count/\valone/\valtwo/\valthree in {0/2/1/1,1/\bon/\bon/\bon}{
					\node (a) at (-1*\s,0 + \count*\s) {$\valone$};
					\node (b) at (0*\s,0 + \count*\s) {$\valtwo$};
					\node (c) at (2*\s,0 + \count*\s) {$\valthree$};
				}
			\end{tikzpicture}
			
			\\[7pt]
			t = 2 &
			
			\begin{tikzpicture}[scale = 0.9, baseline = -2,every node/.style={scale=1}]
				\def\s{0.5};
				\foreach \x in {-5,...,0,5,6,...,12} {
					\draw[fill=black] (\x*\s,0) circle (0.025);
				}
				\foreach \count/\valone/\valtwo/\valthree/\valfour in {0/2/\bon/1/\bon,1/\bon/\bth/1/\btw}{
					\node (a) at (1*\s,0 + \count*\s) {$\valone$};
					\node (d) at (2*\s,0 + \count*\s) {$\valtwo$};
					\node (b) at (3*\s,0 + \count*\s) {$\valthree$};
					\node (c) at (4*\s,0 + \count*\s) {$\valfour$};
				}
			\end{tikzpicture}
			\\[7pt]
			t = 3 &
			
			\begin{tikzpicture}[scale = 0.9, baseline = -2,every node/.style={scale=1}]
				\def\s{0.5};
				\foreach \x in {-5,...,1,7,8,...,12} {
					\draw[fill=black] (\x*\s,0) circle (0.025);
				}
				\foreach \count/\valone/\valtwo/\valthree/\valfour/\valfive in {0/\bon/\bon/2/1/\bon,1/\bth/\btw/\bth/1/\btw}{
					\node (a) at (2*\s,0 + \count*\s) {$\valone$};
					\node (b) at (3*\s,0 + \count*\s) {$\valtwo$};
					\node (c) at (4*\s,0 + \count*\s) {$\valthree$};
					\node (d) at (5*\s,0 + \count*\s) {$\valfour$};
					\node (d) at (6*\s,0 + \count*\s) {$\valfive$};
				}
			\end{tikzpicture}
			\\[7pt]
			t = 4 &
			
			\begin{tikzpicture}[scale = 0.9, baseline = -2,every node/.style={scale=1}]
				\def\s{0.5};
				\foreach \x in {-5,...,2,5,9,10,...,12} {
					\draw[fill=black] (\x*\s,0) circle (0.025);
				}
				\foreach \count/\valone/\valtwo/\valthree/\valfour/\valfive in {0/\bon/\bon/2/1/\bon,1/\bth/\btw/\bth/1/\btw}{
					\node (a) at (3*\s,0 + \count*\s) {$\valone$};
					\node (b) at (4*\s,0 + \count*\s) {$\valtwo$};
					\node (c) at (6*\s,0 + \count*\s) {$\valthree$};
					\node (d) at (7*\s,0 + \count*\s) {$\valfour$};
					\node (d) at (8*\s,0 + \count*\s) {$\valfive$};
				}
			\end{tikzpicture}
			\\[7pt]
			t = 5 &
			\begin{tikzpicture}[scale = 0.9, baseline = -2,every node/.style={scale=1}]
				\def\s{0.5};
				\foreach \x in {-5,...,3,6,7,11,12} {
					\draw[fill=black] (\x*\s,0) circle (0.025);
				}
				\foreach \count/\valone/\valtwo/\valthree/\valfour/\valfive in {0/\overline{1}/\overline{1}/2/1/\overline{1},1/\overline{3}/\overline{2}/\overline{3}/1/\overline{2}}{
					\node (a) at (4*\s,0 + \count*\s) {$\valone$};
					\node (b) at (5*\s,0 + \count*\s) {$\valtwo$};
					\node (c) at (8*\s,0 + \count*\s) {$\valthree$};
					\node (d) at (9*\s,0 + \count*\s) {$\valfour$};
					\node (d) at (10*\s,0 + \count*\s) {$\valfive$};
				}
			\end{tikzpicture}
			\\[7pt]
		\end{array}
		\]
		Calculating $ N_{\ell} $, we find that $ N_1 = 2,\, N_2 = 2 $ and $ N_{\ell} = 0 $ for $ \ell>2 $.
	\end{ex}
	
	\begin{remark}
		States with an arbitrary number of uncoupled solitons can be reduced to multiple collisions of two solitons.
		Moreover, it is a consequence of the Yang--Baxter equation that the states after all collisions have occurred are independent of the order of collisions.
	\end{remark}
	
	\subsection*{Acknowledgements}
	The authors would like to thank Masato Okado for helping them understand coupled solitons through the use of numbers $ N_{\ell} $ (Section~\ref{sec:coupled}).
	The authors are especially grateful to Travis Scrimshaw for all of his invaluable support and guidance.
	The authors would also like to thank the referee for taking their time to read the paper and providing comments.
	
	\appendix
	
	\section{Proof of Theorem~\ref{thm:speed}}\label{appendix:single_soliton_proof}
	Fix a positive integer $ k\le r $.
	Let $ h=\min\{r-k+1,m-r\} $ and $ j \in B $ with $ 1\leq j \leq n $.
	For convenience, we use the following notation:
	\begin{align*}
		\inlinetab{{\mathbf{X}}} &= 
		\begin{array}{|c|}
			\hline
			\bm\\
			\hline
			\vdots\\
			\hline
			\overline{m-k+2}\\
			\hline
		\end{array} \,, &
		\inlinetab{{\mathbf{Z}}} &=
		\begin{array}{|c|}
			\hline
			\overline{m-r}\\
			\hline
			\vdots\\
			\hline
			\overline{m-r-h+1}\\
			\hline
		\end{array} \,, &
		\inlinetab{\mathbf{j}} &= 
		\left.
		\begin{array}{|c|}
			\hline
			j\\
			\hline
			\vdots\\
			\hline
			j\\
			\hline
		\end{array}
		\right\}r-k+1-h \,, \allowdisplaybreaks \\
		\inlinetab{{\mathbf{Y}}} &= 
		\begin{array}{|c|}
			\hline
			\mathbf{X}\\
			\hline
			\mathbf{Z}\\
			\hline
		\end{array} \,, &
		\inlinetab{\mathbf{S}_j} &= 
		\begin{array}{|c|}
			\hline
			\mathbf{X}\\
			\hline
			\mathbf{Z}\\
			\hline
			\mathbf{j}\\
			\hline
		\end{array} = 
		\begin{array}{|c|}
			\hline
			\mathbf{Y}\\
			\hline
			\mathbf{j}\\
			\hline
		\end{array} \,, &
		\inlinetab{\mathbf{C}_j} &= 
		\begin{array}{|c|}
			\hline
			u_1\\
			\hline
			\mathbf{Z}\\
			\hline
			\mathbf{j}\\
			\hline
		\end{array}
		\,.
	\end{align*}
	Note that if $ k=1 $ then $ \inlinetab{\textbf{X}} $ is an empty tableau.
	Similarly, if $ r-k+1-h=0 $ then $ \inlinetab{\textbf{j}} $ is empty.
	
	Since any state can be reached by successively applying crystal operators to a genuine highest weight state,
	by Lemma~\ref{lem:efcom}
	it is sufficient to prove the theorem for genuine highest weight states
	(with respect to the crystal $ \dcrys $).
	
	The genuine highest weight states are of the form
	\[
	u_1^{\otimes c}\otimes \bigotimes_{j =0}^{s-1}\inlinetab{\mathbf{S}_{s-j}}\otimes u_1^{\otimes\infty}.
	\]
	
	\begin{lemma}\label{lem:loadcarrier}
		If $ s-j < \ell $ then
		\[
		R(
		\begin{array}{|c|c|c|c|}
			\hline
			u_{\ell-(s-j )} & \mathbf{S}_{j+1} &\cdots & \mathbf{S}_s\\
			\hline
		\end{array}
		\otimes \inlinetab{\mathbf{S}_j}
		) = u_1\otimes 
		\begin{array}{|c|c|c|c|}
			\hline
			u_{\ell-(s-j )-1} & \mathbf{S}_j &\cdots & \mathbf{S}_s\\
			\hline
		\end{array}\,.
		\]
	\end{lemma}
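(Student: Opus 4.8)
The plan is to reduce the statement to an equality of Schensted insertion tableaux and then to verify that equality by running the bumping algorithm on the highly structured tableaux in play. Put $q=\ell-(s-j)$; the hypothesis $s-j<\ell$ is exactly the assertion $q\ge 1$, and this is what forces the leftmost column of the carrier $C:=\begin{array}{|c|c|c|c|}\hline u_{q} & \mathbf{S}_{j+1} & \cdots & \mathbf{S}_s\\\hline\end{array}$ to be a vacuum column $u_1$. Writing $C'$ for the asserted output carrier $\begin{array}{|c|c|c|c|}\hline u_{q-1} & \mathbf{S}_j & \cdots & \mathbf{S}_s\\\hline\end{array}$, the Schensted-insertion form of the combinatorial $R$-matrix (\cite[Theorem~7.9]{KO18}, recalled above) tells us that $R(C\otimes\inlinetab{\mathbf{S}_j})=u_1\otimes C'$ holds if and only if
\[
\col(\inlinetab{\mathbf{S}_j})\rightarrow C=\col(C')\rightarrow u_1,
\]
so it suffices to show that both sides equal one explicit tableau $U$.

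For the left-hand side, reading $\col(\inlinetab{\mathbf{S}_j})$ from top to bottom inserts in turn the letters of $\mathbf{X}=(\bm,\ldots,\overline{m-k+2})$, then those of $\mathbf{Z}=(\overline{m-r},\ldots,\overline{m-r-h+1})$, then the copies of $j$ making up $\mathbf{j}$. Since the letters of $\mathbf{X}$ are precisely the top $k-1$ entries of a vacuum column, inserting each one produces a purely horizontal bumping chain --- in every column it meets the equal entry, replaces it with no effect, and shunts the letter one column to the right --- so after all of $\mathbf{X}$ the only change is a new column $\inlinetab{\mathbf{X}}$ appended at position $\ell+1$. Each letter of $\mathbf{Z}$ strictly exceeds $\overline{m-r+1}$, the largest entry of a vacuum column, and each copy of $j\in B_+$ exceeds every entry of a vacuum column, so each of these letters drops straight to the bottom of the first column with no bumping whatsoever. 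Hence $\col(\inlinetab{\mathbf{S}_j})\rightarrow C=U$, where $U$ is $C$ with $\inlinetab{\mathbf{X}}$ appended as a new last column and its first column lengthened downward by $\mathbf{Z}$ and then $\mathbf{j}$; a direct check shows $U$ is semistandard.

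For the right-hand side, $\col(C')=\col(\inlinetab{\mathbf{S}_s})\cdots\col(\inlinetab{\mathbf{S}_j})\,\col(u_1)^{q-1}$, so I would insert these blocks into the single column $u_1$ one at a time. An induction on the number of $\mathbf{S}$-columns already absorbed shows that after inserting $\col(\inlinetab{\mathbf{S}_s})\cdots\col(\inlinetab{\mathbf{S}_a})$ the tableau has first column equal to $u_1$ lengthened by $\mathbf{Z}$ and then the $\mathbf{j}$-block of $\inlinetab{\mathbf{S}_a}$, followed by the columns $\inlinetab{\mathbf{S}_{a+1}},\ldots,\inlinetab{\mathbf{S}_s}$ in that order, followed by a column $\inlinetab{\mathbf{X}}$: the $\mathbf{X}$-parts cascade rightward exactly as before, and each incoming $\mathbf{Z}$-part and each incoming $B_+$-part ratchets one column to the right against the $\mathbf{Z}$- and $B_+$-entries already present, while $\inlinetab{\mathbf{S}_{a+1}},\ldots,\inlinetab{\mathbf{S}_s}$ stay put. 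A final sub-computation shows that inserting one copy of $\col(u_1)$ lengthens the initial run of vacuum columns by one and shifts everything to its right over by a column, so after the remaining $q-1$ insertions the tableau is precisely $U$. Combined with the previous paragraph and \cite[Theorem~7.9]{KO18}, this gives the lemma.

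The main obstacle is the bookkeeping in this second computation --- keeping track of how the $\mathbf{Z}$-layer and the $B_+$-layer ratchet through the columns accumulated so far --- together with the degenerate configurations, which need brief separate treatment: $k=1$, where $\inlinetab{\mathbf{X}}$ is empty, so the horizontal cascade is instead initiated by a $\mathbf{Z}$-letter and no new last column is created; and the cases where $\mathbf{Z}$ or $\mathbf{j}$ is empty. In each of these the argument goes through verbatim with the vacuous block omitted.
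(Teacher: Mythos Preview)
Your approach is correct and essentially identical to the paper's: both verify the Schensted-insertion criterion by computing $\col(\inlinetab{\mathbf{S}_j})\to C$ and $\col(C')\to u_1$ explicitly and checking they yield the same tableau, using that the $\mathbf{X}$-letters cascade horizontally while the $\mathbf{Z}$- and $B_+$-letters drop to the bottom of the first column. One small caveat on your inductive step: the columns $\mathbf{S}_{a+1},\ldots,\mathbf{S}_s$ do not literally ``stay put'' --- when the $B_+$-letters of $\mathbf{S}_{a-1}$ are inserted, the bumping ripples through all of them, turning each $\mathbf{S}_b$ into $\mathbf{S}_{b-1}$ and the old $\inlinetab{\mathbf{X}}$ column into $\mathbf{S}_s$ --- but the net inductive form $[\mathbf{C}_{a-1}\mid\mathbf{S}_a\mid\cdots\mid\mathbf{S}_s\mid\mathbf{X}]$ you assert is correct, and this is exactly what the paper records in its chain of equalities.
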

	\begin{proof}
		First we compute the bumping algorithm for the left-hand side.
		That is, we want to compute:
		\[
		\col(\inlinetab{\mathbf{S}_j}) \rightarrow  	
		\begin{array}{|c|c|c|c|}
			\hline
			u_{\ell - (s-j)}& \mathbf{S}_{j+1} & \cdots &\mathbf{S}_s \\
			\hline
		\end{array}
		.
		\]
		
		Let $ i $ be a letter of $ \col(\inlinetab{\textbf{X}}) $ 
		(so that $ \overline{m-k+2}\le i\le \bm $).
		To insert $ i $,
		observe that all of the columns of the tableau contain 
		$ \inlinetab{i} $
		except the column after $ \mathbf{S}_s $ (possibly empty). 
		Hence, $ i $ will replace itself in all of these columns, leaving them unchanged.
		Because the cells inserted before $ i $ are smaller than $ i $,
		the column after $ \mathbf{S}_s $ can contain  only cells smaller than $ i $,
		Hence, $ i $ gets inserted at the end of this column.
		Thus,
		\[
		\col(\inlinetab{\mathbf{S}_j})
		\rightarrow  	
		\begin{array}{|c|c|c|c|}
			\hline
			u_{\ell - (s-j)}& \mathbf{S}_{j+1} & \cdots &\mathbf{S}_s \\
			\hline
		\end{array}
		=
		\col\left(
		\begin{array}{|c|}
			\hline
			\mathbf{Z}\\
			\hline
			\mathbf{j}\\
			\hline
		\end{array}
		\right)
		\rightarrow  	
		\renewcommand{\arraystretch}{0}
		\begin{array}{|c|c|c|c|c|}
			\hline
			u_{\ell - (s-j)}& \mathbf{S}_{j+1} & \cdots &\mathbf{S}_s & \mathbf{X} 
			\rule{0pt}{2.5ex}\\
			\cline{5-5}
			&&&\\[1ex]
			\cline{1-4}
		\end{array}
		.
		\]
		The remaining items to be inserted are larger than the last cell of the first column
		($ \overline{m-r+1} $),
		and are in ascending order,
		so they all get appended on to the end of the first column:
		\[
		\col(\inlinetab{\mathbf{S}_j})
		\rightarrow  	
		\begin{array}{|c|c|c|c|}
			\hline
			u_{\ell - (s-j)}& \mathbf{S}_{j+1} & \cdots &\mathbf{S}_s \\
			\hline
		\end{array}
		=
		\renewcommand{\arraystretch}{0}
		\begin{array}{|c|c|c|c|c|c|}
			\hline
			\mathbf{C}_j &u_{\ell - (s-j+1)}& \mathbf{S}_{j+1} & \cdots &\mathbf{S}_{s} & \mathbf{X} 
			\rule{0pt}{2.5ex}\\
			\cline{6-6}
			&&&&\\[1ex]
			\cline{2-5}
			\\[1ex]
			\cline{1-1}
		\end{array}
		.
		\]

		Similarly, we compute the bumping algorithm for the right-hand side:
		\begin{align*}
			\col(
			\begin{array}{|c|c|c|c|}
				\hline
				u_{\ell - (s-j)}  & \mathbf{S}_{j} & \cdots & \mathbf{S}_{s}  \\
				\hline
			\end{array}
			)
			\rightarrow
			\begin{array}{|c|}
				\hline
				u_{1}  \\
				\hline
			\end{array}
			&=
			\col(
			\begin{array}{|c|c|c|c|}
				\hline
				u_{\ell - (s-j)}  & \mathbf{S}_{j} & \cdots & \mathbf{S}_{s-1}  \\
				\hline
			\end{array}
			)
			\rightarrow
			\renewcommand{\arraystretch}{0}
			\begin{array}{|c|c|}
				\hline
				\mathbf{C}_s & \mathbf{X} \rule{0pt}{2.5ex}\\[1ex]
				\cline{2-2}
				\\[1ex]
				\cline{1-1}
			\end{array}\\
			&=
			\col(
			\begin{array}{|c|}
				\hline
				u_{\ell-(s-j)}\\
				\hline
			\end{array}
			)
			\rightarrow
			\renewcommand{\arraystretch}{0}
			\begin{array}{|c|c|c|c|c|c|}
				\hline
				\mathbf{C}_{j} & \mathbf{S}_{j +1} & \mathbf{S}_{j +2} & \cdots & \mathbf{S}_s & \mathbf{X}
				\rule{0pt}{2.5ex}\\
				\cline{6-6}
				&&&&\\[1ex]
				\cline{2-5}
				\\[1ex]
				\cline{1-1}
			\end{array}\\
			&=
			\renewcommand{\arraystretch}{0}
			\begin{array}{|c|c|c|c|c|c|}
				\hline
				\mathbf{C}_j &u_{\ell - (s-j+1)}& \mathbf{S}_{j+1} & \cdots &\mathbf{S}_{s} & \mathbf{X} 
				\rule{0pt}{2.5ex}\\
				\cline{6-6}
				&&&&\\[1ex]
				\cline{2-5}
				\\[1ex]
				\cline{1-1}
			\end{array}
			.
		\end{align*}
		Hence,
		\[
		\col(
		\begin{array}{|c|c|c|c|}
			\hline
			u_{\ell - (s-j + 1)} &  \mathbf{S}_{j} & \cdots & \mathbf{S}_{s}  \\
			\hline
		\end{array}
		)
		\rightarrow
		\begin{array}{|c|}
			\hline
			u_{1}  \\
			\hline
		\end{array}
		=
		\col(
		\begin{array}{|c|}
			\hline
			\mathbf{S}_j \\
			\hline
		\end{array}) \rightarrow  	
		\begin{array}{|c|c|c|c|}
			\hline
			u_{\ell - (s-j)}& \mathbf{S}_{j+1} & \cdots &\mathbf{S}_{s} \\
			\hline
		\end{array}
		\]
		as required to prove the lemma.
	\end{proof}
	
	\begin{lemma}\label{lem:unloadcarrier}
		If $ s-j+1 \leq \ell $ then
		\[
		R(
		\begin{array}{|c|c|c|c|}
			\hline
			u_{\ell-(s-j+1)} & \mathbf{S}_j &\cdots & \mathbf{S}_s\\
			\hline
		\end{array}
		\otimes u_1
		) = \inlinetab{\mathbf{S}_s}\otimes 
		\begin{array}{|c|c|c|c|}
			\hline
			u_{\ell-(s-j )} & \mathbf{S}_j &\cdots & \mathbf{S}_{s-1}\\
			\hline
		\end{array}
		.
		\]
	\end{lemma}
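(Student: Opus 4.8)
The plan is to use the characterisation of the combinatorial $R$-matrix by Schensted insertion (\cite[Theorem~7.9]{KO18}): writing $\tab_1$ for the carrier tableau appearing inside $R(\,\cdot\,)$ and $\widetilde{\tab}_1$ for the carrier tableau on the right of the claimed equality, the lemma is equivalent to the identity
\[
\col(u_1)\rightarrow \tab_1 \;=\; \col(\widetilde{\tab}_1)\rightarrow \inlinetab{\mathbf{S}_s}.
\]
I claim both sides equal the single rectangular SSYT $P$ of height $r$ and width $\ell+1$ whose columns, left to right, are the $\ell-(s-j)$ vacuum columns $u_1$ followed by $\mathbf{S}_j,\mathbf{S}_{j+1},\ldots,\mathbf{S}_s$. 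Note that $P$ is $\tab_1$ with one extra vacuum column prepended, and also $\widetilde{\tab}_1$ with the column $\mathbf{S}_s$ appended on the right.

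The tool is the elementary fact that if $T$ is a SSYT and $C$ is a single column such that juxtaposing $C$ immediately to the left of $T$ is again a SSYT, then $\col(C)\rightarrow T$ is exactly that juxtaposition. This holds because $\col(C)\rightarrow T$ is the insertion tableau of the word $\col(T)\,\col(C)$ (insertion tableaux are computed by inserting letters one at a time, and $\col(T)$ inserts to $T$), while $\col(T)\,\col(C)$ is precisely the column-reading word of the juxtaposition, which — being semistandard — inserts back to itself. These are the standard properties of super-RSK, compatible with the bumping algorithm of Section~\ref{sec:operators} (cf.\ \cite{BR87}). With this in hand, the whole argument reduces to verifying semistandardness of a handful of column juxtapositions.

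There are then only a few such checks. First, for $i<s$ the columns $\mathbf{S}_i,\mathbf{S}_{i+1}$ sit side by side in a SSYT, since they agree except in the bottom $\mathbf{j}$-block, where the entries $i<i+1$ give the required strict increase along rows. Next, a vacuum column $u_1$ placed to the left of $\mathbf{S}_j$ yields a SSYT: in the top $k-1$ rows the two columns coincide (both equal $\mathbf{X}$); in the $h$ rows occupied by $\mathbf{Z}$ one only needs $\overline{m-i+1}\le(\mathbf{S}_j)_i$, which reduces to $k\le r+1$ and so holds since $k\le r$; and in the remaining $\mathbf{j}$-rows a barred entry of $u_1$ stands to the left of the unbarred entry $j$. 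Two vacuum columns side by side are trivially semistandard. Applying the tool repeatedly, $\col(\widetilde{\tab}_1)\rightarrow\inlinetab{\mathbf{S}_s}$ is obtained by successively prepending $\mathbf{S}_{s-1},\ldots,\mathbf{S}_j$ and then $\ell-(s-j)$ vacuum columns to $\inlinetab{\mathbf{S}_s}$, each step being one of the juxtapositions just checked, so it equals $P$; likewise $\col(u_1)\rightarrow\tab_1 = P$ by prepending one further vacuum column. Hence the two sides agree and the lemma follows.

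The only real work is in these semistandardness checks; the one non-trivial inequality is the $\mathbf{Z}$-block comparison $\overline{m-i+1}\le(\mathbf{S}_j)_i$, and one must also dispatch the degenerate cases — $\mathbf{X}$ empty when $k=1$, $\mathbf{j}$ empty when $h=r-k+1$, and the boundary $s-j=\ell-1$ in which $\tab_1$ has no vacuum columns — each of which is routine. Alternatively one could mimic the direct double bumping computation in the proof of Lemma~\ref{lem:loadcarrier}; that also works, and is in fact simpler here than there because the insertion output is a genuine rectangle with no overhanging first column.
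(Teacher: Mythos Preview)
Your proposal is correct and takes essentially the same approach as the paper: both verify directly that $\col(u_1)\rightarrow\tab_1$ and $\col(\widetilde{\tab}_1)\rightarrow\inlinetab{\mathbf{S}_s}$ coincide with the rectangular tableau $\begin{array}{|c|c|c|c|}\hline u_{\ell-(s-j)} & \mathbf{S}_j & \cdots & \mathbf{S}_s\\\hline\end{array}$. The paper simply asserts this equality, whereas you supply the mechanism (the column-prepending fact together with the semistandardness checks); your version is a correct and fully justified expansion of the paper's two-line proof.
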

	\begin{proof}
		Observe that
		\[	
		\col(
		\begin{array}{|c|}
			\hline
			u_1  \\
			\hline
		\end{array} 
		)
		\rightarrow  
		\begin{array}{|c|c|c|c|}
			\hline
			u_{\ell - (s-j)-1}& \mathbf{S}_j & \cdots &\mathbf{S}_s \\
			\hline
		\end{array}
		=
		\begin{array}{|c|c|c|c|}
			\hline
			u_{\ell-(s-j )} & \mathbf{S}_j &\cdots & \mathbf{S}_s\\
			\hline
		\end{array}
		\]
		and
		\begin{align*}
			\col(
			\begin{array}{|c|c|c|c|}
				\hline
				u_{\ell - (s-j)}& \mathbf{S}_j & \cdots &\mathbf{S}_{s-1} \\
				\hline
			\end{array}
			)
			\rightarrow  	
			\begin{array}{|c|}
				\hline
				\mathbf{S}_{s}  \\
				\hline
			\end{array}
			&=
			\begin{array}{|c|c|c|c|c|}
				\hline
				u_{\ell - (s-j)} & \mathbf{S}_{j} & \cdots & \mathbf{S}_{s-1} &\mathbf{S}_{s} \\
				\hline
			\end{array}
			&&
		\end{align*}
		are equal, as required.
	\end{proof}
	
	\begin{lemma}\label{lem:smallcarrier}
		$
		R(
		\begin{array}{|c|c|c|c|}
			\hline
			\mathbf{S}_{s-\ell+1} &\cdots & \mathbf{S}_s\\
			\hline
		\end{array}
		\otimes \inlinetab{b}
		) = \inlinetab{\mathbf{S}_s}\otimes 
		\begin{array}{|c|c|c|c|}
			\hline
			b & \mathbf{S}_{s-\ell+1} &\cdots & \mathbf{S}_{s-1}\\
			\hline
		\end{array}
		$
		where $ b=\mathbf{S}_{s-\ell} $ or $ b=u_1 $.
	\end{lemma}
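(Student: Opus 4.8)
The plan is to use the characterisation of the combinatorial $R$-matrix via Schensted insertion (\cite[Theorem~7.9]{KO18}, recalled above). Write $C$ for the carrier, i.e.\ the width-$\ell$ tableau with columns $\mathbf{S}_{s-\ell+1},\dots,\mathbf{S}_s$, and $C'$ for the width-$\ell$ tableau with columns $b,\mathbf{S}_{s-\ell+1},\dots,\mathbf{S}_{s-1}$; then it suffices to establish the single identity
\[
\col(\inlinetab{b})\rightarrow C \;=\; \col(C')\rightarrow\inlinetab{\mathbf{S}_s}.
\]
Both sides are computed by running the modified bumping algorithm, in the same spirit as the proofs of Lemmas~\ref{lem:loadcarrier} and~\ref{lem:unloadcarrier}, so the proof is essentially this explicit computation.

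First I would evaluate the left-hand side. Observe that $C$ is the $r\times\ell$ rectangle whose rows $1,\dots,k+h-1$ repeat the (column-independent) entries of $\mathbf{Y}$ across all columns, and whose rows $k+h,\dots,r$ read $s-\ell+1,s-\ell+2,\dots,s$ from left to right. Reading $\col(\inlinetab{b})$ from the top, I insert its letters in three groups: the $\mathbf{X}$-entries $\bm,\dots,\overline{m-k+2}$, which pass through columns $1,\dots,\ell$ (each replacing an identical entry) and accumulate into a new rightmost column; then the remaining barred letters of $b$ (the $\mathbf{Z}$-entries when $b=\mathbf{S}_{s-\ell}$, or the letters $\overline{m-k+1},\dots,\overline{m-r+1}$ when $b=u_1$); and finally, only when $b=\mathbf{S}_{s-\ell}$, the $r-h-k+1$ copies of the unbarred letter $s-\ell$, each of which drives a horizontal cascade through one $\mathbf{j}$-row, lowering the entries of column $c$ from $s-\ell+c$ to $s-\ell+c-1$ and pushing an $s$ onto the new rightmost column. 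Keeping track of all bumps, the first $\ell$ columns become $b,\mathbf{S}_{s-\ell+1},\dots,\mathbf{S}_{s-1}$ and the new column is $\mathbf{S}_s$, i.e.\ $\col(\inlinetab{b})\rightarrow C$ is the $r\times(\ell+1)$ rectangle with columns $b,\mathbf{S}_{s-\ell+1},\dots,\mathbf{S}_{s-1},\mathbf{S}_s$. For the right-hand side, $\col(C')$ reads the columns of $C'$ from right to left, so I first insert $\col(\mathbf{S}_{s-1}),\col(\mathbf{S}_{s-2}),\dots,\col(\mathbf{S}_{s-\ell+1})$ into $\inlinetab{\mathbf{S}_s}$. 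The fact that inserting $\col(\mathbf{S}_a)$ into the tableau with columns $\mathbf{S}_{a+1},\dots,\mathbf{S}_s$ yields the tableau with columns $\mathbf{S}_a,\dots,\mathbf{S}_s$ is the same computation as above; iterating it, these $\ell-1$ insertions rebuild $C$, after which inserting $\col(\inlinetab{b})$ into $C$ reproduces the left-hand-side calculation and yields the same rectangle. Hence the identity holds and the lemma follows.

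I expect the main obstacle to be the bookkeeping in the $b=u_1$ branch of the left-hand-side computation. When $b=\mathbf{S}_{s-\ell}$ each barred letter of $\col(\inlinetab{b})$ merely replaces an equal letter already present in $C$, so columns $1,\dots,\ell$ are undisturbed until the unbarred cascades begin; but when $b=u_1$ the letters $\overline{m-k+1},\dots,\overline{m-r+1}$ of $\col(u_1)$ are strictly smaller than the $\mathbf{Z}$-entries (and than the $\mathbf{j}$-values), so they genuinely overwrite the first column and push the displaced entries one step to the right. One has to check carefully that after the full cascade the first column has become $u_1$, that columns $2,\dots,\ell$ have become $\mathbf{S}_{s-\ell+1},\dots,\mathbf{S}_{s-1}$, and that the new column is $\mathbf{S}_s$, so that the left-hand side is again the rectangle with columns $u_1,\mathbf{S}_{s-\ell+1},\dots,\mathbf{S}_{s-1},\mathbf{S}_s$. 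Along the way one should also verify that semistandardness is preserved at every insertion and dispose of the degenerate cases in which $\inlinetab{\mathbf{X}}$ is empty (so the first group of letters is vacuous) or $\inlinetab{\mathbf{j}}$ is empty (so every $\mathbf{S}_j$ equals $\mathbf{Y}$ and the identity reduces to a relabelling of identical columns).
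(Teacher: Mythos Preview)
Your approach is correct and is essentially the paper's own proof, carried out in more detail. The paper observes in one line that, when computing the right-hand insertion $\col(C')\rightarrow\inlinetab{\mathbf{S}_s}$, the successive column insertions of $\mathbf{S}_{s-1},\ldots,\mathbf{S}_{s-\ell+1}$ rebuild the carrier $C$, so the remaining insertion of $\col(\inlinetab{b})$ is literally the left-hand side; you make exactly the same reduction and, in addition, compute the common value explicitly as the $r\times(\ell+1)$ rectangle with columns $b,\mathbf{S}_{s-\ell+1},\ldots,\mathbf{S}_s$, which the paper does not bother to record.
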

	\begin{proof}
		In the process of performing the bumping algorithm for the right-hand side,
		we find
		\[
		\col(
		\begin{array}{|c|c|c|c|}
			\hline
			b&\mathbf{S_{s-\ell + 1}}    & \cdots & \mathbf{S_{s-1}}  \\
			\hline
		\end{array}
		)
		\rightarrow
		\begin{array}{|c|}
			\hline
			\mathbf{S_{s}}  \\
			\hline
		\end{array}
		= 
		\col(
		\begin{array}{|c|}
			\hline
			b  \\
			\hline
		\end{array}
		)
		\rightarrow
		\begin{array}{|c|c|c|c|}
			\hline
			\mathbf{S_{s-\ell + 1}} & \cdots & \mathbf{S}_{s-1}&\mathbf{S}_s  \\
			\hline
		\end{array}
		\]
		as required.
	\end{proof}

	Using Lemmas~\ref{lem:loadcarrier},~\ref{lem:unloadcarrier}~and~\ref{lem:smallcarrier},
	we can easily show that
	\[
	T_{\ell}\left(u_1^{\otimes c}\otimes 
	\bigotimes_{j =0}^{s-1}\inlinetab{\mathbf{S}_{s-j}}
	\otimes u_1^{\otimes \infty}\right)
	=
	u_1^{\otimes c+\min\{s,\ell\}}\otimes 
	\bigotimes_{j =0}^{s-1}\inlinetab{\mathbf{S}_{s-j}}
	\otimes u_1^{\otimes \infty}.
	\]
	Using Lemma~\ref{lem:efcom},
	this is sufficient to prove Theorem~\ref{thm:speed} for general states.
	
	\section{Proof of Theorem~\ref{thm:scattering}} \label{sec:mainthmproof}
	\subsection{Reductions}\label{sec:simplifications}
	Consider $ \agl(m|n) $ with height $ r $ tableaux, and assume $ m\ge r+2 $.
	Set,
	\begin{align*}
		\inlinetab{\bfX}
		&=
		\begin{array}{|c|}
			\hline
			\raisebox{-1pt}{$ \bm $}\\
			\hline
			\vdots \\
			\hline
			\raisebox{-1pt}{$ \overline{m-r+3} $}\\
			\hline
		\end{array},
		&
		\inlinetab{\bfth} &= \inlinetab{\overline{m-r+2}},
		&
		\inlinetab{\bftw} &= \inlinetab{\overline{m-r+1}},\\
		\inlinetab{\bfon} &= \inlinetab{\overline{m-r}},
		&
		\inlinetab{\bfze} &= \inlinetab{\overline{m-r-1}}.
	\end{align*}
	Note: $ \bfth < \bftw < \bfon < \bfze $.
	
	We employ the following notation:
	\[
	\begin{array}{|c|}
		\hline
		x^a\\
		\hline
	\end{array}
	=
	\underbrace{
		\begin{array}{|c|c|c|c|}
			\hline
			x & x & \cdots  & x\\
			\hline
		\end{array}
	}_{a}
	\qquad \text{and} \qquad
	\begin{array}{|c|}
		\hline
		(x_j)_{j=1}^a\\
		\hline
	\end{array}
	=
	\begin{array}{|c|c|c|c|}
		\hline
		x_1 & x_2 & \cdots  & x_a\\
		\hline
	\end{array}.
	\]
	
	To prove Theorem~\ref{thm:scattering} in the case where $ m\geq r+2 $, it suffices to prove the following proposition:
	\begin{prop}\label{prop:simplescattering}
		Let $ \mathfrak{p} $ be a genuine highest weight state in the $ U_1(\gl(r))\oplus U_q(\gl(m-r|m)) $-crystal $ \dcrys $. That is,
		\begin{equation}\label{eq:hwtstate}
			\mathfrak{p} =
			u_1^{\otimes \mathfrak{c}_1}\otimes 
			\begin{array}{|c|}
				\hline
				\raisebox{-1pt}{$ \bfX $}\\
				\raisebox{-1pt}{$ \bfth $}\\
				\raisebox{-1pt}{$ \bfon $}\\
				\hline
			\end{array}
			^{\otimes \mathfrak{d}_1}
			\otimes u_1^{\otimes \mathfrak{c}_2}\otimes 
			\begin{array}{|c|}
				\hline
				\raisebox{-1pt}{$ \bfX $}\\
				\raisebox{-1pt}{$ \mathfrak{w}_1 $}\\
				\raisebox{-1pt}{$ \mathfrak{y}_1 $}\\
				\hline
			\end{array}
			\otimes \cdots \otimes 
			\begin{array}{|c|}
				\hline
				\raisebox{-1pt}{$ \bfX $}\\
				\raisebox{-1pt}{$ \mathfrak{w}_{\mathfrak{d}_2} $}\\
				\raisebox{-1pt}{$ \mathfrak{y}_{\mathfrak{d}_2} $}\\
				\hline
			\end{array}
			\otimes u_1^{\otimes \infty }
		\end{equation}
		where
		\begin{align*}
			\mathfrak{w}_j = 
			\begin{cases}
				\bftw & \text{if } j\le L\\
				\bfth & \text{if } j>L
			\end{cases},
			&&
			\mathfrak{y}_j = 
			\begin{cases}
				\bfze & \text{if }j\le M\\
				\bfon & \text{if }j>M
			\end{cases},
		\end{align*}
		for some non-negative integers $ L,M \le \mathfrak{d}_2 $.
		If $ \mathfrak{c}_2+L\ge\mathfrak{d}_2-M $ and $ t\in\ZZ $ is sufficiently large,
		then
		\[
		(T_{\mathfrak{d}_2+1})^t(\mathfrak{p}) = 
		u_1^{\otimes \mathfrak{c}_3}\otimes 
		\begin{array}{|c|}
			\hline
			\raisebox{-1pt}{$ \bfX $}\\
			\raisebox{-1pt}{$ \bfth $}\\
			\raisebox{-1pt}{$ \bfon $}\\
			\hline
		\end{array}
		^{\otimes \mathfrak{d}_2}
		\otimes u_1^{\otimes \mathfrak{c}_4}\otimes 
		\begin{array}{|c|}
			\hline
			\raisebox{-1pt}{$ \bfX $}\\
			\raisebox{-1pt}{$ \mathfrak{w}_1 $}\\
			\raisebox{-1pt}{$ \mathfrak{y}_1 $}\\
			\hline
		\end{array}
		\otimes \cdots \otimes 
		\begin{array}{|c|}
			\hline
			\raisebox{-1pt}{$ \bfX $}\\
			\raisebox{-1pt}{$ \mathfrak{w}_{\mathfrak{d}_2} $}\\
			\raisebox{-1pt}{$ \mathfrak{y}_{\mathfrak{d}_2} $}\\
			\hline
		\end{array}
		\otimes
		\begin{array}{|c|}
			\hline
			\raisebox{-1pt}{$ \bfX $}\\
			\raisebox{-1pt}{$ \bfth $}\\
			\raisebox{-1pt}{$ \bfon $}\\
			\hline
		\end{array}
		^{\otimes \mathfrak{d}_1-\mathfrak{d}_2}
		\otimes u_1^{\otimes \infty }
		\]
		with phase shift $ \delta = 2\mathfrak{d}_2-L-M $.
	\end{prop}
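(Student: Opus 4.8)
The plan is to compute $(T_{\mathfrak{d}_2+1})^t(\mathfrak{p})$ explicitly by following the carrier $u_{\mathfrak{d}_2+1}$ as it is pulled across the state and reading off the outputs, using the bumping-algorithm description of the combinatorial $R$-matrix from Section~\ref{sec:comb_R} throughout, exactly as in Appendix~\ref{appendix:single_soliton_proof}. Since $\mathfrak{p}$ is a genuine highest weight element of $\dcrys$ and, by Lemma~\ref{lem:efcom}, $T_{\mathfrak{d}_2+1}$ commutes with every $e_i$ and $f_i$ for $i\neq\bze,\overline{m-r}$, every iterate $(T_{\mathfrak{d}_2+1})^t(\mathfrak{p})$ is again genuine highest weight, so all bumping computations stay inside the restricted family of columns built from $u_1$, $\bfX$, and the letters $\bfth<\bftw<\bfon<\bfze$. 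I would first record the motion before the two solitons meet: while the length-$\mathfrak{d}_1$ block $\left[\begin{smallmatrix}\bfX\\ \bfth\\ \bfon\end{smallmatrix}\right]^{\otimes\mathfrak{d}_1}$ and the length-$\mathfrak{d}_2$ block are disjoint, Theorem~\ref{thm:speed} (concretely, the carrier Lemmas~\ref{lem:loadcarrier}--\ref{lem:smallcarrier}) shows that at each step the first advances by $\min(\mathfrak{d}_1,\mathfrak{d}_2+1)=\mathfrak{d}_2+1$ (recall $\mathfrak{d}_1>\mathfrak{d}_2$) and the second by $\mathfrak{d}_2$, so the gap between them shrinks by exactly one per step and after finitely many steps the collision begins.

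The core of the proof is a chain of bumping lemmas --- those deferred to Appendix~\ref{appendix:classRmatprf} --- describing the passage of the carrier through the collision region. The statements I would establish are: (i) once the capacity-$(\mathfrak{d}_2+1)$ carrier has absorbed the head of the $\mathfrak{d}_1$-block it stabilises into a saturated normal form, a column built from $\bfX$, copies of $\bfth$ and $\bfon$, and a single vacant slot; (ii) when this saturated carrier meets a small-soliton column $\left[\begin{smallmatrix}\bfX\\ \mathfrak{w}_j\\ \mathfrak{y}_j\end{smallmatrix}\right]$, the insertion emits one block column $\left[\begin{smallmatrix}\bfX\\ \bfth\\ \bfon\end{smallmatrix}\right]$ and updates the carrier so that it now stores $\mathfrak{w}_j$ in the row-$(r-1)$ slot and $\mathfrak{y}_j$ in the row-$r$ slot, so that iterating over $j=1,\dots,\mathfrak{d}_2$ threads the whole datum $(\mathfrak{w}_j,\mathfrak{y}_j)_j$ into the carrier while $\mathfrak{d}_2$ block columns are released behind it; (iii) past the small soliton the carrier unloads over the trailing vacuum --- first the $\mathfrak{d}_2$ stored columns, then the $\mathfrak{d}_1-\mathfrak{d}_2$ remaining block columns --- and returns to $u_{\mathfrak{d}_2+1}$. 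Composing (i)--(iii) over the finitely many steps during which the two blocks lie within distance $\mathfrak{d}_2$ of each other produces exactly the asserted right-hand side, with $\widetilde{w}=\left[\begin{smallmatrix}\bfX\\ \bfth\\ \bfon\end{smallmatrix}\right]^{\otimes\mathfrak{d}_2}$ in front and $\widetilde{v}=\bigl(\bigotimes_j\left[\begin{smallmatrix}\bfX\\ \mathfrak{w}_j\\ \mathfrak{y}_j\end{smallmatrix}\right]\bigr)\otimes\left[\begin{smallmatrix}\bfX\\ \bfth\\ \bfon\end{smallmatrix}\right]^{\otimes(\mathfrak{d}_1-\mathfrak{d}_2)}$ behind it; as both are again of the form in Theorem~\ref{thm:speed} with $k=r$, the carrier lemmas of Appendix~\ref{appendix:single_soliton_proof}, applied locally to each soliton once the gap exceeds $\mathfrak{d}_2$, show this configuration persists under all further $T_{\mathfrak{d}_2+1}$, giving the result for all sufficiently large $t$.

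For the phase shift, step (ii) of the collision analysis directly records the net displacement of $\widetilde{v}$ relative to the freely moving $\mathfrak{d}_1$-block (and of $\widetilde{w}$ relative to the freely moving $\mathfrak{d}_2$-block), yielding $\delta=2\mathfrak{d}_2-L-M$. To match this with the formula $\delta=2\mathfrak{d}_2+H(\down{V}\otimes\down{W})+H(\up{V}\otimes\up{W})$ of Theorem~\ref{thm:scattering}, it remains to compute these two energies from Proposition~\ref{prop:energyboxes}: $\up{V}$ is the $(r-1)\times\mathfrak{d}_1$ rectangle all of whose columns equal $\left[\begin{smallmatrix}\bfX\\ \bfth\end{smallmatrix}\right]$, while $\up{W}$ is the $(r-1)\times\mathfrak{d}_2$ rectangle whose last $L$ columns equal $\left[\begin{smallmatrix}\bfX\\ \bftw\end{smallmatrix}\right]$ and whose remaining columns equal $\left[\begin{smallmatrix}\bfX\\ \bfth\end{smallmatrix}\right]$, and a direct Schensted computation shows that $\col(\up{W})\to\up{V}$ has exactly $(r-1)\mathfrak{d}_2-L$ boxes strictly to the right of column $\mathfrak{d}_1$, hence $H(\up{V}\otimes\up{W})=-L$ after the normalisation following Proposition~\ref{prop:energyboxes}; the analogous height-$1$ computation gives $H(\down{V}\otimes\down{W})=-M$. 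The step I expect to be the main obstacle is (ii): keeping the internal bookkeeping of the loaded carrier honest as it simultaneously crosses the two transition points $j=L$ (in row $r-1$) and $j=M$ (in row $r$) of the small soliton, proving that no invalid column is ever produced and that the exchange terminates after precisely $\mathfrak{d}_2$ columns. This is exactly where the hypothesis $\mathfrak{c}_2+L\ge\mathfrak{d}_2-M$ is used: it guarantees the carrier reaches the small soliton with enough vacuum buffer that the row-$(r-1)$ and row-$r$ exchanges do not interfere prematurely, so that the collision resolves into the single clean pattern above; if the inequality fails the collision region acquires additional intermediate structure and the closed form need not hold at finite time. (The remaining case $m=r$ of Theorem~\ref{thm:scattering}, where the highest weight states have the anomalous shape noted in the introduction, needs a parallel family of lemmas, collected in Appendix~\ref{appendix:m=rclassRmatprf}.)
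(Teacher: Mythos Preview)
Your overall strategy --- direct carrier computation through the collision --- is viable, and the paper explicitly acknowledges this (``we could use direct (and tedious) computation''). However, the paper takes a genuinely different route: it introduces an auxiliary \emph{$L$-sped} height-$1$ box-ball system together with a bijection $\mathcal{F}$ from the height-$r$ highest-weight states to height-$1$ states, and proves that $T_\ell(P)=\mathcal{F}^{-1}T_\ell^L\mathcal{F}(P)$ (Proposition~\ref{prop:ht1isomorphism}). The scattering and phase shift are then read off from the much simpler $L$-sped system (Proposition~\ref{prop:dspedsolitons}), with the edge case $L=\mathfrak{d}_2$ handled separately. The case analysis in Lemma~\ref{lem:classRmat} is used not to track the collision directly but to prove that $\mathcal{F}$ intertwines the two time evolutions.

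More importantly, your description of step~(ii) is not what actually happens, and this is precisely the obstruction that motivates the paper's detour. When the loaded carrier meets a column $\left[\begin{smallmatrix}\bfX\\\bftw\\\mathfrak{y}_j\end{smallmatrix}\right]$, it does \emph{not} simply absorb $\bftw$ into row~$r-1$; instead (see e.g.\ case~\ref{enum:k} of Lemma~\ref{lem:classRmat}) the incoming $\bftw$ is swapped with a $\bfon$ from the bottom row, and conversely a $\bftw$ that ``should'' be emitted from the bottom row gets swapped with a $\bfon$ from row~$r-1$ (cases~\ref{enum:e},~\ref{enum:n}). This swapping can happen simultaneously, has exceptions (case~\ref{enum:f}), and propagates through the collision in a way that depends on the full internal state $(a_3,a_2,a_1,b_2,b_1,b_0)$ of the carrier. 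The intermediate states during collision are therefore not of the clean form you sketch but of the more general form in Proposition~\ref{prop:ht1isomorphism}, with overlap parameter $Q$ and columns carrying $\bfon$ in row~$r-1$. The map $\mathcal{F}$ is designed exactly to pre-swap all such $\bftw/\bfon$ pairs so that the bottom row then evolves as a genuine height-$1$ system; without this device one must track the invariants $N+\max(B_1-A_2,0)+A_1$ and $A_2+A_1$ through all the cases (Lemmas~\ref{lem:B_1takeA_2} and~\ref{lem:calcB_1takeA_2}), which is where the real work lies. Your energy computations $H(\up V\otimes\up W)=-L$ and $H(\down V\otimes\down W)=-M$ are correct and match Lemma~\ref{lem:hwtphase}.
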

	
	Indeed, by Lemma~\ref{lem:efcom}, certain crystal operators
	commute with the time evolution operator $ T_\ell $.
	Since any state can be reached by successively applying crystal operators to a genuine highest weight state~\cite{BKK00},
	it is sufficient to prove Theorem~\ref{thm:scattering} for genuine highest weight states of $ \dcrys $.
	
	All genuine highest weight states are given by~\ref{eq:hwtstate}.
	Indeed,
	the vacuum elements are invariant under the crystal operators of $ \dcrys $,
	so can be safely be ignored.
	Additionally,
	given the assumption $ k=r $ in Theorem~\ref{thm:scattering},
	we can treat the bottom row and the other rows as two seperate crystals,
	(which are a $ U_q(\gl(r)) $-crystal and a $ U_q(\gl(m-r|n)) $-crystal respectively).
	Finally, due to the semistandard assumption of each soliton,
	the bottom row is isomorphic to the $ U_q(\gl(r)) $-crystal
	$ B^{1,\mathfrak{d}_1}\otimes B^{1,\mathfrak{d}_2} $
	and the other rows are isomorphic to the $ U_q(\gl(m-r|n)) $-crystal
	$ B^{(r-1),\mathfrak{d}_1}\otimes B^{(r-1),\mathfrak{d}_2} $ 
	(the isomorphisms are derived from Schensted's bumping algorithm).
	We know the general form for genuine highest weight elements for these crystals~\cite[Proposition 7.1]{KO18}.
	
	If we can prove 
	$ \up{\widetilde{W}}\otimes \up{\widetilde{V}} = R(\up{V}\otimes \up{W}) $ and 
	$ \down{\widetilde{W}}\otimes \down{\widetilde{V}} = R(\down{V}\otimes \down{W}) $
	for the genuine highest weight states (using the notation from Theorem~\ref{thm:scattering}) then 
	it will also hold for a general state,
	since the crystal operators commute with the $ R $-matrix.
	Note that we can explicitly calculate $ R(\up{V}\otimes\up{W}) $ and $ R(\down{V}\otimes\down{W}) $
	for the genuine highest weight states,
	and we do so in Lemma~\ref{lem:hwtR}.
	Then, by inspection, we see that 
	$ \up{\widetilde{W}}\otimes \up{\widetilde{V}} = R(\up{V}\otimes \up{W}) $ and 
	$ \down{\widetilde{W}}\otimes \down{\widetilde{V}} = R(\down{V}\otimes \down{W}) $
	in Proposition~\ref{prop:simplescattering}.
	\begin{lemma}\label{lem:hwtR}
		\begin{align*}
			R\left(
			\begin{array}{|c|}
				\hline
				\raisebox{-1pt}{$ \bfX^{\mathfrak{d}_1} $}\\
				\raisebox{-1pt}{$ \bfth^{\mathfrak{d}_1} $}\\
				\hline
			\end{array}
			\otimes
			\begin{array}{|c|}
				\hline
				\raisebox{-1pt}{$ \bfX^{\mathfrak{d}_2} $}\\
				\raisebox{-1pt}{$ (\mathfrak{w}_{\mathfrak{d}_2-j})_{j=0}^{\mathfrak{d}_2-1} $}\\
				\hline
			\end{array}
			\right)
			&=
			\begin{array}{|c|}
				\hline
				\raisebox{-1pt}{$ \bfX^{\mathfrak{d}_2} $}\\
				\raisebox{-1pt}{$ \bfth^{\mathfrak{d}_2} $}\\
				\hline
			\end{array}
			\otimes
			\begin{array}{|c|}
				\hline
				\raisebox{-1pt}{$ \bfX^{\mathfrak{d}_1} $}\\
				\raisebox{-1pt}{$ \bfth^{\mathfrak{d}_1-\mathfrak{d}_2}(\mathfrak{w}_{\mathfrak{d}_2-j})_{j=0}^{\mathfrak{d}_2-1} $}\\
				\hline
			\end{array}\,,
			\\
			R\left(
			\begin{array}{|c|}
				\hline
				\raisebox{-1pt}{$ \bfon ^{\mathfrak{d}_1} $}\\
				\hline
			\end{array}
			\otimes
			\begin{array}{|c|}
				\hline
				\raisebox{-1pt}{$ (\mathfrak{y}_{\mathfrak{d}_2-j})_{j=0}^{\mathfrak{d}_2-1} $}\\
				\hline
			\end{array}
			\right)
			&=
			\begin{array}{|c|}
				\hline
				\raisebox{-1pt}{$ \bfon ^{\mathfrak{d}_2} $}\\
				\hline
			\end{array}
			\otimes
			\begin{array}{|c|}
				\hline
				\raisebox{-1pt}{$ \bfon^{\mathfrak{d}_1-\mathfrak{d}_2}(\mathfrak{y}_{\mathfrak{d}_2-j})_{j=0}^{\mathfrak{d}_2-1} $}\\
				\hline
			\end{array}\,.
		\end{align*}
	\end{lemma}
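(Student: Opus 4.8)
The plan is to verify both displayed equalities directly from the Schensted-insertion characterisation of the combinatorial $R$-matrix stated above: $R(\tab_1\otimes\tab_2)=\widetilde{\tab}_2\otimes\widetilde{\tab}_1$ if and only if $\col(\tab_2)\rightarrow\tab_1=\col(\widetilde{\tab}_1)\rightarrow\widetilde{\tab}_2$. So for each identity it suffices to compute $P:=\col(\tab_2)\rightarrow\tab_1$ from the purported left-hand side, compute $P':=\col(\widetilde{\tab}_1)\rightarrow\widetilde{\tab}_2$ from the purported right-hand side, and check $P=P'$. (One could instead run the explicit reverse-insertion algorithm of Section~\ref{sec:comb_R}, but the forward characterisation is cleaner here.)

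The first step is to isolate the bookkeeping. The letters of $\bfX$ are $\bm,\overline{m-1},\dots,\overline{m-r+3}$, all strictly smaller in the crystal order than $\bfth,\bftw,\bfon,\bfze$, and every column occurring in the four tableaux of either identity carries exactly the letters of $\bfX$ in its top $r-2$ cells. I would record a short lemma: inserting the word $\col(\bfX)=\bm\,\overline{m-1}\cdots\overline{m-r+3}$ into a tableau all of whose columns begin with $\bfX$ (and whose remaining entries are $\ge\bfth$) leaves every existing column unchanged and appends a single fresh column $\bfX$ on the right -- each letter $\overline a$ of $\bfX$ replaces the matching $\overline a$ in every column in turn, bumping itself rightward until it lands in a new box. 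An analogous remark handles a repeated leading letter in a one-row insertion. Consequently the whole insertion reduces to bookkeeping on the bottom row(s), and both identities become $\gl$-type (non-super) single-row computations, with $\bfth<\bftw$ playing the role of $\bfon<\bfze$; the case $r=2$ (empty $\bfX$) is precisely this one-row case.

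Next I would carry out the one-row computations. For the second identity, spelling out the entries, $\tab_1$ is the row $\bfon^{\mathfrak{d}_1}$ and $\tab_2$ is the row $\bfon^{\mathfrak{d}_2-M}\bfze^M$, so $\col(\tab_2)=\bfze^{M}\bfon^{\mathfrak{d}_2-M}$; inserting the $M$ copies of $\bfze$ appends a $\bfze$ beneath each of the first $M$ columns (each later $\bfze$ replacing itself and bumping right), and then inserting the $\mathfrak{d}_2-M$ copies of $\bfon$ bumps right through every column and opens $\mathfrak{d}_2-M$ new length-one columns, so $P$ has first row $\bfon^{\mathfrak{d}_1+\mathfrak{d}_2-M}$ and second row $\bfze^{M}$. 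On the right-hand side $\widetilde{\tab}_1$ is the row $\bfon^{\mathfrak{d}_1-M}\bfze^M$ and $\widetilde{\tab}_2=\bfon^{\mathfrak{d}_2}$, and the same computation gives the same $P'$; here $M\le\mathfrak{d}_2$ and $\mathfrak{d}_2\le\mathfrak{d}_1$ are used so that the $\bfze$'s fit and $\bfon^{\mathfrak{d}_1-\mathfrak{d}_2}$ makes sense. For the first identity the same argument after the $\bfX$-reduction shows both sides produce the tableau with $\bfX$ in rows $1,\dots,r-2$, row $r-1$ equal to $\bfth^{\mathfrak{d}_1+\mathfrak{d}_2-L}$ and row $r$ equal to $\bftw^{L}$, using $L\le\mathfrak{d}_2\le\mathfrak{d}_1$.

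The main obstacle I anticipate is the reduction lemma of the second paragraph: one must check carefully, against the barred-letter bumping rules (topmost box larger than or equal to the inserted letter), that the $\bfX$-portion never disturbs the $\bfth/\bftw$ cells -- which holds because those entries are strictly larger than every entry of $\bfX$ -- and, conversely, that a bumped-out $\bfth$ or $\bftw$ only ever travels along its own row and is deposited in the first column still lacking it. One must also dispose of the degenerate ranges $L\in\{0,\mathfrak{d}_2\}$, $M\in\{0,\mathfrak{d}_2\}$ and $r=2$. Everything else is a routine induction on the number of inserted columns.
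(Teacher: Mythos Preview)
Your proposal is correct and follows essentially the same approach as the paper: both verify the two identities by computing $\col(\tab_2)\rightarrow\tab_1$ and $\col(\widetilde{\tab}_1)\rightarrow\widetilde{\tab}_2$ via the modified bumping algorithm and checking that the resulting tableaux coincide (with shapes $\bfX^{\mathfrak{d}_1+\mathfrak{d}_2}/\bfth^{\mathfrak{d}_1+\mathfrak{d}_2-L}/\bftw^L$ and $\bfon^{\mathfrak{d}_1+\mathfrak{d}_2-M}/\bfze^M$ respectively). The only cosmetic difference is that you isolate the ``$\bfX$-reduction lemma'' as a separate observation, whereas the paper simply writes out the insertion steps directly; the underlying computation is identical.
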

	\begin{proof}
		\begin{align*}
			\col\left(
			\begin{array}{|c|}
				\hline
				\raisebox{-1pt}{$ \bfX^{\mathfrak{d}_2} $}\\
				\raisebox{-1pt}{$ (\mathfrak{w}_{\mathfrak{d}_2-j})_{j=0}^{\mathfrak{d}_2-1} $}\\
				\hline
			\end{array}
			\right) \rightarrow
			\begin{array}{|c|}
				\hline
				\raisebox{-1pt}{$ \bfX^{\mathfrak{d}_1} $}\\
				\raisebox{-1pt}{$ \bfth^{\mathfrak{d}_1} $}\\
				\hline
			\end{array}
			&=
			\col\left(
			\begin{array}{|c|}
				\hline
				\raisebox{-1pt}{$ \bfX^{\mathfrak{d}_2} $}\\
				\bfth^{\mathfrak{d}_2-L}\bftw^{L}\\
				\hline
			\end{array}
			\right) \rightarrow
			\begin{array}{|c|}
				\hline
				\raisebox{-1pt}{$ \bfX^{\mathfrak{d}_1} $}\\
				\raisebox{-1pt}{$ \bfth^{\mathfrak{d}_1} $}\\
				\hline
			\end{array} \\
			&=
			\col\left(
			\begin{array}{|c|}
				\hline
				\raisebox{-1pt}{$ \bfX^{\mathfrak{d}_2-L} $}\\
				\bfth^{\mathfrak{d}_2-L}\\
				\hline
			\end{array}
			\right) \rightarrow
			\begin{array}{|c|cc}
				\hline
				\multicolumn{3}{|c|}{\bfX^{\mathfrak{d}_1+L}}\\
				\cline{3-3}
				\multicolumn{2}{|c|}{\bfth^{\mathfrak{d}_1} } &\\
				\cline{2-2}
				\bftw^L & \\
				\cline{1-1}
			\end{array} \\
			&=
			\begin{array}{|c|c|c}
				\hline
				\multicolumn{3}{|c|}{\bfX^{\mathfrak{d}_1+\mathfrak{d}_2}}\\
				\cline{3-3}
				\multicolumn{2}{|c|}{\bfth^{\mathfrak{d}_1+\mathfrak{d}_2-L}} &\\
				\cline{2-2}
				\bftw^L\\
				\cline{1-1}
			\end{array} \allowdisplaybreaks \\
			\col\left(
			\begin{array}{|c|}
				\hline
				\raisebox{-1pt}{$ \bfX^{\mathfrak{d}_1} $}\\
				\raisebox{-1pt}{$ \bfth^{\mathfrak{d}_1-\mathfrak{d}_2}(\mathfrak{w}_{\mathfrak{d}_2-j})_{j=0}^{\mathfrak{d}_2-1} $}\\
				\hline
			\end{array}
			\right) \rightarrow
			\begin{array}{|c|}
				\hline
				\raisebox{-1pt}{$ \bfX^{\mathfrak{d}_2} $}\\
				\raisebox{-1pt}{$ \bfth^{\mathfrak{d}_2} $}\\
				\hline
			\end{array}
			&=
			\col\left(
			\begin{array}{|c|}
				\hline
				\raisebox{-1pt}{$ \bfX^{\mathfrak{d}_1} $}\\
				\bfth^{\mathfrak{d}_1-L}\bftw^{L}\\
				\hline
			\end{array}
			\right) \rightarrow
			\begin{array}{|c|}
				\hline
				\raisebox{-1pt}{$ \bfX^{\mathfrak{d}_2} $}\\
				\raisebox{-1pt}{$ \bfth^{\mathfrak{d}_2} $}\\
				\hline
			\end{array} \\
			&=
			\begin{array}{|c|c|c}
				\hline
				\multicolumn{3}{|c|}{\bfX^{\mathfrak{d}_1+\mathfrak{d}_2}}\\
				\cline{3-3}
				\multicolumn{2}{|c|}{\bfth^{\mathfrak{d}_1+\mathfrak{d}_2-L}} &\\
				\cline{2-2}
				\bftw^L\\
				\cline{1-1}
			\end{array}
		\end{align*}
		The above two insertions are identical, proving the first part of the lemma.
		\begin{align*}
			\col\left(
			\begin{array}{|c|}
				\hline
				\raisebox{-1pt}{$ (\mathfrak{y}_{\mathfrak{d}_2-j})_{j=0}^{\mathfrak{d}_2-1} $}\\
				\hline
			\end{array}
			\right) \rightarrow
			\begin{array}{|c|}
				\hline
				\raisebox{-1pt}{$ \bfon ^{\mathfrak{d}_1} $}\\
				\hline
			\end{array}
			&=
			\col\left(
			\begin{array}{|c|}
				\hline
				\raisebox{-1pt}{$ \bfon^{\mathfrak{d}_2-M}\bfze^M $}\\
				\hline
			\end{array}
			\right) \rightarrow
			\begin{array}{|c|}
				\hline
				\raisebox{-1pt}{$ \bfon ^{\mathfrak{d}_1} $}\\
				\hline
			\end{array} \\
			&=
			\col\left(
			\begin{array}{|c|}
				\hline
				\raisebox{-1pt}{$ \bfon^{\mathfrak{d}_2-M} $}\\
				\hline
			\end{array}
			\right) \rightarrow
			\begin{array}{|c|c}
				\hline
				\multicolumn{2}{|c|}{\bfon ^{\mathfrak{d}_1}}\\
				\cline{2-2}
				\bfze^M & \\
				\cline{1-1}
			\end{array} \\
			&=
			\begin{array}{|c|c}
				\hline
				\multicolumn{2}{|c|}{\bfon ^{\mathfrak{d}_1+\mathfrak{d}_2-M}}\\
				\cline{2-2}
				\bfze^M & \\
				\cline{1-1}
			\end{array} \allowdisplaybreaks \\
			\col\left(
			\begin{array}{|c|}
				\hline
				\raisebox{-1pt}{$ \bfon^{\mathfrak{d}_1-\mathfrak{d}_2}(\mathfrak{y}_{\mathfrak{d}_2-j})_{j=0}^{\mathfrak{d}_2-1} $}\\
				\hline
			\end{array}
			\right) \rightarrow
			\begin{array}{|c|}
				\hline
				\raisebox{-1pt}{$ \bfon ^{\mathfrak{d}_2} $}\\
				\hline
			\end{array}
			&=
			\col\left(
			\begin{array}{|c|}
				\hline
				\raisebox{-1pt}{$ \bfon^{\mathfrak{d}_1-M}\bfze^M $}\\
				\hline
			\end{array}
			\right) \rightarrow
			\begin{array}{|c|}
				\hline
				\raisebox{-1pt}{$ \bfon ^{\mathfrak{d}_2} $}\\
				\hline
			\end{array} \\
			&=
			\begin{array}{|c|c}
				\hline
				\multicolumn{2}{|c|}{\bfon ^{\mathfrak{d}_1+\mathfrak{d}_2-M}}\\
				\cline{2-2}
				\bfze^M & \\
				\cline{1-1}
			\end{array}
		\end{align*}
		The above two insertions are identical, completing the proof.
	\end{proof}
	
	Note that applying $ e_i $ or $ f_i $ 
	(with $ i\in I\setminus\{\bze,\overline{m-r}\} $)
	to a state will not change the positions of the solitons.
	Note also that $ H(e_i(x\otimes y)) = H(x\otimes y) $ 
	and $ H(f_i(x\otimes y)) = H(x\otimes y) $ by definition.
	Therefore, if we prove that the phase shift of any genuine highest weight state
	is $ \delta = 2\mathfrak{d}_2+H(\down{V}\otimes \down{W})+H(\up{V}\otimes \up{W}) $,
	then this formula must also hold in general.
	We can compute the values of $ H(\down{V}\otimes \down{W}) $ and $ H(\up{V}\otimes \up{W}) $
	for a genuine highest weight state explicitly, and do so in Lemma~\ref{lem:hwtphase}.
	It then suffices to show that the phase shift for a highest weight state is $ 2\mathfrak{d}_2 - L - M $.
	
	\begin{lemma}\label{lem:hwtphase}
		\[
		H\left(
		\begin{array}{|c|}
			\hline
			\raisebox{-1pt}{$ \bfX^{\mathfrak{d}_1} $}\\
			\raisebox{-1pt}{$ \bfth^{\mathfrak{d}_1} $}\\
			\hline
		\end{array}
		\otimes
		\begin{array}{|c|}
			\hline
			\raisebox{-1pt}{$ \bfX^{\mathfrak{d}_2} $}\\
			\raisebox{-1pt}{$ (\mathfrak{w}_{\mathfrak{d}_2-j})_{j=0}^{\mathfrak{d}_2-1} $}\\
			\hline
		\end{array}
		\right)
		=-L,
		\qquad
		H\left(
		\begin{array}{|c|}
			\hline
			\raisebox{-1pt}{$ \bfon ^{\mathfrak{d}_1} $}\\
			\hline
		\end{array}
		\otimes
		\begin{array}{|c|}
			\hline
			\raisebox{-1pt}{$ (\mathfrak{y}_{\mathfrak{d}_2-j})_{j=0}^{\mathfrak{d}_2-1} $}\\
			\hline
		\end{array}
		\right)
		=-M.
		\]
	\end{lemma}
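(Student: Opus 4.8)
The plan is to combine Proposition~\ref{prop:energyboxes} with the two insertion computations already performed inside the proof of Lemma~\ref{lem:hwtR}. Recall that, with the normalisation fixed in the paper, $H(x\otimes y)=\widetilde H(x\otimes y)-\min(r_1,r_2)\min(s_1,s_2)$ whenever $x\in B^{r_1,s_1}$ and $y\in B^{r_2,s_2}$, where $\widetilde H(x\otimes y)$ is the number of boxes of $\col(y)\rightarrow x$ strictly to the right of the $\max(s_1,s_2)$-th column. In both identities the first tensor factor is the $\mathfrak d_1$-wide (longer) soliton part and the second is the $\mathfrak d_2$-wide (shorter) one, and $\mathfrak d_1>\mathfrak d_2$, so $\max(s_1,s_2)=\mathfrak d_1$ and $\min(s_1,s_2)=\mathfrak d_2$.

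For the first identity, $x=\up{V}$ and $y=\up{W}$ are rectangular tableaux of height $r-1$, so $\min(r_1,r_2)\min(s_1,s_2)=(r-1)\mathfrak d_2$. The proof of Lemma~\ref{lem:hwtR} already shows that $\col\bigl(\up{W}\bigr)\rightarrow\up{V}$ is the tableau whose top $r-2$ rows are $\bfX^{\mathfrak d_1+\mathfrak d_2}$, whose next row is $\bfth^{\mathfrak d_1+\mathfrak d_2-L}$, and whose bottom row is $\bftw^{L}$. Since $L\le\mathfrak d_2<\mathfrak d_1$, the $\bftw$-row lies entirely within the first $\mathfrak d_1$ columns and contributes nothing, so $\widetilde H(\up{V}\otimes\up{W})=(r-2)\mathfrak d_2+(\mathfrak d_2-L)=(r-1)\mathfrak d_2-L$, and subtracting $(r-1)\mathfrak d_2$ gives $H(\up{V}\otimes\up{W})=-L$. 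For the second identity, $x=\down{V}$ and $y=\down{W}$ are single rows, so $\min(r_1,r_2)\min(s_1,s_2)=\mathfrak d_2$; the same proof shows $\col\bigl(\down{W}\bigr)\rightarrow\down{V}$ has top row $\bfon^{\mathfrak d_1+\mathfrak d_2-M}$ and bottom row $\bfze^{M}$. As $M\le\mathfrak d_2<\mathfrak d_1$, only the top row reaches past column $\mathfrak d_1$, giving $\widetilde H(\down{V}\otimes\down{W})=\mathfrak d_2-M$ and hence $H(\down{V}\otimes\down{W})=-M$.

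Once the shapes of $\col(y)\rightarrow x$ are taken from the proof of Lemma~\ref{lem:hwtR}, the remainder is pure box-counting, so there is no real obstacle. The only points that need care are correctly recording that the ``upper'' tableaux live in $B^{r-1,\cdot}$ while the ``lower'' ones live in $B^{1,\cdot}$ — so that the subtracted term is $(r-1)\mathfrak d_2$ in the first case and $\mathfrak d_2$ in the second — and invoking $L,M\le\mathfrak d_2<\mathfrak d_1$ to rule out any contribution from the short bottom rows and to keep the intermediate counts non-negative.
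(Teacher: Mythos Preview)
Your proof is correct and follows essentially the same approach as the paper: both pull the shape of $\col(y)\rightarrow x$ from the proof of Lemma~\ref{lem:hwtR}, count boxes past column $\mathfrak d_1$ via Proposition~\ref{prop:energyboxes}, and subtract the normalising constant. Your version is slightly more explicit in justifying why the short $\bftw^L$ and $\bfze^M$ rows contribute nothing (invoking $L,M\le\mathfrak d_2<\mathfrak d_1$), but otherwise the arguments are identical.
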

	\begin{proof}
		From the proof of Lemma~\ref{lem:hwtR}, we know
		\[
		\col\left(
		\begin{array}{|c|}
			\hline
			\raisebox{-1pt}{$ \bfX^{\mathfrak{d}_2} $}\\
			\raisebox{-1pt}{$ (\mathfrak{w}_{\mathfrak{d}_2-j})_{j=0}^{\mathfrak{d}_2-1} $}\\
			\hline
		\end{array}
		\right) \rightarrow
		\begin{array}{|c|}
			\hline
			\raisebox{-1pt}{$ \bfX^{\mathfrak{d}_1} $}\\
			\raisebox{-1pt}{$ \bfth^{\mathfrak{d}_1} $}\\
			\hline
		\end{array}
		=
		\begin{array}{|c|c|c}
			\hline
			\multicolumn{3}{|c|}{\bfX^{\mathfrak{d}_1+\mathfrak{d}_2}}\\
			\cline{3-3}
			\multicolumn{2}{|c|}{\bfth^{\mathfrak{d}_1+\mathfrak{d}_2-L}} &\\
			\cline{2-2}
			\bftw^L\\
			\cline{1-1}
		\end{array}
		.
		\]
		There are $ \mathfrak{d}_2(r-1)-L $ boxes to the right of the 
		$ \max\{\mathfrak{d}_1,\mathfrak{d}_2\}=\mathfrak{d}_1 $ column.
		By definition, we have
		\[
		H\left(
		\begin{array}{|c|}
			\hline
			\raisebox{-1pt}{$ \bfX^{\mathfrak{d}_1} $}\\
			\raisebox{-1pt}{$ \bfth^{\mathfrak{d}_1} $}\\
			\hline
		\end{array}
		\otimes
		\begin{array}{|c|}
			\hline
			\raisebox{-1pt}{$ \bfX^{\mathfrak{d}_2} $}\\
			\raisebox{-1pt}{$ (\mathfrak{w}_{\mathfrak{d}_2-j})_{j=0}^{\mathfrak{d}_2-1} $}\\
			\hline
		\end{array}
		\right)
		=(\mathfrak{d}_2(r-1)-L)-\mathfrak{d}_2(r-1)=-L
		\]
		as required.
		
		From the proof of Lemma~\ref{lem:hwtR}, we know
		\[
		\col\left(
		\begin{array}{|c|}
			\hline
			\raisebox{-1pt}{$ (\mathfrak{y}_{\mathfrak{d}_2-j})_{j=0}^{\mathfrak{d}_2-1} $}\\
			\hline
		\end{array}
		\right) \rightarrow
		\begin{array}{|c|}
			\hline
			\raisebox{-1pt}{$ \bfon ^{\mathfrak{d}_1} $}\\
			\hline
		\end{array}
		=
		\begin{array}{|c|c}
			\hline
			\multicolumn{2}{|c|}{\bfon ^{\mathfrak{d}_1+\mathfrak{d}_2-M}}\\
			\cline{2-2}
			\bfze^M & \\
			\cline{1-1}
		\end{array}
		.
		\]
		There are $ \mathfrak{d}_2-M $ boxes to the right of the $ \max\{\mathfrak{d}_1,\mathfrak{d}_2\}=\mathfrak{d}_1 $ column.
		By definition, we have
		\[
		H\left(
		\begin{array}{|c|}
			\hline
			\raisebox{-1pt}{$ \bfon ^{\mathfrak{d}_1} $}\\
			\hline
		\end{array}
		\otimes
		\begin{array}{|c|}
			\hline
			\raisebox{-1pt}{$ (\mathfrak{y}_{\mathfrak{d}_2-j})_{j=0}^{\mathfrak{d}_2-1} $}\\
			\hline
		\end{array}
		\right)
		=(\mathfrak{d}_2-M)-\mathfrak{d}_2\cdot1
		=-M
		\]
		as required.
	\end{proof}
	
	Finally, we have that $ (T_\ell)^t = (T_{\mathfrak{d}_2+1})^{-t'}(T_{\ell})^t(T_{\mathfrak{d}_2+1})^{t'} $
	since time evolution operators commute by Proposition~\ref{prop:Tlcomm},
	(note that $ (T_{\mathfrak{d}_2+1})^{-1} $ exists; see Remark~\ref{rmk:timereversible}).
	Therefore, if we prove the theorem for $ T_{\mathfrak{d}_2+1} $ and choose $ t^\prime $ sufficiently large
	(so that the solitons have finished colliding after applying $ (T_{\mathfrak{d}_2+1})^{t'} $)
	and choose $ t>t' $ 
	(so that $ (T_{\mathfrak{d}_2+1})^{-t'}(T_{\ell})^t $ does not undo the collision;
	this follows from Theorem~\ref{thm:speed}),
	then we can prove the theorem in the general case.

	To prove Proposition~\ref{prop:simplescattering}, we could use direct (and tedious) computation.
	However, 
	in the rest of this section we give an alternate proof,
	in which we reduce the behaviour of the general system to the behaviour of a modified height $1$ system.
	
	\subsection{Behaviour of the \texorpdfstring{$ R $}{R}-matrix}\label{sec:Rmatrixbehaviour}
	It is first useful to classify how the $ R $-matrix acts on common tableaux.
	\begin{lemma}\label{lem:classRmat}
		Let $ \ell $ be a positive integer.
		Let $ (\mathbf{h}_j)_{j=1}^\ell=\bfth^{a_3}\bftw^{a_2}\bfon^{a_1} $ and $ (\mathbf{z}_j)_{j=1}^\ell = \bftw^{b_2}\bfon^{b_1}\bfze^{b_0} $ 
		for some positive integers $ a_3,a_2,a_1,b_2,b_1,b_0 $ such that $ a_3+a_2+a_1=\ell $ and $ b_2+b_1+b_0=\ell $.
		
		For
		\[
		b=\begin{array}{|c|}
			\hline
			\bfX^\ell\\
			(\mathbf{h}_j)_{j=1}^\ell\\
			(\mathbf{z}_j)_{j=1}^\ell\\
			\hline
		\end{array}
		\otimes 
		\begin{array}{|c|}
			\hline
			\bfX\\
			\bfon \\
			\mathbf{y}\\
			\hline
		\end{array}
		\]
		with $ \mathbf{y}=\bfze $,
		we have that $ R(b)= $
		\begin{enumerate}[label = \upshape{($ \bfon $\alph*)}]
			\item \label{enum:q}
			$
			\begin{array}{|c|}
				\hline
				\bfX\\
				\bfth\\
				\bftw\\
				\hline
			\end{array}
			\otimes 
			\begin{array}{|c|}
				\hline
				\bfX^\ell\\
				\bfth^{a_3-1}\bftw^{a_2}\bfon^{a_1+1}\\
				\bftw^{b_2-1}\bfon^{b_1} \mathbf{y}\bfze^{b_0}\\
				\hline
			\end{array}
			$
			if $ b_2>0 $;
			\item \label{enum:s}
			$
			\begin{array}{|c|}
				\hline
				\bfX\\
				\mathbf{h}_{a_3+a_2}\\
				\bfon\\
				\hline
			\end{array}
			\otimes 
			\begin{array}{|c|}
				\hline
				\bfX^\ell\\
				(\mathbf{h}_j)_{j=1}^{a_3+a_2-1}\bfon^{a_1+1}\\
				\bfon^{b_1-1} \mathbf{y} \bfze^{b_0}\\
				\hline
			\end{array}
			$
			if $ b_2=0 $ and $ b_2+b_1=a_3+a_2 $ and $ b_1>0 $.
		\end{enumerate}
		
		For
		\[
		b=
		\begin{array}{|c|}
			\hline
			\bfX^\ell\\
			(\mathbf{h}_j)_{j=1}^\ell\\
			(\mathbf{z}_j)_{j=1}^\ell\\
			\hline
		\end{array}
		\otimes 
		\begin{array}{|c|}
			\hline
			\bfX\\
			\bftw \\
			\mathbf{y}\\
			\hline
		\end{array}
		\]
		with $ \mathbf{y}\in\{\bfze,\bfon\} $, we have that $ R(b)= $
		\begin{enumerate}[label=\upshape{($ \bftw $\alph*)}]
			\item \label{enum:k}
			$
			\begin{array}{|c|}
				\hline
				\bfX\\
				\mathbf{h}_{a_3+a_2}\\
				\bfon\\
				\hline
			\end{array}
			\otimes 
			\begin{array}{|c|}
				\hline
				\bfX^\ell\\
				(\mathbf{h}_j)_{j=1}^{a_3+a_2-1}\bfon^{a_1+1}\\
				\bftw^{b_2+1}\bfon^{b_1-2}\mathbf{y} \bfze^{b_0}\\
				\hline
			\end{array}
			$
			if $ b_2<a_3 $ and $ b_2+b_1=a_3+a_2 $ and $ b_1>1 $ and $ \mathbf{y}=\bfze $;
			\item \label{enum:l}
			$
			\begin{array}{|c|}
				\hline
				\bfX\\
				\bfth\\
				\bfon\\
				\hline
			\end{array}
			\otimes 
			\begin{array}{|c|}
				\hline
				\bfX^\ell\\
				\bfth^{a_3-1}\bftw^{a_2+1}\bfon^{a_1}\\
				\bftw^{b_2}\bfon^{b_1-1}\mathbf{y} \bfze^{b_0}\\
				\hline
			\end{array}
			$
			if $ b_2<a_3 $ and $ b_2+b_1=a_3+a_2 $ and ($ b_1=1 $ or $ \mathbf{y}=\bfon $).
			\item \label{enum:e}
			$
			\begin{array}{|c|}
				\hline
				\bfX\\
				\bfth\\
				\bfon\\
				\hline
			\end{array}
			\otimes 
			\begin{array}{|c|}
				\hline
				\bfX^\ell\\
				\bfth^{a_3-1}\bftw^{a_2+2}\bfon^{a_1-1}\\
				\bftw^{b_2-1}\bfon^{b_1} \mathbf{y} \bfze^{b_0}\\
				\hline
			\end{array}
			$
			if $ 0<b_2=a_3 $ and $ a_1>0 $ and ($ b_1=0 $ or $ \mathbf{y}=\bfon $);
			\item \label{enum:f}
			$
			\begin{array}{|c|}
				\hline
				\bfX\\
				\bfth\\
				\bftw\\
				\hline
			\end{array}
			\otimes 
			\begin{array}{|c|}
				\hline
				\bfX^\ell\\
				\bfth^{a_3-1}\bftw^{a_2+1}\\
				\bftw^{b_2-1}\bfon^{b_1} \mathbf{y} \bfze^{b_0} \\
				\hline
			\end{array}
			$
			if $ 0<b_2=a_3 $ and $ a_1=0 $ and ($ b_1=0 $ or $ \mathbf{y}=\bfon $).
		\end{enumerate}
		
		For
		\[
		b=
		\begin{array}{|c|}
			\hline
			\bfX^\ell\\
			(\mathbf{h}_j)_{j=1}^\ell\\
			(\mathbf{z}_j)_{j=1}^\ell\\
			\hline
		\end{array}
		\otimes 
		\begin{array}{|c|}
			\hline
			\bfX\\
			\bfth \\
			\mathbf{y}\\
			\hline
		\end{array}
		\]
		with $ \mathbf{y}\in\{\bfze,\bfon,\bftw\} $ we have that $ R(b)= $
		\begin{enumerate}[label=\upshape{($ \bfth $\alph*)}]
			\item \label{enum:m}
			$
			\begin{array}{|c|}
				\hline
				\bfX\\
				\mathbf{h}_{a_3+a_2}\\
				\bfon\\
				\hline
			\end{array}
			\otimes 
			\begin{array}{|c|}
				\hline
				\bfX^\ell\\
				\bfth(\mathbf{h}_j)_{j=1}^{a_3+a_2-1}\bfon^{a_1}\\
				\bftw^{b_2}\bfon^{b_1-1}\mathbf{y}\bfze^{b_0}\\
				\hline
			\end{array}
			$
			if $ b_1>0 $ and $ \mathbf{y}=\bfze $;
			\item \label{enum:n}
			$
			\begin{array}{|c|}
				\hline
				\bfX\\
				\bfth\\
				\bfon\\
				\hline
			\end{array}
			\otimes 
			\begin{array}{|c|}
				\hline
				\bfX^\ell\\
				\bfth^{a_3}\bftw^{a_2+1}\bfon^{a_1-1}\\
				\bftw^{b_2-1}\bfon^{b_1} \mathbf{y}\bfze^{b_0}\\
				\hline
			\end{array}
			$
			if $ b_2>0 $ and $ a_1>0 $ and $ \mathbf{y}\ge \bfon $ and ($ b_1=0 $ or $ \mathbf{y}=\bfon $);
			\item \label{enum:o}
			$
			\begin{array}{|c|}
				\hline
				\bfX\\
				\bfth\\
				\bftw\\
				\hline
			\end{array}
			\otimes 
			\begin{array}{|c|}
				\hline
				\bfX^\ell\\
				\bfth^{a_3}\bftw^{a_2}\\
				\bftw^{b_2-1}\bfon^{b_1} \mathbf{y} \bfze^{b_0}\\
				\hline
			\end{array}
			$
			if $ b_2>0 $ and $ a_1=0 $ and $ \mathbf{y}\ge \bfon $ and ($ b_1=0 $ or $ \mathbf{y}=\bfon $);
			\item \label{enum:p}
			$
			\begin{array}{|c|}
				\hline
				\bfX\\
				\mathbf{h}_\ell\\
				\mathbf{z}_\ell\\
				\hline
			\end{array}
			\otimes 
			\begin{array}{|c|}
				\hline
				\bfX^\ell\\
				\bfth(\mathbf{h}_j)_{j=1}^{\ell-1}\\
				\mathbf{y}(\mathbf{z}_j)_{j=1}^{\ell-1}\\
				\hline
			\end{array}
			$
			if $ \mathbf{y}=\bftw $ or ($ b_2=0 $ and ($ b_1=0 $ or $ \mathbf{y}=\bfon $)).
		\end{enumerate}
	\end{lemma}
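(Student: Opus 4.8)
The engine is the Schensted-insertion description of the combinatorial $R$-matrix recorded in Section~\ref{sec:comb_R}: $R(\tab_1\otimes\tab_2)=\widetilde{\tab}_2\otimes\widetilde{\tab}_1$ exactly when $\col(\tab_2)\rightarrow\tab_1=\col(\widetilde{\tab}_1)\rightarrow\widetilde{\tab}_2$, together with uniqueness of the $R$-matrix. So for each of the ten listed cases it suffices to compute $P:=\col(\tab_2)\rightarrow\tab_1$ once and then check that the claimed pair $\widetilde{\tab}_2,\widetilde{\tab}_1$ also satisfies $\col(\widetilde{\tab}_1)\rightarrow\widetilde{\tab}_2=P$.

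\textbf{Reduction to two rows.} The first move is to observe that the $\bfX$-block plays no active role. The word $\col(\tab_2)$ consists of the letters of $\bfX$ (namely $\bm,\overline{m-1},\dots,\overline{m-r+3}$, top to bottom), then the middle entry $v\in\{\bfon,\bftw,\bfth\}$ of the column, then $\mathbf{y}$. Inserting the $\bfX$-prefix into $\tab_1$: each letter $\overline{m-i+1}$ ($1\le i\le r-2$) meets an equal entry in row $i$ of every column of $\tab_1$, hence propagates rightward replacing itself and is finally appended as the $i$-th box of a fresh column; after this stage $\tab_1$ has grown to the $r\times(\ell+1)$ array with top $r-2$ rows equal to $\bfX^{\ell+1}$ and bottom two rows still equal to $(\mathbf{h}_j)_{j=1}^{\ell}$ and $(\mathbf{z}_j)_{j=1}^{\ell}$. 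Since every letter of $\bfX$ is strictly smaller than every letter of $\{\bfth,\bftw,\bfon,\bfze\}$, the two remaining insertions of $v$ and then $\mathbf{y}$ never disturb the $\bfX$-rows; they only rearrange rows $r-1$ and $r$, and any letter that overflows past column $\ell$ comes to rest just below the $\bfX$-block. Thus the computation collapses to inserting the two-letter word $v\,\mathbf{y}$ into the two-row tableau $\left(\begin{smallmatrix}(\mathbf{h}_j)_{j=1}^{\ell}\\(\mathbf{z}_j)_{j=1}^{\ell}\end{smallmatrix}\right)$, whose rows are $\bfth^{a_3}\bftw^{a_2}\bfon^{a_1}$ and $\bftw^{b_2}\bfon^{b_1}\bfze^{b_0}$; the claimed answer carries the identical $\bfX$-decoration, so the comparison likewise reduces to its two-row part.

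\textbf{Casework.} I would then run this two-row column insertion case by case, grouped by $v$ (two subcases for $v=\bfon$, four for $v=\bftw$, four for $v=\bfth$). Inserting a barred letter uses the weak rule: it replaces the topmost entry of the current column that is $\ge$ it, the displaced entry is inserted into the next column, and so on; because the two rows are weakly increasing, each insertion produces a single cascade that runs along one row (or crosses from the $\mathbf{h}$-row into the $\mathbf{z}$-row) until it stabilises or drops out into a new box. The semistandardness of $\tab_1\in B^{r,\ell}$ supplies the inequalities $b_2\le a_3$ and $b_2+b_1\le a_3+a_2$, which govern where these cascades cross rows, and the extra hypotheses attached to cases $(\bfon a)$--$(\bfth d)$ (such as ``$b_2>0$'', ``$b_2+b_1=a_3+a_2$'', ``$b_1=0$ or $\mathbf{y}=\bfon$'') are exactly what pins down where each cascade terminates. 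Reading off the two resulting rows together with the two ejected columns yields the stated $\widetilde{\tab}_2\otimes\widetilde{\tab}_1$; running the same two-row insertion on $\widetilde{\tab}_1$ and $\widetilde{\tab}_2$ then recovers $P$, which by uniqueness of $R$ closes that case. Several cases produce the same cascade pattern and differ only in its terminal cell, so the number of genuinely distinct computations is smaller than ten.

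\textbf{Main obstacle.} There is no conceptual difficulty; the work is entirely in the volume and precision of the bookkeeping. One must apply the weak ($\ge$) bumping rule consistently for the barred letters, keep the boundary possibilities straight ($a_1=0$, $b_1=0$, $b_2=0$, $b_2=a_3$ versus $b_2<a_3$, and $\mathbf{y}=\bfze$ versus $\mathbf{y}=\bfon$), and invoke the two semistandardness inequalities at precisely the right points so that each cascade lands where the statement predicts. Organising the casework so that the shared cascade patterns are treated once is the practical key to keeping the argument readable.
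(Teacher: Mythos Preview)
Your proposal is correct and follows essentially the same approach as the paper: verify each case via the Schensted characterisation of $R$, strip off the inert $\bfX$-block to reduce to the bottom two rows (the paper records this as Remark~\ref{rmk:omitX}), and group the cases so that shared bumping patterns are computed only once. The paper's concrete organisational device is to compute the partial forward insertion $\bfX\, v\rightarrow\tab_1$ once for each $v\in\{\bfon,\bftw,\bfth\}$, observe that the result always has a common shape parameterised by integers $\widetilde a_3,\widetilde a_2,\widetilde a_1,\widetilde b_2,\widetilde b_1$ (with different substitutions depending on $v$ and the side conditions), and then carry out the remaining insertion of $\mathbf y$ and the reverse insertion uniformly in these parameters before specialising---this is exactly the ``shared cascade pattern'' economy you anticipate in your last paragraph.
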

	
	The proof of the lemma follows by a direct application of Schensted's bumping algorithm.
	See Appendix~\ref{appendix:classRmatprf} for details.
	
	\begin{remark}
		The above lemma only lists the cases required to prove the main theorem
		and does not completely classify the action of the $ R $-matrix.
	\end{remark}
	
	\begin{remark}\label{rmk:omitX}
		In Lemma~\ref{lem:classRmat}, $ \bfX $ does not affect the action of $ R $-matrix on genuine highest weight states.
		Therefore, it is sufficient to only consider height $2$ states.
	\end{remark}

	Note that many cases of the lemma can be informally described using the height $ 1 $ $ R $-matrix:
	\begin{ex}
		Assuming $ b_1>1 $,
		$ R(
		\begin{array}{|c|}
			\hline
			\bftw^{b_2} \bfon^{b_1} \bfze^{b_0}\\
			\hline
		\end{array}
		\otimes
		\begin{array}{|c|}
			\hline
			\bfze\\
			\hline
		\end{array}
		)
		=
		\begin{array}{|c|}
			\hline
			\bfon\\
			\hline
		\end{array}
		\otimes
		\begin{array}{|c|}
			\hline
			\bftw^{b_2} \bfon^{b_1-1} \bfze^{b_0+1}\\
			\hline
		\end{array}
		$.
		Compare this with the bottom row of~\ref{enum:k}:
		$
		\begin{array}{|c|}
			\hline
			\bfon\\
			\hline
		\end{array}
		\otimes
		\begin{array}{|c|}
			\hline
			\bftw^{b_2+1} \bfon^{b_1-2}\bfze^{b_0+1}\\
			\hline
		\end{array}
		$.
		Additionally, note that the second last row has an extra $ \bfon $.
		One interpretation of this is that when the carrier picks up the $ \bftw $ 
		in the second last row, it gets swapped with a $ \bfon $ in the bottom row.
	\end{ex}
	
	\begin{ex}
		Assuming $ b_2>0 $, $ \mathbf{y}\in\{\bfon,\bfze\} $ and ($ b_1=0 $ or $ \mathbf{y}=\bfon $),
		$ R(
		\begin{array}{|c|}
			\hline
			\bftw^{b_2} \bfon^{b_1} \bfze^{b_0}\\
			\hline
		\end{array}
		\otimes
		\begin{array}{|c|}
			\hline
			\mathbf{y}\\
			\hline
		\end{array}
		) = 
		\begin{array}{|c|}
			\hline
			\bftw\\
			\hline
		\end{array}
		\otimes
		\begin{array}{|c|}
			\hline
			\bftw^{b_2-1} \bfon^{b_1} \mathbf{y} \bfze^{b_0}\\
			\hline
		\end{array}
		$.
		Compare this with the bottom row of~\ref{enum:n}:
		$ 
		\begin{array}{|c|}
			\hline
			\bfon\\
			\hline
		\end{array}
		\otimes
		\begin{array}{|c|}
			\hline
			\bftw^{b_2-1}\bfon^{b_1} \mathbf{y} \bfze^{b_0}\\
			\hline
		\end{array}
		$.
		Additionally, note that the second last row has an extra $ \bftw $,
		and has lost a $ \bfon $.
		One interpretation of this is that the carrier `should' put down a $ \bftw $,
		but this $ \bftw $ is instead swapped with a $ \bfon $ from the second last row.
	\end{ex}
	
	With these examples in mind, we can see
	when the carrier picks up a $ \bftw $ in the second last row, it generally gets swapped with a $ \bfon $ in the bottom row;
	and when the carrier `should' put down a $ \bftw $ in the bottom row,
	it generally gets swapped with a $ \bfon $ from the second last row.
	Note that sometimes this happens simultaneously.
	Note also that there are exceptions (\ref{enum:f} for example).
	
	This observation tells us that when computing time evolution 
	by moving the carrier through a genuine highest weight state,
	the bottom row of the super BBS is similar to the single row case,
	except for some swapping of $ \bftw $ and $ \bfon $.
	This swapping can occur simultaneously and is hard to keep track of.
	Our strategy is to swap all of the $ \bftw $ from the second last row with $ \bfon $ from the bottom row 
	\emph{before} passing the carrier through,
	so that no swapping occurs when passing the carrier through.
	After employing this strategy to create a `swapped state',
	the bottom row would theoretically behave identically to the single row case.
	This would reduce the problem of computing the time evolution to the single row case.
	This is advantageous because the behaviour of the single row BBS is simpler and also easier to compute
	since we can use an alternate algorithm for the $ R $-matrix detailed in~\cite[Section 2]{HI00}.
	Unfortunately, the first soliton moves slower in the `swapped state' than in the original state.
	We can fix this by `speeding up' the first soliton.
	
	\subsection{\texorpdfstring{$ L $}{L}-sped box-ball system}
	With this goal of `speeding up' the first soliton, we define a new time evolution operator:
	
	\begin{dfn}
		Consider a general state
		\[
		P= P_1\otimes P_2 \otimes P_3 \otimes \cdots 
		=u_1^{\otimes c_1} \otimes \inlinetab{\bfon}^{\otimes d}
		\otimes u_1^{\otimes c_2} \otimes 
		\inlinetab{y_1} \otimes \inlinetab{y_2} \otimes \cdots \otimes \inlinetab{y_{d_2}}
		\otimes u_1^{\otimes \infty}
		\]
		where $
		\begin{array}{|c|c|c|c|}
			\hline
			y_{d_2} & \cdots & y_2 & y_1\\
			\hline
		\end{array}
		$ is semistandard
		and $ \inlinetab{y_j}\neq u_1 $ for any $ j=1,\ldots,n $.
		(Note that $ d $ or $ c_2 $ can be $ 0 $.)
		For some non-negative integer $ L $, set
		\[
		P^L = 
		u_1^{\otimes c_1}\otimes 
		P_{c_1+d+1} \otimes \cdots \otimes P_{c_1+d+L}\otimes 
		\inlinetab{\bfon}^{\otimes \min(d,\ell-L)}
		\otimes P_{c_1+d+L+1}\otimes P_{c_1+d+L+2}\otimes\cdots.
		\]
		Let $ T_\ell(P^L) = \widetilde{P}_1\otimes \widetilde{P}_2 \otimes \widetilde{P}_3\otimes\cdots $
		and assume that $ \widetilde{P}_{\widetilde{c}_1+1} $ is the first non-vacuum element in $ T_\ell(P^L) $.
		Define the \defn{$ L $-sped time evolution operator},
		denoted $ T_\ell^L $, by 
		\[
		T_{\ell}^L(P) = u_1^{\otimes \widetilde{c}_1} \otimes \inlinetab{\bfon}^{\otimes d - \min(d,\ell-L)}
		\otimes \widetilde{P}_{\widetilde{c}_1+1}\otimes \widetilde{P}_{\widetilde{c}_1+2}\otimes\cdots.
		\]
	\end{dfn}
	
	Note that in the case where $ d\le \ell-L $, we have $ T_\ell^L(P) = T_{\ell}(P^L) $.
	
	\begin{ex}
		Let $ L=2 $ and $ \ell=5 $, and
		consider $ T^L_\ell $ acting on the state
		\[
		P=\inlinetab{\bfon}\otimes\inlinetab{\bfon}\otimes\inlinetab{\bfon}\otimes\inlinetab{\bfon}\otimes u_1 
		\otimes \inlinetab{\bfze} \otimes \inlinetab{\bfze} \otimes \inlinetab{\bfon}
		\otimes u_1^{\otimes\infty}.
		\]
		We have that 
		$ 
		P^L=u_1\otimes\inlinetab{\bfze}
		\otimes\inlinetab{\bfon}\otimes\inlinetab{\bfon}\otimes\inlinetab{\bfon}
		\otimes \inlinetab{\bfze} \otimes \inlinetab{\bfon}\otimes u_1^{\otimes\infty}.
		$
		Then, 
		$ T_\ell(P^L) = 
		u_1^{\otimes 5}\otimes \inlinetab{\bfon} \otimes 
		u_1 \otimes \inlinetab{\bfze}\otimes\inlinetab{\bfze}\otimes\inlinetab{\bfon}\otimes\inlinetab{\bfon}\otimes\inlinetab{\bfon} 
		\otimes u_1^{\otimes\infty}$.
		Therefore,
		\[
		T^L_\ell(P) = 
		u_1^{\otimes 5}\otimes \inlinetab{\bfon}\otimes \inlinetab{\bfon} \otimes 
		u_1 \otimes \inlinetab{\bfze}\otimes\inlinetab{\bfze}\otimes\inlinetab{\bfon}\otimes\inlinetab{\bfon}\otimes\inlinetab{\bfon} 
		\otimes u_1^{\otimes\infty}.
		\]
	\end{ex}
	
	Note that we have only defined the $ L $-sped time evolution operator for \emph{some} two-soliton states.
	This is sufficient for our purposes, but the definition can be extended using crystal operators if so desired.
	
	\begin{prop}\label{prop:dspedsolitons}
		Let 
		\[
		P=u_1^{\otimes \mathfrak{c}_1} \otimes 
		\inlinetab{\bfon}^{\otimes \mathfrak{d}_1-L} \otimes u_1^{\otimes \mathfrak{c}_2+L} \otimes 
		\inlinetab{\bfze}^{\otimes M}\otimes \inlinetab{\bfon}^{\otimes \mathfrak{d}_2-M}
		\otimes u_1^{\otimes\infty}
		\]
		be an arbitrary height $ 1 $ genuine highest weight two soliton state
		with $ \mathfrak{c}_2+L\ge \mathfrak{d}_2-M $.
		If $ L\le \mathfrak{d}_2 $ and $ \mathfrak{d}_1>\mathfrak{d}_2 $ then,
		\[
		(T_{\mathfrak{d}_2+1}^L)^t(P) =
		u_1^{\otimes \mathfrak{c}_3} \otimes 
		\inlinetab{\bfon}^{\otimes \mathfrak{d}_2-L} \otimes u_1^{\otimes \mathfrak{c}_4} \otimes 
		\inlinetab{\bfze}^{\otimes M}\otimes \inlinetab{\bfon}^{\otimes \mathfrak{d}_1-M}
		\otimes u_1^{\otimes\infty}
		\]
		for $ t $ sufficiently large.
		Moreover, the phase shift, as measured from the start of the solitons,
		is $ 2\mathfrak{d}_2-M-L $.
	\end{prop}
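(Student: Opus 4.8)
The plan is to reduce the proposition to a position-exact computation with the ordinary time-evolution operator $T_{\mathfrak{d}_2+1}$ of a single-row box--ball system, organised into the usual three phases: free motion before the collision, the collision itself, and free motion afterwards.

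First I would unwind the definition of the $L$-sped operator. Write $\ell=\mathfrak{d}_2+1$. Since $\mathfrak{d}_1>\mathfrak{d}_2$ we have $\mathfrak{d}_1-L\ge\ell-L$, so forming $P^L$ shortens the first soliton from $\inlinetab{\bfon}^{\otimes\mathfrak{d}_1-L}$ to $\inlinetab{\bfon}^{\otimes\ell-L}$ and slides $L$ vacuum boxes in front of it; applying $T_\ell$ and then re-prepending $\mathfrak{d}_1-\mathfrak{d}_2-1$ copies of $\inlinetab{\bfon}$ to the first soliton of the result recovers $T_\ell^L(P)$. Using the single-soliton analysis behind Theorem~\ref{thm:speed} (two solitons separated by at least $\ell$ vacuum boxes do not interact, since the carrier resets to $u_\ell$), I would check that $T_\ell^L$ carries the class of height-$1$ two-soliton states in the statement to itself, so that iteration is meaningful, and that while the two solitons remain separated the first translates rigidly at speed $\ell=\mathfrak{d}_2+1$ and the second at speed $\min(\mathfrak{d}_2,\ell)=\mathfrak{d}_2$. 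Hence the gap between them shrinks by exactly one per step, and after finitely many steps the solitons begin to overlap.

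The heart of the proof is the collision phase. Here I would run $T_{\mathfrak{d}_2+1}$ (equivalently $T_\ell^L$) step by step on the overlapping configurations, using the explicit carrier/bumping description of the single-row combinatorial $R$-matrix of \cite[Section~2]{HI00}; the point of passing to height $1$ and of the $L$-speeding is exactly that no additional swapping intervenes in this setting, so the rule one iterates is the clean single-row one. I expect the overlapping configurations to form a single family indexed by the time step --- the first soliton having deposited a prefix of its $\bfon$'s, which interleaves with the $\inlinetab{\bfze}^{\otimes M}\otimes\inlinetab{\bfon}^{\otimes\mathfrak{d}_2-M}$ of the second soliton in a controlled way --- and I would prove by induction that $T_{\mathfrak{d}_2+1}$ sends each member of the family to the next, until the configuration separates into $\inlinetab{\bfon}^{\otimes\mathfrak{d}_2-L}$ followed, after a suitable block of vacuums, by $\inlinetab{\bfze}^{\otimes M}\otimes\inlinetab{\bfon}^{\otimes\mathfrak{d}_1-M}$. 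The hypothesis $\mathfrak{c}_2+L\ge\mathfrak{d}_2-M$ should enter precisely here, as the condition ensuring the $\bfze$-block of the short soliton has enough room for the faster soliton to pass over it without the collision stalling. Once this separated form is reached it persists (the front soliton then moves at speed $\ell$, the back at speed $\mathfrak{d}_2$), which covers all sufficiently large $t$.

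Finally I would read off the phase shift by comparing, for large $t$, the actual positions of the two solitons with their free trajectories. I expect the net displacement to decompose as $2\mathfrak{d}_2-M-L$: the baseline $2\mathfrak{d}_2$ is the maximal possible displacement (consistent with the convention $\max H=0$); the correction $-M$ comes from the leading block of $M$ letters $\bfze$ of the short soliton, which the faster soliton carries along without laying vacuum boxes behind it; and the correction $-L$ is the deficit between the true first soliton $\inlinetab{\bfon}^{\otimes\mathfrak{d}_1}$ and the shortened soliton $\inlinetab{\bfon}^{\otimes\mathfrak{d}_1-L}$ that the $L$-sped evolution actually transports. Via Lemma~\ref{lem:hwtphase} these are, respectively, the energy contributions $H(\down{V}\otimes\down{W})$ and $H(\up{V}\otimes\up{W})$ of Theorem~\ref{thm:scattering}. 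The main obstacle is the collision phase: finding the correct uniform description of the overlapping configurations and verifying the inductive step with Schensted's bumping algorithm while keeping track of all positions exactly. This forces casework on the relative sizes of $L$, $M$ and $\mathfrak{d}_2$ (and on whether the collision has yet begun), which is presumably why the single-row $R$-matrix is imported from \cite{HI00} rather than re-derived here.
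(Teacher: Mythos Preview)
Your three-phase plan (free motion, collision via direct single-row computation with the $R$-matrix of \cite{HI00}, then free motion again, with the phase shift read off from explicit positions) is exactly the paper's approach. The one place you over-anticipate is the collision phase: there is no interleaving and no casework on $L,M,\mathfrak{d}_2$. The paper observes that once the gap between the solitons has shrunk to exactly $\mathfrak{d}_2-M$ (this is what $\mathfrak{c}_2+L\ge\mathfrak{d}_2-M$ guarantees is reachable from the initial state), direct computation shows that after $t'$ further steps, for $0\le t'\le\mathfrak{d}_1-\mathfrak{d}_2$, the state is
\[
u_1^{\otimes\widetilde{\mathfrak{c}}_1+t'(\mathfrak{d}_2+1)}\otimes\inlinetab{\bfon}^{\otimes\mathfrak{d}_1-L-t'}\otimes u_1^{\otimes\mathfrak{d}_2-M}\otimes\inlinetab{\bfze}^{\otimes M}\otimes\inlinetab{\bfon}^{\otimes\mathfrak{d}_2-M+t'}\otimes u_1^{\otimes\infty},
\]
i.e.\ the gap stays fixed at $\mathfrak{d}_2-M$ and one $\bfon$ transfers from the first soliton to the tail of the second per step. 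The phase shift is then obtained by comparing positions at $t'=\mathfrak{d}_1-\mathfrak{d}_2$ against the free trajectories, yielding $2\mathfrak{d}_2-M-L$ by arithmetic; your heuristic decomposition into a baseline $2\mathfrak{d}_2$ with corrections $-M$ and $-L$ is not how the paper argues, though it leads to the same number.
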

	\begin{proof}
		We follow a similar strategy as~\cite{HI00} and~\cite{FOY00}.
		Consider applying the time evolution operator to $ P $.
		For $ \mathfrak{c}_2 $ suficiently large,
		the first soliton initially moves with speed $ \mathfrak{d}_2+1 $ and the second soliton with speed $ \mathfrak{d}_2 $.
		Eventually, we arrive at the state:
		\[
		u_1^{\otimes \widetilde{\mathfrak{c}}_1} \otimes 
		\inlinetab{\bfon}^{\otimes \mathfrak{d}_1-L} \otimes u_1^{\otimes \mathfrak{d}_2-M} \otimes 
		\inlinetab{\bfze}^{\otimes M}\otimes \inlinetab{\bfon}^{\otimes \mathfrak{d}_2-M}
		\otimes u_1^{\otimes\infty}.
		\]
		By direct computation 
		(using either the $ R $-matrix algorithm in Section~\ref{sec:comb_R} or in~\cite[Section 2]{HI00}),
		we find that after another $ t' $ time units (with $ t'\le \mathfrak{d}_1-\mathfrak{d}_2 $),
		this state becomes:
		\[
		u_1^{\otimes \widetilde{\mathfrak{c}}_1} \otimes 
		u_1^{\otimes t'(\mathfrak{d}_2+1)} \otimes
		\inlinetab{\bfon}^{\otimes \mathfrak{d}_1- L-t'} \otimes u_1^{\otimes \mathfrak{d}_2-M} \otimes 
		\inlinetab{\bfze}^{\otimes M}\otimes \inlinetab{\bfon}^{\otimes \mathfrak{d}_2-M+t'}
		\otimes u_1^{\otimes\infty}.
		\]
		When $ t'>\mathfrak{d}_1-\mathfrak{d}_2 $ the solitons do not interact,
		and the first soliton moves with speed $ \mathfrak{d}_2 $ and the second soliton moves with speed  $ \mathfrak{d}_2+1 $.
		
		Let $ \delta_3 $ be the phase shift of the speed $ \mathfrak{d}_2+1 $ soliton,
		and $ \delta_2 $ be the phase shift of the speed $ \mathfrak{d}_2 $ soliton.
		After time $ t'=\mathfrak{d}_1-\mathfrak{d}_2 $, the phase shifts will remain the same,
		so computing both phase shifts at time $ t'=\mathfrak{d}_1-\mathfrak{d}_2 $, we find
		\begin{align*}
			\delta_3 &= (\widetilde{\mathfrak{c}}_1+(\mathfrak{d}_1-\mathfrak{d}_2)(\mathfrak{d}_2+1)+(\mathfrak{d}_1-L-\mathfrak{d}_1+\mathfrak{d}_2)+(\mathfrak{d}_2-M))
			-(\widetilde{\mathfrak{c}}_1+(\mathfrak{d}_1-\mathfrak{d}_2)(\mathfrak{d}_2+1))
			\\
			&= 2\mathfrak{d}_2-M-L\\
			\delta_2 &= 
			(\widetilde{\mathfrak{c}}_1 + (\mathfrak{d}_1-\mathfrak{d}_2)(\mathfrak{d}_2+1))
			-(\widetilde{\mathfrak{c}}_1 + (\mathfrak{d}_1 - L) + (\mathfrak{d}_2 - M) + (\mathfrak{d}_1-\mathfrak{d}_2)\mathfrak{d}_2)\\
			&= -(2\mathfrak{d}_2-M-L).
		\end{align*}
		Thus, the total phase shift of the system is 
		$ \delta = \delta_3 = -\delta_2 = 2\mathfrak{d}_2-M-L $.
	\end{proof}
	
	Note that Proposition~\ref{prop:dspedsolitons} essentially shows us that two-soliton states behave solitonically
	with respect to $ L $-sped time evolution.
	Our aim is to reduce the behaviour of the general super BBS to the behaviour of a height $1$ $ L $-sped BBS.
	
	\subsection{Reducing to height 1}\label{sec:ht1isomorphism}
	As alluded to in Section~\ref{sec:Rmatrixbehaviour},
	we can define a map from a genuine highest weight super BBS state to a $ L $-sped state
	by swapping all of the $ \bfon $ in the bottom row with $ \bftw $ in the second last row,
	and then removing all but the bottom row.
	More precisely,
	assuming $ d_1-L>0 $,
	the super BBS state
	\[
	u_1^{\otimes C_1}\otimes 
	\begin{array}{|c|}
		\hline
		\bfth\\
		\bfon\\
		\hline
	\end{array}^{\otimes d_1-Q}
	\otimes u_1^{\otimes C_2} \otimes
	\begin{array}{|c|}
		\hline
		\bfon\\
		y_1\\
		\hline
	\end{array}
	\otimes\cdots\otimes
	\begin{array}{|c|}
		\hline
		\bfon\\
		y_{Q}\\
		\hline
	\end{array}
	\otimes
	\begin{array}{|c|}
		\hline
		\bftw\\
		y_{Q+1}\\
		\hline
	\end{array}
	\otimes\cdots\otimes
	\begin{array}{|c|}
		\hline
		\bftw\\
		y_L\\
		\hline
	\end{array}
	\otimes
	\begin{array}{|c|}
		\hline
		\bfth\\
		y_{L+1}\\
		\hline
	\end{array}
	\otimes\cdots\otimes
	\begin{array}{|c|}
		\hline
		\bfth\\
		y_{d_2}\\
		\hline
	\end{array}
	\otimes u_1^{\otimes\infty}
	\]
	(recall that we are omitting $ \bfX $ as per Remark~\ref{rmk:omitX}),
	where $ y_1y_2\cdots y_{d_2} = \bfze^M\bfon^{d_2-M} $ and ($ Q>0 $ only if $ C_2=0 $),
	is mapped to the $ L $-sped state
	\[
	u_1^{\otimes C_1} \otimes \inlinetab{\bfon}^{\otimes d_1-L}
	\otimes u_1^{\otimes C_2+L-Q} \otimes 
	\inlinetab{y_1} \otimes \inlinetab{y_2} \otimes \cdots \otimes \inlinetab{y_{d_2}}
	\otimes u_1^{\otimes \infty}.
	\]
	Call this map $ \mathcal{F} $.
	
	\begin{ex}
		Let
		\[
		P = 
		\begin{array}{|c|}
			\hline
			\bfth\\
			\bfon\\
			\hline
		\end{array}
		\otimes
		\begin{array}{|c|}
			\hline
			\bfth\\
			\bfon\\
			\hline
		\end{array}
		\otimes
		\begin{array}{|c|}
			\hline
			\bfth\\
			\bfon\\
			\hline
		\end{array}
		\otimes
		\begin{array}{|c|}
			\hline
			\bfth\\
			\bfon\\
			\hline
		\end{array}
		\otimes
		\begin{array}{|c|}
			\hline
			\bfon\\
			\bfze\\
			\hline
		\end{array}
		\otimes
		\begin{array}{|c|}
			\hline
			\bftw\\
			\bfze\\
			\hline
		\end{array}
		\otimes
		\begin{array}{|c|}
			\hline
			\bfth\\
			\bfon\\
			\hline
		\end{array}
		\otimes u_1^{\otimes\infty}.
		\]
		Then, $
		\mathcal{F}(P)=\inlinetab{\bfon}\otimes\inlinetab{\bfon}\otimes\inlinetab{\bfon}\otimes u_1 
		\otimes \inlinetab{\bfze} \otimes \inlinetab{\bfze} \otimes \inlinetab{\bfon}
		\otimes u_1^{\otimes\infty}
		$.
	\end{ex}
	
	The map $ \mathcal{F} $ is invertible.
	Indeed, $ \mathcal{F}^{-1} $ maps the $ L $-sped state
	\[
	u_1^{\otimes c_1} \otimes \inlinetab{\bfon}^{\otimes d}
	\otimes u_1^{\otimes c_2} \otimes 
	\inlinetab{y_1} \otimes \inlinetab{y_2} \otimes \cdots \otimes \inlinetab{y_{d_2}}
	\otimes u_1^{\otimes \infty},
	\]
	where $ d>0 $,
	to the super BBS state
	\[
	u_1^{\otimes c_1}\otimes 
	\begin{array}{|c|}
		\hline
		\bfth\\
		\bfon\\
		\hline
	\end{array}^{\otimes d_1-Q}
	\otimes u_1^{\otimes \max(0,c_2-L)} \otimes
	\begin{array}{|c|}
		\hline
		\bfon\\
		y_1\\
		\hline
	\end{array}
	\otimes\cdots\otimes
	\begin{array}{|c|}
		\hline
		\bfon\\
		y_{Q}\\
		\hline
	\end{array}
	\otimes
	\begin{array}{|c|}
		\hline
		\bftw\\
		y_{Q+1}\\
		\hline
	\end{array}
	\otimes\cdots\otimes
	\begin{array}{|c|}
		\hline
		\bftw\\
		y_L\\
		\hline
	\end{array}
	\otimes
	\begin{array}{|c|}
		\hline
		\bfth\\
		y_{L+1}\\
		\hline
	\end{array}
	\otimes\cdots\otimes
	\begin{array}{|c|}
		\hline
		\bfth\\
		y_{d_2}\\
		\hline
	\end{array}
	\otimes u_1^{\otimes\infty},
	\]
	where $ d_1 = d+L $ and $ Q=\max(0,L-c_2) $.
	
	\begin{remark}\label{rmk:exceptall2}
		In defining the domain of $ \mathcal{F} $,
		we assumed that $ d_1-L>0 $.
		This assumption ensures that $ \mathcal{F}^{-1} $ is well-defined.
		We sometimes encounter states where $ d_1=L $
		and we need to deal with these edge cases separately.
	\end{remark}
	
	Recalling that 
	\[
	u_1 = 
	\begin{array}{|c|}
		\hline
		\bfth\\
		\bftw\\
		\hline
	\end{array},
	\]
	we can alternatively compute $ \mathcal{F}^{-1} $ 
	by the following process:
	\begin{enumerate}
		\item Add a row above the $ L $-sped state, and fill it entirely with $ \bfth $
		(if needed, add rows filled with $ \bfX $ above this added row).
		\item In the second soliton, replace the first $ L $ positions of this added row with $ \bftw $.
		\item Replace the first $ L $ occurrences of $ \bftw $ that occur immediately after the first soliton 
		(regardless of the row they appear in) with $ \bfon $.
	\end{enumerate}
	
	\begin{prop}\label{prop:ht1isomorphism}
		Let $ P $ be the state
		\[
		u_1^{\otimes C_1}\otimes 
		\begin{array}{|c|}
			\hline
			\bfth\\
			\bfon\\
			\hline
		\end{array}^{\otimes d_1-Q}
		\otimes u_1^{\otimes C_2} \otimes
		\begin{array}{|c|}
			\hline
			\bfon\\
			y_1\\
			\hline
		\end{array}
		\otimes\cdots\otimes
		\begin{array}{|c|}
			\hline
			\bfon\\
			y_{Q}\\
			\hline
		\end{array}
		\otimes
		\begin{array}{|c|}
			\hline
			\bftw\\
			y_{Q+1}\\
			\hline
		\end{array}
		\otimes\cdots\otimes
		\begin{array}{|c|}
			\hline
			\bftw\\
			y_L\\
			\hline
		\end{array}
		\otimes
		\begin{array}{|c|}
			\hline
			\bfth\\
			y_{L+1}\\
			\hline
		\end{array}
		\otimes\cdots\otimes
		\begin{array}{|c|}
			\hline
			\bfth\\
			y_{d_2}\\
			\hline
		\end{array}
		\otimes u_1^{\otimes\infty}
		\]
		where $ y_1y_2\cdots y_{d_2} = \bfze^M\bfon^{d_2-M} $ and
		$ C_2+L-Q\ge \min(d_1,\ell)-1-M $
		and ($ Q>0 $ only if $ C_2=0 $).
		Further assume that $ d_1-L>0 $ and that
		if $ C_2+L-Q=\min(d_1,\ell)-1-M $ then $ d_1-L-1>0 $.
		Then, for $ \ell\le d_2+1 $ and $ L,M < \ell $,
		we have $ T_\ell(P) = \mathcal{F}^{-1}T^L_\ell\mathcal{F}(P) $.
	\end{prop}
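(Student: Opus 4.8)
The plan is to prove the identity by computing both $T_\ell(P)$ and $\mathcal{F}^{-1}T^L_\ell\mathcal{F}(P)$ explicitly, tracking the carrier as it is threaded through the state by the successive $R$-matrices, and comparing the two resulting states region by region.

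First I would carry out the left-hand computation. By Remark~\ref{rmk:omitX} we may suppress $\bfX$ and work with height-$2$ tableaux; the carrier starts as $\begin{smallmatrix}\bfth^\ell\\\bftw^\ell\end{smallmatrix}$ and at every stage has the shape appearing in Lemma~\ref{lem:classRmat}, with second-last row $\bfth^{a_3}\bftw^{a_2}\bfon^{a_1}$ and bottom row $\bftw^{b_2}\bfon^{b_1}\bfze^{b_0}$. I would record how these six parameters and the emitted tableau evolve as the carrier crosses each piece of $P$ in turn: the leading vacua; the first soliton, whose columns $\begin{smallmatrix}\bfth\\\bfon\end{smallmatrix}$ load $\bfon$'s into the bottom of the carrier until it is full; the $C_2$ intervening vacua; the $Q$ columns $\begin{smallmatrix}\bfon\\y_j\end{smallmatrix}$ (present only when $C_2=0$); the $L-Q$ columns $\begin{smallmatrix}\bftw\\y_j\end{smallmatrix}$; the $d_2-L$ columns $\begin{smallmatrix}\bfth\\y_j\end{smallmatrix}$; and the trailing vacua. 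The passage through the second soliton is handled one column at a time by the appropriate case of Lemma~\ref{lem:classRmat}, in which (as illustrated by the examples following that lemma) a $\bftw$ absorbed from the top row of a column is swapped for a $\bfon$ in the bottom row, and a $\bftw$ about to be deposited in the bottom row is swapped for a $\bfon$ from the top row, these swaps ceasing once the carrier is full.

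Next I would compute the right-hand side. The state $\mathcal{F}(P)$ is a height-$1$ BBS state, and $\mathcal{F}(P)^L$ is obtained from it by moving the first soliton $L$ steps to the right and truncating it to $\ell-L$ boxes when $d_1-L>\ell-L$. Here I would apply the height-$1$ $R$-matrix, using the carrier-passing algorithm of \cite[Section~2]{HI00}, to obtain $T_\ell(\mathcal{F}(P)^L)$, re-pad the truncated $\bfon$'s to get $T^L_\ell\mathcal{F}(P)$, and finally apply $\mathcal{F}^{-1}$. The crux is to exhibit, at each stage of the carrier's journey, a correspondence between the super-BBS carrier data $(a_3,a_2,a_1,b_2,b_1,b_0)$ and the height-$1$ carrier, under which the emitted bottom rows agree and $\mathcal{F}^{-1}$ reconstructs the correct top row. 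Morally, $\mathcal{F}$ performs in advance exactly the $\bftw\leftrightarrow\bfon$ swaps that would otherwise occur in flight, so that after $\mathcal{F}$ the bottom row evolves exactly as in the single-row system; the price is that the first soliton has been shortened from $d_1-Q$ to $d_1-L$ columns, and $T^L_\ell$ restores the lost motion by pre-advancing it $L$ steps. The hypotheses $C_2+L-Q\ge\min(d_1,\ell)-1-M$, $L,M<\ell$, $\ell\le d_2+1$, $d_1-L>0$ (together with the sharpened version of the first inequality when equality holds) are precisely what guarantees that the carrier never overflows into a regime not covered by Lemma~\ref{lem:classRmat}; the borderline case $d_1=L$, where $\mathcal{F}^{-1}$ is not defined, is excluded here and treated separately, cf.\ Remark~\ref{rmk:exceptall2}.

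The step I expect to be the main obstacle is the bookkeeping for the collision: the action of the $R$-matrix where the two solitons meet depends delicately on their separation and on how full the carrier is, so the computation breaks into a moderate number of cases --- the carrier before, within and beyond each soliton; the overlapping regime $Q>0$; the alternatives $b_2>0$ versus $b_2=0$; and the moment the carrier fills --- each matched against its counterpart in Lemma~\ref{lem:classRmat} and in the height-$1$ evolution. Once these cases are tabulated, the two states are seen to coincide position by position, which is the assertion of the proposition.
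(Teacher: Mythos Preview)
Your plan is essentially the paper's approach: reduce to height $2$ via Remark~\ref{rmk:omitX}, thread the carrier through $P$ using the cases of Lemma~\ref{lem:classRmat}, and compare with the height-$1$ evolution. Two executional differences are worth flagging. First, the paper does not track a height-$1$ carrier at all: $T^L_\ell\mathcal{F}(P)$ is simple enough to write down in closed form (two cases, according as $C_2+L-Q>\min(d_1,\ell)-1-M$ or equality holds), so only the super-BBS side needs a carrier computation. Second, and more substantively, the paper does not attempt the step-by-step carrier correspondence you propose; instead it proves two invariant lemmas. The first shows that $d_1=N+\max(B_1-A_2,0)+A_1$ is preserved through all cases except~\ref{enum:o} and~\ref{enum:p} (here $N$ counts the $\bfon$'s already emitted), and the second computes $\max(B_1-A_2,0)$ at the moment~\ref{enum:o}/\ref{enum:p} first applies, giving $0$ or $1$ according to the two cases above. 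Together with the observation that $A_2+A_1$ simply counts the $\bftw$-columns already absorbed, these invariants pin down the length of the first emitted soliton and the starting position of the second without an exhaustive case split. Your tabulation strategy would also succeed, but the invariants are what make the bookkeeping you flag as the main obstacle tractable.
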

	
	The only states encountered when evolving a genuine highest weight super BBS are the form $ P $ 
	(or of the form discussed in Remark~\ref{rmk:exceptall2});
	this will be a consequence of the proof of Proposition~\ref{prop:ht1isomorphism}.
	Therefore, Proposition~\ref{prop:ht1isomorphism} allows us to compute the time evolution 
	of a genuine highest weight super BBS state using almost entirely height $1$ $ L $-sped states.
	Since $ L $-sped states behave solitonically by Proposition~\ref{prop:dspedsolitons},
	this will complete the proof of Proposition~\ref{prop:simplescattering} and hence
	Theorem~\ref{thm:scattering} (except in the edge cases of Remark~\ref{rmk:exceptall2}).
	
	To prove Proposition~\ref{prop:ht1isomorphism}, first observe that
	\[
	\mathcal{F}(P) = 
	u_1^{\otimes C_1} \otimes \inlinetab{\bfon}^{\otimes d_1-L}
	\otimes u_1^{\otimes C_2+L-Q} \otimes 
	\inlinetab{\bfze}^{\otimes M} \otimes \inlinetab{\bfon}^{\otimes d_2-M}
	\otimes u_1^{\otimes \infty}.
	\]
	Since $ R(u_\ell\otimes u_1) = u_1\otimes u_\ell $,
	(for $ u_1,\,u_\ell $ of any height)
	and $ \mathcal{F}^{-1} $ does not change $ u_1^{\otimes C_1} $,
	the result holds for the first $ C_1 $ tableaux of the states.
	We can thus assume, without loss of generality, that $ C_1=0 $.
	Additionally,
	if $ C_2\ge\min(d_1,\ell) $ then the solitons do not interact in either system at this time step,
	and the result is obvious.
	So, assume $ C_2<\min(d_1,\ell) $.
	
	Let $ \widetilde{C}=\min(d_1,\ell) $.
	If $ C_2+L-Q>\widetilde{C}-1-M $ we find that
	\begin{equation}\label{eq:ht1sizesame}
		T^L_\ell\mathcal{F}(P) = 
		u_1^{\otimes \widetilde{C}} \otimes \inlinetab{\bfon}^{\otimes d_1-L}
		\otimes u_1^{\otimes C_2+L-Q+(\min(d_2,\ell)-\widetilde{C})} \otimes 
		\inlinetab{\bfze}^{\otimes M} \otimes \inlinetab{\bfon}^{\otimes d_2-M}
		\otimes u_1^{\otimes \infty}
	\end{equation}
	and if $ C_2+L-Q = \widetilde{C}-1-M $ we find that
	\begin{equation}\label{eq:ht1sizechange}
		T^L_\ell\mathcal{F}(P) = 
		u_1^{\otimes \widetilde{C}} \otimes \inlinetab{\bfon}^{\otimes d_1-L-1}
		\otimes u_1^{\otimes \ell-1-M} \otimes 
		\inlinetab{\bfze}^{\otimes M} \otimes \inlinetab{\bfon}^{\otimes d_2-M + 1}
		\otimes u_1^{\otimes \infty}
	\end{equation}
	(noting that $ \min(\ell-1-M,d_2-M)=\ell-1-M $).
	
	Let $ P=P_1\otimes P_2\otimes P_3\otimes\cdots $
	and let $ u(j) $ denote the carrier after applying the $ R $-matrix to $ P $ $ j $-times.
	That is,
	\[
	R_{j-1}\cdots R_1R_0(u_{\ell}\otimes P)
	=
	\widetilde{P}_1\otimes \cdots \otimes \widetilde{P}_j \otimes u(j) \otimes P_{j+1}\otimes P_{j+2}\otimes \cdots
	\]
	for some $ \widetilde{P}_1,\ldots,\widetilde{P}_j $.
	Note that $ u(0)=u_\ell $.
	Let
	\[
	u(j) = 
	\begin{array}{|c|}
		\hline
		\bfth^{A_3(j)}\bftw^{A_2(j)}\bfon^{A_1(j)}\\
		\bftw^{B_2(j)}\bfon^{B_1(j)}\bfze^{B_0(j)}\\
		\hline
	\end{array}
	\]
	where $ A_3,A_2,A_1,B_2,B_1,B_0 $ are functions on non-negative integers.
	We will often write $ A_3 $ for $ A_3(j) $, with $ j $ being clear from context.
	Similarly for the other functions.
	
	From Lemma~\ref{lem:classRmat}~\ref{enum:o},
	with the assumption $ C_1=0 $,
	we find that
	$ \widetilde{P}_1=\cdots=\widetilde{P}_{\min(d_1-Q,\ell)}=u_1 $
	and 
	\[
	u(\min(d_1-Q,\ell)) =
	\begin{array}{|c|}
		\hline
		\bfth^{\ell}\\
		\bftw^{B_2}\bfon^{B_1}\\
		\hline
	\end{array}
	\]
	with $ B_1 = \min(d_1-Q,\ell) $ and $ B_2=\ell-B_1 $.
	If $ Q=0 $, then we clearly have $ \widetilde{P}_1=\cdots=\widetilde{P}_{\min(d_1,\ell)}=u_1 $,
	and if $ Q>0 $ (so that $ C_2=0 $) then, by~\ref{enum:q}, these equalities also hold.
	That is, the first $ \widetilde{C}=\min(d_1,\ell) $ tableaux in $ T_\ell(P) $ are $ u_1 $.
	This matches the first $ \widetilde{C} $ tableaux in $ \mathcal{F}^{-1}T^L_\ell\mathcal{F}(P) $
	(cf.~\eqref{eq:ht1sizesame} and~\eqref{eq:ht1sizechange}).
	Note that
	\[
	u(\widetilde{C}) = 
	\begin{array}{|c|}
		\hline
		\bfth^{\ell-A_1}\bfon^{A_1}\\
		\bftw^{B_2}\bfon^{B_1} \bfze^{A_1}\\
		\hline
	\end{array}
	\]
	where $ B_1=\min(d_1-Q,\ell) $, \, $ A_1=\widetilde{C}-B_1 $ and $ B_2=\ell-\widetilde{C} $.
	
	In the case where $ d_1\le\ell $ and $ C_2>0 $ or in the case where $ d_1-Q>\ell $, we have 
	\begin{equation}\label{eq:firstsoliton}
		\widetilde{P}_{\widetilde{C}+1} = 
		\begin{array}{|c|}
			\hline
			\bfth\\
			\bfon\\
			\hline
		\end{array}
	\end{equation}
	by~\ref{enum:p}.
	If $ d_1-Q\le \ell <d_1 $ (which implies $ Q>0 $),
	then~\eqref{eq:firstsoliton} holds by~\ref{enum:s}.
	If $ d_1\le \ell $ and $ C_2=0 $, then by~\ref{enum:k},\ref{enum:l},\ref{enum:m} or \ref{enum:n},
	we have that~\eqref{eq:firstsoliton} holds.
	Note that~\ref{enum:e} and~\ref{enum:f} do not apply, since $ B_1=d_1-Q\ge d_1-L>0 $
	and $ B_2+B_1 = A_3 $, so we cannot have $ B_2=A_3 $.
	\ref{enum:o} and~\ref{enum:p} do not apply,
	for otherwise $ C_2=0 $
	and $ M=Q $ (since $ B_1=d_1-Q>0 $ which would imply $ \mathbf{y}=\bfon $),
	so $ C_2+L-Q\ge \widetilde{C}-1-M $ if and only if $ L\ge \widetilde{C}-1 = d_1-1 $.
	But since $ d_1-L>0 $, we must have $ L=d_1-1 $.
	Then we have $ C_2+L-Q = \widetilde{C}-1-M $,
	which implies that $ d_1-L-1>0 $ by assumption, a contradiction.
	
	Therefore,
	the $ (\widetilde{C}+1) $-th tableaux in $ T_\ell(P) $ 
	is the same as the $ (\widetilde{C}+1) $-th tableaux in $ \mathcal{F}^{-1}T^M_\ell\mathcal{F}(P) $.
	In particular, the first soliton starts at the same point in both of these states.
	
	By~\ref{enum:p} and~\ref{enum:s}, we have that 
	\[
	\widetilde{P}_j =
	\begin{array}{|c|}
		\hline
		\bfth\\
		\bfon\\
		\hline
	\end{array}
	\qquad \text{and} \qquad
	u(d_1+C_2)
	=
	\begin{array}{|c|}
		\hline
		\bfth^{\ell-{A_1}}\bfon^{A_1}\\
		\bftw^{\ell-B_1-A_1} \bfon^{B_1} \bfze^{A_1}\\
		\hline
	\end{array}
	\]
	for $\widetilde{C}+1 \leq j \leq d_1 + C_2$,\, $ A_1=Q $ and $ B_1=\widetilde{C}-C_2-Q $.
	Observe that $ B_1>0 $.
	Indeed, if $ A_1=Q>0 $ then $ C_2=0 $,
	and the fact that $ Q<\widetilde{C} $ (which follows from $ Q\le L<\ell $ and $ Q\le L<d_1 $)
	implies that $ B_1=\widetilde{C}-Q>0 $.
	Alternatively, if $ A_1=Q=0 $ then the assumption that $ C_2<\min(d_1,\ell)=\widetilde{C} $ implies that
	$ B_1=\widetilde{C}-C_2>0 $.
	Note that a consequence of $ B_1>0 $ is that $ B_2(d_1+C_2)<A_3(d_1+C_2) $.
	
	Now, we investigate what happens when we move the carrier starting from $ u(d_1+C_2) $ 
	and pausing when we reach $ u((d_1-Q)+C_2+L) $.
	We find that Case~\ref{enum:k} applies until $ B_1=1 $ or we reach $ u((d_1-Q)+C_2+M) $
	(note that since $ B_2+B_1 = A_3 $ and $ B_1>0 $ when applying Case~\ref{enum:k}, we always have $ B_2 < A_3 $);
	then Case~\ref{enum:l} applies until $ B_2=A_3 $ 
	(note that, in Case~\ref{enum:l}, if $ B_1(j)=1 $ then $ A_3(j)=B_2(j)+1 $,
	so if $ B_1(j+1)=0 $ then $ A_3(j+1)=B_2(j)=B_2(j+1) $);
	Case~\ref{enum:e} applies until $ A_1=0 $; and from then onwards,~\ref{enum:f} applies.
	
	\begin{remark}
		Note that the only way to leave~\ref{enum:f} is to have $ B_2=A_3=0 $.
		Since $ A_3 $ is always decremented by one in~\ref{enum:k}, \ref{enum:l}, \ref{enum:e}, and \ref{enum:f},
		to have $ A_3=0 $ we would need $ (d_1-Q+C_2+L)-(d_1+C_2) = L-Q \ge A_3(d_1+C_2) = \ell-Q $.
		However, this inequality does not hold, because we assumed $ L<\ell $.
		We have thus shown that~\ref{enum:f} is the last possible case,
		and consequently that these four cases are the only four which we can encounter.
		However, note that some of these cases may be skipped,
		and that we might not reach the last case before we reach $ u((d_1-Q)+C_2+L) $.
	\end{remark}
	
	Next, we investigate what happens when we move the carrier starting from $ u((d_1-Q)+C_2+L) $
	and continuing until we complete the time evolution.
	We find that Case~\ref{enum:m} applies until $ B_1=0 $ or we reach $ u((d_1-Q)+C_2+M) $;
	then Case~\ref{enum:n} applies until $ B_2=0 $ or $ A_1=0 $, or until we reach $ u((d_1-Q)+C_2+d_2) $;
	Case~\ref{enum:o} applies until $ B_2=0 $ or until we reach $ u((d_1-Q) + C_2 + d_2) $;
	and from then onwards,~\ref{enum:p} applies.
	Observe that~\ref{enum:m} can only be preceded by~\ref{enum:k} or~\ref{enum:m};
	and that~\ref{enum:n} can only be preceded by~\ref{enum:k}, \ref{enum:l}, \ref{enum:e}, \ref{enum:m} or \ref{enum:n}.
	
	\begin{remark}
		Under the assumptions of Lemma~\ref{lem:classRmat},
		Cases~\ref{enum:m},~\ref{enum:n},~\ref{enum:o}~and~\ref{enum:p}
		are the only possible cases.
		However, note that some of these cases may be skipped.
	\end{remark}
	
	\begin{obs}\label{obs:1before2}
		The cases which put down a $ \bfon $ in the bottom row occur before
		the cases which put down a $ \bftw $ in the bottom row.
		Moreover, the cases following a $ \bftw $ in the bottom row
		only put down a $ \bftw $ (except in the final case,~\ref{enum:p}).
		Only $ \bfth $'s are put down in the second last row,
		except in~\ref{enum:p}.
	\end{obs}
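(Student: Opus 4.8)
The plan is to read the observation off from the case analysis just completed, after tabulating the relevant data from Lemma~\ref{lem:classRmat}; the relevant scope is the set of cases applied from the emission of the first soliton onward, the leading block of $u_1$'s being treated separately. First I would record, for each case, the bottom-row entry and the second-last-row entry of the tableau it emits. Reading these off from Lemma~\ref{lem:classRmat}: the emitted bottom-row entry is $\bfon$ in cases \ref{enum:s}, \ref{enum:k}, \ref{enum:l}, \ref{enum:e}, \ref{enum:m}, \ref{enum:n} and in those applications of \ref{enum:p} that precede the final block (these emit the first-soliton column), while it is $\bftw$ in cases \ref{enum:f} and \ref{enum:o}; the emitted second-last-row entry is $\bfth$ in cases \ref{enum:l}, \ref{enum:e}, \ref{enum:f}, \ref{enum:n}, \ref{enum:o} and equals $\mathbf{h}_{a_3+a_2}$ in cases \ref{enum:s}, \ref{enum:k}, \ref{enum:m}.

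For the first two assertions I would invoke the order of cases established above, namely that after the first soliton is emitted the cases occur in the pattern: a (possibly empty) block of \ref{enum:k}'s, then of \ref{enum:l}'s, \ref{enum:e}'s and \ref{enum:f}'s, then of \ref{enum:m}'s, \ref{enum:n}'s, \ref{enum:o}'s and \ref{enum:p}'s, subject to the constraints already proved that \ref{enum:m} can be immediately preceded only by \ref{enum:k} or \ref{enum:m}, and \ref{enum:n} only by \ref{enum:k}, \ref{enum:l}, \ref{enum:e}, \ref{enum:m} or \ref{enum:n}. These constraints forbid \ref{enum:m} or \ref{enum:n} after any application of \ref{enum:f}, so once \ref{enum:f} or \ref{enum:o} has been applied the only subsequent cases are \ref{enum:f}, \ref{enum:o} and \ref{enum:p}; as \ref{enum:f} and \ref{enum:o} emit $\bftw$ in the bottom row and \ref{enum:p} is the exceptional final case, this is precisely the second assertion. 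The first assertion then follows, because every $\bfon$-emitting case (the first-soliton applications of \ref{enum:p} together with \ref{enum:k}, \ref{enum:l}, \ref{enum:e}, \ref{enum:m}, \ref{enum:n}) necessarily precedes the $\bftw$-emitting cases \ref{enum:f} and \ref{enum:o}.

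For the third assertion, the only cases whose emitted second-last-row entry is not manifestly $\bfth$ are \ref{enum:p} (which is excluded) and \ref{enum:s}, \ref{enum:k}, \ref{enum:m}, which emit $\mathbf{h}_{a_3+a_2}$; so it suffices to show that the carrier has $A_2 = 0$ whenever one of \ref{enum:s}, \ref{enum:k}, \ref{enum:m} is applied, forcing $\mathbf{h}_{a_3+a_2} = \bfth$. I would prove this by induction along the time step: the carrier begins as $u_\ell$, which has $A_2 = 0$; a direct check of the output formulae shows that cases \ref{enum:o}, \ref{enum:q}, \ref{enum:p}, \ref{enum:s}, \ref{enum:k} and \ref{enum:m} each send a carrier with $A_2 = 0$ to one with $A_2 = 0$; and, by the case order and the preceding-case constraints, every application of \ref{enum:s}, \ref{enum:k} or \ref{enum:m} is reached through cases drawn only from this list. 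I expect this last point---verifying the $A_2 = 0$ invariant and confirming that the cases leading up to \ref{enum:s}, \ref{enum:k}, \ref{enum:m} all lie in the $A_2$-preserving list---to be the only step requiring genuine care; the rest is immediate from the tables compiled in the first paragraph.
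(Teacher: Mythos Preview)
Your proposal is correct and matches the paper's approach: the paper states this as an observation to be read off directly from the preceding case analysis, and you have done exactly that, supplying the bookkeeping the paper leaves implicit. Your $A_2=0$ invariant argument for the third assertion (showing that \ref{enum:s}, \ref{enum:k}, \ref{enum:m} emit $\mathbf{h}_{a_3+a_2}=\bfth$) is the natural way to make that point rigorous, and your verification that the cases leading up to these lie in the $A_2$-preserving list $\{\ref{enum:o},\ref{enum:q},\ref{enum:p},\ref{enum:s},\ref{enum:k},\ref{enum:m}\}$ is sound.

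One minor remark on scope: the paper's observation is really about the segment starting from $u(d_1+C_2)$, i.e.\ after the first soliton has been emitted, whereas you fold the first-soliton emission (via \ref{enum:p} and \ref{enum:s}) into your analysis. This is harmless---indeed it dovetails with how the observation is subsequently used to deduce the shape $(\bfth/\bftw)^{\otimes\widetilde C}\otimes(\bfth/\bfon)^{\otimes\widetilde d_1}\otimes(\bfth/\bftw)^{\otimes\widetilde C_2}$---but it does mean you are implicitly relying on the fact that the first-soliton applications of \ref{enum:p} emit $\bfon$ in the bottom row and $\bfth$ in the second-last row, which holds because the carrier there has $A_1=0$ or $B_0=0$ and $A_2=0$. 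You gesture at this, and it checks out.
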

	
	Let $ N(j) $ be the number of $ \bfon $'s in $ \widetilde{P}_1\otimes\cdots\otimes\widetilde{P}_j $.
	\begin{lemma}\label{lem:B_1takeA_2}
		For every case except~\ref{enum:o} and~\ref{enum:p}, we have $ d_1=N+\max(B_1-A_2,0)+A_1 $.
	\end{lemma}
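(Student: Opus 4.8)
The plan is to argue by induction along the carrier sweep, i.e.\ on the number $j$ of applications of the $R$-matrix, carrying along the values of $A_3(j),A_2(j),A_1(j),B_2(j),B_1(j),B_0(j)$ together with $N(j)$. For the base case I would take the step just computed above at which the carrier has finished absorbing the leading run of $\begin{smallmatrix}\bfX\\\bfth\\\bfon\end{smallmatrix}$-columns, the following $u_1$'s, and the block of $\begin{smallmatrix}\bfon\\\bfze\end{smallmatrix}$-columns handled by cases~\ref{enum:q},~\ref{enum:s}; this is $u(d_1+C_2)=\begin{smallmatrix}\bfth^{\ell-A_1}\bfon^{A_1}\\\bftw^{\ell-B_1-A_1}\bfon^{B_1}\bfze^{A_1}\end{smallmatrix}$, with $A_2=0$, $A_1=Q$, $B_1=\min(d_1,\ell)-C_2-Q>0$ and $N=d_1+C_2-\min(d_1,\ell)$, so $\max(B_1-A_2,0)=B_1$ and $N+\max(B_1-A_2,0)+A_1=d_1$. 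Up to that step the carrier is still being loaded with the first soliton; cases~\ref{enum:o} and~\ref{enum:p} are the instances of this that occur later in the sweep and are excluded from the statement for that reason.

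For the inductive step I would run through the possibilities that arise afterwards—\ref{enum:k},~\ref{enum:l},~\ref{enum:e},~\ref{enum:f},~\ref{enum:m} and~\ref{enum:n} (processing the $\bftw$-block and then the $\bfth$-block)—and, by reading the two bottom rows of the carrier off each case, record how one step changes $N$, $A_1$, $A_2$ and $B_1$: for example $(\Delta N,\Delta A_1,\Delta A_2,\Delta B_1)=(1,1,0,-2)$ in~\ref{enum:k}, $(1,0,1,0)$ or $(1,0,1,-1)$ in~\ref{enum:l} according to the value of $\mathbf{y}$, $(0,0,1,0)$ or $(0,0,1,1)$ in~\ref{enum:f}, and analogously for~\ref{enum:e},~\ref{enum:m},~\ref{enum:n}. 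One then checks that each of these leaves $N+\max(B_1-A_2,0)+A_1$ unchanged; because of the $\max$ this splits according to whether $B_1\ge A_2$ at the step in question, and the needed information is exactly what the discussion above provides—$A_2=0$ as long as only~\ref{enum:k} (or~\ref{enum:m}) has been applied since the base point, the relation $B_2+B_1=A_3+A_2$ while~\ref{enum:k} or~\ref{enum:l} is in force, $B_1=0$ by the time~\ref{enum:e} or the $\mathbf{y}=\bfze$ branch of~\ref{enum:n} is reached, and the constraints on which case may precede which (Observation~\ref{obs:1before2} and the accompanying remarks). Assembling the base case with this per-case invariance yields the lemma.

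The step I expect to be the main obstacle is precisely this case-by-case invariance check: there are several cases, some with further sub-splits on the value of $\mathbf{y}$ and on the value of the $\max$, and for each one must invoke exactly the right structural fact from the surrounding analysis to control $B_1$ relative to $A_2$; moreover, where the step preceding the current one is itself~\ref{enum:o} or~\ref{enum:p} one must re-derive the identity from the explicit form of $u(j)$ rather than simply inheriting it. The individual computations are routine; the content lies in matching the per-case increments of Lemma~\ref{lem:classRmat} to the constraints on the order in which the cases can occur.
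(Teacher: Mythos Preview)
Your plan is essentially the paper's proof: induction along the sweep with base point $u(d_1+C_2)$, followed by a case-by-case check that each of \ref{enum:k}, \ref{enum:l}, \ref{enum:e}, \ref{enum:f}, \ref{enum:m}, \ref{enum:n} preserves $N+\max(B_1-A_2,0)+A_1$, using the order constraints among cases and the relation $B_2+B_1=A_3+A_2$ to control the sign of $B_1-A_2$. One small correction: it is not $B_1=0$ that holds when \ref{enum:e} (or \ref{enum:f}) is first reached, but rather $B_1-A_2\le 0$, which follows from $B_2=A_3$ together with $B_2+B_1=A_3+A_2$ at the preceding \ref{enum:k}/\ref{enum:l} step; also, there is no $\mathbf{y}=\bfze$ branch of \ref{enum:n}, and since \ref{enum:o}/\ref{enum:p} only occur after all the cases in question, the concern about re-deriving the identity after them does not arise.
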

	\begin{proof}
		Observe that 
		\begin{align*}
			N(d_1+C_2) + B_1(d_1+C_2) - A_2(d_1+C_2) + A_1(d_1+C_2) 
			&=(d_1+C_2-\widetilde{C})+(\widetilde{C}-C_2-Q)-0+Q \\
			&= d_1.
		\end{align*}
		We proceed by induction.
		\begin{itemize}
			\item If Case~\ref{enum:k} applies to $ u(j) $,
			then $ N(j+1) = N(j)+1 $, $ B_1(j+1) = B_1(j)-2 $, $ A_2(j+1)=0 $, $ A_1(j+1)=A_1(j)+1 $,
			so $  N(j+1)+B_1(j+1)-A_2(j+1)+A_1(j+1) = d_1 $ by the inductive hypothesis
			(noting that $ B_1(j+1)-A_2(j+1)\ge 0 $).
			
			Observe that $ B_2(j+1)+B_1(j+1)=A_3(j+1)+A_2(j+1) $,
			so if $ B_2(j+1)=A_3(j+1) $ then $ B_1(j+1)-A_2(j+1)=0 $.
			
			\item If Case~\ref{enum:l} applies to $ u(j) $ and $ j\ge (d_1-Q)+C_2+M $ (i.e.\ $ \mathbf{y}=\bfon $),
			then $ N(j+1) = N(j)+1 $, $ B_1(j+1) = B_1(j) $, $ A_2(j+1)=A_2(j)+1 $, $ A_1(j+1)=A_1(j) $,
			so $  N(j+1)+B_1(j+1)-A_2(j+1)+A_1(j+1) = d_1 $ by the inductive hypothesis.
			Since~\ref{enum:l} decreases $ B_1-A_2 $ by $ 1 $,
			and $ B_1(j)-A_2(j) = A_3(j)-B_2(j) > 0  $,
			we have that $ B_1(j+1)-A_2(j+1) \ge 0 $.
			
			Alternatively, if $ j<(d_1-Q)+C_2+M $ (i.e.\ $ \mathbf{y}\neq \bfon $) then $ B_1(j)=1 $.
			Since $ 1-A_2(j) = B_1(j)-A_2(j) = A_3(j)-B_2(j) > 0 $,
			we have that $ A_2(j)=0 $.
			Thus, $ B_1(j+1)-A_2(j+1) = -1 $.
			Therefore,
			\begin{align*}
				N(j+1) + \max(B_1(j+1)+A_2(j+1), 0) + A_1(j+1)
				&= (N(j) + 1) + 0 + A_1(j)\\
				&= N(j) + B_1(j)-A_2(j) + A_1(j)\\
				&=d_1.
			\end{align*}
			
			Observe that, if $ j\ge (d_1-Q)+C_2+M $
			then $ B_2(j+1)+B_1(j+1)=A_3(j+1)+A_2(j+1) $.
			And, if $ j<(d_1-Q)+C_2+M $ then $ B_1(j+1)=0 $.
			In either case, if $ A_3(j+1)=B_2(j+1) $ then $ B_1(j+1)-A_2(j+1)\le 0 $.
			
			\item Assume Case~\ref{enum:k}~or~\ref{enum:l} 
			applies to $ u(j_0) $ and Case~\ref{enum:e} applies to $ u(j_0+1) $.
			(Note that the first~\ref{enum:e} \emph{must} be preceded by~\ref{enum:k} or~\ref{enum:l} because $ B_2(d_1+C_2)<A_3(d_1+C_2) $.)
			Since we have $ A_3(j_0+1)=B_2(j_0+1) $,
			we have $ B_1(j_0+1)-A_2(j_0+1)\le 0 $,
			as noted in the above cases.
			
			That is, the first application of~\ref{enum:e} has $ B_1-A_2\le 0 $.
			Since further applications of~\ref{enum:e} only decrease
			the value of $ B_1-A_2 $, we have
			$ N + \max(B_1-A_2,0) + A_1 = N+A_1 $.
			Since~\ref{enum:e} increases $ N $ by $ 1 $ and decreases $ A_1 $ by $ 1 $,
			we have $ N+A_1=d_1 $ by the inductive hypothesis.
			
			\item Assume $ u(j_0+1) $ is the first carrier to which Case~\ref{enum:f} applies.
			If Case~\ref{enum:k}~or~\ref{enum:l} applied to $ u(j_0) $,
			then we have $ B_1(j_0+1)-A_2(j_0+1)\le 0 $ since $ A_3(j_0+1)=B_2(j_0+1) $, as noted in the above cases.
			If Case~\ref{enum:e} applies to $ u(j_0) $ then,
			as above, we have that $ B_1(j_0+1)-A_2(j_0+1)\le 0 $.
			(Note that the first~\ref{enum:f} \emph{must} be preceded by~\ref{enum:k},\ref{enum:l} or \ref{enum:e} because $ B_2(d_1+C_2)<A_3(d_1+C_2) $.)
			
			Since~\ref{enum:f} can only decrease the value of $ B_1-A_2 $,
			we have that $ B_1(j)-A_2(j)\le 0 $ for $ j > j_0 $.
			Therefore 
			\[
			N(j+1)+\max(B_1(j+1)-A_2(j+1),0) + A_1(j+1) = N(j)+A_1(j) = d_1
			\]
			by the inductive hypothesis.
			\item Case~\ref{enum:m} can only be preceded by Case~\ref{enum:k} or else is the first case.
			Regardless, $ A_2=0 $ for~\ref{enum:m}.
			Therefore, $ B_1-A_2\ge 0 $.
			Moreover, if~\ref{enum:m} applies to $ u(j) $
			then $ N(j+1)=N(j)+1,\,B_1(j+1)=B_1(j)-1,\,A_2(j+1)=0,\,A_1(j+1)=A_1(j) $.
			Therefore, $ N(j+1)+B_1(j+1)-A_2(j+1)+A_1(j+1)=d_1 $ by the inductive hypothesis.
			\item Assume~\ref{enum:n} applies to $ u(j) $.
			If $ B_1(j)=0 $ then $ B_1(j+1)-A_2(j+1)\le B_1(j)-A_2(j)\le 0 $,
			so $ \max(B_1(j+1)-A_2(j+1),0)=0 $.
			If $ B_1(j)\neq 0 $, we have that $ B_1(j+1)=B_1(j)+1 $ and $ A_2(j+1)=A_2(j)+1 $,
			so $ \max(B_1(j+1)-A_2(j+1),0)=\max(B_1(j)-A_2(j),0) $.
			
			Additionally, $ N(j+1)=N(j)+1 $ and $ A_1(j+1)=A_1(j)-1 $.
			Therefore,
			\[
			N(j+1)+\max(B_1(j+1)-A_2(j+1),0)+A_1(j+1) = d_1
			\]
			by the inductive hypothesis. \qedhere
		\end{itemize}
	\end{proof}

	\begin{lemma}\label{lem:calcB_1takeA_2}
		If $ C_2+L-Q>\widetilde{C}-1-M $ then $ B_1-A_2 \le 0 $ immediately before the first~\ref{enum:o}/\ref{enum:p}.
		If $ C_2+L-Q=\widetilde{C}-1-M $ then $ B_1-A_2 = 1 $ immediately before the first~\ref{enum:o}/\ref{enum:p}.
	\end{lemma}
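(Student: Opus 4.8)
The plan is to evaluate $B_1-A_2$ at the carrier $u(j_0)$ sitting immediately before the first application of~\ref{enum:o} or~\ref{enum:p}, by combining the identity of Lemma~\ref{lem:B_1takeA_2} with a count of the columns emitted up to that point. By the case analysis carried out just above the statement, $u(j_0)$ is produced by one of~\ref{enum:k},~\ref{enum:l},~\ref{enum:e},~\ref{enum:f},~\ref{enum:m},~\ref{enum:n} (together with~\ref{enum:o},~\ref{enum:p} these are the only cases occurring after $u(d_1+C_2)$, and~\ref{enum:o},~\ref{enum:p} come last), so Lemma~\ref{lem:B_1takeA_2} applies at $u(j_0)$:
\[
\max\!\bigl(B_1(j_0)-A_2(j_0),\,0\bigr) = d_1 - N(j_0) - A_1(j_0),
\]
where $N(j_0)$ denotes the number of columns carrying $\bfon$ in the bottom row among the first $j_0$ emitted. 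It thus suffices to prove $A_1(j_0)=0$ and that $N(j_0)$ equals $d_1$ in the regime $C_2+L-Q>\widetilde C-1-M$ and $d_1-1$ in the regime $C_2+L-Q=\widetilde C-1-M$.

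For $A_1(j_0)=0$ I would track $A_1$ from $u(d_1+C_2)$, where $A_1=Q$: case~\ref{enum:k} raises it by one, cases~\ref{enum:l},~\ref{enum:f},~\ref{enum:m} fix it, and cases~\ref{enum:e},~\ref{enum:n} lower it by one. Case~\ref{enum:e} runs until $A_1=0$, and when it is skipped the next case is~\ref{enum:f}, which has $A_1=0$; case~\ref{enum:n} lowers $A_1$ and $B_2$ in lockstep, and since $B_2$ stays positive until the carrier starts emitting (using $L<\ell$, by the remark after Lemma~\ref{lem:B_1takeA_2}) the chain can pass to~\ref{enum:o}/\ref{enum:p} only after $A_1$ has reached $0$ — the only escape being exhaustion of the trailing soliton with $A_1>0$, which would leave the carrier not yet full and contradict $C_2+L-Q\ge\widetilde C-1-M$ (the boundary instance $C_2+L-Q=\widetilde C-1-M$, $d_1=L+1$ being excluded by the final hypothesis of Proposition~\ref{prop:ht1isomorphism}). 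Hence $\max(B_1(j_0)-A_2(j_0),0)=d_1-N(j_0)$.

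To compute $N(j_0)$ I would continue the case-by-case tracking of the emitted columns alongside the carrier, using Observation~\ref{obs:1before2} (the bottom-$\bfon$ columns emitted so far all precede any bottom-$\bftw$ column) and comparing with the known form of $T^L_\ell\mathcal F(P)$ via the isomorphism of Proposition~\ref{prop:ht1isomorphism}. In the regime $C_2+L-Q>\widetilde C-1-M$, equation~\eqref{eq:ht1sizesame} shows the leading soliton re-emerges with all $d_1-L$ of its non-vacuum columns, and restoring the $Q$ bottom-$\bfon$ columns absorbed into the head of the trailing soliton under $\mathcal F^{-1}$ gives $N(j_0)=d_1$, so $B_1(j_0)-A_2(j_0)\le 0$. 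In the regime $C_2+L-Q=\widetilde C-1-M$, equation~\eqref{eq:ht1sizechange} shows exactly one bottom-$\bfon$ column is transferred from the leading soliton to the trailing one during the collision, so $N(j_0)=d_1-1$ and $B_1(j_0)-A_2(j_0)=1$.

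The main obstacle is the bookkeeping in the two preceding paragraphs: which of~\ref{enum:k}--\ref{enum:n} actually occur, and for how many steps, depends on the relative sizes of $L,M,C_2,Q$, so one must check that the terminal value $A_1(j_0)=0$ and the count $N(j_0)$ are independent of the particular sub-chain realised; one must also dispose separately of the degenerate situation of Remark~\ref{rmk:exceptall2}, where $d_1=L$ and $\mathcal F^{-1}$ is unavailable, by treating the entire trailing soliton as a single exceptional block.
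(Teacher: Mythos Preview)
Your proposal has a genuine circularity. To compute $N(j_0)$ you appeal to ``the isomorphism of Proposition~\ref{prop:ht1isomorphism}'', using equations~\eqref{eq:ht1sizesame} and~\eqref{eq:ht1sizechange} to read off how many bottom-$\bfon$ columns the height-$2$ carrier has emitted. But those equations compute $T^L_\ell\mathcal F(P)$, the \emph{height-$1$} side; to transfer that information to $T_\ell(P)$ you need the identity $\mathcal F T_\ell(P)=T^L_\ell\mathcal F(P)$, which is exactly the content of Proposition~\ref{prop:ht1isomorphism}. In the paper's logical flow, Lemma~\ref{lem:calcB_1takeA_2} is one of the ingredients used to \emph{establish} that identity (it pins down $\widetilde d_1$ via the value of $\max(B_1-A_2,0)$, which is then matched against~\eqref{eq:ht1sizesame}/\eqref{eq:ht1sizechange}). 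So invoking the proposition here assumes what the lemma is meant to help prove.

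A secondary issue: your claim $A_1(j_0)=0$ is not generally true and your justification does not establish it. Case~\ref{enum:p} can fire with $A_1>0$ whenever $B_2=0$ (with $B_1=0$ or $\mathbf y=\bfon$); the paper explicitly treats this possibility in the paragraph following the lemma, handling the $\widetilde C_2=0$ situation by noting there are $A_1(\widetilde C+\widetilde d_1)$ factors with $\bfon$ in the second-last row. Your argument that ``$B_2$ stays positive until the carrier starts emitting (using $L<\ell$)'' conflates the $A_3$-bound in the remark after the $\bftw$-phase analysis with a bound on $B_2$; the two quantities are not the same, and $A_1-B_2$ is invariant through~\ref{enum:k}--\ref{enum:e}, \ref{enum:m}, \ref{enum:n}, so if $A_1(d_1+C_2)>B_2(d_1+C_2)$ (which happens, for instance, when $d_1\ge\ell$ and $Q>0$) then $B_2$ hits zero first.

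The paper avoids both problems by tracking $B_1-A_2$ directly from the known initial data $B_1(d_1+C_2)=\widetilde C-C_2-Q$, $A_2(d_1+C_2)=0$. It splits into three sub-cases (governed by which of~\ref{enum:k},~\ref{enum:l},~\ref{enum:m} exhausts first, i.e.\ by the relative sizes of $L-Q$, $M-Q$ and $\lfloor B_1(d_1+C_2)/2\rfloor$), counts exactly how many applications of each R-matrix case occur before the first~\ref{enum:o}/\ref{enum:p}, and reads off $B_1-A_2$ at that step. No appeal to Lemma~\ref{lem:B_1takeA_2} or to the height-$1$ comparison is needed.
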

	\begin{proof}
		We consider three cases.
		\begin{itemize}
			\item Assume $ C_2+L-Q>\widetilde{C}-M-1 $.
			Recall that $ B_1(d_1+C_2)=\widetilde{C}-C_2-Q $ and $ A_2(d_1+C_2)=0 $.
			We can apply~\ref{enum:k} $ \lfloor B_1(d_1+C_2)/2 \rfloor $ times or $ M-Q $
			or $ L-Q $ times, whichever is smaller.
			
			If we apply~\ref{enum:k} $ \lfloor B_1(d_1+C_2)/2 \rfloor $ times,
			then $ B_1=0 $ or $ B_1=1 $ (and $ A_2=0 $).
			If $ B_1=0 $ then we are done,
			since $ B_1-A_2\le0 $ and the cases afterwards can only decrease the value of $ B_1-A_2 $.
			If $ B_1=1 $ and Case~\ref{enum:l}~or~\ref{enum:m} applied,
			then the value of $ B_1-A_2 $ will decrease by $ 1 $ or more,
			so $ B_1-A_2\le 0 $.
			If Case~\ref{enum:e}~or~\ref{enum:f} applied,
			then we would already have $ B_1-A_2\le 0 $ 
			(as in the proof of Lemma~\ref{lem:B_1takeA_2}).
			Also note that if $ B_1=1 $ then~\ref{enum:n}, \ref{enum:o}, \ref{enum:p} do not apply,
			for otherwise we would have 
			$ \lfloor B_1(d_1+C_2)/2\rfloor = (\widetilde{C}-C_2-Q-1)/2=L-Q=M-Q $ 
			and hence $ \widetilde{C}-M-1=C_2+L-Q $,
			contradicting $ \widetilde{C}-M-1<C_2+L-Q $.
			After the $ B_1=1 $ case, the value of $ B_1-A_2 $ can only decrease.
			
			Alternatively, if we apply~\ref{enum:k} $ M-Q $ times, then
			$ B_1=\widetilde{C}-C_2-Q-2(M-Q) \le L-M $ (and $ A_2=0 $).
			If we ever apply~\ref{enum:e} or~\ref{enum:f} then $ B_1-A_2\le 0 $,
			as in the proof of Lemma~\ref{lem:B_1takeA_2}.
			If we never apply~\ref{enum:e} or~\ref{enum:f},
			then we must apply~\ref{enum:l} $ L-M $ times.
			Thus, $ B_1-A_2 = \widetilde{C}-C_2-Q-2(M-Q) - (L-M) \le 0 $.
			The remaining cases can only decrease the value of $ B_1-A_2 $.
			
			If we apply~\ref{enum:k} $ L-Q $ times, then
			$ B_1=\widetilde{C}-C_2-Q-2(L-Q)\le M-L $ (and $ A_2=0 $).
			Then, we apply~\ref{enum:m} $ \widetilde{C}-C_2-Q-2(L-Q) $ times,
			so we have $ B_1=0 $ and $ A_2=0 $.
			Since~\ref{enum:n} can only decrease the value of $ B_1-A_2 $,
			we have that $ B_1-A_2\le 0 $.
			
			\item Assume $ C_2+L-Q=\widetilde{C}-1-M $ and $ L\ge M $.
			Recall that $ B_1(d_1+C_2)=\widetilde{C}-C_2-Q $ and $ A_2(d_1+C_2)=0 $.
			We can apply~\ref{enum:k} $ \lfloor B_1(d_1+C_2)/2 \rfloor $ times or $ M-Q $ times,
			whichever is smaller.
			Observe that $ \widetilde{C}+Q-C_2=L+M+1\ge 2M+1 $, so $ \widetilde{C}-Q-C_2\ge 2(M-Q)+1 $.
			Therefore, $ M-Q\le\lfloor B_1(d_1+C_2)/2 \rfloor $ and we can only apply~\ref{enum:k} $ M-Q $ times.
			We find $ B_1(d_1+C_2 + M-Q)=\widetilde{C}-C_2-Q - 2(M-Q) $ and $ A_2(d_1+C_2+M-Q)=0 $.
			Then, we apply~\ref{enum:l} $ L-M $ times,
			with $ B_1 $ remaining the same and $ A_2(d_1+C_2+L-Q) = L-M $.
			That is,
			\[
			B_1(d_1+C_2+L-Q) - A_2(d_1+C_2+L-Q)
			= \widetilde{C}-C_2+Q -M - L = 1.
			\]
			(Note that we were able to apply~\ref{enum:l} $ L-M $ times because $ B_1-A_2>0 $;
			we would have $ B_1-A_2\le 0 $ when applying~\ref{enum:e} or~\ref{enum:f},
			as in the proof of Lemma~\ref{lem:B_1takeA_2}.)
			Then we apply~\ref{enum:n}, which does not change the value of $ B_1-A_2 $ in this case.
			
			\item Assume $ C_2+L-Q=\widetilde{C}-1-M $ and $ L<M $.
			Recall that $ B_1(d_1+C_2)=\widetilde{C}-C_2-Q $ and $ A_2(d_1+C_2)=0 $.
			We can apply~\ref{enum:k} $ \lfloor B_1(d_1+C_2)/2 \rfloor $ times or $ L-Q $ times,
			whichever is smaller.
			Observe that $ \widetilde{C}-C_2+Q=L+M+1> 2L+1 $, so $ \widetilde{C}-C_2-Q> 2(L-Q)+1 $.
			Therefore, $ L-Q\le\lfloor B_1(d_1+C_2)/2\rfloor $ and we can only apply~\ref{enum:k} $ L-Q $ times.
			We find $ B_1(d_1+C_2 + L-Q)=\widetilde{C}-C_2-Q - 2(L-Q) $ and $ A_2(d_1+C_2+L-Q)=0 $.
			Then, we apply~\ref{enum:m} $ M-L $ times,
			to get $ B_1(d_1+C_2+M-Q)=\widetilde{C}-C_2-Q-2(L-Q)-(M-L) = 1 $ and $ A_2(d_1+C_2+M-Q)=0 $.
			Next, we apply~\ref{enum:n}, which does not change the value of $ B_1-A_2 $ in this case.
			\qedhere
		\end{itemize}
	\end{proof}
	
	From Observation~\ref{obs:1before2},
	we can conclude that the tableaux in $ T_\ell(P) $ before we reach Case~\ref{enum:p}
	are of the form
	\[ 
	\begin{array}{|c|}
		\hline
		\bfth\\
		\bftw\\
		\hline
	\end{array}
	^{\otimes \widetilde{C}} \otimes
	\begin{array}{|c|}
		\hline
		\bfth\\
		\bfon\\
		\hline
	\end{array}
	^{\otimes \widetilde{d}_1}\otimes
	\begin{array}{|c|}
		\hline
		\bfth\\
		\bftw\\
		\hline
	\end{array}
	^{\otimes \widetilde{C}_2}.
	\]
	for some non-negative integers $ \widetilde{d}_1 $ and $ \widetilde{C}_2 $.
	Observe that~\ref{enum:p} puts down the columns of the carrier in reverse order,
	so we simply need to examine the carrier $ u(\widetilde{C}+\widetilde{d}_1+\widetilde{C}_2) $.
	
	\begin{obs}\label{obs:A_2plusA_1}
		The value of $ A_2+A_1 $ is incremented by $ 1 $
		in cases~\ref{enum:q}, \ref{enum:s}, \ref{enum:k}, \ref{enum:l}, \ref{enum:e}, \ref{enum:f}
		and the value of $ A_2+A_1 $ stays the same in cases~\ref{enum:m}, \ref{enum:n}, \ref{enum:o}.
	\end{obs}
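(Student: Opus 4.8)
The plan is to prove Observation~\ref{obs:A_2plusA_1} by a direct case-by-case inspection of Lemma~\ref{lem:classRmat}. Since the carrier has fixed width $\ell$, we have $A_2+A_1 = \ell - A_3$, where $A_3$ counts the $\bfth$'s in the carrier's top row; so it is equivalent — and cleaner — to track $A_3$, and one must show that $A_3$ decreases by exactly one in cases~\ref{enum:q}, \ref{enum:s}, \ref{enum:k}, \ref{enum:l}, \ref{enum:e}, \ref{enum:f}, and is unchanged in cases~\ref{enum:m}, \ref{enum:n}, \ref{enum:o}. For six of these cases there is nothing to do: Lemma~\ref{lem:classRmat} writes the new top row explicitly as $\bfth^{a_3-1}$ followed by blocks of $\bftw$'s and $\bfon$'s in cases~\ref{enum:q}, \ref{enum:l}, \ref{enum:e}, \ref{enum:f} (so the $\bfth$-block shrinks by one), and as $\bfth^{a_3}$ followed by blocks of $\bftw$'s and $\bfon$'s in cases~\ref{enum:n}, \ref{enum:o} (so the $\bfth$-block is unchanged).

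The remaining cases~\ref{enum:s}, \ref{enum:k}, \ref{enum:m} need a little more care, since there the new top row is recorded in the form $(\mathbf{h}_j)_{j=1}^{a_3+a_2-1}\bfon^{a_1+1}$ (cases~\ref{enum:s} and~\ref{enum:k}) or $\bfth(\mathbf{h}_j)_{j=1}^{a_3+a_2-1}\bfon^{a_1}$ (case~\ref{enum:m}), and the resulting change in $A_3$ depends on whether the incoming carrier has a $\bftw$ in its top row, i.e.\ on whether $A_2=0$. The key step will be to observe that $A_2=0$ whenever one of these three cases is applied in the course of evolving a genuine highest weight state. This has in effect already been recorded: case~\ref{enum:m} is only ever preceded by~\ref{enum:k} or~\ref{enum:m} (or occurs first, with carrier $u_\ell$, whose top row is $\bfth^\ell$), and cases~\ref{enum:s} and~\ref{enum:k} are only ever preceded by~\ref{enum:o}, \ref{enum:q}, \ref{enum:p}, \ref{enum:s} or~\ref{enum:k}; each of these predecessor cases sends a carrier with $A_2=0$ to a carrier with $A_2=0$, as one reads off the corresponding new-top-row formulas in Lemma~\ref{lem:classRmat} after setting $a_2=0$, and the carriers $u(\widetilde{C})$ and $u(d_1+C_2)$ from which these runs start have top row $\bfth^{\ell-A_1}\bfon^{A_1}$. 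With $A_2=0$ in hand, $(\mathbf{h}_j)_{j=1}^{a_3+a_2-1}=\bfth^{a_3-1}$, so the new top row is $\bfth^{a_3-1}\bfon^{a_1+1}$ in cases~\ref{enum:s} and~\ref{enum:k} ($A_3$ drops by one) and $\bfth\cdot\bfth^{a_3-1}\bfon^{a_1}=\bfth^{a_3}\bfon^{a_1}$ in case~\ref{enum:m} ($A_3$ unchanged), as desired.

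The main obstacle is precisely this last point: the conclusion is \emph{not} a statement about the $R$-matrix in isolation — for a carrier with $a_2>0$ the formulas in cases~\ref{enum:s}, \ref{enum:k}, \ref{enum:m} give a different change in $A_3$ — so the proof must lean on the structural restrictions on which sequences of cases can occur during time evolution. I would therefore phrase the argument so that it explicitly cites the places above where "$A_2=0$ for case~\ref{enum:m}" and the admissible predecessors of each case are established (in the proof of Lemma~\ref{lem:B_1takeA_2} and the discussion following the proof of Lemma~\ref{lem:calcB_1takeA_2}), and note that the six easy cases need no such hypothesis.
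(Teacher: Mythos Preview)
Your proof is correct and in fact more careful than the paper's own treatment: the paper records this as a bare observation with no argument, implicitly reading the changes in $A_2+A_1$ straight off the output carriers in Lemma~\ref{lem:classRmat}. You are right that for cases~\ref{enum:s}, \ref{enum:k}, \ref{enum:m} the displayed top row $(\mathbf{h}_j)_{j=1}^{a_3+a_2-1}\bfon^{a_1+1}$ (respectively $\bfth(\mathbf{h}_j)_{j=1}^{a_3+a_2-1}\bfon^{a_1}$) only gives the claimed change in $A_3$ when $a_2=0$; with $a_2>0$ these formulas would leave $A_3$ unchanged (respectively increase it), so the observation is \emph{not} a statement about Lemma~\ref{lem:classRmat} in isolation. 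Your inductive argument that $A_2=0$ whenever these three cases arise---via the predecessor structure already established in the proof of Lemma~\ref{lem:B_1takeA_2} and the discussion of admissible case sequences---is precisely the missing step, and it is good that you flag this explicitly.
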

	
	Observation~\ref{obs:A_2plusA_1} tells us that 
	$ A_2(\widetilde{C}+\widetilde{d}_1+\widetilde{C}_2)+A_1(\widetilde{C}+\widetilde{d}_1+\widetilde{C}_2) = L $.
	Therefore, the number of $ \bfon $ and $ \bftw $ in the second last row of $ T_\ell(P) $ is $ L $,
	so $ \mathcal{F}T_\ell(P) $ is a $ L $-sped state.
	
	Now, we determine the value of $ \widetilde{d}_1 $ by examining $ u(\widetilde{C}+\widetilde{d}_1) $.
	Observe that~\ref{enum:o} and~\ref{enum:p} do not apply before $ u(\widetilde{C}+\widetilde{d}_1) $,
	so Lemma~\ref{lem:B_1takeA_2} applies.
	That is, at $ u(\widetilde{C}+\widetilde{d}_1) $ we have $ d_1=N+\max(B_1-A_2,0)+A_1 $.
	Rearranging, we get $ \widetilde{d}_1 = N = d_1-A_1-\max(B_1-A_2,0) $.
	By Lemma~\ref{lem:calcB_1takeA_2}, 
	\[
	\max(B_1-A_2,0)
	=
	\begin{cases}
		0 & \text{if}~ C_2+L-Q>\widetilde{C}-1-M \\
		1 & \text{if}~ C_2+L-Q=\widetilde{C}-1-M 
	\end{cases}
	.
	\]
	If $ \widetilde{C}_2>0 $, then~\ref{enum:f} or~\ref{enum:o} applies to $ u(\widetilde{C}+\widetilde{d}_1) $;
	in both of these cases we must have $ A_1=0 $, which implies $ \widetilde{d}_1=d_1-\max(B_1-A_2,0) $.
	We find that the first $ \widetilde{C}+\widetilde{d}_1 $ factors in $ \mathcal{F}T_\ell(P) $ are
	\begin{equation}\label{eq:Fsoliton1}
		\begin{cases}
			u_1^{\otimes \widetilde{C}} \otimes \inlinetab{\bfon}^{\otimes d_1-L}\otimes u_1^{\otimes L} 
			& \text{if}~C_2+L-Q>\widetilde{C}-1-M\\
			u_1^{\otimes \widetilde{C}} \otimes \inlinetab{\bfon}^{\otimes d_1-L-1} \otimes u_1^{\otimes L}
			& \text{if}~C_2+L-Q=\widetilde{C}-1-M
		\end{cases}
	\end{equation}
	which matches~\eqref{eq:ht1sizesame} and~\eqref{eq:ht1sizechange}.
	If $ \widetilde{C}_2=0 $ then~\ref{enum:p} applies to $ u(\widetilde{C}+\widetilde{d}_1) $,
	and the columns are put down in reverse order.
	Thus, in $ T_\ell(P) $, there are $ A_1(\widetilde{C}+\widetilde{d}_1) $ factors with $ \bfon $ in the second last row,
	and $ \widetilde{d}_1 $ factors $ \bfon $ in the bottom row immediately before them.
	Therefore, the first $ \widetilde{C}+\widetilde{d}_1 $ factors in $ \mathcal{F}T_\ell(P) $
	are $ u_1^{\otimes \widetilde{C}}\otimes\inlinetab{\bfon}^{\otimes \widetilde{d}_1+A_1-L}\otimes u_1^{\otimes L-A_1} $
	(note that $ A_1\le L $ by Observation~\ref{obs:A_2plusA_1}).
	Since $ \widetilde{d}_1+A_1 = d_1-\max(B_1-A_2,0) $,
	this is the same as~\eqref{eq:ht1sizesame} and~\eqref{eq:ht1sizechange}.
	
	Now, consider $ u(\widetilde{C}+\widetilde{d}_1+\widetilde{C}_2) $.
	Let $ D=\max(B_1(\widetilde{C}+\widetilde{d}_1)-A_2(\widetilde{C}+\widetilde{d}_1),0) $.
	Observe that 
	we have $ d_1=\widetilde{d}_1+D+A_1(\widetilde{C}+\widetilde{d}_1+\widetilde{C}_2) $.
	But, we have a total of 
	$ \widetilde{d}_1+B_1(\widetilde{C}+\widetilde{d}_1+\widetilde{C}_2)+A_1(\widetilde{C}+\widetilde{d}_1+\widetilde{C}_2) $
	appearances of $ \bfon $
	in the state or carrier at $ u(\widetilde{C}+\widetilde{d}_1+\widetilde{C}_2) $.
	Therefore, 
	\begin{equation}\label{eq:keeptrack}
		(\widetilde{d}_1+B_1(\widetilde{C}+\widetilde{d}_1+\widetilde{C}_2)+A_1(\widetilde{C}+\widetilde{d}_1+\widetilde{C}_2)) 
		-(\widetilde{d}_1+D+A_1(\widetilde{C}+\widetilde{d}_1+\widetilde{C}_2)) 
		=B_1(\widetilde{C}+\widetilde{d}_1+\widetilde{C}_2)-D 
	\end{equation}
	of the $ \bfon $ have come from the second soliton's bottom row in $ P $.
	
	Since~\ref{enum:p} applies to $ u(\widetilde{C}+\widetilde{d}_1+\widetilde{C}_2) $,
	we must have $ \widetilde{C}+\widetilde{d}_1+\widetilde{C}_2 = d_1+C_2+d_2-Q $ or $ B_2=0 $ (with $ B_1=0 $ or $ \mathbf{y}=\bfon $).
	We claim that $ \widetilde{C}+\widetilde{d}_1+\widetilde{C}_2 = d_1+C_2+\min(\ell-D,d_2)-Q $.
	Indeed:
	\begin{itemize}
		\item In the case $ \widetilde{C}+\widetilde{d}_1+\widetilde{C}_2=d_1+C_2+d_2-Q $,
		all of the $ \bfon $ from the second soliton have been picked up by the carrier,
		so $ B_1-D = d_2-M $ (using~\eqref{eq:keeptrack}).
		The carrier also has $ M $ copies of $ \bfze $,
		so we have that $ \ell \geq B_1+M = (d_2-M+D)+M $;
		in particular, $ d_2\le \ell-D $.
		\item In the case $ B_2=0 $ we have $ B_1+B_0 = \ell $.
		If $ \mathbf{y}=\bftw $ then the previous case applies,
		so assume $ \mathbf{y}\neq\bftw $.
		If $ \mathbf{y}=\bfon $, then the carrier has picked up all of the
		$ \bfze $, so $ B_0=M $.
		If $ \mathbf{y}=\bfze $, then we must have $ B_1=0 $ for~\ref{enum:p} to apply, and so $ \ell=B_0\le M $,
		contradicting $ M<\ell $.
		We deduce that, in this case, $ \mathbf{y}=\bfon $ and $ B_0=M $.
		Since $ B_1-D = (\ell-M)-D $ of the $ \bfon $'s have come from the second soliton
		(using~\eqref{eq:keeptrack}), we can deduce that 
		$ \widetilde{C}+\widetilde{d}_1+\widetilde{C}_2 = d_1+C_2+(M-Q)+(\ell-M-D) = d_1+C_2+\ell-D-Q $.
		Moreover, since $ \ell-M-D $ of the $ \bfon $ come from the $ d_2-M $ copies of $ \bfon $ in
		the second soliton, we have $ \ell-M-D\le d_2-M $ which implies $ \ell-D\le d_2 $.
		
	\end{itemize}
	We conclude that $ \widetilde{C}+\widetilde{d}_1+\widetilde{C}_2 = d_1+C_2+\min(\ell-D,d_2)-Q $.
	Comparing with~\eqref{eq:ht1sizesame} and~\eqref{eq:ht1sizechange},
	this is what we desire.
	That is, the second soliton in $ \mathcal{F}T_\ell(P) $ starts at the same place
	as the second soliton in $ T^L_\ell \mathcal{F}(P) $.
	Since the first solitons in both of these are the same length,
	we have that the second solitons are also the same length.
	Since~\ref{enum:p} puts the columns down in reverse order,
	we can thus deduce that
	\[
	\mathcal{F}T_\ell(P) = T^L_\ell \mathcal{F}(P).
	\]
	This proves Proposition~\ref{prop:ht1isomorphism}.
	
	Observe that $ T_\ell(P) $ satisfies the assumptions of Proposition~\ref{prop:ht1isomorphism},
	except we may have $ d_1-L=0 $ or $ d_1-L-1=0 $ with $ C_2+L-Q=\min(d_1,\ell)-1-M $.
	
	\subsection{Edge cases}
	As previously noted,
	Propositions~\ref{prop:ht1isomorphism}~and~\ref{prop:dspedsolitons}
	together prove the main theorem except in edge cases.
	It is not difficult to see that
	these edge cases are exactly the genuine highest weight states in which $ L=\mathfrak{d}_2 $.
	
	Let $ \mathfrak{p} $ be a genuine highest weight state (as in~\eqref{eq:hwtstate}) in which $ L=\mathfrak{d}_2 $,
	and let $ t_0 = ((\mathfrak{c}_2+L)-(\mathfrak{d}_2-M))+(\mathfrak{d}_1-\mathfrak{d}_2) 
	= \mathfrak{c}_2+M+\mathfrak{d}_1-\mathfrak{d}_2 $.
	We can use repeated applications of Proposition~\ref{prop:ht1isomorphism} to compute 
	\[
	(T_{\mathfrak{d}_2+1})^{t_0-1}(\mathfrak{p}) = 
	u_1^{\otimes \mathfrak{c}_1+(\mathfrak{d}_2+1)(t_0-1)}\otimes 
	\begin{array}{|c|}
		\hline
		\bfth\\
		\bfon\\
		\hline
	\end{array}^{\otimes \mathfrak{d}_2-M+1}
	\otimes
	\begin{array}{|c|}
		\hline
		\bfon\\
		\bfze\\
		\hline
	\end{array}
	^{\otimes M}
	\otimes
	\begin{array}{|c|}
		\hline
		\bftw\\
		\bfon\\
		\hline
	\end{array}
	^{\otimes \mathfrak{d}_2-M}
	\otimes
	\begin{array}{|c|}
		\hline
		\bfth\\
		\bfon\\
		\hline
	\end{array}
	^{\otimes \mathfrak{d}_1-\mathfrak{d}_2-1}
	\otimes u_1^{\otimes\infty}.
	\]
	At this point, we can no longer apply Proposition~\ref{prop:ht1isomorphism}.
	So, we manually compute the time steps after this point using Lemma~\ref{lem:classRmat}.
	
	If $ t_0\le t\le t_0+M $ then $ (T_{\mathfrak{d}_2+1})^t(\mathfrak{p}) $ is:
	\[
	u_1^{\otimes \mathfrak{c}_1+(\mathfrak{d}_2+1)t_0 + \mathfrak{d}_2(t-t_0)}\otimes 
	\begin{array}{|c|}
		\hline
		\bfth\\
		\bfon\\
		\hline
	\end{array}^{\otimes \mathfrak{d}_2 - M + (t-t_0)}
	\otimes
	\begin{array}{|c|}
		\hline
		\bfon\\
		\bfze\\
		\hline
	\end{array}
	^{\otimes M-(t-t_0)}
	\otimes
	\begin{array}{|c|}
		\hline
		\bftw\\
		\bfze\\
		\hline
	\end{array}
	^{\otimes t-t_0}
	\otimes
	\begin{array}{|c|}
		\hline
		\bftw\\
		\bfon\\
		\hline
	\end{array}
	^{\otimes \mathfrak{d}_2-M}
	\otimes
	\begin{array}{|c|}
		\hline
		\bfth\\
		\bfon\\
		\hline
	\end{array}
	^{\otimes \mathfrak{d}_1-\mathfrak{d}_2}
	\otimes u_1^{\otimes\infty}.
	\]
	Indeed, if $ t=t_0 $, we simply apply~\ref{enum:p}, \ref{enum:o}, \ref{enum:q}, \ref{enum:l}, \ref{enum:p}
	to $ (T_{\mathfrak{d}_2+1})^{t_0-1}(\mathfrak{p}) $.
	If $ t_0<t\le t_0+M $ then, to $ (T_{\mathfrak{d}_2+1})^{t-1} $, we first apply~\ref{enum:p}, \ref{enum:o}, \ref{enum:q}.
	Then, we apply~\ref{enum:k}, \ref{enum:l}, \ref{enum:e} --- note that we apply~\ref{enum:k} and~\ref{enum:e} the same number of times. Finally, we apply~\ref{enum:n} once and then~\ref{enum:p} for the remainder.
	
	If $ t > t_0+M $ then $ (T_{\mathfrak{d}_2+1})^t(\mathfrak{p}) $ is:
	\[
	u_1^{\otimes \mathfrak{c}_1+(\mathfrak{d}_2+1)t_0 + \mathfrak{d}_2(t-t_0)}\otimes 
	\begin{array}{|c|}
		\hline
		\bfth\\
		\bfon\\
		\hline
	\end{array}^{\otimes \mathfrak{d}_2}
	\otimes u_1^{\otimes t-(t_0+M)} \otimes
	\begin{array}{|c|}
		\hline
		\bftw\\
		\bfze\\
		\hline
	\end{array}
	^{\otimes M}
	\otimes
	\begin{array}{|c|}
		\hline
		\bftw\\
		\bfon\\
		\hline
	\end{array}
	^{\otimes \mathfrak{d}_2-M}
	\otimes
	\begin{array}{|c|}
		\hline
		\bfth\\
		\bfon\\
		\hline
	\end{array}
	^{\otimes \mathfrak{d}_1-\mathfrak{d}_2}
	\otimes u_1^{\otimes\infty}.
	\]
	Indeed, this is obvious for $ t>t_0+M+\mathfrak{d}_2 $.
	If $ t\leq t_0+M+\mathfrak{d}_2 $we apply~\ref{enum:p}, \ref{enum:o}, \ref{enum:p}.
	Then we apply~\ref{enum:k}, \ref{enum:l}, \ref{enum:e}, \ref{enum:f} --- note that, during this computation, $ a_3 - b_2 + a_1 $ is the number of $ \bfon $ in the carrier from the first soliton;
	that $ a_3 - b_2 + a_1 < \mathfrak{d}_2 $ at the start of this computation;
	and that $ a_3-b_2+a_1 $ is decreased by one in~\ref{enum:k}, \ref{enum:l}, \ref{enum:e}.
	Finally,~\ref{enum:o} applies once and then~\ref{enum:p} applies for the remainder.
	
	Therefore, the case where $ L=\mathfrak{d}_2 $ behaves solitonically.
	Moreover, we can easily calculate 
	that the phase shift is $ \delta=\mathfrak{d}_2-M = 2\mathfrak{d}_2-L-M $.
	This takes care of the edge cases and completes the proof of Theorem~\ref{thm:scattering}.
	
	\begin{remark}
		We could possibly reformulate $ L $-sped solitons to 
		allow size $ 0 $ (i.e.\ speed $ L $) solitons.
		This would have the benefit of removing the edge cases,
		but would also make the definitions and proofs more technical.
	\end{remark}
	
	\subsection{The \texorpdfstring{$ m=r+1 $}{m=r+1} case}
	In Section~\ref{sec:simplifications} we assumed that $ m \ge r+2 $.
	All that remains is to prove the $ m = r+1 $ and $ m = r $ cases.
	(Recall that we will not consider $ m<r $, see Remark~\ref{rmk:r>m}.)
	
	For the $ m=r+1 $ case,
	we can define $ \bfX,\, \bfth,\, \bftw, \bfon $ as in Subsection~\ref{sec:simplifications}
	(note that $ \bfth = \bth $, $ \bftw = \btw $ and $ \bfon = \bon $).
	The genuine highest weight states are of the form:
	\[
	\mathfrak{p} =
	u_1^{\otimes \mathfrak{c}_1}\otimes 
	\begin{array}{|c|}
		\hline
		\raisebox{-1pt}{$ \bfX $}\\
		\raisebox{-1pt}{$ \bth $}\\
		\raisebox{-1pt}{$ \bon $}\\
		\hline
	\end{array}
	^{\otimes \mathfrak{d}_1}
	\otimes u_1^{\otimes \mathfrak{c}_2}\otimes 
	\begin{array}{|c|}
		\hline
		\raisebox{-1pt}{$ \bfX $}\\
		\raisebox{-1pt}{$ \mathfrak{w}_1 $}\\
		\raisebox{-1pt}{$ \mathfrak{y}_1 $}\\
		\hline
	\end{array}
	\otimes \cdots \otimes 
	\begin{array}{|c|}
		\hline
		\raisebox{-1pt}{$ \bfX $}\\
		\raisebox{-1pt}{$ \mathfrak{w}_{\mathfrak{d}_2} $}\\
		\raisebox{-1pt}{$ \mathfrak{y}_{\mathfrak{d}_2} $}\\
		\hline
	\end{array}
	\otimes u_1^{\otimes \infty }
	\]
	where
	\begin{align*}
		\mathfrak{w}_j = 
		\begin{cases}
			\btw & \text{if } j\le L\\
			\bth & \text{if } j>L
		\end{cases},
		&&
		\mathfrak{y}_j = 
		\begin{cases}
			M+1-j & \text{if }j\le M\\
			\bon & \text{if }j>M
		\end{cases},
	\end{align*}
	for some non-negative integers $ L,M \le \mathfrak{d}_2 $.
	Note that the value of $ n $ may limit which of these highest weight states are possible.
	Note also that there are similarities with the genuine highest weight states for the $ m\ge r+2 $ case.
	Lemma~\ref{lem:classRmat} can be adapted for $ m=r+1 $ by:
	\begin{itemize}
		\item replacing $ \bfze^{b_0} $ with $ (\mathbf{z}_j)_{j=b_2+b_1+1}^\ell $ 
		where $ \mathbf{z}_{b_2+b_1+1}\ge 1 $ 
		and $ \mathbf{z}_j = \mathbf{z}_{b_2+b_1+1} + j-(b_2+b_1+1) $ for $ j\ge p+q+1 $;
		\item replacing any instance of `$ \mathbf{y}=\bfze $' with `$ 1 \le \mathbf{y} < \mathbf{z}_{b_2+b_1+1} $'; and
		\item replacing `$ \mathbf{y}\in\{\bfze,\bfon\} $' with `$ \bfon\le \mathbf{y} < \mathbf{z}_{b_2+b_1+1} $'
		and replacing `$ \mathbf{y}\in\{\bfze,\bfon,\bftw\} $' with `$ \bftw\le \mathbf{y} < \mathbf{z}_{b_1+b_2+1} $'.
	\end{itemize}
	After these modifications, the proof of Lemma~\ref{lem:classRmat} 
	remains similar to Appendix~\ref{appendix:classRmatprf}.
	Then, with some care, the arguments for the $ m\geq r+2 $ case can be adapted to prove Theorem~\ref{thm:scattering} in the $ m=r+1 $ case.
	
	\subsection{The \texorpdfstring{$ m=r $}{m=r} case}
	The $ m=r $ case is slightly different than the other cases,
	owing to the fact that the genuine highest weight states have a different form:
	\[
	\mathfrak{p} =
	u_1^{\otimes \mathfrak{c}_1}\otimes 
	\begin{array}{|c|}
		\hline
		\bfX\\
		\btw\\
		\mathfrak{y}_1\\
		\hline
	\end{array}
	\otimes\cdots\otimes
	\begin{array}{|c|}
		\hline
		\bfX\\
		\btw\\
		\mathfrak{y}_{\mathfrak{d}_1-M}\\
		\hline
	\end{array}
	\otimes
	\begin{array}{|c|}
		\hline
		\bfX\\
		\btw\\
		\mathfrak{x}_1\\
		\hline
	\end{array}
	\otimes \cdots \otimes 
	\begin{array}{|c|}
		\hline
		\bfX\\
		\btw\\
		\mathfrak{x}_M\\
		\hline
	\end{array}
	\otimes u_1^{\otimes \mathfrak{c}_2}\otimes 
	\begin{array}{|c|}
		\hline
		\bfX\\
		\mathfrak{w}_{\mathfrak{d}_1-M+1}\\
		\mathfrak{y}_{\mathfrak{d}_1-M+1}\\
		\hline
	\end{array}
	\otimes \cdots \otimes 
	\begin{array}{|c|}
		\hline
		\bfX\\
		\mathfrak{w}_{\mathfrak{d}_1+\mathfrak{d}_2-M}\\
		\mathfrak{y}_{\mathfrak{d}_1+\mathfrak{d}_2-M}\\
		\hline
	\end{array}
	\otimes u_1^{\otimes \infty }
	\]
	where
	\begin{align*}
		\mathfrak{y}_j = 
		\mathfrak{d}_1+\mathfrak{d}_2-M+1-j.
		&&
		\mathfrak{x}_j =
		M+1-j,
		&&
		\mathfrak{w}_j = 
		\begin{cases}
			\bon & \text{if } j\le \mathfrak{d}_1-M+L\\
			\btw & \text{if } j> \mathfrak{d}_1-M+L
		\end{cases},
	\end{align*}
	for some non-negative integers $ L,M\le \mathfrak{d}_2 $.
	Note that, for $ m=r $, we have $ \bfth=\btw $ and $ \bftw = \bon $.
	Note also that solitons can not be longer than $ n $ in this case and the value of $ n $ may limit which of these highest weight states are possible.
	When evolving the time of this genuine highest weight state, 
	we encounter cases that would not be covered by Lemma~\ref{lem:classRmat}.
	So, the proof presented in Section~\ref{sec:ht1isomorphism} does not work for $ m=r $ 
	(at least, not without major modifications).
	
	However, Theorem~\ref{thm:scattering} holds for the $ m=r $ case.
	\begin{ex}
		Consider a state composed of elements from the $ U_q(\agl(2|4)) $ crystal $ B^{2,1} $
		(so that $ m=r=2 $).
		\begin{align*}
			t = 0 & \quad
			\begin{tikzpicture}[scale = 0.9, baseline = -2,every node/.style={scale=1}]
				\def\s{0.5};
				\foreach \x in {-2,-1,3,4,7,8,...,21} {
					\draw[fill=black] (\x*\s,0) circle (0.025);
				}
				\foreach \count/\valone/\valtwo/\valthree/\valfour/\valfive in {0/4/3/1/2/1,1/\btw/\btw/\btw/\bon/\btw}{
					\node (a) at (0*\s,0 + \count*\s) {$\valone$};
					\node (b) at (1*\s,0 + \count*\s) {$\valtwo$};
					\node (c) at (2*\s,0 + \count*\s) {$\valthree$};
					\node (d) at (5*\s,0 + \count*\s) {$\valfour$};
					\node (d) at (6*\s,0 + \count*\s) {$\valfive$};
				}
			\end{tikzpicture}
			\\[7pt]
			t = 1 & \quad
			\begin{tikzpicture}[scale = 0.9, baseline = -2,every node/.style={scale=1}]
				\def\s{0.5};
				\foreach \x in {-2,...,2,6,9,10,...,21} {
					\draw[fill=black] (\x*\s,0) circle (0.025);
				}
				\foreach \count/\valone/\valtwo/\valthree/\valfour/\valfive in {0/4/3/1/2/1,1/\btw/\btw/\btw/\bon/\btw}{
					\node (a) at (3*\s,0 + \count*\s) {$\valone$};
					\node (b) at (4*\s,0 + \count*\s) {$\valtwo$};
					\node (c) at (5*\s,0 + \count*\s) {$\valthree$};
					\node (d) at (7*\s,0 + \count*\s) {$\valfour$};
					\node (d) at (8*\s,0 + \count*\s) {$\valfive$};
				}
			\end{tikzpicture}
			\allowdisplaybreaks \\[7pt]
			t = 2 & \quad
			\begin{tikzpicture}[scale = 0.9, baseline = -2,every node/.style={scale=1}]
				\def\s{0.5};
				\foreach \x in {-2,...,5,11,12,...,21} {
					\draw[fill=black] (\x*\s,0) circle (0.025);
				}
				\foreach \count/\valone/\valtwo/\valthree/\valfour/\valfive in {0/4/3/1/2/1,1/\btw/\btw/\btw/\bon/\btw}{
					\node (a) at (6*\s,0 + \count*\s) {$\valone$};
					\node (b) at (7*\s,0 + \count*\s) {$\valtwo$};
					\node (c) at (8*\s,0 + \count*\s) {$\valthree$};
					\node (d) at (9*\s,0 + \count*\s) {$\valfour$};
					\node (d) at (10*\s,0 + \count*\s) {$\valfive$};
				}
			\end{tikzpicture}
			\allowdisplaybreaks \\[7pt]
			t = 3 & \quad
			\begin{tikzpicture}[scale = 0.9, baseline = -2,every node/.style={scale=1}]
				\def\s{0.5};
				\foreach \x in {-2,...,8,14,15,...,21} {
					\draw[fill=black] (\x*\s,0) circle (0.025);
				}
				\foreach \count/\valone/\valtwo/\valthree/\valfour/\valfive in {0/4/1/3/2/1,1/\btw/\btw/\bon/\btw/\btw}{
					\node (a) at (9*\s,0 + \count*\s) {$\valone$};
					\node (b) at (10*\s,0 + \count*\s) {$\valtwo$};
					\node (c) at (11*\s,0 + \count*\s) {$\valthree$};
					\node (d) at (12*\s,0 + \count*\s) {$\valfour$};
					\node (d) at (13*\s,0 + \count*\s) {$\valfive$};
				}
			\end{tikzpicture}
			\allowdisplaybreaks \\[7pt]
			t = 4 & \quad
			\begin{tikzpicture}[scale = 0.9, baseline = -2,every node/.style={scale=1}]
				\def\s{0.5};
				\foreach \x in {-2,...,10,13,17,18,...,21} {
					\draw[fill=black] (\x*\s,0) circle (0.025);
				}
				\foreach \count/\valone/\valtwo/\valthree/\valfour/\valfive in {0/4/1/3/2/1,1/\btw/\btw/\bon/\btw/\btw}{
					\node (a) at (11*\s,0 + \count*\s) {$\valone$};
					\node (b) at (12*\s,0 + \count*\s) {$\valtwo$};
					\node (c) at (14*\s,0 + \count*\s) {$\valthree$};
					\node (d) at (15*\s,0 + \count*\s) {$\valfour$};
					\node (d) at (16*\s,0 + \count*\s) {$\valfive$};
				}
			\end{tikzpicture}
			\\[7pt]
			t = 5 & \quad
			\begin{tikzpicture}[scale = 0.9, baseline = -2,every node/.style={scale=1}]
				\def\s{0.5};
				\foreach \x in {-2,...,12,15,16,20,21} {
					\draw[fill=black] (\x*\s,0) circle (0.025);
				}
				\foreach \count/\valone/\valtwo/\valthree/\valfour/\valfive in {0/4/1/3/2/1,1/\btw/\btw/\bon/\btw/\btw}{
					\node (a) at (13*\s,0 + \count*\s) {$\valone$};
					\node (b) at (14*\s,0 + \count*\s) {$\valtwo$};
					\node (c) at (17*\s,0 + \count*\s) {$\valthree$};
					\node (d) at (18*\s,0 + \count*\s) {$\valfour$};
					\node (d) at (19*\s,0 + \count*\s) {$\valfive$};
				}
			\end{tikzpicture}
		\end{align*}
		Note that this state satisfies the assumptions of Theorem~\ref{thm:scattering}
		with $ m=r $.
		We observe that this state behaves solitonically.
		Moreover $ R(\up{V}\otimes\up{W}) = R(
		\begin{array}{|c|c|c|}
			\hline
			\raisebox{-1pt}{$\btw$}
			& \raisebox{-1pt}{$\btw$}
			& \raisebox{-1pt}{$\btw$}\\
			\hline
		\end{array}
		\otimes
		\begin{array}{|c|c|}
			\hline
			\raisebox{-1pt}{$\btw$}
			& \raisebox{-1pt}{$\bon$}\\
			\hline
		\end{array}
		) = 
		\begin{array}{|c|c|}
			\hline
			\raisebox{-1pt}{$\btw$}
			& \raisebox{-1pt}{$\btw$}\\
			\hline
		\end{array}
		\otimes
		\begin{array}{|c|c|c|}
			\hline
			\raisebox{-1pt}{$\btw$}
			& \raisebox{-1pt}{$\btw$}
			& \raisebox{-1pt}{$\bon$}\\
			\hline
		\end{array}
		$
		and $ R(\down{V}\otimes\down{W}) = R(
		\begin{array}{|c|c|c|}
			\hline
			1 & 3 & 4\\
			\hline
		\end{array}
		\otimes
		\begin{array}{|c|c|}
			\hline
			1 & 2\\
			\hline
		\end{array}
		) = 
		\begin{array}{|c|c|}
			\hline
			1 & 4\\
			\hline
		\end{array}
		\otimes
		\begin{array}{|c|c|c|}
			\hline
			1 & 2 & 3\\
			\hline
		\end{array}
		$.
		The phase shift is $ 2 = 4 - 1 - 1 = 2\mathfrak{d}_2 + H(\up{V}\otimes\up{W}) + H(\down{V}\otimes\down{W})$.
	\end{ex}
	
	We believe that it would be possible to prove this theorem by reducing it to a modified version of the~\cite{HI00} system by using something similar to an $ L $-sped BBS. However, we would likely need to modify the speed of both solitons (as opposed to only of them in the $ L $-sped BBS), which would make the proof much more techinical.
	So, we instead prove the theorem by direct computation.
	
	We first compute general results about the carrier,
	similar to Lemma~\ref{lem:classRmat}.
	\begin{lemma}\label{lem:m=rclassRmat}
		Take some $ \ell>0 $.
		Let $ (\mathbf{x}_j)_{j=1}^{b_1+a_1} $ be a sequence
		such that $ 1 \leq \mathbf{x}_1<\mathbf{x}_2<\cdots < \mathbf{x}_{b_2+a_1} $.
		Let $ (\mathbf{h}_j)_{j=1}^\ell=\btw^{a_3}\bon^{a_2} (\mathbf{x}_j)_{j=b_1+1}^{b_1+a_1} $ for some $ a_3,a_2,a_1 $ such that $ a_3+a_2+a_1=\ell $.
		Let $ (\mathbf{z}_j)_{j=1}^\ell = \bon^{b_2}(\mathbf{x}_j)_{j=1}^{b_1}(\mathbf{z}_j)_{j=b_2+b_1+1}^\ell $ 
		for some $ b_2,b_1 $ such that $ b_2+b_1\le \ell $, 
		with $ \mathbf{x}_{b_1} < \mathbf{z}_{b_2+b_1+1} < \cdots < \mathbf{z}_{\ell} $.
		
		For
		\[
		b=
		\begin{array}{|c|}
			\hline
			\bfX^\ell\\
			(\mathbf{h}_j)_{j=1}^\ell\\
			(\mathbf{z}_j)_{j=1}^\ell\\
			\hline
		\end{array}
		\otimes 
		\begin{array}{|c|}
			\hline
			\bfX\\
			\mathbf{x}_0 \\
			\mathbf{y} \\
			\hline
		\end{array}
		\]
		with $ 1\leq \mathbf{x}_0<\mathbf{x}_1 $ and $ \mathbf{x}_{b_1}\le \mathbf{y}<\mathbf{z}_{b_2+b_1+1} $, we have that $ R(b)= $
		\begin{enumerate}[label = \upshape{($ n $\alph*)}]
			\item \label{enum:fa}
			$
			\begin{array}{|c|}
				\hline
				\bfX\\
				\btw\\
				\bon\\
				\hline
			\end{array}
			\otimes 
			\begin{array}{|c|}
				\hline
				\bfX^\ell\\
				\btw^{a_3-1}\bon^{a_2}(\mathbf{x}_j)_{j=b_1}^{b_1+a_1}\\
				\bon^{b_2-1}(\mathbf{x}_j)_{j=0}^{b_1-1} \mathbf{y}(\mathbf{z}_j)_{j=b_2+b_1+1}^\ell\\
				\hline
			\end{array}
			$
			if $ b_2>0 $;
			\item \label{enum:fb}
			$
			\begin{array}{|c|}
				\hline
				\bfX\\
				\mathbf{h}_{a_3+a_2}\\
				\mathbf{z}_\ell\\
				\hline
			\end{array}
			\otimes 
			\begin{array}{|c|}
				\hline
				\bfX^\ell\\
				(\mathbf{h}_j)_{j=1}^{a_3+a_2-1}(\mathbf{x}_j)_{j=b_1}^{b_1+a_1}\\
				(\mathbf{x}_j)_{j=0}^{b_1-1} \mathbf{y} (\mathbf{z}_j)_{j=b_2+b_1+1}^{\ell-1}\\
				\hline
			\end{array}
			$
			if $ b_2=0 $ and $ b_2+b_1<a_3+a_2 $.
		\end{enumerate}
		
		For
		\[
		b=
		\begin{array}{|c|}
			\hline
			\bfX^\ell\\
			(\mathbf{h}_j)_{j=1}^\ell\\
			(\mathbf{z}_j)_{j=1}^\ell\\
			\hline
		\end{array}
		\otimes 
		\begin{array}{|c|}
			\hline
			\bfX\\
			\bon \\
			\mathbf{y}\\
			\hline
		\end{array}
		\]
		with $ 1\leq\mathbf{y}<\mathbf{z}_{b_2+b_1+1} $, we have that $ R(b)= $
		\begin{enumerate}[label=\upshape{($ \bon $\alph*)}]
			\item \label{enum:bona}
			$
			\begin{array}{|c|}
				\hline
				\bfX\\
				\mathbf{h}_{a_3+a_2}\\
				\mathbf{z}_\ell\\
				\hline
			\end{array}
			\otimes 
			\begin{array}{|c|}
				\hline
				\bfX^\ell\\
				(\mathbf{h}_j)_{j=1}^{a_3+a_2-1}(\mathbf{x}_j)_{j=b_1}^{b_1+a_1}\\
				\bon^{b_2+1}(\mathbf{x}_j)_{j=1}^{b_1-1}\mathbf{y}(\mathbf{z}_j)_{j=b_2+b_1+1}^{\ell-1}\\
				\hline
			\end{array}
			$
			\parbox{18em}{if $ b_2<a_3 $ and $ b_2+b_1<a_3+a_2 $ and $ b_1>0 $ and $ \mathbf{y}\ge \mathbf{x}_{b_1} $;}
			\item \label{enum:bonb}
			$
			\begin{array}{|c|}
				\hline
				\bfX\\
				\btw\\
				\mathbf{z}_\ell\\
				\hline
			\end{array}
			\otimes 
			\begin{array}{|c|}
				\hline
				\bfX^\ell\\
				\btw^{a_3-1}\bon^{a_2+1}(\mathbf{x}_j)_{j=b_1+1}^{b_1+a_1}\\
				\bon^{b_2} \mathbf{y} (\mathbf{x}_j)_{j=1}^{b_1} (\mathbf{z}_j)_{j=b_2+b_1+1}^{\ell-1}\\
				\hline
			\end{array}
			$
			if $ b_2<a_3 $ and $ b_2+b_1<a_3+a_2 $ and $ \mathbf{y}<\mathbf{z}_{b_2+1} $;
			\item \label{enum:bonc}
			$
			\begin{array}{|c|}
				\hline
				\bfX\\
				\mathbf{h}_{a_3+a_2}\\
				\mathbf{x}_{b_1+a_1}\\
				\hline
			\end{array}
			\otimes 
			\begin{array}{|c|}
				\hline
				\bfX^\ell\\
				(\mathbf{h}_j)_{j=1}^{a_3+a_2-1}(\mathbf{x}_j)_{j=b_1-1}^{b_1+a_1-1}\\
				\bon^{b_2+1}(\mathbf{x}_j)_{j=1}^{b_1-2}\mathbf{y}(\mathbf{z}_j)_{j=b_2+b_1+1}^\ell\\
				\hline
			\end{array}
			$
			\parbox{18em}{if $ b_2<a_3 $ and $ b_2+b_1=a_3+a_2 $ and $ b_1>1 $ and $ \mathbf{y}\ge \mathbf{x}_{b_1} $;}
			\item \label{enum:bond}
			$
			\begin{array}{|c|}
				\hline
				\bfX\\
				\btw\\
				\mathbf{x}_{b_1+a_1}\\
				\hline
			\end{array}
			\otimes 
			\begin{array}{|c|}
				\hline
				\bfX^\ell\\
				\btw^{a_3-1}\bon^{a_2+1}(\mathbf{x}_j)_{j=b_1}^{b_1+a_1-1}\\
				\bon^{b_2} \mathbf{y} (\mathbf{x}_j)_{j=1}^{b_1-1}(\mathbf{z}_j)_{j=b_2+b_1+1}^\ell\\
				\hline
			\end{array}
			$
			\parbox{18em}{if $ b_2<a_3 $ and $ b_2+b_1=a_3+a_2 $ and ($ b_1=1 $ or $ \mathbf{y}<\mathbf{x}_1 $).}
			\item \label{enum:bone}
			$
			\begin{array}{|c|}
				\hline
				\bfX\\
				\btw\\
				\mathbf{x}_{b_1+a_1}\\
				\hline
			\end{array}
			\otimes 
			\begin{array}{|c|}
				\hline
				\bfX^\ell\\
				\btw^{a_3-1}\bon^{a_2+2}(\mathbf{x}_j)_{j=b_1+1}^{b_1+a_1-1}\\
				\bon^{b_2-1} \mathbf{y} (\mathbf{x}_j)_{j=1}^{b_1} (\mathbf{z}_j)_{j=b_2+b_1+1}^\ell\\
				\hline
			\end{array}
			$
			if $ 0<b_2=a_3 $ and $ a_1>0 $ and $ \mathbf{y}<\mathbf{z}_{b_2+1} $;
			\item \label{enum:bonf}
			$
			\begin{array}{|c|}
				\hline
				\bfX\\
				\btw\\
				\bon\\
				\hline
			\end{array}
			\otimes 
			\begin{array}{|c|}
				\hline
				\bfX^\ell\\
				\btw^{a_3-1}\bon^{a_2+1}\\
				\bon^{b_2-1}\mathbf{y} (\mathbf{x}_j)_{j=1}^{b_1} (\mathbf{z}_j)_{j=b_2+b_1+1}^\ell\\
				\hline
			\end{array}
			$
			if $ 0<b_2=a_3 $ and $ a_1=0 $ and $ \mathbf{y}<\mathbf{z}_{b_2+1} $;
			\item \label{enum:bong}
			$
			\begin{array}{|c|}
				\hline
				\bfX\\
				\mathbf{h}_\ell\\
				\mathbf{z}_\ell\\
				\hline
			\end{array}
			\otimes 
			\begin{array}{|c|}
				\hline
				\bfX^\ell\\
				\bon(\mathbf{h}_j)_{j=1}^{\ell-1}\\
				\mathbf{y}(\mathbf{z}_j)_{j=1}^{\ell-1}\\
				\hline
			\end{array}
			$
			if $ b_2=a_3=0 $ and $ \mathbf{y}<\mathbf{z}_{b_2+1} $.
		\end{enumerate}
		
		For
		\[
		b=
		\begin{array}{|c|}
			\hline
			\bfX^\ell\\
			(\mathbf{h}_j)_{j=1}^\ell\\
			(\mathbf{z}_j)_{j=1}^\ell\\
			\hline
		\end{array}
		\otimes 
		\begin{array}{|c|}
			\hline
			\bfX\\
			\btw \\
			\mathbf{y}\\
			\hline
		\end{array}
		\]
		with $ \bon\le \mathbf{y}<\mathbf{z}_{b_2+b_1+1} $, we have that $ R(b)= $
		\begin{enumerate}[label=\upshape{($ \btw $\alph*)}]
			\item \label{enum:btwa}
			$
			\begin{array}{|c|}
				\hline
				\bfX\\
				\mathbf{h}_{a_3+a_2}\\
				\mathbf{x}_{b_1+a_1}\\
				\hline
			\end{array}
			\otimes 
			\begin{array}{|c|}
				\hline
				\bfX^\ell\\
				\btw(\mathbf{h}_j)_{j=1}^{a_3+a_2-1}(\mathbf{x}_j)_{j=b_1}^{b_1+a_1-1}\\
				\bon^{b_2} (\mathbf{x}_j)_{j=1}^{b_1-1}\mathbf{y}(\mathbf{z}_j)_{j=b_2+b_1+1}^\ell\\
				\hline
			\end{array}
			$
			if $ b_1>0 $ and $ \mathbf{y}\ge \mathbf{x}_{b_1} $;
			\item \label{enum:btwb}
			$
			\begin{array}{|c|}
				\hline
				\bfX\\
				\btw\\
				\mathbf{x}_{b_1+a_1}\\
				\hline
			\end{array}
			\otimes 
			\begin{array}{|c|}
				\hline
				\bfX^\ell\\
				\btw^{a_3}\bon^{a_2+1}(\mathbf{x}_j)_{j=b_1+1}^{b_1+a_1-1}\\
				\bon^{b_2-1} \mathbf{y} (\mathbf{x}_j)_{j=1}^{b_1} (\mathbf{z}_j)_{j=b_2+b_1+1}^\ell\\
				\hline
			\end{array}
			$
			if $ b_2>0 $ and $ a_1>0 $ and $ 1\le \mathbf{y} < \mathbf{z}_{b_2+1} $.
			\item \label{enum:btwc}
			$
			\begin{array}{|c|}
				\hline
				\bfX\\
				\btw\\
				\bon\\
				\hline
			\end{array}
			\otimes 
			\begin{array}{|c|}
				\hline
				\bfX^\ell\\
				\btw^{a_3}\bon^{a_2}\\
				\bon^{b_2-1} \mathbf{y} (\mathbf{x}_j)_{j=1}^{b_1} (\mathbf{z}_j)_{j=b_2+b_1+1}^\ell\\
				\hline
			\end{array}
			$
			if $ b_2>0 $ and $ a_1=0 $ and $ 1\le \mathbf{y} <\mathbf{z}_{b_2+1} $
			\item \label{enum:btwd}
			$
			\begin{array}{|c|}
				\hline
				\bfX\\
				\mathbf{h}_\ell\\
				\mathbf{z}_\ell\\
				\hline
			\end{array}
			\otimes 
			\begin{array}{|c|}
				\hline
				\bfX^\ell\\
				\btw(\mathbf{h}_j)_{j=1}^{\ell-1}\\
				\mathbf{y}(\mathbf{z}_j)_{j=1}^{\ell-1}\\
				\hline
			\end{array}
			$
			if $ \mathbf{y}=\bon $ or ($ b_2=0 $ and $ 1\le \mathbf{y} < \mathbf{z}_{b_2+1} $).
		\end{enumerate}
	\end{lemma}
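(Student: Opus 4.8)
The plan is to prove Lemma~\ref{lem:m=rclassRmat} by a direct application of Schensted's bumping algorithm, exactly as for Lemma~\ref{lem:classRmat}. Recall that $R(\tab_1\otimes\tab_2)=\widetilde{\tab}_2\otimes\widetilde{\tab}_1$ precisely when $\col(\tab_2)\rightarrow\tab_1=\col(\widetilde{\tab}_1)\rightarrow\widetilde{\tab}_2$, so for each of the listed cases it suffices to compute the single insertion $P=\col(\tab_2)\rightarrow\tab_1$ and to confirm that the same tableau $P$ is produced by $\col(\widetilde{\tab}_1)\rightarrow\widetilde{\tab}_2$ for the claimed output; a reverse insertion on $P$ gives an equivalent route. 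As in Remark~\ref{rmk:omitX}, the $\bfX$-rows sit strictly above everything else, are common to $\tab_1$ and $\tab_2$, and are never disturbed by any bump, so they may be stripped off, reducing to the height-$2$ tableaux obtained by deleting the $\bfX$-block; this keeps the computations manageable.

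First I would record the block structure of $\tab_1$: its bottom row is $\bon^{b_2}(\mathbf{x}_j)_{j=1}^{b_1}(\mathbf{z}_j)_{j=b_2+b_1+1}^{\ell}$ and its second-from-bottom row is $\btw^{a_3}\bon^{a_2}(\mathbf{x}_j)_{j=b_1+1}^{b_1+a_1}$, the two rows lining up as forced by semistandardness. Then, for each case, I would insert the at most two letters of $\col(\tab_2)$ (namely $\mathbf{x}_0$, $\bon$, or $\btw$, followed possibly by $\mathbf{y}$) one column at a time, using the hypotheses $1\le\mathbf{x}_0<\mathbf{x}_1$, $\mathbf{x}_{b_1}\le\mathbf{y}<\mathbf{z}_{b_2+b_1+1}$, and the inequalities among $a_1,a_2,a_3,b_1,b_2$ to decide at each step whether the inserted letter appends to a row or bumps a specific cell, tracking the bump chain through the $\bon$-block, then the $\mathbf{x}$-block, then the $\mathbf{z}$-block of the bottom row. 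The boundary situations $b_1=0$, $a_1=0$, $b_2=0$ collapse parts of these blocks and produce the distinguished cases~\ref{enum:fb},~\ref{enum:bonf},~\ref{enum:bong},~\ref{enum:btwc},~\ref{enum:btwd}. Having obtained $P$ in each case, the identical computation applied to $\widetilde{\tab}_1$ and $\widetilde{\tab}_2$ confirms the stated $\widetilde{\tab}_2\otimes\widetilde{\tab}_1$.

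The main obstacle is purely combinatorial bookkeeping: there are many subcases (organised by whether the middle entry of the inserted column is $\mathbf{x}_0$, $\bon$, or $\btw$, and then by the relative sizes of the block lengths and of $\mathbf{y}$ relative to $\mathbf{z}_{b_2+1}$ and $\mathbf{x}_{b_1}$), and because the entries of the $\mathbf{x}$- and $\mathbf{z}$-blocks are only assumed strictly increasing, rather than literally $\bfze,\bfon,\btw$ as in Lemma~\ref{lem:classRmat}, one cannot simply quote the fixed-letter computation from Appendix~\ref{appendix:classRmatprf}. The delicate point is to argue uniformly that inserting $\mathbf{y}$ with $\mathbf{x}_{b_1}\le\mathbf{y}<\mathbf{z}_{b_2+b_1+1}$ either lands in the bottom row or bumps exactly $\mathbf{z}_{b_2+b_1+1}$, and correspondingly that a letter pushed up from the bottom row settles into the second row as the new rightmost $\mathbf{x}$ or becomes the new $\mathbf{h}_{a_3+a_2}$, so that the block structure of $\tab_1$ is preserved up to the small, explicitly described perturbations. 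Once this is set up carefully in Appendix~\ref{appendix:m=rclassRmatprf}, each of the cases~\ref{enum:fa}--\ref{enum:btwd} is a short verification.
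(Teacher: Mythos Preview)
Your proposal is correct and follows essentially the same approach as the paper: both verify each case of the $R$-matrix by a direct Schensted bumping computation, computing $\col(\tab_2)\rightarrow\tab_1$ and matching it against $\col(\widetilde{\tab}_1)\rightarrow\widetilde{\tab}_2$, with the $\bfX$-block playing no role. The paper's organisation is slightly more economical than what you outline: rather than redoing every case from scratch, it observes that~\ref{enum:fa}, \ref{enum:bonc}, \ref{enum:bond}, \ref{enum:bone}, \ref{enum:bonf}, \ref{enum:btwa}, \ref{enum:btwb}, \ref{enum:btwc}, \ref{enum:btwd} are formally parallel to~\ref{enum:q}, \ref{enum:k}, \ref{enum:l}, \ref{enum:e}, \ref{enum:f}, \ref{enum:m}, \ref{enum:n}, \ref{enum:o}, \ref{enum:p} respectively (and~\ref{enum:bong} to~\ref{enum:p} with shifted parameters), so those proofs carry over with only the obvious replacement of fixed letters by the strictly increasing sequences $(\mathbf{x}_j)$ and $(\mathbf{z}_j)$; it then writes out in full only the genuinely new cases~\ref{enum:bona}, \ref{enum:bonb}, \ref{enum:fb}, which have no counterpart in Lemma~\ref{lem:classRmat}.
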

	See Appendix~\ref{appendix:m=rclassRmatprf} for the proof.
	
	\begin{lemma}\label{lem:m=roverlap}
		Assume $ Q\le M,L < \ell \le d_2+1 $.
		
		Let $ P=P_1\otimes P_2\otimes P_3 \otimes \cdots $ be the state
		\[
		\begin{array}{|c|}
			\hline
			\btw\\
			x_1\\
			\hline
		\end{array}
		\otimes \cdots \otimes 
		\begin{array}{|c|}
			\hline
			\btw\\
			x_{d_1-Q}\\
			\hline
		\end{array}
		\otimes
		\begin{array}{|c|}
			\hline
			x_{d_1-Q+1}\\
			y_1\\
			\hline
		\end{array}
		\otimes \cdots \otimes 
		\begin{array}{|c|}
			\hline
			x_{d_1}\\
			y_Q\\
			\hline
		\end{array}
		\otimes
		\begin{array}{|c|}
			\hline
			\bon\\
			y_{Q+1}\\
			\hline
		\end{array}
		\otimes \cdots \otimes 
		\begin{array}{|c|}
			\hline
			\bon\\
			y_L\\
			\hline
		\end{array}
		\otimes
		\begin{array}{|c|}
			\hline
			\btw\\
			y_{L+1}\\
			\hline
		\end{array}
		\otimes \cdots \otimes 
		\begin{array}{|c|}
			\hline
			\btw\\
			y_{d_2}\\
			\hline
		\end{array}
		\otimes u_1^{\otimes \infty}
		\]
		where
		\begin{align*}
			x_j =
			\begin{cases}
				d_2+d_1-M+1-j&\text{if}~j\le d_1-M,\\
				d_1+1-j&\text{if}~j> d_1-M,
			\end{cases},
			&&
			y_j = 
			d_2+1-j.
		\end{align*}
		Let $ \widetilde{C}=\min(d_1,\ell) $. Then,
		\begin{enumerate}[label=(\roman*)]
			\item \label{enum:m=rmoreoverlap}
			if $ d_2=\ell-1 $, $ d_1\ge\ell $ and $ L-Q>\widetilde{C}-M-1 $ then $ T_{\ell}(P)= $
			\[
			u_1^{\otimes \widetilde{C}}\otimes
			\begin{array}{|c|}
				\hline
				\btw\\
				x_1\\
				\hline
			\end{array}
			\otimes \cdots \otimes 
			\begin{array}{|c|}
				\hline
				\btw\\
				x_{d_1-Q-1}\\
				\hline
			\end{array}
			\otimes
			\begin{array}{|c|}
				\hline
				x_{d_1-Q}\\
				y_1\\
				\hline
			\end{array}
			\otimes \cdots \otimes 
			\begin{array}{|c|}
				\hline
				x_{d_1}\\
				y_{Q+1}\\
				\hline
			\end{array}
			\otimes
			\begin{array}{|c|}
				\hline
				\bon\\
				y_{Q+2}\\
				\hline
			\end{array}
			\otimes \cdots \otimes 
			\begin{array}{|c|}
				\hline
				\bon\\
				y_L\\
				\hline
			\end{array}
			\otimes
			\begin{array}{|c|}
				\hline
				\btw\\
				y_{L+1}\\
				\hline
			\end{array}
			\otimes \cdots \otimes 
			\begin{array}{|c|}
				\hline
				\btw\\
				y_{d_2}\\
				\hline
			\end{array}
			\otimes u_1^{\otimes \infty};
			\]
			\item \label{enum:m=rsameoverlap}
			if $ d_1\ge\ell $ and $ L-Q=\widetilde{C}-M-1 $ then $ T_{\ell}(P)= $
			\begin{align*}
				&
				u_1^{\otimes \widetilde{C}}\otimes
				\begin{array}{|c|}
					\hline
					\btw\\
					x_1\\
					\hline
				\end{array}
				\otimes \cdots \otimes 
				\begin{array}{|c|}
					\hline
					\btw\\
					x_{d_1-M-1}\\
					\hline
				\end{array}
				\otimes
				\begin{array}{|c|}
					\hline
					\btw\\
					x_{d_1-M+1}\\
					\hline
				\end{array}
				\otimes\cdots\otimes
				\begin{array}{|c|}
					\hline
					\btw\\
					x_{d_1-Q}\\
					\hline
				\end{array}
				\\&\quad
				\otimes
				\begin{array}{|c|}
					\hline
					x_{d_1-Q+1}\\
					x_{d_1-M}\\
					\hline
				\end{array}
				\otimes
				\begin{array}{|c|}
					\hline
					x_{d_1-Q+2}\\
					y_1\\
					\hline
				\end{array}
				\otimes \cdots \otimes 
				\begin{array}{|c|}
					\hline
					x_{d_1}\\
					y_{Q-1}\\
					\hline
				\end{array}
				\otimes
				\begin{array}{|c|}
					\hline
					\bon\\
					y_Q\\
					\hline
				\end{array}
				\otimes \cdots \otimes 
				\begin{array}{|c|}
					\hline
					\bon\\
					y_{L-1}\\
					\hline
				\end{array}
				\otimes
				\begin{array}{|c|}
					\hline
					\btw\\
					y_L\\
					\hline
				\end{array}
				\otimes \cdots \otimes 
				\begin{array}{|c|}
					\hline
					\btw\\
					y_{d_2}\\
					\hline
				\end{array}
				\otimes u_1^{\otimes \infty};
			\end{align*}
			\item \label{enum:m=rlessoverlap}
			if $ Q>0 $, $ d_1=\ell-1 $, $ d_2\ge\ell $ and $ L-Q>\widetilde{C}-M-1 $ then
			$ T_{\ell}(P)= $
			\[
			u_1^{\otimes \widetilde{C}}\otimes
			\begin{array}{|c|}
				\hline
				\btw\\
				x_1\\
				\hline
			\end{array}
			\otimes \cdots \otimes 
			\begin{array}{|c|}
				\hline
				\btw\\
				x_{d_1-Q}\\
				\hline
			\end{array}
			\otimes
			\begin{array}{|c|}
				\hline
				x_{d_1-Q+1}\\
				y_1\\
				\hline
			\end{array}
			\otimes \cdots \otimes 
			\begin{array}{|c|}
				\hline
				x_{d_1}\\
				y_{Q-1}\\
				\hline
			\end{array}
			\otimes
			\begin{array}{|c|}
				\hline
				\bon\\
				y_{Q}\\
				\hline
			\end{array}
			\otimes \cdots \otimes 
			\begin{array}{|c|}
				\hline
				\bon\\
				y_L\\
				\hline
			\end{array}
			\otimes
			\begin{array}{|c|}
				\hline
				\btw\\
				y_{L+1}\\
				\hline
			\end{array}
			\otimes \cdots \otimes 
			\begin{array}{|c|}
				\hline
				\btw\\
				y_{d_2}\\
				\hline
			\end{array}
			\otimes u_1^{\otimes \infty}.
			\]		
		\end{enumerate}
	\end{lemma}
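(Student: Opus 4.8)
The plan is to prove this lemma by direct computation, tracking the carrier as it is pushed through the state $P$, exactly in the spirit of the carrier-tracking argument behind Proposition~\ref{prop:ht1isomorphism}, but now invoking Lemma~\ref{lem:m=rclassRmat} in place of Lemma~\ref{lem:classRmat}. Recall that $T_\ell(P)\otimes u_\ell = \cdots R_2R_1R_0(u_\ell\otimes P)$; write $P=P_1\otimes P_2\otimes\cdots$, let $u(j)$ denote the carrier after $j$ applications of the $R$-matrix, and let $\widetilde P_j$ denote the $j$-th output tableau, so that $R_{j-1}\cdots R_1R_0(u_\ell\otimes P)=\widetilde P_1\otimes\cdots\otimes\widetilde P_j\otimes u(j)\otimes P_{j+1}\otimes\cdots$ with $u(0)=u_\ell$. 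By Remark~\ref{rmk:omitX} the $\bfX$ rows play no role, so every column may be treated as a height-$2$ tableau; in particular $u_\ell$ becomes $\begin{smallmatrix}\btw^\ell\\ \bon^\ell\end{smallmatrix}$. We parametrise each carrier $u(j)$ by the data $(a_3,a_2,a_1,b_2,b_1)$ exactly as in Lemma~\ref{lem:m=rclassRmat}, and at each step we read off from that lemma which case applies, together with the resulting update of the parameters and the output column $\widetilde P_{j+1}$.

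The computation naturally splits into three phases. First, pushing the carrier through the leading columns $\begin{smallmatrix}\btw\\ x_j\end{smallmatrix}$ (for $1\le j\le d_1-Q$) repeatedly applies the appropriate case among \ref{enum:btwa}--\ref{enum:btwd}, each time outputting a $u_1$ and loading one column into the carrier, so that $\widetilde P_1=\cdots=\widetilde P_{\min(d_1-Q,\ell)}=u_1$ and the carrier reaches an explicit ``loaded'' state; when $Q>0$ the overlap columns $\begin{smallmatrix}x_{d_1-Q+j}\\ y_j\end{smallmatrix}$ are absorbed via the $(n\mathrm{a})$/$(n\mathrm{b})$ cases, which again continue to emit $u_1$ until position $\widetilde C=\min(d_1,\ell)$. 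Second, the carrier passes through the remaining overlap columns and the $\begin{smallmatrix}\bon\\ y_j\end{smallmatrix}$ columns: this is the delicate phase, in which cases \ref{enum:bona}--\ref{enum:bong} and \ref{enum:btwa}--\ref{enum:btwd} occur in a fixed order (some possibly skipped), the output begins to record the first soliton, and — as in the proofs of Lemmas~\ref{lem:B_1takeA_2}~and~\ref{lem:calcB_1takeA_2} — the bookkeeping is controlled by an inductive identity relating the number of $\bon$'s already output, the $\bon$'s still in the carrier, and the carrier-shape difference $b_1-a_2$; this identity pins down exactly where the first soliton sits in $T_\ell(P)$. Third, the carrier passes through the $\begin{smallmatrix}\btw\\ y_j\end{smallmatrix}$ tail and finally unloads its remaining columns in reverse order via the terminal case ($(\bon\mathrm{g})$ or $(\btw\mathrm{d})$), which determines the placement and contents of the second soliton.

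The three sub-cases~\ref{enum:m=rmoreoverlap}--\ref{enum:m=rlessoverlap} correspond precisely to the different carrier states reached when the unloading phase begins: whether $d_1\ge\ell$ or $d_1=\ell-1$ controls whether the carrier ever fills completely (hence whether the $(n\mathrm{b})$/$(\bon\mathrm{c})$-type transitions occur); the sign of $(L-Q)-(\widetilde C-M-1)$ controls whether the first soliton emerges at full length or one column shorter; and $d_2=\ell-1$ versus $d_2\ge\ell$ governs the endgame of the $\btw$-tail. In each case one must verify that the excluded cases of Lemma~\ref{lem:m=rclassRmat} genuinely cannot occur (using the standing hypotheses $Q\le M,L<\ell\le d_2+1$), then read off the positions and contents of the two solitons from the final carrier via the reverse-order terminal case, and match them against the claimed expression. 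I expect the main obstacle to be exactly this case analysis in the middle phase: establishing that the relevant cases occur in the asserted order, identifying which ones are skipped under each hypothesis, and ruling out the spurious cases. Although each individual step is just a routine application of Schensted insertion, the $m=r$ situation is harder than the $m\ge r+2$ one because the bottom entries $x_j,y_j$ are distinct integers rather than repeated symbols, so strictly more cases of Lemma~\ref{lem:m=rclassRmat} are in play and the inductive invariant must be tracked with more care.
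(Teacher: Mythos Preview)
Your plan is correct and follows essentially the same route as the paper: track the carrier through $P$ step by step using the cases of Lemma~\ref{lem:m=rclassRmat}, in three phases (load via \ref{enum:btwc}/\ref{enum:btwd} then \ref{enum:fa}/\ref{enum:fb}, pass through the $\bon$-block via the \ref{enum:bona}--\ref{enum:bong} cascade, then unload via \ref{enum:btwa}--\ref{enum:btwd}), and split at the end into the three sub-cases according to the carrier state reached. The only point on which the paper differs from your sketch is the choice of bookkeeping invariant: rather than the $b_1-a_2$ quantity carried over from Lemma~\ref{lem:B_1takeA_2}, the paper tracks that $A_1-B_2$ stays constant (equal to $\widetilde{C}-\ell+Q$) through cases \ref{enum:bona}--\ref{enum:bone}, \ref{enum:btwa}, \ref{enum:btwb}, together with the fact that $A_3$ is non-increasing and equals $\ell-L$ once the $\bon$-columns are exhausted; these two facts are enough to pin down the carrier at position $d_1-Q+d_2$ (resp.\ $d_1-Q+\ell-1$, $d_1-Q+\ell$) in the three sub-cases, after which \ref{enum:btwd} writes out the remainder.
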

	\begin{proof}
		Let $ T_{\ell}(P) = \widetilde{P}_1\otimes \widetilde{P}_2 \otimes \widetilde{P}_3 \otimes \cdots $
		and let $ u(j) $ be the carrier at step $ j $,
		i.e.\ $ R_{j-1}\cdots R_1 R_0(P) = \widetilde{P}_1\otimes \widetilde{P}_2\otimes\cdots\otimes\widetilde{P}_j \otimes u(j)\otimes P_{j+1} \otimes \widetilde{P}_{j+2}\otimes\cdots $.
		Let
		\[
		u(j) = 
		\begin{array}{|c|}
			\hline
			\btw^{A_3(j)}\bon^{A_2(j)}(\mathbf{x}^j_k)_{k=B_1(j)+1}^{B_1(j)+A_1(j)}\\
			\bon^{B_2(j)}(\mathbf{x}^j_k)_{k=1}^{B_1(j)}(\mathbf{z}^j_k)_{k=B_2(j)+B_1(j)+1}^{\ell}\\
			\hline
		\end{array}
		\]
		for some sequences $ (\mathbf{x}^j_k)_{k=1}^{B_1(j)+A_1(j)} $ and $ (\mathbf{z}^j_k)_{k=B_2(j)+B_1(j)+1}^{\ell} $ such that $ 1\leq \mathbf{x}^j_1 < \cdots < \mathbf{x}^j_{B_1(j)+A_1(j)} $  and $ 1\leq \mathbf{z}^j_{B_2(j)+B_1(j)+1} < \cdots < \mathbf{z}^j_{\ell} $
		where $ A_3,A_2,A_1,B_2,B_1 $ are functions on non-negative integers.
		We will often write $ A_3 $ for $ A_3(j) $, with $ j $ being clear from context.
		Similarly for the other functions.
		Note that 
		\[
		u(0) = u_{\ell} = 
		\begin{array}{|c|}
			\hline
			\btw^{\ell}\\
			\bon^{\ell}\\
			\hline
		\end{array}.
		\]
		Starting with $ u(0) $, we can apply~\ref{enum:btwc} $ \min(d_1-Q,\ell) $ times:
		\begin{align*}
			\widetilde{P}_1\otimes\cdots\otimes\widetilde{P}_{\min(d_1-Q,\ell)} &= u_1^{\otimes \min(d_1-Q,\ell)}\\
			u(\min(d_1-Q,\ell)) &= 
			\begin{array}{|c|}
				\hline
				\btw^{\ell}\\
				\bon^{\ell-\min(d_1-Q,\ell)}(x_{\min(d_1-Q,\ell)-j})_{j=0}^{\min(d_1-Q,\ell)-1}\\
				\hline
			\end{array}
			\,.
		\end{align*}
		If $ \ell<d_1-Q $ we then apply~\ref{enum:btwd} $ d_1-Q-\ell $ times.
		Therefore,
		\begin{align*}
			\widetilde{P}_1\otimes\cdots\otimes\widetilde{P}_{d_1-Q} 
			&= u_1^{\otimes \min(d_1-Q,\ell)}
			\otimes
			\begin{array}{|c|}
				\hline
				\btw\\
				x_1\\
				\hline
			\end{array}
			\otimes\cdots\otimes
			\begin{array}{|c|}
				\hline
				\btw\\
				x_{d_1-Q-\min(d_1-Q,\ell)}\\
				\hline
			\end{array}
			\\
			u(d_1-Q) &= 
			\begin{array}{|c|}
				\hline
				\btw^{\ell}\\
				\bon^{\ell-\min(d_1-Q,\ell)}(x_{d_1-Q-j})_{j=0}^{\min(d_1-Q,\ell)-1}\\
				\hline
			\end{array}
		\end{align*}
		(noting that $ \min(d_1-Q,\ell) = \ell $ when $ \ell < d_1-Q $).
		Additionally, the above equalities clearly hold if $ \ell \geq d_1-Q $.
		We can then apply~\ref{enum:fa} (but only if $ d_1-Q<\ell $)
		and then~\ref{enum:fb} to get:
		\begin{align*}
			\widetilde{P}_1\otimes\cdots\otimes\widetilde{P}_{d_1} 
			&= u_1^{\otimes \min(d_1,\ell)}
			\otimes
			\begin{array}{|c|}
				\hline
				\btw\\
				x_1\\
				\hline
			\end{array}
			\otimes\cdots\otimes
			\begin{array}{|c|}
				\hline
				\btw\\
				x_{d_1-\min(d_1,\ell)}\\
				\hline
			\end{array}
			\\
			u(d_1) &= 
			\begin{array}{|c|}
				\hline
				\btw^{\ell-Q}(x_{d_1-M+Q-j})_{j=0}^{Q-1}\\
				\bon^{\ell-\min(d_1,\ell)}(x_{d_1-j})_{j=0}^{M-Q-1}(y_{Q-j})_{j=0}^{Q-1}(x_{d_1-M-j})_{j=0}^{\min(d_1,\ell)-M-1}\\
				\hline
			\end{array}
			\,.
		\end{align*}
		(Note that, during~\ref{enum:fb}, we always have $ B_2+B_1<A_3+A_2 $.
		Indeed, $ B_1+A_1\le M < \ell $ and $ B_2=0 $,
		so $ B_2+B_1 = B_1 < \ell-A_1 = A_3+A_2 $.)
		Observe that $ M-\min(d_1,\ell)\le 0 $ so
		$ \ell-\min(d_1,\ell) + M - Q \le \ell-Q $.
		
		Let $ \widetilde{C}=\min(d_1,\ell) $. Now we investigate moving the carrier
		starting from $ u(d_1) $ and pausing when we reach $ u(d_1+L-Q) $.
		Case~\ref{enum:bona} applies 
		until $ A_3=B_2+B_1 $ or $ B_1=0 $ (or until $ u(d_1+L-Q) $):
		\begin{itemize}
			\item Assume $ A_3=B_2+B_1 = \ell-\widetilde{C}+M-Q $ 
			(note that~\ref{enum:bona} does not change the value of $ B_2+B_1 $).
			Then, we have applied~\ref{enum:bona} $ \widetilde{C}-M $ times.
			Therefore, 
			\begin{equation}\label{eq:m=r cd at d1-M}
				\begin{aligned}
					\widetilde{P}_1\otimes\cdots\otimes\widetilde{P}_{d_1+\widetilde{C}-M} 
					&= u_1^{\otimes \widetilde{C}}
					\otimes
					\begin{array}{|c|}
						\hline
						\btw\\
						x_1\\
						\hline
					\end{array}
					\otimes\cdots\otimes
					\begin{array}{|c|}
						\hline
						\btw\\
						x_{d_1-M}\\
						\hline
					\end{array}\,,
					\\
					u(d_1+\widetilde{C}-M) &= 
					\begin{array}{|c|}
						\hline
						\btw^{\ell-\widetilde{C}+M-Q}(x_{d_1+\widetilde{C}-2M+Q-j})_{j=0}^{\widetilde{C}-M+Q-1}\\
						\bon^{\ell-M}(x_{d_1-j})_{j=0}^{2M-Q-\widetilde{C}-1}
						(y_{\widetilde{C}-M+Q-j})_{j=0}^{\widetilde{C}-M+Q-1}\\
						\hline
					\end{array}
					\,.
				\end{aligned}	
			\end{equation}
			Then we apply~\ref{enum:bonc} until $ B_2=A_3 $ or $ B_1=1 $.
			If $ B_2=A_3 $ we apply~\ref{enum:bone}.
			If $ B_1=1 $ we apply~\ref{enum:bond} once,
			and then we have $ B_2=A_3 $ so we apply~\ref{enum:bone}.
			\item Assume $ B_1=0 $.
			We must have applied~\ref{enum:bona} $ M-Q $ times.
			Since~\ref{enum:bona} does not change the value of $ B_2+B_1 $,
			we have that $ B_2=\ell-\widetilde{C}+M-Q $.
			We then apply~\ref{enum:bonb} until $ A_3=B_2=\ell-\widetilde{C}+M-Q $.
			Since both~\ref{enum:bona} and~\ref{enum:bonb} decrement $ A_3 $ by $ 1 $,
			we have applied~\ref{enum:bona} and~\ref{enum:bonb} a total of $ \widetilde{C}-M $ times.
			Therefore,
			\begin{equation}\label{eq:m=r b at d1-M}
				\begin{aligned}
					\widetilde{P}_1\otimes\cdots\otimes\widetilde{P}_{d_1+\widetilde{C}-M} 
					&= u_1^{\otimes \widetilde{C}}
					\otimes
					\begin{array}{|c|}
						\hline
						\btw\\
						x_1\\
						\hline
					\end{array}
					\otimes\cdots\otimes
					\begin{array}{|c|}
						\hline
						\btw\\
						x_{d_1-M}\\
						\hline
					\end{array}\,,
					\\
					u(d_1+\widetilde{C}-M) &= 
					\begin{array}{|c|}
						\hline
						\btw^{\ell-\widetilde{C}+M-Q}\bon^{\widetilde{C}-2M+Q}(x_{d_1-j})_{j=0}^{M-1}\\
						\bon^{\ell-\widetilde{C}+M-Q}
						(y_{Q-j})_{j=0}^{\widetilde{C}-M+Q-1}\\
						\hline
					\end{array}
					\,.
				\end{aligned}
			\end{equation}
			Then we apply~\ref{enum:bone}.
		\end{itemize}
		We apply~\ref{enum:bone} until $ A_1=0 $ or $ B_2=A_3=0 $.
		Next, we apply~\ref{enum:bonf} until $ B_2=A_3=0 $.
		From then onwards,~\ref{enum:bong} applies.
		Note that we might not reach all of the above cases before reaching $ u(d_1+L-Q) $ and
		we might skip over cases.
		
		Now, we investigate moving the carrier starting from $ u(d_1+L-Q) $ 
		and pausing when we reach $ u(d_1-Q+d_2) $.
		Case~\ref{enum:btwa} applies until $ B_1=0 $,
		case~\ref{enum:btwb} applies until $ A_1=0 $,
		case~\ref{enum:btwc} applies until $ B_2=0 $,
		and case~\ref{enum:btwd} applies from then onwards.
		Once again, we might not reach all of the cases, and we might skip over cases.
		Note that~\ref{enum:btwa} can only be preceded by~\ref{enum:bona}, \ref{enum:bonc} and \ref{enum:bond}.
		\ref{enum:btwb} can \emph{not} be preceded by~\ref{enum:bonf} or~\ref{enum:bong}.
		\ref{enum:btwc} can \emph{not} be preceded by~\ref{enum:bong}.
		
		After $ u(d_1-Q+d_2) $,~\ref{enum:btwd} puts down the columns of the carrier in reverse order.
		
		Note that~\ref{enum:bona}, \ref{enum:bonb}, \ref{enum:bonc}, \ref{enum:bond}, \ref{enum:bone}, \ref{enum:btwa}, \ref{enum:btwb} do not change the value of $ A_1-B_2 $.
		Therefore, $ A_1-B_2 $ remains constant at $ A_1-B_2=\widetilde{C}-\ell+Q $ in these cases.
		Note also that~\ref{enum:bona}, \ref{enum:bonb}, \ref{enum:bonc}, \ref{enum:bond}, \ref{enum:bone} decrease the value of $ A_3 $,
		while~\ref{enum:btwa}, \ref{enum:btwb}, \ref{enum:btwc} do not change $ A_3 $.
		So, if~\ref{enum:bonf} and~\ref{enum:bong} do not apply, then $ A_3 = \ell-L $ in the carriers between $ u(d_1-Q+L) $ and $ u(d_1-Q+d_2) $.
		
		Assume $ d_2=\ell-1 $, $ d_1\ge\ell $ and $ L-Q>\widetilde{C}-M-1 $.
		We have that $ d_1+\widetilde{C}-M \le d_1+L-Q $,
		So either~\eqref{eq:m=r cd at d1-M} or~\eqref{eq:m=r b at d1-M} applies.
		After $ u(d_1+\widetilde{C}-M) $, all of the applicable cases of Lemma~\ref{lem:m=rclassRmat}
		put down members from $ (x_{d_1-j})_{j=0}^{M-1} $ in order (with $ \btw $ in the row above).
		Observe that
		\[
		(d_1-Q+d_2)-(d_1+\widetilde{C}-M) = d_2-Q-\widetilde{C}+M = (\ell-1) - Q - \ell +M = M-Q-1.
		\]
		Therefore, $ (x_{d_1-Q-j})_{j=1}^{M-Q-1} $ are put down,
		while $ (x_{d_1-j})_{j=0}^Q $ remain in $ u(d_1-Q+d_2) $.
		That is,
		\[
		u(d_1-Q+d_2) =
		\begin{array}{|c|}
			\hline
			\btw^{\ell-L}\bon^{L-Q-1}(x_{d_1-j})_{j=0}^Q\\
			\bon(y_{d_2-j})_{j=0}^{d_2-1}\\
			\hline
		\end{array}\,.
		\]
		(Note that, in~\eqref{eq:m=r cd at d1-M}, $ 2M-Q-\widetilde{C} < M - Q $.
		Consequently, there are no $ x_j $ in the bottom row of $ u(d_1-Q+d_2) $.
		We can deduce the remaining structure of the carrier from the facts that $A_1-B_2 = \widetilde{C}-\ell+Q = Q $ and $ A_3 = \ell-L $.)
		Applying~\ref{enum:btwd} proves~\ref{enum:m=rmoreoverlap}.
		
		Assume $ d_1\ge\ell $ and $ L-Q=\widetilde{C}-M-1 $.
		Then $ d_1+L-Q = d_1+\widetilde{C}-M-1 $ 
		so~\eqref{eq:m=r cd at d1-M} and~\eqref{eq:m=r b at d1-M} do not apply
		(but the state immediately before them does).
		Thus, $ x_{d_1-M} $ is still in $ u(d_1+L-Q) $
		(and, incidentally, will remain in the carrier until we apply~\ref{enum:btwd}).
		After $ u(d_1+L-Q) $,~\ref{enum:btwa} and~\ref{enum:btwb} put down 
		members from $ (x_{d_1-j})_{j=0}^{M-1} $ in order (with $ \btw $ in the row above) until $ B_2=0 $.
		Observe that
		\[
		(d_1-Q+\ell-1)-(d_1+\widetilde{C}-M-1)
		=\ell-Q-\widetilde{C}+M
		\ge \ell - Q -\ell + M
		= M-Q.
		\]
		Therefore $ A_1(d_1-Q+\ell-1)=Q $.
		The fact that $ A_1-B_2=\widetilde{C}+Q-\ell = Q $
		implies $ B_2(d_1-Q+\ell-1)=0 $.
		Thus,
		\[
		u(d_1-Q+\ell-1) = 
		\begin{array}{|c|}
			\hline
			\btw^{\ell-L}\bon^{L-Q}(x_{d_1-j})_{j=0}^{Q-1}\\
			(y_{\ell-j})_{j=1}^{\ell-1}x_{d_1-M}\\
			\hline
		\end{array}\,
		\]
		(using also the fact that $ A_3 = \ell-L $).
		Applying~\ref{enum:btwd} proves~\ref{enum:m=rsameoverlap}.
		
		Assume $ Q>0 $, $ d_1=\ell-1 $, $ d_2\ge\ell $ and $ L-Q>\widetilde{C}-M-1 $.
		We have that $ d_1+\widetilde{C}-M \le d_1+L-Q $,
		So either~\eqref{eq:m=r cd at d1-M} or~\eqref{eq:m=r b at d1-M} applies.
		After $ u(d_1+\widetilde{C}-M) $, all of the applicable cases of Lemma~\ref{lem:m=rclassRmat}
		put down members from $ (x_{d_1-j})_{j=0}^{M-1} $ in order (with $ \btw $ in the row above).
		Observe that
		\[
		(d_1-Q+\ell) - (d_1+\widetilde{C}-M)
		= \ell-Q-\widetilde{C}+M
		\ge \ell-Q-\ell+1+M
		=M-Q+1.
		\]
		Therefore $ A_1(d_1-Q+\ell) = Q-1 $.
		The fact that $ A_1-B_2 = \widetilde{C}+Q-\ell = Q-1 $ 
		implies $ B_2(d_1-Q+\ell)=0 $.
		Thus,
		\[
		u(d_1-Q+\ell)=
		\begin{array}{|c|}
			\hline
			\btw^{\ell-L}\bon^{L-Q+1}(x_{d_1-j})_{j=0}^{Q-2}\\
			(y_{\ell-j})_{j=0}^{\ell-1}\\
			\hline
		\end{array}
		\]
		(using also the fact that $ A_3 = \ell-L$).
		Applying~\ref{enum:btwd} proves~\ref{enum:m=rlessoverlap}.
	\end{proof}
	
	\begin{lemma}\label{lem:m=rnooverlap}
		Assume $ Q\le M,L < \ell \le d_2+1 $ and $ C_2<\min(d_1,\ell) $.
		
		Let $ P=P_1\otimes P_2\otimes P_3 \otimes \cdots $ be the state
		\[
		P =
		\begin{array}{|c|}
			\hline
			\btw\\
			x_1\\
			\hline
		\end{array}
		\otimes \cdots \otimes 
		\begin{array}{|c|}
			\hline
			\btw\\
			x_{d_1}\\
			\hline
		\end{array}
		\otimes u_1^{\otimes C_2}\otimes 
		\begin{array}{|c|}
			\hline
			\bon\\
			y_1\\
			\hline
		\end{array}
		\otimes \cdots \otimes 
		\begin{array}{|c|}
			\hline
			\bon\\
			y_L\\
			\hline
		\end{array}
		\otimes
		\begin{array}{|c|}
			\hline
			\btw\\
			y_{L+1}\\
			\hline
		\end{array}
		\otimes \cdots \otimes
		\begin{array}{|c|}
			\hline
			\btw\\
			y_{d_2}\\
			\hline
		\end{array}
		\otimes u_1^{\otimes \infty },
		\]
		where
		\begin{align*}
			x_j =
			\begin{cases}
				d_2+d_1-M+1-j&\text{if}~j\le d_1-M,\\
				d_1+1-j&\text{if}~j> d_1-M,
			\end{cases}
			&&
			y_j = 
			d_2+1-j.
		\end{align*}
		Let $ \widetilde{C}=\min(d_1,\ell) $.
		Then,
		\begin{enumerate}[label=(\roman*)]
			\item\label{enum:m=rsamelength} if $ C_2+L>\widetilde{C}-M-1 $ then $ T_{\ell}(P)= $
			\[
			u_1^{\otimes \widetilde{C}} \otimes
			\begin{array}{|c|}
				\hline
				\btw\\
				x_1\\
				\hline
			\end{array}
			\otimes \cdots \otimes 
			\begin{array}{|c|}
				\hline
				\btw\\
				x_{d_1}\\
				\hline
			\end{array}
			\otimes u_1^{\otimes C_2 + \min(d_2,\ell) - \widetilde{C}}\otimes 
			\begin{array}{|c|}
				\hline
				\bon\\
				y_1\\
				\hline
			\end{array}
			\otimes \cdots \otimes 
			\begin{array}{|c|}
				\hline
				\bon\\
				y_L\\
				\hline
			\end{array}
			\otimes
			\begin{array}{|c|}
				\hline
				\btw\\
				y_{L+1}\\
				\hline
			\end{array}
			\otimes \cdots \otimes
			\begin{array}{|c|}
				\hline
				\btw\\
				y_{d_2}\\
				\hline
			\end{array}
			\otimes u_1^{\otimes \infty } ;
			\]
			\item\label{enum:m=rdifflength} if  $ C_2+L=\widetilde{C}-M-1 $ then $ T_{\ell}(P)= $
			\begin{align*}
				&
				u_1^{\otimes \widetilde{C}} \otimes
				\begin{array}{|c|}
					\hline
					\btw\\
					x_1\\
					\hline
				\end{array}
				\otimes \cdots \otimes 
				\begin{array}{|c|}
					\hline
					\btw\\
					x_{d_1-M-1}\\
					\hline
				\end{array}
				\otimes
				\begin{array}{|c|}
					\hline
					\btw\\
					x_{d_1-M+1}\\
					\hline
				\end{array}
				\otimes \cdots \otimes 
				\begin{array}{|c|}
					\hline
					\btw\\
					x_{d_1}\\
					\hline
				\end{array}
				\\&\quad
				\otimes u_1^{\otimes C_2+\ell-\widetilde{C}}\otimes 
				\begin{array}{|c|}
					\hline
					\bon\\
					x_{d_1-M}\\
					\hline
				\end{array}
				\otimes
				\begin{array}{|c|}
					\hline
					\bon\\
					y_1\\
					\hline
				\end{array}
				\otimes \cdots \otimes 
				\begin{array}{|c|}
					\hline
					\bon\\
					y_{L-1}\\
					\hline
				\end{array}
				\otimes
				\begin{array}{|c|}
					\hline
					\btw\\
					y_L\\
					\hline
				\end{array}
				\otimes \cdots \otimes
				\begin{array}{|c|}
					\hline
					\btw\\
					y_{d_2}\\
					\hline
				\end{array}
				\otimes u_1^{\otimes \infty }.
			\end{align*}
		\end{enumerate}
		
	\end{lemma}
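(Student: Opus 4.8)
The plan is to prove Lemma~\ref{lem:m=rnooverlap} by the same explicit carrier-tracking computation used for Lemma~\ref{lem:m=roverlap}. Write $T_{\ell}(P)=\widetilde P_1\otimes\widetilde P_2\otimes\cdots$, and let $u(j)$ denote the carrier after $j$ applications of the $R$-matrix, parametrised exactly as in the proof of Lemma~\ref{lem:m=roverlap} by the non-negative integer functions $A_3,A_2,A_1,B_2,B_1$ together with auxiliary increasing sequences $(\mathbf x^j_k)_k$ and $(\mathbf z^j_k)_k$, so that $u(0)=u_{\ell}$ has $A_3=B_2=\ell$ and all other data trivial. The whole argument consists of moving the carrier through $P$ one column at a time and, at each step, reading off which case of Lemma~\ref{lem:m=rclassRmat} applies.

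First I would pass the carrier through the $d_1$ columns of the first soliton. Each has $\btw$ on top and, below it, an entry of the strictly decreasing sequence $x_1>x_2>\cdots>x_{d_1}$, which undergoes a single downward jump in value at index $d_1-M$ (from $d_2+1$ to $M$). Starting from $u(0)=u_{\ell}$, case~\ref{enum:btwc} applies repeatedly, switching to case~\ref{enum:btwd} once the carrier is full (which happens only when $d_1>\ell$); hence $\widetilde P_1=\cdots=\widetilde P_{\widetilde C}=u_1$ with $\widetilde C=\min(d_1,\ell)$, and $u(d_1)$ is an explicit carrier recording the first soliton's bottom entries. Next I would pass it through the $C_2$ vacuum columns: case~\ref{enum:btwd} with $\mathbf y=\bon$ applies $C_2$ times, each application emitting the carrier's last column, so $\widetilde P_{\widetilde C+1}\otimes\cdots\otimes\widetilde P_{d_1+C_2}$ reproduces the first $d_1+C_2-\widetilde C$ columns of the first soliton. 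The hypothesis $C_2<\widetilde C$ is precisely what keeps the carrier non-empty here, so the remaining entries $x_{d_1+C_2-\widetilde C+1},\dots,x_{d_1}$ are still stored in $u(d_1+C_2)$ when the carrier reaches the second soliton.

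Finally I would pass the carrier through the $\bon$-topped and then the $\btw$-topped columns of the second soliton. Just as in the proof of Lemma~\ref{lem:m=roverlap}, one uses the invariants of the carrier data (chiefly that $A_1-B_2$ is unchanged by most of the applicable cases) to locate the transition points between phases, and the cases that emit the stored $x$-entries all precede a final phase in which case~\ref{enum:btwd} puts the carrier's columns down in reverse order. An induction on the number of columns processed, split according to whether $C_2+L>\widetilde C-M-1$ or $C_2+L=\widetilde C-M-1$, then gives the two asserted forms: in the strict case the $x$-entries emerge separated from the $y$-data by exactly $C_2+\min(d_2,\ell)-\widetilde C$ vacuum columns, yielding~\ref{enum:m=rsamelength}; in the boundary case the last $x$-entry to emerge, $x_{d_1-M}$, is pushed one slot into the second soliton and surfaces with $\bon$ above it as the first column of the new second soliton, producing the shortened first soliton of~\ref{enum:m=rdifflength}.

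The hard part is purely combinatorial bookkeeping: Lemma~\ref{lem:m=rclassRmat} has many near-identical cases, and one must carefully identify the exact step at which the carrier switches from one to the next and follow the sequences $(\mathbf x^j_k)_k$ and $(\mathbf z^j_k)_k$ across each switch. The boundary case $C_2+L=\widetilde C-M-1$ is the delicate one, as it requires checking that the jump in the $x_j$-sequence at index $d_1-M$ lines up precisely with the case transition that strands $x_{d_1-M}$ inside the second soliton; establishing this alignment, and the analogous accounting in the last step, is where essentially all of the work lies. Otherwise every step reduces to one of the finitely many Schensted insertions already recorded in Lemma~\ref{lem:m=rclassRmat}, so no new ingredient is needed.
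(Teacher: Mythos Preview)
Your proposal is correct and follows essentially the same approach as the paper: track the carrier explicitly through the first soliton, the $C_2$ vacuum columns, and then the second soliton, invoking the cases of Lemma~\ref{lem:m=rclassRmat} at each step and using the invariant $A_1-B_2$ to locate the phase transitions. The only point the paper makes explicit that you leave implicit is that case~\ref{enum:m=rsamelength} itself bifurcates according to whether $M<\widetilde C-C_2$ or $M\ge\widetilde C-C_2$ (governing whether the analogues of~\eqref{eq:m=r cd at d1-M}/\eqref{eq:m=r b at d1-M} are reached), but this is exactly the sort of bookkeeping you already flag as the hard part.
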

	\begin{proof}
		We use the same notation as in the proof of Lemma~\ref{lem:m=roverlap}.
		Similar to the proof of Lemma~\ref{lem:m=roverlap},
		we have that
		\begin{align*}
			\widetilde{P}_1\otimes\cdots\otimes\widetilde{P}_{d_1+C_2} 
			&= u_1^{\otimes \widetilde{C}}
			\otimes
			\begin{array}{|c|}
				\hline
				\btw\\
				x_1\\
				\hline
			\end{array}
			\otimes\cdots\otimes
			\begin{array}{|c|}
				\hline
				\btw\\
				x_{d_1+C_2-\widetilde{C}}\\
				\hline
			\end{array}
			\\
			u(d_1+C_2) &= 
			\begin{array}{|c|}
				\hline
				\btw^{\ell}\\
				\bon^{\ell-\widetilde{C}+C_2}(x_{d_1-j})_{j=0}^{\widetilde{C}-C_2-1}\\
				\hline
			\end{array}
			\,.
		\end{align*}

		\begin{itemize}
			\item Assume $ C_2+L>\widetilde{C}-M-1 $ and
			$ M<\widetilde{C}-C_2 $
			(so that $ B_2+B_1<A_3+A_2 $).
			We can apply modified versions of ~\eqref{eq:m=r cd at d1-M} and~\eqref{eq:m=r b at d1-M}
			(because $ d_1+L+C_2\ge d_1+\widetilde{C}-M $).
			After $ u(d_1+\widetilde{C}-M) $,
			all of the applicable cases of Lemma~\ref{lem:m=rclassRmat}
			put down members from $ (x_{d_1-j})_{j=0}^{M-1} $ in order
			(with $ \btw $ in the row above).
			Observe that
			\[
			(d_1+C_2+d_2)-(d_1+\widetilde{C}-M)
			= d_2+C_2-\widetilde{C}+M \geq (\ell-1)+C_2-\ell+M \geq M
			\]
			so \emph{all} of the members from $ (x_{d_1-j})_{j=0}^{M-1} $ will be put down.
			In particular, $ u(d_1+\widetilde{C}) $ will contain no members from $ (x_{d_1-j})_{j=0}^{M-1} $.
			While applying cases~\ref{enum:bona},\ref{enum:bonb}, \ref{enum:bonc}, \ref{enum:bond}, \ref{enum:bone}, \ref{enum:btwa}, \ref{enum:btwb}, to $ u(j) $,
			we find that $ B_2(j)=B_2(d_1+C_2)+A_1(j) $.
			Thus, we have that $ B_2(d_1+\widetilde{C}) = B_2(d_1+C_2)= \ell-\widetilde{C}+C_2 $.
			Thus, after $ u(d_1+\widetilde{C}) $,
			case~\ref{enum:bonf} or~\ref{enum:btwc} apply $ \ell-\widetilde{C}+C_2 $ times (i.e.\ until $ B_2=0 $) unless $ u(d_1+C_2+d_2) $ is reached beforehand.
			Hence, the number of vacuum states after the first soliton is
			\[
			\min((d_1+C_2+d_2)-(d_1+\widetilde{C}),\, \ell-\widetilde{C}+C_2)
			=C_2+\min(d_2,\ell)-\widetilde{C}
			\]
			as required.
			Applying~\ref{enum:bong}/\ref{enum:btwd}
			proves~\ref{enum:m=rsamelength} in this case.
			\item Assume $ C_2+L>\widetilde{C}-M-1 $ and $ M \geq \widetilde{C}-C_2 $
			(so that $ B_1+B_2 = A_3+A_2 $),
			then neither ~\eqref{eq:m=r cd at d1-M} nor~\eqref{eq:m=r b at d1-M} can apply.
			As above,
			all of the applicable cases of Lemma~\ref{lem:m=rclassRmat}
			put down members from $ (x_{d_1-j})_{j=0}^{\widetilde{C}-C_2-1} $ in order and 
			\[
			(d_1+C_2+d_2)-(d_1+C_2)=d_2\ge M \ge \widetilde{C}-C_2
			\]
			so \emph{all} of the members from $ (x_{d_1-j})_{j=0}^{\widetilde{C}-C_2-1} $ will be put down.
			After $ u(d_1+\widetilde{C}) $, case~\ref{enum:bonf} or~\ref{enum:btwc} applies
			until $ B_2=0 $ (or until $ u(d_1+C_2+d_2) $).
			As above,
			we conclude that the number of vacuum states after the first soliton is
			$ C_2+\min(d_2,\ell)-\widetilde{C} $,
			and we apply~\ref{enum:bong}/\ref{enum:btwd} to prove~\ref{enum:m=rsamelength}.
			\item Assume $ C_2+L=\widetilde{C}-M-1 $.
			Observe that $ M=\widetilde{C}-C_2-L-1 < \widetilde{C}-C_2 $.
			We know $ d_1+C_2+L = d_1 + \widetilde{C}-M-1 $,
			so modified versions of~\eqref{eq:m=r cd at d1-M} and~\eqref{eq:m=r b at d1-M}
			do not apply (but the state immediately before them does).
			Thus, $ x_{d_1-M} $ is still in $ u(d_1+\widetilde{C}-M-1) = u(d_1+C_2+L) $.
			After $ u(d_1+C_2+L) $, 
			all applicable cases of Lemma~\ref{lem:m=rclassRmat}
			put down members from $ (x_{d_1-j})_{j=0}^{M-1} $ in order and 
			\[
			(d_1+C_2+d_2)-(d_1+\widetilde{C}-M-1)
			=d_2+C_2-\widetilde{C}+M+1
			\geq \ell + C_2 - \ell + M \geq M
			\]
			so \emph{all} of the members from $ (x_{d_1-j})_{j=0}^{M-1} $ will be put down.
			After $ u(d_1+\widetilde{C}-1) $, case~\ref{enum:bonf} or~\ref{enum:btwc}
			applies until $ B_2=0 $ (or until $ u(d_1+C_2+d_2) $).
			As above,
			we have that $ B_2(d_1+\widetilde{C}-1)=B(d_1+C_2)=\ell-\widetilde{C}+C_2 $.
			Thus, the number of vacuum states after the first soliton is
			\begin{align*}
				\min((d_1+C_2+d_2)-(d_1+\widetilde{C}-1),\ell-\widetilde{C}+C_2)
				&=C_2+\min(d_2+1,\ell)-\widetilde{C}\\
				&=C_2+\ell-\widetilde{C}
			\end{align*}
			as required.
			Applying~\ref{enum:bong}/\ref{enum:btwd} proves~\ref{enum:m=rdifflength}. \qedhere
		\end{itemize}
	\end{proof}
	
	Let
	\[
	\mathfrak{p} = u_1^{\otimes \mathfrak{c}_1}\otimes
	\bigotimes_{j=1}^{\mathfrak{d}_1-M}
	\begin{array}{|c|}
		\hline
		\btw\\
		\mathfrak{y}_j\\
		\hline
	\end{array}
	\otimes
	\bigotimes_{j=1}^{M}
	\begin{array}{|c|}
		\hline
		\btw\\
		\mathfrak{x}_j\\
		\hline
	\end{array}
	\otimes
	u_1^{\otimes\mathfrak{c}_2}
	\otimes
	\bigotimes_{j=\mathfrak{d}_1-M+1}^{\mathfrak{d}_1+\mathfrak{d}_2-M}
	\begin{array}{|c|}
		\hline
		\mathfrak{w}_j\\
		\mathfrak{y}_j\\
		\hline
	\end{array}
	\otimes
	u_1^{\otimes \infty}
	\]
	with $ \mathfrak{c}_2\ge\mathfrak{d}_1>\mathfrak{d}_2 $.
	Choose $ \ell=\mathfrak{d}_2+1 $.
	
	Assume $ M+L\le\mathfrak{d}_2 $.
	For $ 0\le t \le t_1 := \mathfrak{c}_2-\mathfrak{d}_2+L+M $,
	we manually compute and then apply Lemma~\ref{lem:m=rnooverlap}~\ref{enum:m=rsamelength} 
	to find that $ (T_{\ell})^t(\mathfrak{p})= $
	\[
	u_1^{\otimes \mathfrak{c}_1+(\mathfrak{d}_2+1)t}\otimes
	\bigotimes_{j=1}^{\mathfrak{d}_1-M}
	\begin{array}{|c|}
		\hline
		\btw\\
		\mathfrak{y}_j\\
		\hline
	\end{array}
	\otimes
	\bigotimes_{j=1}^{M}
	\begin{array}{|c|}
		\hline
		\btw\\
		\mathfrak{x}_j\\
		\hline
	\end{array}
	\otimes
	u_1^{\otimes\mathfrak{c}_2-t}
	\otimes
	\bigotimes_{j=\mathfrak{d}_1-M+1}^{\mathfrak{d}_1+\mathfrak{d}_2-M}
	\begin{array}{|c|}
		\hline
		\mathfrak{w}_j\\
		\mathfrak{y}_j\\
		\hline
	\end{array}
	\otimes
	u_1^{\otimes \infty}.
	\]
	For $ t_1 < t \le t_0:= \mathfrak{c}_2-2\mathfrak{d}_2+L+M+\mathfrak{d}_1$,
	we apply Lemma~\ref{lem:m=rnooverlap}~\ref{enum:m=rdifflength} to find that 
	$ (T_{\ell})^t(\mathfrak{p})= $
	\begin{align*}
		&
		u_1^{\otimes \mathfrak{c}_1+(\mathfrak{d}_2+1)t}\otimes
		\bigotimes_{j=1}^{\mathfrak{d}_1-M-t+t_1}
		\begin{array}{|c|}
			\hline
			\btw\\
			\mathfrak{y}_j\\
			\hline
		\end{array}
		\otimes
		\bigotimes_{j=1}^{M}
		\begin{array}{|c|}
			\hline
			\btw\\
			\mathfrak{x}_j\\
			\hline
		\end{array}
		\\&\qquad
		\otimes
		u_1^{\mathfrak{d}_2-L-M}
		\otimes
		\bigotimes_{j=\mathfrak{d}_1-M-t+t_1+1}^{\mathfrak{d}_1-M+\mathfrak{d}_2-t+t_1}
		\begin{array}{|c|}
			\hline
			\mathfrak{w}_{j+t-t_1}\\
			\mathfrak{y}_j\\
			\hline
		\end{array}
		\otimes
		\bigotimes_{j=\mathfrak{d}_1+\mathfrak{d}_2-M-t+t_1+1}^{\mathfrak{d}_1+\mathfrak{d}_2-M}
		\begin{array}{|c|}
			\hline
			\btw\\
			\mathfrak{y}_j\\
			\hline
		\end{array}
		\otimes
		u_1^{\otimes \infty}.
	\end{align*}
	For $ t > t_0 $, we apply Lemma~\ref{lem:m=rnooverlap}~\ref{enum:m=rsamelength} to find that
	$ (T_{\ell})^t(\mathfrak{p})= $
	\begin{align*}
		&
		u_1^{\otimes \mathfrak{c}_1+(\mathfrak{d}_2+1)t_0+\mathfrak{d}_2(t-t_0)}\otimes
		\bigotimes_{j=1}^{\mathfrak{d}_2-M}
		\begin{array}{|c|}
			\hline
			\btw\\
			\mathfrak{y}_j\\
			\hline
		\end{array}
		\otimes
		\bigotimes_{j=1}^{M}
		\begin{array}{|c|}
			\hline
			\btw\\
			\mathfrak{x}_j\\
			\hline
		\end{array}
		\\&\qquad
		\otimes
		u_1^{\mathfrak{d}_2-L-M+t-t_0}
		\otimes
		\bigotimes_{j=\mathfrak{d}_2-M+1}^{2\mathfrak{d}_2-M}
		\begin{array}{|c|}
			\hline
			\mathfrak{w}_{j+\mathfrak{d}_1-\mathfrak{d}_2}\\
			\mathfrak{y}_j\\
			\hline
		\end{array}
		\otimes
		\bigotimes_{j=2\mathfrak{d}_2-M+1}^{\mathfrak{d}_1+\mathfrak{d}_2-M}
		\begin{array}{|c|}
			\hline
			\btw\\
			\mathfrak{y}_j\\
			\hline
		\end{array}
		\otimes
		u_1^{\otimes \infty}.
	\end{align*}
	Observe that the phase shift is $ 2\mathfrak{d}_2-L-M $, as required.
	
	Assume $ M+L>\mathfrak{d}_2 $.
	For $ 0\le t \le\mathfrak{c}_2 $,
	we manually compute and apply Lemma~\ref{lem:m=rnooverlap}~\ref{enum:m=rsamelength} 
	to see 
	\[
	(T_{\ell})^t(\mathfrak{p})=  u_1^{\otimes \mathfrak{c}_1+(\mathfrak{d}_2+1)t}\otimes
	\bigotimes_{j=1}^{\mathfrak{d}_1-M}
	\begin{array}{|c|}
		\hline
		\btw\\
		\mathfrak{y}_j\\
		\hline
	\end{array}
	\otimes
	\bigotimes_{j=1}^{M}
	\begin{array}{|c|}
		\hline
		\btw\\
		\mathfrak{x}_j\\
		\hline
	\end{array}
	\otimes
	u_1^{\otimes\mathfrak{c}_2-t}
	\otimes
	\bigotimes_{j=\mathfrak{d}_1-M+1}^{\mathfrak{d}_1+\mathfrak{d}_2-M}
	\begin{array}{|c|}
		\hline
		\mathfrak{w}_j\\
		\mathfrak{y}_j\\
		\hline
	\end{array}
	\otimes
	u_1^{\otimes \infty}.
	\]
	For $ \mathfrak{c}_2<t\le t_1:= \mathfrak{c}_2+L+M-\mathfrak{d}_2 $,
	we apply Lemma~\ref{lem:m=roverlap}~\ref{enum:m=rmoreoverlap} 
	to find that $ (T_{\ell})^t(\mathfrak{p})= $
	\[
	u_1^{\otimes \mathfrak{c}_1+(\mathfrak{d}_2+1)t}\otimes
	\bigotimes_{j=1}^{\mathfrak{d}_1-M}
	\begin{array}{|c|}
		\hline
		\btw\\
		\mathfrak{y}_j\\
		\hline
	\end{array}
	\otimes
	\bigotimes_{j=1}^{M-t+\mathfrak{c}_2}
	\begin{array}{|c|}
		\hline
		\btw\\
		\mathfrak{x}_j\\
		\hline
	\end{array}
	\otimes
	\bigotimes_{j=\mathfrak{d}_1-M+1}^{\mathfrak{d}_1-M+t-\mathfrak{c}_2}
	\begin{array}{|c|}
		\hline
		\mathfrak{x}_{j-\mathfrak{d}_1+2M-t+\mathfrak{c}_2}\\
		\mathfrak{y}_j\\
		\hline
	\end{array}
	\otimes
	\bigotimes_{j=\mathfrak{d}_1-M+t-\mathfrak{c}_2+1}^{\mathfrak{d}_1+\mathfrak{d}_2-M}
	\begin{array}{|c|}
		\hline
		\mathfrak{w}_j\\
		\mathfrak{y}_j\\
		\hline
	\end{array}
	\otimes
	u_1^{\otimes \infty}.
	\]
	For $ t_1<t\le t_0 := \mathfrak{d}_1+\mathfrak{c}_2+L+M-2\mathfrak{d}_2 $,
	we apply Lemma~\ref{lem:m=roverlap}~\ref{enum:m=rsameoverlap} 
	to find that $ (T_{\ell})^t(\mathfrak{p})= $
	\begin{align*}
		&
		u_1^{\otimes \mathfrak{c}_1+(\mathfrak{d}_2+1)t}\otimes
		\bigotimes_{j=1}^{\mathfrak{d}_1-M-t+t_1}
		\begin{array}{|c|}
			\hline
			\btw\\
			\mathfrak{y}_j\\
			\hline
		\end{array}
		\otimes
		\bigotimes_{j=1}^{\mathfrak{d}_2-L}
		\begin{array}{|c|}
			\hline
			\btw\\
			\mathfrak{x}_j\\
			\hline
		\end{array}
		\otimes
		\bigotimes_{j=\mathfrak{d}_1-M+1}^{\mathfrak{d}_1-\mathfrak{d}_2+L}
		\begin{array}{|c|}
			\hline
			\mathfrak{x}_{j-\mathfrak{d}_1+\mathfrak{d}_2-L+M}\\
			\mathfrak{y}_{j-t+t_1}\\
			\hline
		\end{array}
		\\&\qquad
		\otimes
		\bigotimes_{j=\mathfrak{d}_1-\mathfrak{d}_2+L+1}^{\mathfrak{d}_1+\mathfrak{d}_2-M}
		\begin{array}{|c|}
			\hline
			\mathfrak{w}_j\\
			\mathfrak{y}_{j-t+t_1}\\
			\hline
		\end{array}
		\otimes
		\bigotimes_{j=\mathfrak{d}_1+\mathfrak{d}_2-M+1}^{\mathfrak{d}_1+\mathfrak{d}_2-M+t-t_1}
		\begin{array}{|c|}
			\hline
			\btw\\
			\mathfrak{y}_{j-t+t_1}\\
			\hline
		\end{array}
		\otimes
		u_1^{\otimes \infty}.
	\end{align*}
	For $ t_0<t\le t_0+M+L-\mathfrak{d}_2 $,
	we apply Lemma~\ref{lem:m=roverlap}~\ref{enum:m=rlessoverlap} 
	to find that $ (T_{\ell})^t(\mathfrak{p})= $
	\begin{align*}
		&
		u_1^{\otimes \mathfrak{c}_1+(\mathfrak{d}_2+1)t_0+\mathfrak{d}_2(t-t_0)}\otimes
		\bigotimes_{j=1}^{\mathfrak{d}_2-M}
		\begin{array}{|c|}
			\hline
			\btw\\
			\mathfrak{y}_j\\
			\hline
		\end{array}
		\otimes
		\bigotimes_{j=1}^{\mathfrak{d}_2-L+t-t_0}
		\begin{array}{|c|}
			\hline
			\btw\\
			\mathfrak{x}_j\\
			\hline
		\end{array}
		\otimes
		\bigotimes_{j=\mathfrak{d}_2-M+1}^{L-t+t_0}
		\begin{array}{|c|}
			\hline
			\mathfrak{x}_{j-L+M+t-t_0}\\
			\mathfrak{y}_{j}\\
			\hline
		\end{array}
		\\&\qquad
		\otimes
		\bigotimes_{j=L-t+t_0+1}^{2\mathfrak{d}_2-M}
		\begin{array}{|c|}
			\hline
			\mathfrak{w}_{j+\mathfrak{d}_1-\mathfrak{d}_2}\\
			\mathfrak{y}_j\\
			\hline
		\end{array}
		\otimes
		\bigotimes_{j=2\mathfrak{d}_2-M+1}^{\mathfrak{d}_1+\mathfrak{d}_2-M}
		\begin{array}{|c|}
			\hline
			\btw\\
			\mathfrak{y}_j\\
			\hline
		\end{array}
		\otimes
		u_1^{\otimes \infty}.
	\end{align*}
	For $ t > t_0+M+L-\mathfrak{d}_2 $,
	we apply Lemma~\ref{lem:m=rnooverlap}~\ref{enum:m=rsamelength} 
	to find that $ (T_{\ell})^t(\mathfrak{p})= $
	\begin{align*}
		&
		u_1^{\otimes \mathfrak{c}_1+(\mathfrak{d}_2+1)t_0+\mathfrak{d}_2(t-t_0)}\otimes
		\bigotimes_{j=1}^{\mathfrak{d}_2-M}
		\begin{array}{|c|}
			\hline
			\btw\\
			\mathfrak{y}_j\\
			\hline
		\end{array}
		\otimes
		\bigotimes_{j=1}^M
		\begin{array}{|c|}
			\hline
			\btw\\
			\mathfrak{x}_j\\
			\hline
		\end{array}
		\otimes
		u_1^{\otimes t-M-L+\mathfrak{d}_2-t_0}
		\\&\qquad
		\otimes
		\bigotimes_{j=\mathfrak{d}_2-M+1}^{2\mathfrak{d}_2-M}
		\begin{array}{|c|}
			\hline
			\mathfrak{w}_{j+\mathfrak{d}_1-\mathfrak{d}_2}\\
			\mathfrak{y}_j\\
			\hline
		\end{array}
		\otimes
		\bigotimes_{j=2\mathfrak{d}_2-M+1}^{\mathfrak{d}_1+\mathfrak{d}_2-M}
		\begin{array}{|c|}
			\hline
			\btw\\
			\mathfrak{y}_j\\
			\hline
		\end{array}
		\otimes
		u_1^{\otimes \infty}.
	\end{align*}
	Observe that the phase shift is $ 2\mathfrak{d}_2-L-M $, as required.
	
	\section{Proof of Lemma~\ref{lem:classRmat}}\label{appendix:classRmatprf}
	In this subsection,
	the shape of the tableaux are indicative of the true shapes,
	but are often not accurately aligned.
	When performing these calculations,
	we are assuming $ m\ge r+2 $, so  $ \bfth,\bftw,\bfon,\bfze \leq \bon $.
	
	If $ b_2=0 $ and $ b_2+b_1=a_3+a_2 $ then
	\begin{equation}\label{eq:insbonIIandIII}
		\bfX\bfon \mathbf{y}
		\rightarrow
		\begin{array}{|c|}
			\hline
			\bfX^{\ell}\\
			\bfth^{a_3}\bftw^{a_2}\bfon^{a_1}\\
			\bftw^{b_2}\bfon^{b_1}\bfze^{b_0}\\
			\hline
		\end{array}
		=
		\bfon \mathbf{y} \rightarrow
		\begin{array}{|c|c}
			\hline
			\multicolumn{2}{|c|}{\bfX^{\ell+1}}\\
			\cline{2-2}
			\bfth^{a_3}\bftw^{a_2}\bfon^{a_1} & \\
			\bfon^{b_1}\bfze^{b_0} & \\
			\cline{1-1}
		\end{array}
		=
		\mathbf{y} \rightarrow
		\begin{array}{|c|c}
			\hline
			\multicolumn{2}{|c|}{\bfX^{\ell+1}}\\
			\multicolumn{2}{|c|}{\bfth^{a_3}\bftw^{a_2}\bfon^{a_1+1}}\\
			\cline{2-2}
			\bfon^{b_1}\bfze^{b_0} & \\
			\cline{1-1}
		\end{array}
		\,.
	\end{equation}
	If $ b_2<a_3 $ and $ b_2+b_1=a_3+a_2 $ then
	\begin{equation}\label{eq:insIII}
		\bfX\bftw \mathbf{y} 
		\rightarrow
		\begin{array}{|c|}
			\hline
			\bfX^{\ell}\\
			\bfth^{a_3}\bftw^{a_2}\bfon^{a_1}\\
			\bftw^{b_2}\bfon^{b_1}\bfze^{b_0}\\
			\hline
		\end{array}
		=
		\bftw \mathbf{y} \rightarrow
		\begin{array}{|c|c}
			\hline
			\multicolumn{2}{|c|}{\bfX^{\ell+1}}\\
			\cline{2-2}
			\bfth^{a_3}\bftw^{a_2}\bfon^{a_1} & \\
			\bftw^{b_2}\bfon^{b_1}\bfze^{b_0} & \\
			\cline{1-1}
		\end{array}
		=
		\mathbf{y} \rightarrow
		\begin{array}{|c|c}
			\hline
			\multicolumn{2}{|c|}{\bfX^{\ell+1}}\\
			\multicolumn{2}{|c|}{\bfth^{a_3}\bftw^{a_2}\bfon^{a_1+1}}\\
			\cline{2-2}
			\bftw^{b_2+1}\bfon^{b_1-1}\bfze^{b_0} & \\
			\cline{1-1}
		\end{array}
		\,.
	\end{equation}
	If $ b_2=a_3 $ then
	\begin{equation}\label{eq:insII}
		\bfX\bftw \mathbf{y} 
		\rightarrow
		\begin{array}{|c|}
			\hline
			\bfX^{\ell}\\
			\bfth^{a_3}\bftw^{a_2}\bfon^{a_1}\\
			\bftw^{b_2}\bfon^{b_1}\bfze^{b_0}\\
			\hline
		\end{array}
		=
		\bftw \mathbf{y} \rightarrow
		\begin{array}{|c|c}
			\hline
			\multicolumn{2}{|c|}{\bfX^{\ell+1}}\\
			\cline{2-2}
			\bfth^{a_3}\bftw^{a_2}\bfon^{a_1} & \\
			\bftw^{b_2}\bfon^{b_1}\bfze^{b_0} & \\
			\cline{1-1}
		\end{array}
		=
		\mathbf{y} \rightarrow
		\begin{array}{|c|c}
			\hline
			\multicolumn{2}{|c|}{\bfX^{\ell+1}}\\
			\multicolumn{2}{|c|}{\bfth^{a_3}\bftw^{a_2+1}\bfon^{a_1}}\\
			\cline{2-2}
			\bftw^{b_2}\bfon^{b_1}\bfze^{b_0} & \\
			\cline{1-1}
		\end{array}
		\,.
	\end{equation}
	Without any assumptions, we have that
	\begin{equation}\label{eq:insbth}
		\bfX\bfth \mathbf{y} 
		\rightarrow
		\begin{array}{|c|}
			\hline
			\bfX^{\ell}\\
			\bfth^{a_3}\bftw^{a_2}\bfon^{a_1}\\
			\bftw^{b_2}\bfon^{b_1}\bfze^{b_0}\\
			\hline
		\end{array}
		=
		\bfth \mathbf{y} \rightarrow
		\begin{array}{|c|c}
			\hline
			\multicolumn{2}{|c|}{\bfX^{\ell+1}}\\
			\cline{2-2}
			\bfth^{a_3}\bftw^{a_2}\bfon^{a_1} & \\
			\bftw^{b_2}\bfon^{b_1}\bfze^{b_0} & \\
			\cline{1-1}
		\end{array}
		=
		\mathbf{y} \rightarrow
		\begin{array}{|c|c}
			\hline
			\multicolumn{2}{|c|}{\bfX^{\ell+1}}\\
			\multicolumn{2}{|c|}{\bfth^{{a_3}+1}\bftw^{a_2}\bfon^{a_1}}\\
			\cline{2-2}
			\bftw^{b_2}\bfon^{b_1}\bfze^{b_0} & \\
			\cline{1-1}
		\end{array}
		\,.
	\end{equation}
	Observe that~\eqref{eq:insbonIIandIII}, \eqref{eq:insIII}, \eqref{eq:insII} and~\eqref{eq:insbth} 
	are all of the form
	\begin{equation}\label{eq:formIIandIII}
		\mathbf{y} \rightarrow
		\begin{array}{|c|c}
			\hline
			\multicolumn{2}{|c|}{\bfX^{\ell+1}}\\
			\multicolumn{2}{|c|}{\bfth^{\widetilde{a}_3}\bftw^{\widetilde{a}_2}\bfon^{\widetilde{a}_1}}\\
			\cline{2-2}
			\bftw^{\widetilde{b}_2}\bfon^{\widetilde{b}_1}\bfze^{b_0} & \\
			\cline{1-1}
		\end{array}
		\,.
	\end{equation}
	For some integers $ \widetilde{a}_3,\widetilde{a}_2,\widetilde{a}_1,\widetilde{b}_2,\widetilde{b}_1 $ with
	$ \widetilde{a}_3+\widetilde{a}_2+\widetilde{a}_1=\ell+1 $ and $ \widetilde{b}_2+\widetilde{b}_1=b_2+b_1 $.
	Let $ (\widetilde{\mathbf{h}}_j)_{j=1}^{\ell+1}=\bfth^{\widetilde{a}_3}\bftw^{\widetilde{a}_2}\bfon^{\widetilde{a}_1} $
	and $ (\widetilde{\mathbf{z}}_j)_{j=1}^{\ell}=\bftw^{\widetilde{b}_2}\bfon^{\widetilde{b}_1}\bfze^{b_0} $.
	
	If $ \widetilde{b}_2>0 $ or ($ \widetilde{b}_1>0 $ and $ \mathbf{y}=\bfze $) then
	\begin{equation}\label{eq:insi}
		\mathbf{y} \rightarrow
		\begin{array}{|c|c}
			\hline
			\multicolumn{2}{|c|}{\bfX^{\ell+1}}\\
			\multicolumn{2}{|c|}{\bfth^{\widetilde{a}_3}\bftw^{\widetilde{a}_2}\bfon^{\widetilde{a}_1}}\\
			\cline{2-2}
			\bftw^{\widetilde{b}_2}\bfon^{\widetilde{b}_1}\bfze^{b_0} & \\
			\cline{1-1}
		\end{array}
		=
		\begin{array}{|c|c|c}
			\hline
			\multicolumn{3}{|c|}{\bfX^{\ell+1}}\\
			\multicolumn{3}{|c|}{\bfth^{\widetilde{a}_3}\bftw^{\widetilde{a}_2}\bfon^{\widetilde{a}_1}}\\
			\cline{3-3}
			\multicolumn{2}{|c|}{\bftw^{\widetilde{b}_2}\bfon^{\widetilde{b}_1}\bfze^{b_0}} & \\
			\cline{2-2}
			\mathbf{y}\\
			\cline{1-1}
		\end{array}
		\,.
	\end{equation}
	
	\begin{proof}[Proof of~ \ref{enum:s},~\ref{enum:k} and~\ref{enum:m}]
		Suppose $ \widetilde{b}_1>0 $ and $ \mathbf{y}=\bfze $ (but not necessarily $ \widetilde{b}_2>0 $).
		(Note that for~\eqref{eq:insbonIIandIII},
		we must already have $ \mathbf{y}=\bfze $.)
		\begin{align*}
			\col&\left(
			\begin{array}{|c|}
				\hline
				\bfX^{\ell}\\
				(\widetilde{\mathbf{h}}_j)_{j=1}^{\widetilde{a}_3+\widetilde{a}_2-1}\bfon^{\widetilde{a}_1}\\
				\bftw^{\widetilde{b}_2}\bfon^{\widetilde{b}_1-1}\mathbf{y}\bfze^{b_0}\\
				\hline
			\end{array}
			\right)
			\rightarrow
			\begin{array}{|c|}
				\hline
				\bfX\\
				\widetilde{\mathbf{h}}_{\widetilde{a}_3+\widetilde{a}_2}\\
				\bfon\\
				\hline
			\end{array} \allowdisplaybreaks \\
			&=
			\begin{cases}
				\col\left(
				\begin{array}{|c|}
					\hline
					\bfX^{\widetilde{a}_3+\widetilde{a}_2-1}\\
					(\widetilde{\mathbf{h}}_j)_{j=1}^{\widetilde{a}_3+\widetilde{a}_2-1}\\
					\bftw^{\widetilde{b}_2}\bfon^{\widetilde{b}_1-1}\\
					\hline
				\end{array}
				\right)
				\rightarrow
				\begin{array}{|c|c|c}
					\hline
					\multicolumn{3}{|c|}{\bfX^{\widetilde{a}_1+1}}\\
					\multicolumn{3}{|c|}{\widetilde{\mathbf{h}}_{\widetilde{a}_3+\widetilde{a}_2}\bfon^{\widetilde{a}_1}}\\
					\cline{3-3}
					\multicolumn{2}{|c|}{\bfon\bfze^{b_0}} & \\
					\cline{2-2}
					\mathbf{y}\\
					\cline{1-1}
				\end{array}
				& \text{if}~ \widetilde{b}_2+\widetilde{b}_1=\widetilde{a}_3+\widetilde{a}_2\\[30pt]
				\col\left(
				\begin{array}{|c|}
					\hline
					\bfX^{\ell-\widetilde{a}_1}\\
					(\widetilde{\mathbf{h}}_j)_{j=1}^{\widetilde{a}_3+\widetilde{a}_2-1}\\
					\bftw^{\widetilde{b}_2}\bfon^{\widetilde{b}_1-1}\mathbf{y}\bfze^{{b_0}-\widetilde{a}_1}\\
					\hline
				\end{array}
				\right)
				\rightarrow
				\begin{array}{|c|c|c}
					\hline
					\multicolumn{3}{|c|}{\bfX^{\widetilde{a}_1+1}}\\
					\multicolumn{3}{|c|}{\widetilde{\mathbf{h}}_{\widetilde{a}_3+\widetilde{a}_2}\bfon^{\widetilde{a}_1}}\\
					\cline{3-3}
					\multicolumn{2}{|c|}{\bfon\bfze^{\widetilde{a}_1-1}} & \\
					\cline{2-2}
					\bfze\\
					\cline{1-1}
				\end{array}
				&  \text{if}~\widetilde{b}_2+\widetilde{b}_1 \ne \widetilde{a}_3+\widetilde{a}_2~\text{and}~\widetilde{a}_1>0\\[30pt]
				\col\left(
				\begin{array}{|c|}
					\hline
					\bfX^{\ell}\\
					(\widetilde{\mathbf{h}}_j)_{j=1}^{\widetilde{a}_3+\widetilde{a}_2-1}\\
					\bftw^{\widetilde{b}_2}\bfon^{\widetilde{b}_1-1}\mathbf{y}\bfze^{b_0}\\
					\hline
				\end{array}
				\right)
				\rightarrow
				\begin{array}{|c|}
					\hline
					\bfX\\
					\widetilde{\mathbf{h}}_{\widetilde{a}_3+\widetilde{a}_2}\\
					\bfon\\
					\hline
				\end{array}
				&  \text{if}~\widetilde{b}_2+\widetilde{b}_1 \ne \widetilde{a}_3+\widetilde{a}_2~\text{and}~\widetilde{a}_1=0\\
			\end{cases} \allowdisplaybreaks \\
			&=
			\col\left(
			\begin{array}{|c|}
				\hline
				\bfX^{\widetilde{b}_2+\widetilde{b}_1-1}\\
				(\widetilde{\mathbf{h}}_j)_{j=1}^{\widetilde{b}_2+\widetilde{b}_1-1}\\
				\bftw^{\widetilde{b}_2}\bfon^{\widetilde{b}_1-1}\\
				\hline
			\end{array}
			\right)
			\rightarrow
			\begin{array}{|c|c|c}
				\hline
				\multicolumn{3}{|c|}{\bfX^{\ell-\widetilde{b}_2-\widetilde{b}_1+2}}\\
				\multicolumn{3}{|c|}{(\widetilde{\mathbf{h}}_j)_{j=\widetilde{b}_2+\widetilde{b}_1}^{\widetilde{a}_3+\widetilde{a}_2}\bfon^{\widetilde{a}_1}}\\
				\cline{3-3}
				\multicolumn{2}{|c|}{\bfon\bfze^{b_0}} & \\
				\cline{2-2}
				\mathbf{y}\\
				\cline{1-1}
			\end{array}\\
			&=
			\begin{array}{|c|c|c}
				\hline
				\multicolumn{3}{|c|}{\bfX^{\ell+1}}\\
				\multicolumn{3}{|c|}{(\widetilde{\mathbf{h}}_j)_{j=1}^{\widetilde{a}_3+\widetilde{a}_2}\bfon^{\widetilde{a}_1}}\\
				\cline{3-3}
				\multicolumn{2}{|c|}{\bftw^{\widetilde{b}_2}\bfon^{\widetilde{b}_1}\bfze^{b_0}} & \\
				\cline{2-2}
				\mathbf{y}\\
				\cline{1-1}
			\end{array}
			=
			\begin{array}{|c|c|c}
				\hline
				\multicolumn{3}{|c|}{\bfX^{\ell+1}}\\
				\multicolumn{3}{|c|}{\bfth^{\widetilde{a}_3}\bftw^{\widetilde{a}_2}\bfon^{\widetilde{a}_1}}\\
				\cline{3-3}
				\multicolumn{2}{|c|}{\bftw^{\widetilde{b}_2}\bfon^{\widetilde{b}_1}\bfze^{b_0}} & \\
				\cline{2-2}
				\mathbf{y}\\
				\cline{1-1}
			\end{array}
			\,.
		\end{align*}
		This insertion is the same as in~\eqref{eq:insi}.
		Therefore, for these cases,
		\[
		R\left(
		\begin{array}{|c|}
			\hline
			\bfX^{\ell}\\
			(\mathbf{h}_j)_{j=1}^{\ell}\\
			(\mathbf{z}_j)_{j=1}^{\ell}\\
			\hline
		\end{array}
		\otimes 
		\begin{array}{|c|}
			\hline
			\bfX\\
			\mathbf{g} \\
			\mathbf{y}\\
			\hline
		\end{array}
		\right)
		=
		\begin{array}{|c|}
			\hline
			\bfX\\
			\widetilde{\mathbf{h}}_{\widetilde{a}_3+\widetilde{a}_2}\\
			\bfon\\
			\hline
		\end{array}
		\otimes
		\begin{array}{|c|}
			\hline
			\bfX^{\ell}\\
			(\widetilde{\mathbf{h}}_j)_{j=1}^{\widetilde{a}_3+\widetilde{a}_2-1}\bfon^{\widetilde{a}_1}\\
			\bftw^{\widetilde{b}_2}\bfon^{\widetilde{b}_1-1}\mathbf{y}\bfze^{b_0}\\
			\hline
		\end{array}
		\,.
		\]
		
		The substitutions for~\eqref{eq:insbonIIandIII},
		$ \widetilde{a}_3=a_3,\, \widetilde{a}_2=a_2,\,\widetilde{a}_1=a_1+1,\,\widetilde{b}_2=b_2=0,\,\widetilde{b}_1=b_1 $,
		proves~\ref{enum:s}.
		
		The substitutions for~\eqref{eq:insIII},
		$ \widetilde{a}_3=a_3,\,\widetilde{a}_2=a_2,\,\widetilde{a}_1=a_1+1,\,\widetilde{b}_2=b_2+1,\,\widetilde{b}_1=b_1-1 $,
		proves~\ref{enum:k}.
		
		The substitutions for~\eqref{eq:insbth},
		$ \widetilde{a}_3=a_3+1,\, \widetilde{a}_2=a_2,\,\widetilde{a}_1=a_1,\,\widetilde{b}_2=b_2,\,\widetilde{b}_1=b_1 $,
		proves~\ref{enum:m}.
	\end{proof}
	
	\begin{proof}[Proof of~\ref{enum:l},~\ref{enum:e} and~\ref{enum:n}]
		Suppose $ \widetilde{a}_1>0 $ and $ \mathbf{y} \geq \bfon $ and $ \widetilde{b}_2>0 $ 
		but ($ \widetilde{b}_1=0 $ or $ \mathbf{y}=\bfon $).
		(Note that for~\eqref{eq:insIII},
		$ \widetilde{a}_1=a_1+1 $ and $ \widetilde{b}_2=b_2+1 $ 
		so we automatically have that $ \widetilde{a}_1>0 $ and $ \widetilde{b}_2>0 $.
		Additionally, for~\eqref{eq:insII}, we must already have that $ \mathbf{y}\ge\bfon $.)
		\begin{align*}
			\col &\left(
			\begin{array}{|c|}
				\hline
				\bfX^{\ell}\\
				(\widetilde{\mathbf{h}}_j)_{j=2}^{\widetilde{a}_3+\widetilde{a}_2}\bftw\bfon^{\widetilde{a}_1-1}\\
				\bftw^{\widetilde{b}_2-1}\bfon^{\widetilde{b}_1}\mathbf{y}\bfze^{b_0}\\
				\hline
			\end{array}
			\right)
			\rightarrow
			\begin{array}{|c|}
				\hline
				\bfX\\
				\bfth\\
				\bfon\\
				\hline
			\end{array}
			=
			\begin{cases}
				\col\left(
				\begin{array}{|c|}
					\hline
					\bfX^{\ell}\\
					(\widetilde{\mathbf{h}}_j)_{j=2}^{\widetilde{a}_3+\widetilde{a}_2}\bftw\\
					\bftw^{\widetilde{b}_2-1}\bfon^{\widetilde{b}_1}\mathbf{y}\bfze^{b_0}\\
					\hline
				\end{array}
				\right)
				\rightarrow
				\begin{array}{|c|}
					\hline
					\bfX\\
					\bfth\\
					\bfon\\
					\hline
				\end{array}
				& \text{ if $ \widetilde{a}_1=1 $}\\
				\col\left(
				\begin{array}{|c|}
					\hline
					\bfX^{\ell-\widetilde{a}_1+1}\\
					(\widetilde{\mathbf{h}}_j)_{j=2}^{\widetilde{a}_3+\widetilde{a}_2}\bftw\\
					\bftw^{\widetilde{b}_2-1}\bfon^{\widetilde{b}_1}\mathbf{y}\bfze^{{b_0}-\widetilde{a}_1+1}\\
					\hline
				\end{array}
				\right)
				\rightarrow
				\begin{array}{|c|c|c}
					\hline
					\multicolumn{3}{|c|}{\bfX^{\widetilde{a}_1}}\\
					\multicolumn{3}{|c|}{\bfth\bfon^{\widetilde{a}_1-1}}\\
					\cline{3-3}
					\multicolumn{2}{|c|}{\bfon\bfze^{\widetilde{a}_1-2}} & \\
					\cline{2-2}
					\bfze\\
					\cline{1-1}
				\end{array}
				& \text{if $ \widetilde{a}_1>1 $}
			\end{cases} \allowdisplaybreaks \\
			&=
			\begin{cases}
				\col\left(
				\begin{array}{|c|}
					\hline
					\bfX^{\ell-\widetilde{a}_1}\\
					(\widetilde{\mathbf{h}}_j)_{j=2}^{\widetilde{a}_3+\widetilde{a}_2}\\
					\bftw^{\widetilde{b}_2-1}\bfon^{\widetilde{b}_1}\\
					\hline
				\end{array}
				\right)
				\rightarrow
				\begin{array}{|c|c|c}
					\hline
					\multicolumn{3}{|c|}{\bfX^{\widetilde{a}_1+1}}\\
					\multicolumn{3}{|c|}{\bfth\bfon^{\widetilde{a}_1}}\\
					\cline{3-3}
					\multicolumn{2}{|c|}{\bftw\bfze^{b_0}} & \\
					\cline{2-2}
					\mathbf{y}\\
					\cline{1-1}
				\end{array}
				& \text{if}~{b_0}=\widetilde{a}_1-1\\
				\col\left(
				\begin{array}{|c|}
					\hline
					\bfX^{\ell-\widetilde{a}_1}\\
					(\widetilde{\mathbf{h}}_j)_{j=2}^{\widetilde{a}_3+\widetilde{a}_2}\\
					\bftw^{\widetilde{b}_2-1}\bfon^{\widetilde{b}_1}\mathbf{y}\bfze^{{b_0}-\widetilde{a}_1}\\
					\hline
				\end{array}
				\right)
				\rightarrow
				\begin{array}{|c|c|c}
					\hline
					\multicolumn{3}{|c|}{\bfX^{\widetilde{a}_1+1}}\\
					\multicolumn{3}{|c|}{\bfth\bfon^{\widetilde{a}_1}}\\
					\cline{3-3}
					\multicolumn{2}{|c|}{\bftw\bfze^{\widetilde{a}_1-1}} & \\
					\cline{2-2}
					\bfze\\
					\cline{1-1}
				\end{array}
				& \text{if}~{b_0}>\widetilde{a}_1-1
			\end{cases} \allowdisplaybreaks \\
			&=
			\col\left(
			\begin{array}{|c|}
				\hline
				\bfX^{\widetilde{b}_2+\widetilde{b}_1-1}\\
				(\widetilde{\mathbf{h}}_j)_{j=2}^{\widetilde{b}_2+\widetilde{b}_1}\\
				\bftw^{\widetilde{b}_2-1}\bfon^{\widetilde{b}_1}\\
				\hline
			\end{array}
			\right)
			\rightarrow
			\begin{array}{|c|c|c}
				\hline
				\multicolumn{3}{|c|}{\bfX^{\ell-\widetilde{b}_2-\widetilde{b}_1+2}}\\
				\multicolumn{3}{|c|}{\bfth(\widetilde{\mathbf{h}}_j)_{j=\widetilde{b}_2+\widetilde{b}_1+1}^{\widetilde{a}_3+\widetilde{a}_2}\bfon^{\widetilde{a}_1}}\\
				\cline{3-3}
				\multicolumn{2}{|c|}{\bftw\bfze^{b_0}} & \\
				\cline{2-2}
				\mathbf{y}\\
				\cline{1-1}
			\end{array} \allowdisplaybreaks \\
			&=
			\col\left(
			\begin{array}{|c|}
				\hline
				\bfX^{\widetilde{b}_2-1}\\
				(\widetilde{\mathbf{h}}_j)_{j=2}^{\widetilde{b}_2}\\
				\bftw^{\widetilde{b}_2-1}\\
				\hline
			\end{array}
			\right)
			\rightarrow
			\begin{array}{|c|c|c}
				\hline
				\multicolumn{3}{|c|}{\bfX^{\ell-\widetilde{b}_2+2}}\\
				\multicolumn{3}{|c|}{\bfth(\widetilde{\mathbf{h}}_j)_{j=\widetilde{b}_2+1}^{\widetilde{a}_3+\widetilde{a}_2}\bfon^{\widetilde{a}_1}}\\
				\cline{3-3}
				\multicolumn{2}{|c|}{\bftw\bfon^{\widetilde{b}_1}\bfze^{b_0}} & \\
				\cline{2-2}
				\mathbf{y}\\
				\cline{1-1}
			\end{array}
			\qquad \text{because $ \widetilde{b}_1=0 $ or $ \mathbf{y}=\bfon $}\allowdisplaybreaks \\
			&=
			\begin{array}{|c|c|c}
				\hline
				\multicolumn{3}{|c|}{\bfX^{\ell+1}}\\
				\multicolumn{3}{|c|}{\bfth(\widetilde{\mathbf{h}}_j)_{j=2}^{\widetilde{a}_3+\widetilde{a}_2}\bfon^{\widetilde{a}_1}}\\
				\cline{3-3}
				\multicolumn{2}{|c|}{\bftw^{\widetilde{b}_2}\bfon^{\widetilde{b}_1}\bfze^{b_0}} & \\
				\cline{2-2}
				\mathbf{y}\\
				\cline{1-1}
			\end{array}
			=
			\begin{array}{|c|c|c}
				\hline
				\multicolumn{3}{|c|}{\bfX^{\ell+1}}\\
				\multicolumn{3}{|c|}{\bfth^{\widetilde{a}_3}\bftw^{\widetilde{a}_2}\bfon^{\widetilde{a}_1}}\\
				\cline{3-3}
				\multicolumn{2}{|c|}{\bftw^{\widetilde{b}_2}\bfon^{\widetilde{b}_1}\bfze^{b_0}} & \\
				\cline{2-2}
				\mathbf{y}\\
				\cline{1-1}
			\end{array}
			\,.
		\end{align*}
		(Note that $ \widetilde{a}_3\ge \widetilde{b}_2>0 $.)
		This insertion is the same as in~\eqref{eq:insi}.
		Therefore, for these cases,
		\[
		R\left(
		\begin{array}{|c|}
			\hline
			\bfX^{\ell}\\
			(\mathbf{h}_j)_{j=1}^{\ell}\\
			(\mathbf{z}_j)_{j=1}^{\ell}\\
			\hline
		\end{array}
		\otimes 
		\begin{array}{|c|}
			\hline
			\bfX\\
			\mathbf{g} \\
			\mathbf{y}\\
			\hline
		\end{array}
		\right)
		=
		\begin{array}{|c|}
			\hline
			\bfX\\
			\bfth\\
			\bfon\\
			\hline
		\end{array}
		\otimes
		\begin{array}{|c|}
			\hline
			\bfX^{\ell}\\
			(\widetilde{\mathbf{h}}_j)_{j=2}^{\widetilde{a}_3+\widetilde{a}_2}\bftw\bfon^{\widetilde{a}_1-1}\\
			\bftw^{\widetilde{b}_2-1}\bfon^{\widetilde{b}_1}\mathbf{y}\bfze^{b_0}\\
			\hline
		\end{array}
		\\
		\,.
		\]
		
		Making the substitutions for~\eqref{eq:insIII},
		$ \widetilde{a}_3=a_3,\,\widetilde{a}_2=a_2,\,\widetilde{a}_1=a_1+1,\,\widetilde{b}_2=b_2+1,\,\widetilde{b}_1=b_1-1 $,
		proves~\ref{enum:l}.
		
		Making the substitutions for~\eqref{eq:insII},
		$ \widetilde{a}_3=a_3,\,\widetilde{a}_2=a_2+1,\,\widetilde{a}_1=a_1,\,\widetilde{b}_2=b_2,\,\widetilde{b}_1=b_1 $,
		proves~\ref{enum:e}.
		
		Making the substitutions for~\eqref{eq:insbth},
		$ \widetilde{a}_3=a_3+1,\,\widetilde{a}_2=a_2,\,\widetilde{a}_1=a_1,\,\widetilde{b}_2=b_2,\,\widetilde{b}_1=b_1 $,
		proves~\ref{enum:n}.
	\end{proof}
	
	\begin{proof}[Proof of~\ref{enum:f} and~\ref{enum:o}]
		Suppose and $ \widetilde{a}_1=0 $ and $ \mathbf{y}\ge\bfon $
		and $ \widetilde{b}_2>0 $ 
		but ($ \widetilde{b}_1=0 $ or $ \mathbf{y}=\bfon $).
		(Note that for~\eqref{eq:insII}, we must already have that $ \mathbf{y}\ge\bfon $.)
		\begin{align*}
			&
			\col\left(
			\begin{array}{|c|}
				\hline
				\bfX^{\ell}\\
				(\widetilde{\mathbf{h}}_j)_{j=2}^{\widetilde{a}_3+\widetilde{a}_2}\\
				\bftw^{\widetilde{b}_2-1}\bfon^{\widetilde{b}_1} \mathbf{y}\bfze^{b_0}\\
				\hline
			\end{array}
			\right)
			\rightarrow
			\begin{array}{|c|}
				\hline
				\bfX\\
				\bfth\\
				\bftw\\
				\hline
			\end{array}\\
			&=
			\col\left(
			\begin{array}{|c|}
				\hline
				\bfX^{\widetilde{b}_2+\widetilde{b}_1-1}\\
				(\widetilde{\mathbf{h}}_j)_{j=2}^{\widetilde{b}_2+\widetilde{b}_1}\\
				\bftw^{\widetilde{b}_2-1}\bfon^{\widetilde{b}_1}\\
				\hline
			\end{array}
			\right)
			\rightarrow
			\begin{array}{|c|c|c}
				\hline
				\multicolumn{3}{|c|}{\bfX^{\ell-\widetilde{b}_2-\widetilde{b}_1+2}}\\
				\multicolumn{3}{|c|}{\bfth(\widetilde{\mathbf{h}}_j)_{j=\widetilde{b}_2+\widetilde{b}_1+1}^{\widetilde{a}_3+\widetilde{a}_2}}\\
				\cline{3-3}
				\multicolumn{2}{|c|}{\bftw\bfze^{b_0}} & \\
				\cline{2-2}
				\mathbf{y}\\
				\cline{1-1}
			\end{array}\\
			&=
			\col\left(
			\begin{array}{|c|}
				\hline
				\bfX^{\widetilde{b}_2-1}\\
				(\widetilde{\mathbf{h}}_j)_{j=2}^{\widetilde{b}_2}\\
				\bftw^{\widetilde{b}_2-1}\\
				\hline
			\end{array}
			\right)
			\rightarrow
			\begin{array}{|c|c|c}
				\hline
				\multicolumn{3}{|c|}{\bfX^{\ell-\widetilde{b}_2+2}}\\
				\multicolumn{3}{|c|}{\bfth(\widetilde{\mathbf{h}}_j)_{j=\widetilde{b}_2+1}^{\widetilde{a}_3+\widetilde{a}_2}}\\
				\cline{3-3}
				\multicolumn{2}{|c|}{\bftw\bfon^{\widetilde{b}_1}\bfze^{b_0}} & \\
				\cline{2-2}
				\mathbf{y}\\
				\cline{1-1}
			\end{array}
			\qquad \text{because $ \widetilde{b}_1=0 $ or $ \mathbf{y}=\bfon $}\\
			&=
			\begin{array}{|c|c|c}
				\hline
				\multicolumn{3}{|c|}{\bfX^{\ell+1}}\\
				\multicolumn{3}{|c|}{\bfth(\widetilde{\mathbf{h}}_j)_{j=2}^{\widetilde{a}_3+\widetilde{a}_2}}\\
				\cline{3-3}
				\multicolumn{2}{|c|}{\bftw^{\widetilde{b}_2}\bfon^{\widetilde{b}_1}\bfze^{b_0}} & \\
				\cline{2-2}
				\mathbf{y}\\
				\cline{1-1}
			\end{array}
			=
			\begin{array}{|c|c|c}
				\hline
				\multicolumn{3}{|c|}{\bfX^{\ell+1}}\\
				\multicolumn{3}{|c|}{\bfth^{\widetilde{a}_3}\bftw^{\widetilde{a}_2}\bfon^{\widetilde{a}_1}}\\
				\cline{3-3}
				\multicolumn{2}{|c|}{\bftw^{\widetilde{b}_2}\bfon^{\widetilde{b}_1}\bfze^{b_0}} & \\
				\cline{2-2}
				\mathbf{y}\\
				\cline{1-1}
			\end{array}
			\,.
		\end{align*}
		(Note that $ \widetilde{a}_3\ge \widetilde{b}_2>0 $.)
		This insertion is the same as in~\eqref{eq:insi}.
		Therefore, for these cases,
		\[
		R\left(
		\begin{array}{|c|}
			\hline
			\bfX^{\ell}\\
			(\mathbf{h}_j)_{j=1}^{\ell}\\
			(\mathbf{z}_j)_{j=1}^{\ell}\\
			\hline
		\end{array}
		\otimes 
		\begin{array}{|c|}
			\hline
			\bfX\\
			\mathbf{g} \\
			\mathbf{y}\\
			\hline
		\end{array}
		\right)
		=
		\begin{array}{|c|}
			\hline
			\bfX\\
			\bfth\\
			\bftw\\
			\hline
		\end{array}
		\otimes
		\begin{array}{|c|}
			\hline
			\bfX^{\ell}\\
			(\widetilde{\mathbf{h}}_j)_{j=2}^{\widetilde{a}_3+\widetilde{a}_2}\\
			\bftw^{\widetilde{b}_2-1}\bfon^{\widetilde{b}_1}\mathbf{y}\bfze^{b_0}\\
			\hline
		\end{array}
		\,.
		\]
		
		Making the substitutions for~\eqref{eq:insII},
		$ \widetilde{a}_3=a_3,\,\widetilde{a}_2=a_2+1,\,\widetilde{a}_1=a_1,\,\widetilde{b}_2=b_2,\,\widetilde{b}_1=b_1 $,
		proves~\ref{enum:f}.
		
		Making the substitutions for~\eqref{eq:insbth},
		$ \widetilde{a}_3={a_3}+1,\,\widetilde{a}_2=a_2,\,\widetilde{a}_1=a_1,\,\widetilde{b}_2=b_2,\,\widetilde{b}_1=b_1 $,
		proves~\ref{enum:o}.
	\end{proof}
	
	From~\eqref{eq:formIIandIII}, if $ \widetilde{b}_2+\widetilde{b}_1<\widetilde{a}_3+\widetilde{a}_2 $ and 
	($ \widetilde{b}_2<\widetilde{a}_3 $ and $ \mathbf{y}=\bftw $) or ($ \widetilde{b}_2=0 $ and ($ \widetilde{b}_1=0 $ or $ \mathbf{y} = \bfon $)) then
	\begin{equation}\label{eq:insii}
		\mathbf{y} \rightarrow
		\begin{array}{|c|c}
			\hline
			\multicolumn{2}{|c|}{\bfX^{\ell+1}}\\
			\multicolumn{2}{|c|}{\bfth^{\widetilde{a}_3}\bftw^{\widetilde{a}_2}\bfon^{\widetilde{a}_1}}\\
			\cline{2-2}
			\bftw^{\widetilde{b}_2}\bfon^{\widetilde{b}_1}\bfze^{b_0} & \\
			\cline{1-1}
		\end{array}
		=
		\begin{array}{|c|}
			\hline
			\bfX^{\ell+1}\\
			\bfth^{\widetilde{a}_3}\bftw^{\widetilde{a}_2}\bfon^{\widetilde{a}_1}\\
			\mathbf{y}\bftw^{\widetilde{b}_2}\bfon^{\widetilde{b}_1}\bfze^{b_0} \\
			\hline
		\end{array}
		\,.
	\end{equation}
	
	\begin{proof}[Proof of 
		~\ref{enum:p} 
		]
		Suppose $ \widetilde{b}_2+\widetilde{b}_1<\widetilde{a}_3+\widetilde{a}_2 $ and ($ \widetilde{b}_2<\widetilde{a}_3 $ and $ \mathbf{y}=\bftw $) 
		or ($ \widetilde{b}_2=0 $ and ($ \widetilde{b}_1=0 $ or $ \mathbf{y} = \bfon $)).
		(Note that for~\eqref{eq:insbth}, we know $ b_2\le a_3 $,
		so we automatically have that $ \widetilde{b}_2={b_2}<{a_3}+1=\widetilde{a}_3 $.
		Similarly, we automatically  have $ \widetilde{b}_2+\widetilde{b}_1<\widetilde{a}_3+\widetilde{a}_2 $.)
		\begin{align*}
			&
			\col\left(
			\begin{array}{|c|}
				\hline
				\bfX^{\ell}\\
				(\widetilde{\mathbf{h}}_j)_{j=1}^{\ell}\\
				\mathbf{y}(\widetilde{\mathbf{z}}_j)_{j=1}^{\ell-1}\\
				\hline
			\end{array}
			\right)
			\rightarrow
			\begin{array}{|c|}
				\hline
				\bfX\\
				\widetilde{\mathbf{h}}_{\ell+1}\\
				\widetilde{\mathbf{z}}_{\ell}\\
				\hline
			\end{array} 
			= 
			\begin{array}{|c|}
				\hline
				\bfX^{\ell+1}\\
				(\widetilde{\mathbf{h}}_j)_{j=1}^{\ell+1}\\
				\mathbf{y}(\widetilde{\mathbf{z}}_j)_{j=1}^{\ell}\\
				\hline
			\end{array}
			=
			\begin{array}{|c|}
				\hline
				\bfX^{\ell+1}\\
				\bfth^{\widetilde{a}_3}\bftw^{\widetilde{a}_2}\bfon^{\widetilde{a}_1}\\
				\mathbf{y}\bfon^{\widetilde{b}_1}\bfze^{b_0}\\
				\hline
			\end{array}
			\,.
		\end{align*}
		This insertion is the same as in~\eqref{eq:insii}.
		Therefore, for this case,
		\[
		R\left(
		\begin{array}{|c|}
			\hline
			\bfX^{\ell}\\
			(\mathbf{h}_j)_{j=1}^{\ell}\\
			(\mathbf{z}_j)_{j=1}^{\ell}\\
			\hline
		\end{array}
		\otimes 
		\begin{array}{|c|}
			\hline
			\bfX\\
			\bfth \\
			\mathbf{y}\\
			\hline
		\end{array}
		\right)
		=
		\begin{array}{|c|}
			\hline
			\bfX\\
			\widetilde{\mathbf{h}}_{\ell+1}\\
			\widetilde{\mathbf{z}}_{\ell}\\
			\hline
		\end{array}
		\otimes
		\begin{array}{|c|}
			\hline
			\bfX^{\ell}\\
			(\widetilde{\mathbf{h}}_j)_{j=1}^{\ell}\\
			\mathbf{y}(\widetilde{\mathbf{z}}_j)_{j=1}^{\ell-1}\\
			\hline
		\end{array}
		\,.
		\]
		
		Making the substitutions for~\eqref{eq:insbth},
		$ \widetilde{a}_3=a_3+1,\,\widetilde{a}_2=a_2,\,\widetilde{a}_1=a_1,\,\widetilde{b}_2=b_2,\,\widetilde{b}_1=b_1 $,
		proves~\ref{enum:p}.
	\end{proof}
	
	\begin{proof}[Proof of~\ref{enum:q}]
		Suppose that $ b_2>0 $. Then,
		\begin{equation}\label{eq:insbon}
			\bfX\bfon \mathbf{y} 
			\rightarrow
			\begin{array}{|c|}
				\hline
				\bfX^{\ell}\\
				\bfth^{a_3}\bftw^{a_2}\bfon^{a_1}\\
				\bftw^{b_2}(\mathbf{z}_j)_{j=b_2+1}^{\ell}\\
				\hline
			\end{array}
			=
			\bfon \mathbf{y} \rightarrow
			\begin{array}{|c|c}
				\hline
				\multicolumn{2}{|c|}{\bfX^{\ell+1}}\\
				\cline{2-2}
				\bfth^{a_3}\bftw^{a_2}\bfon^{a_1} & \\
				\bftw^{b_2}(\mathbf{z}_j)_{j=b_2+1}^{\ell} & \\
				\cline{1-1}
			\end{array}
			=
			\mathbf{y} \rightarrow
			\begin{array}{|c|c|c}
				\hline
				\multicolumn{3}{|c|}{\bfX^{\ell+1}}\\
				\cline{3-3}
				\multicolumn{2}{|c|}{\bfth^{a_3}\bftw^{a_2}\bfon^{a_1}} & \\
				\multicolumn{2}{|c|}{\bftw^{b_2}(\mathbf{z}_j)_{j={b_2}+1}^{\ell}} & \\
				\cline{2-2}
				\bfon\\
				\cline{1-1}
			\end{array}
			=
			\begin{array}{|c|c|c}
				\hline
				\multicolumn{3}{|c|}{\bfX^{\ell+1}}\\
				\cline{3-3}
				\multicolumn{2}{|c|}{\bfth^{a_3}\bftw^{a_2}\bfon^{a_1}} & \\
				\multicolumn{2}{|c|}{\bftw^{b_2}(\mathbf{z}_j)_{j=b_2+1}^{\ell}} & \\
				\cline{2-2}
				\bfon\\
				\mathbf{y}\\
				\cline{1-1}
			\end{array}
			.
		\end{equation}
		Additionally,
		\begin{align*}
			\col\left(
			\begin{array}{|c|}
				\hline
				\bfX^{\ell}\\
				\bfth^{a_3-1}\bftw^{a_2}\bfon^{a_1+1}\\
				\bftw^{b_2-1}\bfon^{b_1} \mathbf{y}\bfze^{b_0}\\
				\hline
			\end{array}
			\right) \rightarrow
			\begin{array}{|c|}
				\hline
				\bfX\\
				\bfth\\
				\bftw\\
				\hline
			\end{array}
			&=
			\begin{cases}
				\col\left(
				\begin{array}{|c|}
					\hline
					\bfX^{\ell-1}\\
					(\mathbf{h}_j)_{j=2}^{\ell}\\
					(\mathbf{z}_j)_{j=2}^{\ell}\\
					\hline
				\end{array}
				\right) \rightarrow
				\begin{array}{|c|c}
					\hline
					\multicolumn{2}{|c|}{\bfX^2}\\
					\cline{2-2}
					\bfth & \\
					\bftw & \\
					\bfon & \\
					\mathbf{y} & \\
					\cline{1-1}
				\end{array} & \text{if $ b_2+b_1=\ell $}\\
				\col\left(
				\begin{array}{|c|}
					\hline
					\bfX^{\ell-1}\\
					(\mathbf{h}_j)_{j=2}^{\ell}\\
					(\mathbf{z}_j)_{j=2}^{{b_2}+{b_1}} \mathbf{y}\bfze^{{b_0}-1}\\
					\hline
				\end{array}
				\right) \rightarrow
				\begin{array}{|c|c}
					\hline
					\multicolumn{2}{|c|}{\bfX^2}\\
					\cline{2-2}
					\bfth & \\
					\bftw & \\
					\bfon & \\
					\bfze & \\
					\cline{1-1}
				\end{array} & \text{if $ b_2+b_1\ne \ell $}
			\end{cases} \allowdisplaybreaks \\
			&=
			\col\left(
			\begin{array}{|c|}
				\hline
				\bfX^{b_2+b_1}\\
				(\mathbf{h}_j)_{j=2}^{b_2+b_1}\\
				(\mathbf{z}_j)_{j=2}^{b_2+b_1}\\
				\hline
			\end{array}
			\right) \rightarrow
			\begin{array}{|c|c|c}
				\hline
				\multicolumn{3}{|c|}{\bfX^{\ell-b_2-b_1+1}}\\
				\cline{3-3}
				\multicolumn{2}{|c|}{\bfth(\mathbf{h}_j)_{j=b_2+b_1+1}^{\ell}} & \\
				\multicolumn{2}{|c|}{\bftw\bfze^{b_0}} & \\
				\cline{2-2}
				\bfon \\
				\mathbf{y}\\
				\cline{1-1}
			\end{array} \allowdisplaybreaks \\
			&=
			\begin{array}{|c|c|c}
				\hline
				\multicolumn{3}{|c|}{\bfX^{\ell+1}}\\
				\cline{3-3}
				\multicolumn{2}{|c|}{\bfth(\mathbf{h}_j)_{j=2}^{\ell}} & \\
				\multicolumn{2}{|c|}{\bftw(\mathbf{z}_j)_{j=2}^{\ell}} & \\
				\cline{2-2}
				\bfon \\
				\mathbf{y}\\
				\cline{1-1}
			\end{array} 
			=
			\begin{array}{|c|c|c}
				\hline
				\multicolumn{3}{|c|}{\bfX^{\ell+1}}\\
				\cline{3-3}
				\multicolumn{2}{|c|}{\bfth^{a_3}\bftw^{a_2}\bfon^{a_1}} & \\
				\multicolumn{2}{|c|}{\bftw^{b_2}(\mathbf{z}_j)_{j=b_2+1}^{\ell}} & \\
				\cline{2-2}
				\bfon\\
				\mathbf{y}\\
				\cline{1-1}
			\end{array}\,.
			\\
		\end{align*}
		(Note that $ a_3\ge b_2>0 $.) This insertion is the same as in~\eqref{eq:insbon},
		proving~\ref{enum:q}.
	\end{proof}
	
	\section{Proof of Lemma~\ref{lem:m=rclassRmat}}\label{appendix:m=rclassRmatprf}
	The proofs of~\ref{enum:fa},  \ref{enum:bonc},  \ref{enum:bond},  \ref{enum:bone},  \ref{enum:bonf},  \ref{enum:btwa},  \ref{enum:btwb},  \ref{enum:btwc},  \ref{enum:btwd} are similar to~\ref{enum:q},  \ref{enum:k},  \ref{enum:l},  \ref{enum:e},  \ref{enum:f},  \ref{enum:m},  \ref{enum:n},  \ref{enum:o},  \ref{enum:p}, respectively.
	
	The proof of~\ref{enum:bong} is similar to the proof~\ref{enum:p}, instead making substitutions $ \widetilde{a}_3=a_3,\,\widetilde{a}_2 = a_2+1,\, \widetilde{a}_1=a_1,\, \widetilde{b}_2=b_2,\,\widetilde{b}_1=b_1 $ (noting that we assume $ a_3=b_2=\widetilde{b}_2=0 $).
	
	If $ b_2<a_3 $ and $ b_2+b_1 < a_3+a_2 $, then
	\begin{equation}\label{eq:m=rinsbon}
		\begin{aligned}
			\mathbf{X}\bon \mathbf{y} \rightarrow 
			\begin{array}{|c|}
				\hline
				\mathbf{X}^{\ell}\\
				\btw^{a_3}\bon^{a_2}(\mathbf{x}_j)_{j=b_1+1}^{b_1+a_1}\\
				(\mathbf{z}_j)_{j=1}^{\ell}\\
				\hline
			\end{array}
			&=
			\bon \mathbf{y} \rightarrow
			\begin{array}{|c|c}
				\hline
				\multicolumn{2}{|c|}{\mathbf{X}^{\ell+1}}\\
				\cline{2-2}
				\btw^{a_3}\bon^{a_2} (\mathbf{x}_j)_{j=b_1+1}^{b_1+a_1}& \\
				(\mathbf{z}_j)_{j=1}^{\ell} & \\
				\cline{1-1}
			\end{array}\\
			&=
			\mathbf{y} \rightarrow
			\begin{array}{|c|c}
				\hline
				\multicolumn{2}{|c|}{\mathbf{X}^{\ell+1}}\\
				\multicolumn{2}{|c|}{\btw^{a_3}\bon^{a_2} (\mathbf{x}_j)_{j=b_1+1}^{b_1+a_1}\mathbf{z}_{\ell}}\\
				\cline{2-2}
				\bon(\mathbf{z}_j)_{j=1}^{\ell-1} & \\
				\cline{1-1}
			\end{array}\\
			&=
			\begin{array}{|c|c|c}
				\hline
				\multicolumn{3}{|c|}{\mathbf{X}^{\ell+1}}\\
				\multicolumn{3}{|c|}{\btw^{a_3}\bon^{a_2} (\mathbf{x}_j)_{j=b_1+1}^{b_1+a_1}\mathbf{z}_{\ell}}\\
				\cline{3-3}
				\multicolumn{2}{|c|}{\bon(\mathbf{z}_j)_{j=1}^{\ell-1}} & \\
				\cline{2-2}
				\mathbf{y}\\
				\cline{1-1}
			\end{array}
			.
		\end{aligned}
	\end{equation}
	If $ b_2=0 $ and $ b_2+b_1<a_3+a_2 $, then
	\begin{equation}\label{eq:m=rinsx0}
		\begin{aligned}
			\mathbf{X}\mathbf{x}_0 \mathbf{y} \rightarrow 
			\begin{array}{|c|}
				\hline
				\mathbf{X}^{\ell}\\
				\btw^{a_3}\bon^{a_2}(\mathbf{x}_j)_{j=b_1+1}^{b_1+a_1}\\
				(\mathbf{z}_j)_{j=1}^{\ell}\\
				\hline
			\end{array}
			&=
			\mathbf{x}_0 \mathbf{y} \rightarrow
			\begin{array}{|c|c}
				\hline
				\multicolumn{2}{|c|}{\mathbf{X}^{\ell+1}}\\
				\cline{2-2}
				\btw^{a_3}\bon^{a_2} (\mathbf{x}_j)_{j=b_1+1}^{b_1+a_1}& \\
				(\mathbf{z}_j)_{j=1}^{\ell} & \\
				\cline{1-1}
			\end{array}\\
			&=
			\mathbf{y} \rightarrow
			\begin{array}{|c|c}
				\hline
				\multicolumn{2}{|c|}{\mathbf{X}^{\ell+1}}\\
				\multicolumn{2}{|c|}{\btw^{a_3}\bon^{a_2} (\mathbf{x}_j)_{j=b_1+1}^{b_1+a_1}\mathbf{z}_{\ell}}\\
				\cline{2-2}
				\mathbf{x}_0 (\mathbf{z}_j)_{j=1}^{\ell-1} & \\
				\cline{1-1}
			\end{array}\\
			&=
			\begin{array}{|c|c|c}
				\hline
				\multicolumn{3}{|c|}{\mathbf{X}^{\ell+1}}\\
				\multicolumn{3}{|c|}{\btw^{a_3}\bon^b (\mathbf{x}_j)_{j=b_1+1}^{b_1+a_1}\mathbf{z}_{\ell}}\\
				\cline{3-3}
				\multicolumn{2}{|c|}{\mathbf{x}_0 (\mathbf{z}_j)_{j=1}^{\ell-1}} & \\
				\cline{2-2}
				\mathbf{y}\\
				\cline{1-1}
			\end{array}
			.
		\end{aligned}
	\end{equation}
	
	\begin{proof}[Proof of~\ref{enum:bonb}]
		Suppose $ b_2<a_3 $ and $ b_2+b_1<a_3+a_2 $ and $ \mathbf{y}<\mathbf{z}_{b_2+1} $.
		If $ a_1>0 $ then
		\begin{align*}
			&\col\left(
			\begin{array}{|c|}
				\hline
				\mathbf{X}^{\ell}\\
				\btw^{a_3-1}\bon^{a_2+1}(\mathbf{x}_j)_{j=b_1+1}^{b_1+a_1}\\
				\bon^{b_2} \mathbf{y} (\mathbf{x}_j)_{j=1}^{b_1} (\mathbf{z}_j)_{j=b_2+b_1+1}^{\ell-1}\\
				\hline
			\end{array}
			\right)
			\rightarrow
			\begin{array}{|c|}
				\hline
				\mathbf{X}\\
				\btw\\
				\mathbf{z}_{\ell}\\
				\hline
			\end{array}\allowdisplaybreaks\\
			&=
			\col\left(
			\begin{array}{|c|}
				\hline
				\mathbf{X}^{\ell-1}\\
				(\mathbf{h}_j)_{j=2}^{a_3+a_2}\bon (\mathbf{x}_j)_{j=b_1+1}^{b_1+a_1-1}\\
				\bon^{b_2} \mathbf{y} (\mathbf{x}_j)_{j=1}^{b_1} (\mathbf{z}_j)_{j=b_2+b_1+1}^{\ell-2}\\
				\hline
			\end{array}
			\right)
			\rightarrow
			\begin{array}{|c|c}
				\hline
				\multicolumn{2}{|c|}{\mathbf{X}^2}\\
				\multicolumn{2}{|c|}{\btw \mathbf{z}_{\ell}}\\
				\cline{2-2}
				\mathbf{x}_{b_1+a_1} &\\
				\mathbf{z}_{\ell-1}\\
				\cline{1-1}
			\end{array}\allowdisplaybreaks\\
			&=
			\col\left(
			\begin{array}{|c|}
				\hline
				\mathbf{X}^{\ell-a_1-1}\\
				(\mathbf{h}_j)_{j=2}^{a_3+a_2}\bon\\
				\bon^{b_2} \mathbf{y} (\mathbf{x}_j)_{j=1}^{b_1} (\mathbf{z}_j)_{j=b_2+b_1+1}^{\ell-a_1-1}\\
				\hline
			\end{array}
			\right)
			\rightarrow
			\begin{array}{|c|c|c}
				\hline
				\multicolumn{3}{|c|}{\mathbf{X}^{a_1+1}}\\
				\multicolumn{3}{|c|}{\btw (\mathbf{x}_j)_{j=b_1+2}^{b_1+a_1} \mathbf{z}_{\ell}}\\
				\cline{3-3}
				\multicolumn{2}{|c|}{\mathbf{x}_{b_1+1} (\mathbf{z}_j)_{j=\ell-a_1+1}^{\ell-1}}&\\
				\cline{2-2}
				\mathbf{z}_{\ell-a_1}\\
				\cline{1-1}
			\end{array}
			.
		\end{align*}
		Now, regardless of whether $ a_1>0 $ or $ a_1=0 $,
		\begin{align*}
			&
			\col\left(
			\begin{array}{|c|}
				\hline
				\mathbf{X}^{\ell}\\
				\btw^{a_3-1}\bon^{a_2+1}(\mathbf{x}_j)_{j=b_1+1}^{b_1+a_1}\\
				\bon^{b_2} \mathbf{y} (\mathbf{x}_j)_{j=1}^{b_1} (\mathbf{z}_j)_{j=b_2+b_1+1}^{\ell-1}\\
				\hline
			\end{array}
			\right)
			\rightarrow
			\begin{array}{|c|}
				\hline
				\mathbf{X}\\
				\btw\\
				\mathbf{z}_{\ell}\\
				\hline
			\end{array}\allowdisplaybreaks\\
			&=
			\begin{cases}
				\col\left(
				\begin{array}{|c|}
					\hline
					\mathbf{X}^{b_2+b_1}\\
					(\mathbf{h}_j)_{j=2}^{b_2+b_1+1}\\
					\bon^{b_2} \mathbf{y}(\mathbf{x}_j)_{j=1}^{b_1-1} \\
					\hline
				\end{array}
				\right)
				\rightarrow
				\begin{array}{|c|c|c}
					\hline
					\multicolumn{3}{|c|}{\mathbf{X}^{\ell-b_2-b_1+1}}\\
					\multicolumn{3}{|c|}{\btw (\mathbf{h}_j)_{j=b_2+b_1+2}^{a_3+a_2} (\mathbf{x}_j)_{j=b_1+1}^{b_1+a_1} \mathbf{z}_{\ell}}\\
					\cline{3-3}
					\multicolumn{2}{|c|}{\bon (\mathbf{z}_j)_{j=b_2+b_1+1}^{\ell-1}}&\\
					\cline{2-2}
					\mathbf{x}_{b_1}\\
					\cline{1-1}
				\end{array} & \text{if}~b_1>0 \\
				\col\left(
				\begin{array}{|c|}
					\hline
					\mathbf{X}^{b_2+b_1}\\
					(\mathbf{h}_j)_{j=2}^{b_2+b_1+1}\\
					\bon^{b_2}\\
					\hline
				\end{array}
				\right)
				\rightarrow
				\begin{array}{|c|c|c}
					\hline
					\multicolumn{3}{|c|}{\mathbf{X}^{\ell-b_2-b_1+1}}\\
					\multicolumn{3}{|c|}{\btw (\mathbf{h}_j)_{j=b_2+b_1+2}^{a_3+a_2} (\mathbf{x}_j)_{j=b_1+1}^{b_1+a_1} \mathbf{z}_{\ell}}\\
					\cline{3-3}
					\multicolumn{2}{|c|}{\bon (\mathbf{z}_j)_{j=b_2+b_1+1}^{\ell-1}}&\\
					\cline{2-2}
					\mathbf{y}\\
					\cline{1-1}
				\end{array} & \text{if}~b_1=0
			\end{cases}\allowdisplaybreaks\\
			&=
			\col\left(
			\begin{array}{|c|}
				\hline
				\mathbf{X}^{b_2}\\
				(\mathbf{h}_j)_{j=2}^{b_2+1}\\
				\bon^{b_2}\\
				\hline
			\end{array}
			\right)
			\rightarrow
			\begin{array}{|c|c|c}
				\hline
				\multicolumn{3}{|c|}{\mathbf{X}^{\ell-b_2+1}}\\
				\multicolumn{3}{|c|}{\btw (\mathbf{h}_j)_{j=b_2+2}^{a_3+a_2} (\mathbf{x}_j)_{j=b_1+1}^{b_1+a_1} \mathbf{z}_{\ell}}\\
				\cline{3-3}
				\multicolumn{2}{|c|}{\bon (\mathbf{x}_j)_{j=1}^{b_1} (\mathbf{z}_j)_{j=b_2+b_1+1}^{\ell-1}}&\\
				\cline{2-2}
				\mathbf{y}\\
				\cline{1-1}
			\end{array} \\
			&=
			\begin{array}{|c|c|c}
				\hline
				\multicolumn{3}{|c|}{\mathbf{X}^{\ell+1}}\\
				\multicolumn{3}{|c|}{\btw (\mathbf{h}_j)_{j=2}^{a_3+a_2} (\mathbf{x}_j)_{j=b_1+1}^{b_1+a_1} \mathbf{z}_{\ell}}\\
				\cline{3-3}
				\multicolumn{2}{|c|}{\bon^{b_2+1}  (\mathbf{z}_j)_{j=b_2+b_1+1}^{\ell-1}}&\\
				\cline{2-2}
				\mathbf{y}\\
				\cline{1-1}
			\end{array}
			=
			\begin{array}{|c|c|c}
				\hline
				\multicolumn{3}{|c|}{\mathbf{X}^{\ell+1}}\\
				\multicolumn{3}{|c|}{\btw^{a_3}\bon^{a_2} (\mathbf{x}_j)_{j=b_1+1}^{b_1+a_1} \mathbf{z}_{\ell}}\\
				\cline{3-3}
				\multicolumn{2}{|c|}{\bon (\mathbf{z}_j)_{j=1}^{\ell-1}}&\\
				\cline{2-2}
				\mathbf{y}\\
				\cline{1-1}
			\end{array}
			.
		\end{align*}
		This above insertion is the same as~\eqref{eq:m=rinsbon}, proving~\ref{enum:bonb}.
	\end{proof}
	
	\begin{proof}[Proof of~\ref{enum:bona} and~\ref{enum:fb}]
		Let $ \mathbf{v} = \bon $ for~\ref{enum:bona} and $ \mathbf{v} = \mathbf{x}_0 $ for~\ref{enum:fb}.
		Under the assumptions for~\eqref{eq:m=rinsbon} or~\eqref{eq:m=rinsx0},
		additionally assume that $ b_1>0 $ and $ \mathbf{y}\ge \mathbf{x}_{b_1} $ if $ \mathbf{v}=\bon $.
		
		Set,
		\[
		(\widetilde{\mathbf{z}}_j)_{j=1}^{b_2+b_1}=
		\begin{cases}
			\bon^{b_2+1}(\mathbf{x}_j)_{j=1}^{b_1-1} & \text{for~\ref{enum:bona}}\\
			\mathbf{x}_0(\mathbf{x}_j)_{j=1}^{b_1-1} & \text{for~\ref{enum:fb}}
		\end{cases}
		\]
		Now,
		\begin{align*}
			&
			\col\left(
			\begin{array}{|c|}
				\hline
				\mathbf{X}^{\ell}\\
				(\mathbf{h}_j)_{j=1}^{a_3+a_2-1}(\mathbf{x}_j)_{j=b_1}^{b_1+a_1}\\
				(\widetilde{\mathbf{z}}_j)_{j=1}^{b_2+b_1}\mathbf{y}(\mathbf{z}_j)_{j=b_2+b_1+1}^{\ell-1}\\
				\hline
			\end{array}
			\right)
			\rightarrow
			\begin{array}{|c|}
				\hline
				\mathbf{X}\\
				\mathbf{h}_{a_3+a_2}\\
				\mathbf{z}_{\ell}\\
				\hline
			\end{array}\allowdisplaybreaks\\
			&=
			\col\left(
			\begin{array}{|c|}
				\hline
				\mathbf{X}^{\ell-1}\\
				(\mathbf{h}_j)_{j=1}^{a_3+a_2-1}(\mathbf{x}_j)_{j=b_1}^{b_1+a_1-1}\\
				(\widetilde{\mathbf{z}}_j)_{j=1}^{b_2+b_1}\mathbf{y}(\mathbf{z}_j)_{j=b_2+b_1+1}^{\ell-2}\\
				\hline
			\end{array}
			\right)
			\rightarrow
			\begin{array}{|c|c}
				\hline
				\multicolumn{2}{|c|}{\mathbf{X}^2}\\
				\multicolumn{2}{|c|}{\mathbf{h}_{a_3+a_2}\mathbf{z}_{\ell}}\\
				\cline{2-2}
				\mathbf{x}_{b_1+a_1} & \\
				\mathbf{z}_{\ell-1} & \\
				\cline{1-1}
			\end{array}\allowdisplaybreaks\\
			&=
			\begin{cases}
				\col\left(
				\begin{array}{|c|}
					\hline
					\mathbf{X}^{\ell-a_1-1}\\
					(\mathbf{h}_j)_{j=1}^{a_3+a_2-1}\\
					(\widetilde{\mathbf{z}}_j)_{j=1}^{b_2+b_1}\\
					\hline
				\end{array}
				\right)
				\rightarrow
				\begin{array}{|c|c|c}
					\hline
					\multicolumn{3}{|c|}{\mathbf{X}^{a_1+2}}\\
					\multicolumn{3}{|c|}{\mathbf{h}_{a_3+a_2}(\mathbf{x}_j)_{j=b_1+1}^{b_1+a_1}\mathbf{z}_{\ell}}\\
					\cline{3-3}
					\multicolumn{2}{|c|}{\mathbf{x}_{b_1}(\mathbf{z}_j)_{j=b_2+b_1+1}^{\ell-1}} & \\
					\cline{2-2}
					\mathbf{y}\\
					\cline{1-1}
				\end{array}
				& \text{if $ b_2+b_1=a_3+a_2-1 $}\\
				\col\left(
				\begin{array}{|c|}
					\hline
					\mathbf{X}^{\ell-a_1-1}\\
					(\mathbf{h}_j)_{j=1}^{a_3+a_2-1}\\
					(\widetilde{\mathbf{z}}_j)_{j=1}^{b_2+b_1}\mathbf{y}(\mathbf{z}_j)_{j=b_2+b_1+1}^{\ell-a_1-2}\\
					\hline
				\end{array}
				\right)
				\rightarrow
				\begin{array}{|c|c|c}
					\hline
					\multicolumn{3}{|c|}{\mathbf{X}^{a_1+2}}\\
					\multicolumn{3}{|c|}{\mathbf{h}_{a_3+a_2}(\mathbf{x}_j)_{j=b_1+1}^{b_1+a_1}\mathbf{z}_{\ell}}\\
					\cline{3-3}
					\multicolumn{2}{|c|}{\mathbf{x}_{b_1}(\mathbf{z}_j)_{j=\ell-a_1}^{\ell-1}} & \\
					\cline{2-2}
					\mathbf{z}_{\ell-a_1-1}\\
					\cline{1-1}
				\end{array}
				& \text{if $ b_2+b_1\ne a_3+a_2-1 $}
			\end{cases}\allowdisplaybreaks\\
			&=
			\col\left(
			\begin{array}{|c|}
				\hline
				\mathbf{X}^{b_2+b_1}\\
				(\mathbf{h}_j)_{j=1}^{b_2+b_1}\\
				(\widetilde{\mathbf{z}}_j)_{j=1}^{b_2+b_1}\\
				\hline
			\end{array}
			\right)
			\rightarrow
			\begin{array}{|c|c|c}
				\hline
				\multicolumn{3}{|c|}{\mathbf{X}^{\ell-b_2-b_1+1}}\\
				\multicolumn{3}{|c|}{(\mathbf{h}_{a_3+a_2})_{j=b_2+b_1+1}^{a_3+a_2}(\mathbf{x}_j)_{j=b_1+1}^{b_1+a_1}\mathbf{z}_{\ell}}\\
				\cline{3-3}
				\multicolumn{2}{|c|}{\mathbf{x}_{b_1}(\mathbf{z}_j)_{j=b_2+b_1+1}^{\ell-1}} & \\
				\cline{2-2}
				\mathbf{y}\\
				\cline{1-1}
			\end{array}\allowdisplaybreaks\\
			&=
			\begin{array}{|c|c|c}
				\hline
				\multicolumn{3}{|c|}{\mathbf{X}^{\ell+1}}\\
				\multicolumn{3}{|c|}{(\mathbf{h}_j)_{j=1}^{a_3+a_2}(\mathbf{x}_j)_{j=b_1+1}^{b_1+a_1}\mathbf{z}_{\ell}}\\
				\cline{3-3}
				\multicolumn{2}{|c|}{(\widetilde{\mathbf{z}}_j)_{j=1}^{b_2+b_1}\mathbf{x}_{b_1}(\mathbf{z}_j)_{j=b_2+b_1+1}^{\ell-1}} & \\
				\cline{2-2}
				\mathbf{y}\\
				\cline{1-1}
			\end{array}
			=
			\begin{array}{|c|c|c}
				\hline
				\multicolumn{3}{|c|}{\mathbf{X}^{\ell+1}}\\
				\multicolumn{3}{|c|}{\btw^{a_3}\bon^{a_2}(\mathbf{x}_j)_{j=b_1+1}^{b_1+a_1} \mathbf{z}_{\ell}}\\
				\cline{3-3}
				\multicolumn{2}{|c|}{\mathbf{v}(\mathbf{z}_j)_{j=1}^{\ell-1}} & \\
				\cline{2-2}
				\mathbf{y}\\
				\cline{1-1}
			\end{array}
		\end{align*}
		This insertion is the same as in~\eqref{eq:m=rinsbon} or~\eqref{eq:m=rinsx0}.
		Therefore,
		\[
		R\left(
		\begin{array}{|c|}
			\hline
			\mathbf{X}^{\ell}\\
			(\mathbf{h}_j)_{j=1}^{\ell}\\
			(\mathbf{z}_j)_{j=1}^{\ell}\\
			\hline
		\end{array}
		\otimes 
		\begin{array}{|c|}
			\hline
			\mathbf{X}\\
			\mathbf{v} \\
			\mathbf{y}\\
			\hline
		\end{array}
		\right)
		=
		\begin{array}{|c|}
			\hline
			\mathbf{X}\\
			\mathbf{h}_{a_3+a_2}\\
			\mathbf{z}_{\ell}\\
			\hline
		\end{array}\\
		\otimes
		\begin{array}{|c|}
			\hline
			\mathbf{X}^{\ell}\\
			(\mathbf{h}_j)_{j=1}^{a_3+a_2-1}(\mathbf{x}_j)_{j=b_1+1}^{b_1+a_1}\\
			(\widetilde{\mathbf{z}}_j)_{j=1}^{b_2+b_1}\mathbf{y}(\mathbf{z}_j)_{j=b_2+b_1+1}^{\ell-1}\\
			\hline
		\end{array}
		\]
		
		Substituting $ \mathbf{v}=\bon $ and $ (\widetilde{\mathbf{z}}_j)_{j=1}^{b_2+b_1}=\bon^{b_2+1}(\mathbf{x}_j)_{j=1}^{b_1-1} $
		proves~\ref{enum:bona}.
		
		Substituting $ \mathbf{v}=\mathbf{x}_0 $ and $ (\widetilde{\mathbf{z}}_j)_{j=1}^{b_2+b_1}=(\mathbf{x}_j)_{j=0}^{b_1-1} $
		proves~\ref{enum:fb}.
	\end{proof}
	
	\subsection*{Declarations}
	%
	B.~Solomon was partially funded by the AMSI Vacation Research Scholarships 2020--21.
	
	\bibliographystyle{alpha}
	\bibliography{solitons}{}
\end{document}